\def\@secnumfont{\bfseries\scshape}
\def\section{\@startsection{section}{1}%
  \z@{.8\linespacing\@plus\linespacing}{.5\linespacing}%
  {\normalfont\large\bfseries\scshape\centering}}
\def\subsection{\@startsection{subsection}{2}%
  \z@{.5\linespacing\@plus.7\linespacing}{-.5em}%
  {\normalfont\bfseries\scshape}}
\def\subsubsection{\@startsection{subsubsection}{3}%
  \z@{.5\linespacing\@plus.7\linespacing}{-.5em}%
  {\normalfont\scshape}}
\def\specialsection{\@startsection{section}{1}%
  \z@{\linespacing\@plus\linespacing}{.5\linespacing}%
  {\normalfont\centering\large\bfseries\scshape}}
\numberwithin{equation}{section}
\definecolor{shadecolor}{gray}{.94}
\newenvironment{myshade}{%
  \topsep4\p@\@plus4\p@\relax%
  \MakeFramed{\advance\hsize-\width \FrameRestore}}%
 {\par\unskip\endMakeFramed}%
\definecolor{shadecolor2}{rgb}{0.94, 0.9, 0.55}
\newenvironment{myshade2}{%
  \topsep4\p@\@plus4\p@\relax%
  \MakeFramed{\advance\hsize-\width \FrameRestore}}%
 {\par\unskip\endMakeFramed}%
\definecolor{shadecolor3}{rgb}{1.0, 0.63, 0.48}
\newenvironment{myshade3}{%
  \topsep4\p@\@plus4\p@\relax%
  \MakeFramed{\advance\hsize-\width \FrameRestore}}%
 {\par\unskip\endMakeFramed}%
\newtheoremstyle{mytheorem}{0}{0}%
     {\itshape}
     {}
     {\bfseries}
     {. }
     {0.3ex}
     {\thmname{{\bfseries #1}}\thmnumber{ {\bfseries #2}}\thmnote{ (#3)}}  
\theoremstyle{mytheorem}
\newtheorem{theo}{Theorem}[section]
\newenvironment{theorem}{\begin{myshade}\begin{theo}}{\end{theo}\end{myshade}}
\newtheorem{prop}[theo]{Proposition}
\newenvironment{proposition}{\begin{myshade}\begin{prop}}{\end{prop}\end{myshade}}
\newtheorem{lem}[theo]{Lemma}
\newenvironment{lemma}{\begin{myshade}\begin{lem}}{\end{lem}\end{myshade}}
\newtheorem{quest}[theo]{Question}
\newenvironment{question}{\begin{myshade2}\begin{quest}}{\end{quest}\end{myshade2}}
\newtheorem{defin}[theo]{Definition}
\newtheorem{cor}[theo]{Corollary}
\newenvironment{corollary}{\begin{myshade}\begin{cor}}{\end{cor}\end{myshade}}
\newtheorem{conj}[theo]{Conjecture}
\newenvironment{conjecture}{\begin{myshade3}\begin{conj}}{\end{conj}\end{myshade3}}
\newtheoremstyle{mydefinition}{.7\linespacing\@plus.3\linespacing}{.7\linespacing\@plus.3\linespacing}%
     {\rmfamily}
     {}
     {\bfseries}
     {. }
     {0.3ex}
     {\thmname{{\bfseries #1}}\thmnumber{ {\bfseries #2}}\thmnote{ (#3)}}  
\theoremstyle{mydefinition}
\newtheorem{remark}[theo]{Remark}
\newenvironment{myenumerate}{%
\renewcommand{\theenumi}{\arabic{enumi}}%
\renewcommand{\labelenumi}{{\rm(\theenumi)}}%
\begin{list}{\labelenumi}
	{%
	\setlength{\itemsep}{0.4em}%
	\setlength{\topsep}{0.5em}%
	\setlength\leftmargin{2.45em}%
	\setlength\labelwidth{2.05em}%
	\setlength{\labelsep}{0.4em}%
	\usecounter{enumi}%
	}%
	}%
{\end{list}
}
\renewenvironment{enumerate}{
\begin{myenumerate}}%
{\end{myenumerate}}
\newenvironment{myitemize}{%
\begin{list}{$\bullet$}%
 	{%
	\setlength{\itemsep}{0.4em}%
	\setlength{\topsep}{0.5em}%
	\setlength\leftmargin{2.65em}%
	\setlength\labelwidth{2.65em}%
	\setlength{\labelsep}{0.4em}%
	}%
	}%
{\end{list}}
\renewenvironment{itemize}{
\begin{myitemize}}%
{\end{myitemize}}
\newcommand{\R}{\mathbb{R}}
\renewcommand{\epsilon}{\varepsilon}
\newcommand{\vertiii}[1]{{\left\vert\kern-0.25ex\left\vert\kern-0.25ex\left\vert #1 
    \right\vert\kern-0.25ex\right\vert\kern-0.25ex\right\vert}}
\def\and{\ and }
\newcommand{\re}{\mathsf{Re}}
\newcommand{\be}{\begin{equation}}
\newcommand{\ee}{\end{equation}}
\newcommand{\EE}{\mathbb{E}}
\newcommand{\bea}{\begin{eqnarray}}
\newcommand{\eea}{\end{eqnarray}}
\renewcommand{\div}{{\mbox{div}\,}}
\newcommand{\ve}{{\varepsilon}}
\newcommand{\rmd}{{\rm d}}
\newcommand{\bx}{{ {x} }}
\newcommand{\br}{{{r}}}
\newcommand{\bv}{{ {v}}}
\newcommand{\bX}{{X}}
\newcommand{\bu}{{{u}}}
\def\wc{\rightharpoonup}
\def\dist{\text{dist}}
\def\div{\text{div}}
\newcommand{\ol}{\overline}
\def\Xint#1{\mathchoice
{\XXint\displaystyle\textstyle{#1}}%
{\XXint\textstyle\scriptstyle{#1}}%
{\XXint\scriptstyle\scriptscriptstyle{#1}}%
{\XXint\scriptscriptstyle\scriptscriptstyle{#1}}%
\!\int}
\def\XXint#1#2#3{{\setbox0=\hbox{$#1{#2#3}{\int}$ }
\vcenter{\hbox{$#2#3$ }}\kern-.6\wd0}}
\def\dashint{\Xint-}
\newcommand{\toitself}{\mathrel{\mathpalette\toitself@\relax}}
\newcommand{\toitself@}[2]{%
  \vbox{%
    \sbox\z@{$#1\supset$}
    \hbox{\clipbox{0 {-0.1\height} {0.65\width} {-0.4\height}}{$\m@th#1\leftarrow$}}
    \nointerlineskip
    \kern-0.57\ht\z@
    \ifx#1\scriptstyle\kern0.03\ht\z@\fi
    \ifx#1\scriptscriptstyle\kern0.04\ht\z@\fi
    \hbox{$\m@th#1\supset$}
  }%
}
\title[Mathematical Theorems on Turbulence]{Mathematical Theorems on Turbulence}
\author[Theodore D. Drivas]{Theodore D. Drivas} \address{Department of Mathematics,  Stony Brook University, Stony Brook, NY, 11794} \email{tdrivas@math.stonybrook.edu}
\begin{document}

\begin{abstract}
In these notes, we emphasize Theorems rather than Theories concerning turbulent fluid motion.  Such theorems can be viewed as constraints on the theoretical predictions and expectations of some of the greatest scientific minds of the 20th century: Lars Onsager, Andrey Kolmogorov,  Lev Landau, Lewis Fry Richardson among others.
\end{abstract}

\maketitle

\begin{quotation}
\emph{``Kolmogorov considered his work on turbulence to be non-mathematical. He
 wanted to explain observed phenomena from first principles.''} \\
 \phantom{asdf} \hfill  --  V. I. Arnol'd \cite{A91}
\end{quotation}

\tableofcontents

In this Chapter, we collect mathematical results which shed light on turbulent fluid motion. Our focus is on theorems that connect seemingly distinct features to one another, such as properties of energy dissipation to precise regularity properties of the flow field.  Most of these results appear in the literature in some form, while others are small novelties.  A brief outline is as follows: in \S \ref{sec:e}, we introduce the incompressible Euler equations and discuss well-posedness theory and conservation laws.    In \S \ref{nssec} we introduce Navier-Stokes, Reynolds number and  anomalous dissipation (without the eventual aid of viscosity) in turbulence.  In \S \ref{onssec} we review Onsager's ideas which relate anomalous dissipation to $\sfrac{1}{3}$ differentiability of the flow field.  In \S
\ref{K41sec}, we discuss Kolmogorov's celebrated 1941 theory, as well as rigorous versions of his $\sfrac{4}{3}$ and $\sfrac{4}{5}$ laws.   In \S \ref{landausec} we describe Landau's objection to Kolmogorov's theory based upon intermittency of the dissipation field, along with rigorous constraints.  In \S \ref{models}, model problems (the Burgers equation and passive scalar transport) are studied in some depth, illustrating some of the features noted above.  Finally \S \ref{particles} collects results pertaining to motion of particles in turbulence. Some open questions are posed.

\section{The Euler equations of ideal fluid motion}\label{sec:e}
\vspace{2mm}
\begin{quotation}
\emph{“When a continuous medium is deprived of its physical properties (elasticity, thermal and electrical conductivity, and so on) its property of occupying a definite position in space remains, as do elementary interactions through the mutual pressure of its parts, due to Aristotle’s principle that it is impossible for two bodies to occupy the same space. It is amazing that it is these elementary interactions that cause the most complicated effects, including turbulence\dots”} \hfill – V. I. Yudovich \cite{Y06}
\end{quotation}
\vspace{2mm}

A vector field  $u:\mathbb{R}\times M\to \mathbb{R}^d$ on a domain $M\subset \mathbb{R}^d$ is called the velocity field  of an ideal incompressible fluid if it satisfies the Euler equations 
\begin{align}\label{ee1}
\partial_t u + u\cdot\nabla u&= -\nabla p,\\\label{ee2}
\nabla \cdot u &=0,\\
u\cdot n|_{\partial M} &=0. \label{ee3}
\end{align}
The unknown function $p$, named the pressure,  is a Lagrange multiplier, which enforces the condition  that $u$ remain divergence-free for all times (cf. \eqref{pressureeqn}--\eqref{pressurebc}).

As written,  \eqref{ee1}--\eqref{ee3}  require differentiable velocity fields.  However, the notion of solution to the equations of motion can be generalized in more irregular regimes such as  \emph{turbulence}, which is a singular regime of rough and tumble flow.  As a general rule, we will not attempt to give a precise definition of turbulence, but instead will enumerate a number of features that are associated to this phenomenon. 
For reasons that will become clear further on, we will refer to the situation where the velocity field is differentiable as a \emph{non-turbulent regime}. It is a regime in which the Euler equations may give a definite prediction for the motion of the fluid -- a notion that three-dimensional turbulence seems to defy.  One such prediction concerns conserved quantities.  Being variational in nature (arising due to Hamilton's principle), this regime is subject to the conservation laws implied by symmetries of its action via Noether's theorem \cite{Salmon}.  Specifically:
\begin{itemize}
\item \emph{time translation invariance} implies energy conservation\footnote{We remark that the Euler equations have a beautiful geometric interpretation; the continuum flow of particles they generate follows a geodesic on the manifold of volume preserving diffeomorphisms, with respect to the $L^2$ (kinetic energy) metric \cite{A99}. Conservation of the energy is a feature of all geodesic motion -- geodesics move with constant speed.  In fact, there is a more general families of quantities conserved along geodesic -- angles made between their tangent $u$ and   parallel transported vectors $v$:
\be
\tfrac{\rmd}{\rmd t} \langle u(t), v(t)\rangle_{L^2(M)} = 0 \qquad \text{where} \qquad \partial_t v + \mathbb{P}\big[ u\cdot \nabla v\big] = 0, \qquad v|_{t=0} = v_0,
\ee
where $\mathbb{P}$ is the $L^2$ orthogonal projection onto divergence-free velocity fields.}:
\be
\tfrac{\rmd}{\rmd t} \tfrac{1}{2} \|u(t)\|_{L^2(M)}^2 = 0.
\ee
\item \emph{particle relabelling symmetry} implies conservation of circulation (Kelvin theorem):
\be\label{circulation}
\tfrac{\rmd}{\rmd t} \langle u(t), v(t)\rangle_{L^2(M)} =0 \qquad \text{where} \qquad \partial_t v + [u,v] = 0, \qquad v|_{t=0} = v_0.
\ee
\end{itemize}
where $ [u,v]= u\cdot \nabla v- v\cdot \nabla u$ is the Lie bracket and $v_0$ is an arbitrary smooth divergence-free vector field, tangent to the boundary $\partial M$.
Space translation/rotation invariance gives rise to conservation of linear and angular momenta, provided that the fluid vessel is invariant under those actions.  Momentum conservation  thus holds for a broad family of conservation laws, both finite and infinite dimensional. 
On the other hand, circulation conservation is a truly infinite dimensional phenomenon --  the number of conservation laws is in correspondence with the infinitesimal generators of volume preserving diffeomorphisms.  As noted by Kelvin, they, in fact,  \emph{define} ideal fluid motion \cite{Kelvin,DH20}. The circulation theorem \eqref{circulation} is often presented in a slightly different, derived form.  To recover this, let $\Gamma$ be a rectifiable loop, and  $v_0(x) \rightharpoonup  \oint_\Gamma \delta(x-\cdot) \rmd \ell$, a singular loop. We see that $v_0$ is distributionally divergence-free $\langle v_0, \nabla \varphi\rangle_{L^2(M)} = \oint_\Gamma \nabla \varphi  \cdot \rmd \ell=0$ and that the solution to the Lie transport is $v(t,x) =  \oint_{X_t(\Gamma)} \delta(x-\cdot) \rmd \ell$, where $X_t$ is the flow generated by $u$, \eqref{eel1}--\eqref{eel2}.  As such, the circulation theorem \eqref{circulation} takes the form that justifies its name:
\be
\frac{\rmd}{\rmd t} \oint_{X_t(\Gamma)} u(t, \cdot)\cdot  \rmd \ell = 0 \qquad \text{for any rectifiable loop} \ \Gamma \subset M.
\ee
By Stokes' theorem, this line integral is equivalent to the flux of the curl of the velocity vector field $u$, the \emph{vorticity} $\omega$, through any  comoving bounding surface.  Letting the loop become infinitesimal about any given point $x\in M$, with arbitrary orientation, we recover the Lie advection equation for the vorticity.
In two dimensions one can identify the vorticity with a scalar field $\omega= \nabla^\perp \cdot u$ where $\nabla^\perp = (-\partial_2,\partial_1)$ and in three dimensions with a vector field $\omega= \nabla \times u$.  These are transported by $u$:
  \begin{alignat}{2}\label{2dvort}
d=2: \qquad \partial_t\omega + u\cdot \nabla \omega &=0, \\ \label{3dvort}
d=3: \ \  \qquad  \partial_t\omega +  [u,\omega] &= 0.
\end{alignat}
Notice that, in three dimensions, the circulation theorem in its primal form \eqref{circulation} contains conservation of helicity $H[u] = \langle u, \omega\rangle_{L^2(M)} $, at least provided $\omega_0\cdot \hat{n}|_{\partial M}=0$ (a Lie-propagated condition, since  $u\cdot \hat{n}|_{\partial M}=0$) if boundary is present.

For all of the above deductions to hold, we require the solution to be sufficiently regular (classical). Let us understand when these may exist. The Cauchy problem for \eqref{ee1}--\eqref{ee3}  consists of prescribed data at some moment $u|_{t=0}=u_0\in X$ as an element of an appropriate functional space $X$.  To understand what the space $X$ should satisfy in order for \eqref{ee1}--\eqref{ee3} to be a predictive dynamical system, we formally differentiate  \eqref{ee1}:
\begin{align}
(\partial_t + u \cdot \nabla) \nabla u +(\nabla u)^2 &= -\nabla^2 p,\\
-\Delta p &= {\rm tr} (\nabla u)^2, \label{pressureeqn}\\
\partial_n p|_{\partial M} &= u\cdot \nabla  \hat{n} \cdot u|_{\partial M}, \label{pressurebc}
\end{align}
and observe that, to have a good local theory in the space $X$, it should be the case that 
\be\label{estimate}
\frac{\rmd}{\rmd t} \| \nabla u \|_{Y} \lesssim \| \nabla u \|_{Y} ^2,
\ee
where the space $Y$ is such that if $u\in X$ then $\nabla u\in Y$.
For \eqref{estimate} to hold, $Y$ should encode enough regularity, namely $Y$ should be
\begin{enumerate}
\item  an algebra, e.g.  if $f, g\in Y$ then $fg\in Y$,
\item  compatible with the Neumann problem for the pressure \eqref{pressureeqn}--\eqref{pressurebc}, e.g. 
\be
\|\nabla^2 p\|_Y \lesssim \| \nabla u \|_{Y} ^2,
\ee
\item such that the transport equation is well-posed in $X$ for $\nabla u\in Y$.
\end{enumerate}
For (1), H\"{o}lder spaces $X= C^{k+1,\alpha}$, $Y=C^{k,\alpha}$ with $k\geq 0$, $\alpha\in (0,1)$ or Sobolev spaces $X=H^{s+1}$, $Y= H^s$ with $s>d/2$ work. For (2), one should avoid endpoints spaces like $X= C^k$ or $W^{k,p}$ with $p=1$ and $\infty$. As for (3), the transport equation is generally well-posed at this level of regularity, which encodes stronger-than-Lipschitz information on $u$.  With such a space, one can obtain the \emph{a priori} estimate
\be
\|\nabla u(t)\|_Y \leq e^{C\int_0^t \|\nabla u(s)\|_{L^\infty} \rmd s }\|\nabla u_0\|_Y.
\ee
If $T_*>0$ is the maximal existence time, either $T_*=\infty$ or $T_*<\infty$ and $\int_0^{T_*}  \|\nabla u(s)\|_{L^\infty} \rmd s=\infty$. The classical result, due to  Lichtenstein and Gunther near the turn of the century, is
\begin{theorem}[Local existence in H\"{o}lder spaces \cite{L25,G27}]\label{lethm}
Suppose $u_0\in C^{1,\alpha}(M)$. Then there is a time $T>0$ such that there is a unique solution $u\in C^{1,\alpha}((-T,T)\times M)$.
\end{theorem}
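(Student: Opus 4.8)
The plan is to establish local well-posedness for the Euler equations in the Hölder class $C^{1,\alpha}$ by setting up a fixed-point / Picard iteration scheme built around the \emph{a priori} estimate already sketched in the excerpt. The natural strategy is to exploit the Lagrangian formulation: rather than working directly with \eqref{ee1}, one introduces the flow map $X_t$ generated by $u$ and reformulates the dynamics as a transport problem in which the pressure is recovered from the velocity via the Neumann problem \eqref{pressureeqn}--\eqref{pressurebc}. Concretely, I would define an iteration $u^{(n)}\mapsto u^{(n+1)}$ where $u^{(n+1)}$ solves the linear transport equation $\partial_t u^{(n+1)} + u^{(n)}\cdot\nabla u^{(n+1)} = -\nabla p^{(n)}$ with $p^{(n)}$ determined by solving $-\Delta p^{(n)} = \mathrm{tr}(\nabla u^{(n)})^2$ with boundary condition \eqref{pressurebc}, and show the map is a contraction on a suitable ball in $C([-T,T];C^{1,\alpha})$ for $T$ small.

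The key steps, in order, are as follows. First, I would verify the elliptic regularity estimate for the pressure: standard Schauder theory for the Neumann problem gives $\|\nabla^2 p\|_{C^{0,\alpha}} \lesssim \|\nabla u\|_{C^{0,\alpha}}^2$, which is precisely requirement (2) from the excerpt and is the step where one must avoid the endpoint $\alpha\in\{0,1\}$. Second, I would establish the transport estimate: for the linear equation $\partial_t f + u\cdot\nabla f = g$ with $u$ Lipschitz (which $C^{1,\alpha}$ guarantees), the flow $X_t$ is a $C^{1,\alpha}$ diffeomorphism and one obtains $\|f(t)\|_{C^{0,\alpha}} \le e^{C\int_0^t \|\nabla u\|_{L^\infty}}\left(\|f_0\|_{C^{0,\alpha}} + \int_0^t \|g(s)\|_{C^{0,\alpha}}\,\rmd s\right)$; this is requirement (3), and it uses that $C^{1,\alpha}$ encodes stronger-than-Lipschitz control so that the flow map preserves Hölder regularity. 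Third, combining these with the algebra property (requirement (1)) yields the closed differential inequality \eqref{estimate}, namely $\tfrac{\rmd}{\rmd t}\|\nabla u\|_{C^{0,\alpha}} \lesssim \|\nabla u\|_{C^{0,\alpha}}^2$, which by a Grönwall/Riccati comparison gives a uniform bound on a time interval $(-T,T)$ with $T$ depending only on $\|u_0\|_{C^{1,\alpha}}$.

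With uniform bounds in hand, I would close the argument in the standard way: show the iterates $u^{(n)}$ form a Cauchy sequence in the weaker norm $C([-T,T];C^{1})$ (contraction at lower regularity, uniform boundedness at higher regularity), extract a limit, and verify by interpolation that the limit lies in $C^{1,\alpha}$ and solves \eqref{ee1}--\eqref{ee3}. Uniqueness follows from a Grönwall estimate on the difference of two solutions, again controlled by the Lipschitz norm of the velocity.

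The main obstacle I expect is the transport step on a domain with boundary: propagating $C^{1,\alpha}$ regularity of the solution requires that the flow map $X_t$ remain a $C^{1,\alpha}$ diffeomorphism \emph{compatible with} the boundary condition \eqref{ee3}, and the pressure estimate must respect the Neumann condition \eqref{pressurebc}, whose right-hand side involves the second fundamental form of $\partial M$ through $u\cdot\nabla\hat n\cdot u$. Controlling these boundary contributions uniformly — ensuring the Schauder constants and the diffeomorphism property do not degenerate as one iterates — is the delicate point; on $M=\mathbb{R}^d$ or the torus the pressure is simply the Riesz-transform convolution of $\mathrm{tr}(\nabla u)^2$ and this difficulty disappears, which is why the whole-space case is essentially classical and the boundary case is the genuine content of the Lichtenstein--Gunther theorem.
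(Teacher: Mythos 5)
The paper does not actually prove Theorem \ref{lethm}: it is stated as the classical result of Lichtenstein and Gunther and cited to \cite{L25,G27}, with only the heuristic framework preceding it (the formal differentiation of \eqref{ee1}, the three requirements on the space $Y$, and the \emph{a priori} estimate \eqref{estimate}). Your proposal is a faithful and essentially correct fleshing-out of exactly that framework -- Schauder theory for the Neumann problem \eqref{pressureeqn}--\eqref{pressurebc} gives requirement (2), transport estimates along the $C^{1,\alpha}$ flow give requirement (3), the algebra property gives (1), and the Riccati inequality plus contraction-at-lower-regularity closes the argument. So there is no conflict with the paper, which offers nothing to conflict with.

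One genuine wrinkle in your scheme as written: the linearized iteration $\partial_t u^{(n+1)} + u^{(n)}\cdot\nabla u^{(n+1)} = -\nabla p^{(n)}$ with $p^{(n)}$ determined solely by $u^{(n)}$ does \emph{not} preserve the divergence-free constraint. Taking the divergence of the iteration equation produces $\partial_t(\nabla\cdot u^{(n+1)}) + u^{(n)}\cdot\nabla(\nabla\cdot u^{(n+1)}) + {\rm tr}\big(\nabla u^{(n)}\,\nabla u^{(n+1)}\big) = {\rm tr}(\nabla u^{(n)})^2$, and the two trace terms cancel only when $u^{(n+1)}=u^{(n)}$; likewise the tangency condition \eqref{ee3} is only approximately propagated, since the Neumann data \eqref{pressurebc} is tuned to the fixed point, not to the iterates. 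This is a well-known and fixable issue -- one either inserts a Leray projection into the iteration, works in the vorticity--stream formulation, or shows that the divergence error is controlled by the Cauchy differences and hence vanishes in the limit -- but as stated the iterates leave the constraint manifold, and the contraction argument must be run in a way that accounts for this. Your identification of the boundary contributions (the second fundamental form in \eqref{pressurebc}, uniformity of the Schauder constants) as the delicate point is accurate and is indeed what separates the bounded-domain theorem from the whole-space or periodic case.
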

Such solutions are called classical, since each term in the equation makes sense pointwise on their domain of definition.  The key to our understanding of the Cauchy problem for the Euler equations is the fact that vorticity (the antisymmetric part of the velocity gradient), is transported geometrically as a two-form.  In two-dimensions, the vorticity can be identified with a scalar field while in three dimensions, it is a vector field.  As a consequence,  it is long known that in two space dimensions the equations are globally predictive, namely $T_*=\infty$ \cite{H33}.  In three dimensions, it remained an open question until, nearly a century later, Elgindi and Elgindi-Ghoul-Masmoudi showed $T$ may be finite:

\begin{theorem}[Finite time singularity in H\"{o}lder spaces \cite{E21,EGM19}]\label{tarekthm}
There exists an $\alpha\in(0,1)$,  $u_0\in C^{1,\alpha}\cap L^2(\mathbb{R}^3)$ and  $T_*>0$ such that $u(T_*)$ ceases to be in $C^{1}(\mathbb{R}^3)$.
\end{theorem}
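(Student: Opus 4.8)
The plan is to abandon any hope of a generic mechanism and instead engineer one very specific, highly symmetric scenario in which the nonlinearity can be tracked explicitly. First I would restrict to \emph{axisymmetric, swirl-free} velocity fields on $\mathbb{R}^3$. In cylindrical coordinates $(r,z)$ the vorticity then has a single component $\omega^\theta$, the scaled vorticity $\zeta := \omega^\theta/r$ is purely transported by the flow, $\partial_t \zeta + u^r\partial_r \zeta + u^z\partial_z \zeta = 0$, and $u$ is recovered from $\omega^\theta$ through an axisymmetric Biot--Savart elliptic problem. The goal is to produce data with $u_0\in C^{1,\alpha}\cap L^2$ (so that $\omega^\theta_0 \in C^\alpha$, behaving like $\rho^\alpha$ near the origin, $\rho = \sqrt{r^2+z^2}$) whose velocity gradient is driven to infinity in finite time purely by the transport/stretching term, so that $u(T_*)\notin C^1$ even though no regularity beyond $C^{1,\alpha}$ is ever assumed.

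The crucial observation I would exploit is the behaviour of the Biot--Savart law in the small-$\alpha$ regime. Since $\nabla u$ is obtained from the vorticity by a zero-order singular integral with kernel homogeneous of degree $-3$, a vorticity profile behaving like $\rho^\alpha$ produces, at the origin, a velocity gradient of size $\int_0^1 \rho^{\alpha}\rho^{-3}\rho^2\,d\rho \sim 1/\alpha$, with prefactor an explicit \emph{angular} integral of the profile. Thus for small $\alpha$ the dynamics is dominated by a single, computable leading term. Passing to log-polar variables and retaining only this $O(1/\alpha)$ part collapses the two-dimensional $(r,z)$ system to a one-dimensional nonlocal transport equation---the \emph{fundamental model}, closely analogous to the Constantin--Lax--Majda and De~Gregorio one-dimensional models---whose stretching coefficient is a simple nonlocal functional of the radial profile.

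Next I would analyze the fundamental model, which should admit a finite-time self-similar blow-up. Because $\zeta$ is merely transported, its amplitude $\|\zeta(t)\|_{L^\infty}$ is exactly conserved, so the singularity is necessarily one of \emph{gradient} type rather than amplitude type. Passing to a renormalized frame (rescaled time $\tau=-\log(T_*-t)$ and space rescaled by a length $L(t)\to 0$), one seeks a \emph{stationary} profile $\bar\zeta$ of the self-similar equation, which for the reduced model can be written down semi-explicitly. As the renormalized solution converges to $\bar\zeta$, the profile focuses to a fixed shape on the shrinking scale $L(t)$, and the velocity gradient---controlled through the $1/\alpha$ Biot--Savart term evaluated on this focusing profile---diverges as $t\to T_*$, yielding the desired breakdown of $C^1$ regularity.

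I expect the \textbf{main obstacle} to be the final, nonlinear stability step, not the discovery of the profile. The true Biot--Savart operator equals the fundamental-model operator plus a remainder that is lower order in $\alpha$, and one must show that the \emph{exact} solution issuing from a nearby $C^{1,\alpha}$ datum shadows the self-similar profile all the way to $T_*$. This requires establishing coercivity---a spectral gap, or a monotonicity/energy estimate---for the linearization of the rescaled dynamics about $\bar\zeta$, \emph{uniformly} as $\alpha\to 0$, in weighted Hölder or Sobolev norms adapted to the $\rho^\alpha$ behaviour: strong enough to close the estimate, yet weak enough to tolerate the limited $C^\alpha$ regularity of the vorticity. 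One must simultaneously control the neglected Biot--Savart remainder and neutralize the unstable direction generated by the scaling symmetry of the profile, typically by modulating the blow-up time so that the dangerous mode is projected out. This delicate, quantitative stability analysis is the technical heart of the matter and is precisely the content supplied by the Elgindi--Ghoul--Masmoudi part of the proof.
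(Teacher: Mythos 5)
The paper does not prove this theorem; it is quoted as an external result of Elgindi \cite{E21} and Elgindi--Ghoul--Masmoudi \cite{EGM19}. Your sketch is a faithful outline of precisely the strategy of those works --- axisymmetric swirl-free data, the $O(1/\alpha)$ leading term of the Biot--Savart law for $C^\alpha$ vorticity at small $\alpha$, reduction to a one-dimensional ``fundamental model'' with a self-similar profile, and a modulated nonlinear stability analysis to propagate the blow-up to the full system --- and you correctly identify the coercivity/stability step as the technical heart. No discrepancy with the paper to report, since the paper supplies no argument of its own.
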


In fact, a flurry of recent progress indicates that even infinitely smooth initial conditions become singular, at least in the case where there is a solid boundary \cite{CH22}, see discussion in \cite{DE23}.
Thus, turbulence --  which we will argue  lies outside the class of functions which result in a local theory --  cannot be avoided with certainty, even if one starts from the non-turbulent regime.  Of  arguably  more relevance is the fact that turbulence in real world scenarios should not be modeled as arising from regular initial conditions -- rather it often results from initial conditions which are already much too rough for a predictive Cauchy problem to exist. We shall discuss this point in the subsequent section.

\section{Navier-Stokes and the inviscid limit}\label{nssec}
\vspace{2mm}

\begin{quotation}
\emph{“\dots It is amazing that it is these elementary interactions that cause the most complicated effects, including turbulence (\underline{viscosity} of course plays an essential role in generating it)."} \hfill – V. I. Yudovich \cite{Y06}
\end{quotation}
\vspace{2mm}

Real fluids that are encountered in nature are not ideal.  R.P. Feynman referred to ideal fluid motion as the \emph{flow of dry water}, highlighting its somewhat absurd character.  Real fluids, made of an extremely large but finite collection of molecules,  have intrinsic stickiness arising from friction in the mutual interaction of their parts, making them wet. The material-specific, dimensional parameter that measures the strength of said effect is called \emph{viscosity}.  The simplest model for wet water is the \emph{Navier-Stokes equations}: 
\begin{align}\label{ns1}
\partial_t u^\nu + u^\nu\cdot\nabla u^\nu&= -\nabla p^\nu+\nu \Delta u^\nu  + f,\\\label{ns2}
\nabla \cdot u^\nu &=0,
\end{align}
where $\nu>0$ represents the kinematic viscosity of the medium, and $f$ an external force. Dry water corresponds to a material with $\nu=0$ exactly, but no such exists. However, it turns out the many flows that we encounter in Nature are \emph{almost} dry!  To understand this point, note that the force of this new viscous effect depends on the state of motion.  Indeed, if -- for example -- we non-dimensionalize the system using a characteristic velocity $\mathsf{U}=\|u_0\|_{L^2}$, force $\mathsf{F} := \|f\|_{L^2}$, length $\mathsf{L}={\rm diam}(M)$ and time $\mathsf{T}=\mathsf{L}/\mathsf{U}$, then the resulting non-dimensional equations now feature the \emph{Reynolds number} $\re= \mathsf{U}\mathsf{L}/\nu$ and (modified) \emph{Grashof number}  $\mathsf{Gr}  :=\mathsf{F}/(\mathsf{U}^2/\mathsf{L})$\footnote{{Grashof number} conventionally means $\widetilde{\mathsf{Gr}}  :=\mathsf{F}L^3/\nu^2$, which mixes small scale with large scale quantities \cite{CF88}.  It is easy to see that $\widetilde{\mathsf{Gr}}   =  \mathsf{Gr}  \re^2$.  The reason behind this definition is that it involves purely large scale quantities, and (assuming Taylor's hypothesis $\langle \ve \rangle =\mathsf{U}^3/L$), where $\langle \ve\rangle $ is the mean dissipation \eqref{locdissform}, it equates to $\widetilde{\mathsf{Gr}}  =\mathsf{F}\mathsf{U}/\langle \ve \rangle$.  Thus, we see that $\mathsf{Gr}$ measures the "inefficiency” of the forcing function, becoming very large  when $\langle \ve \rangle\ll \mathsf{F}\mathsf{U}$, so that force injects energy inefficiently compared with a naive
dimensional estimate. See \cite{EP25,Lopez25}. }, two dimensionless parameters of the system
\begin{align}\label{ns1}
\partial_t u^\re + u^\re\cdot\nabla u^\re&= -\nabla p^\re+\tfrac{1}{\re} \Delta u^\re + \mathsf{Gr} f  ,\\\label{ns2}
\nabla \cdot u^\re &=0.
\end{align}
The principle of hydrodynamic similarity is that flows with the same Reynolds and Grashof number behave the same way. These numbers may be varied independently, although we will consider only $\mathsf{Gr}\equiv 1$ from now on (which is why we do not decorate the velocity as $u^{\re,\mathsf{Gr}}$). That is, it is the dimensionless combination, $\re$, rather than any individual feature that dictate the observed phenomena.  Moreover, $\re$ can be varied by considering setups where various combinations of $\mathsf{U},\mathsf{L}$ and $\nu$ are altered.  Most often, the scales $\mathsf{U}$ and $\mathsf{L}$ can greatly vary, and are responsible for producing very large Reynolds numbers in Nature and in the laboratory. For example, Figure \ref{album} shows flow of water past a fixed cylinder -- $\re$ is increasing because the inflow velocity is being systematically increased.
\begin{figure}[h!] 
  \begin{center}
    \includegraphics[width=0.8\textwidth]{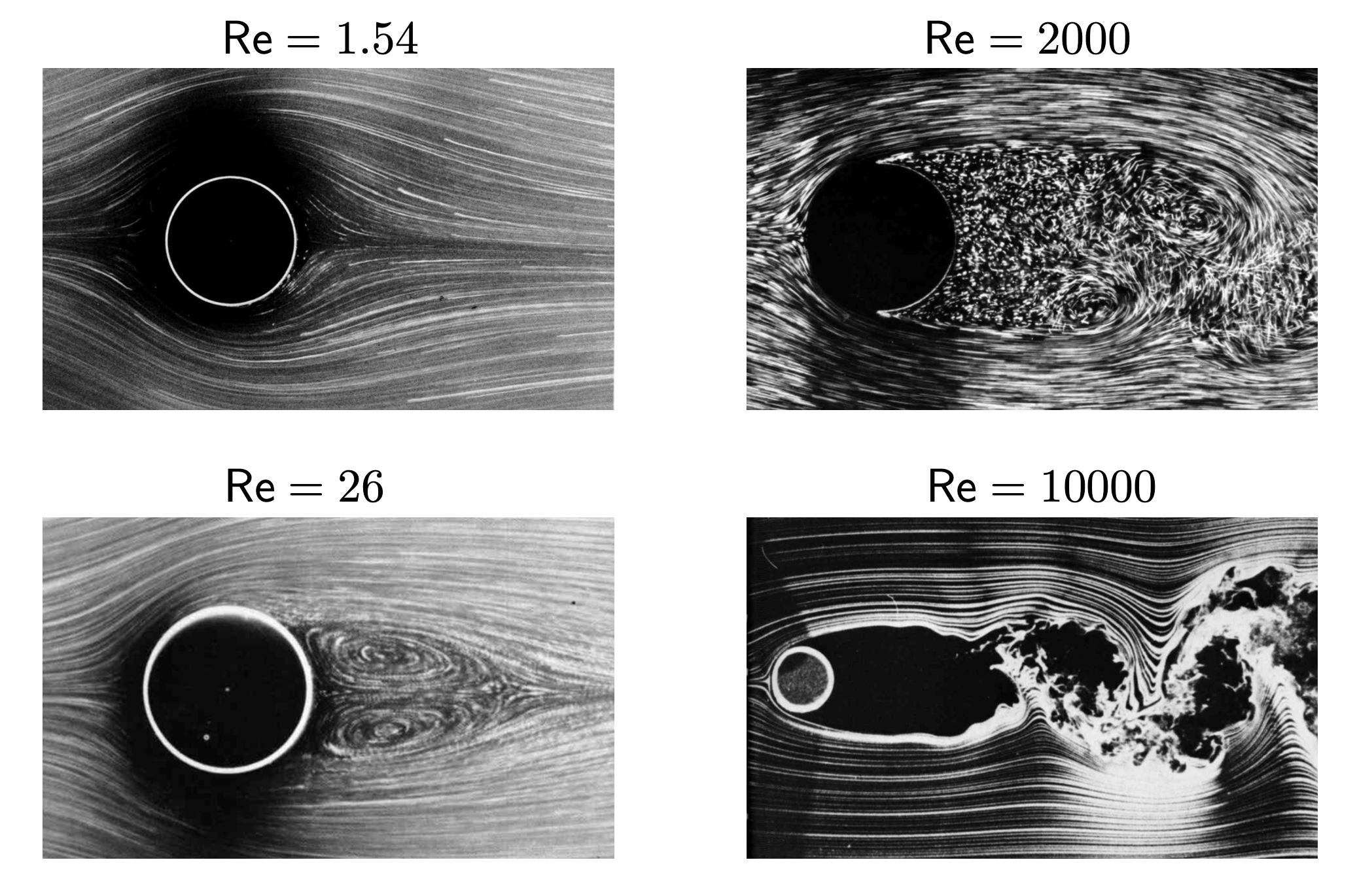}
      \end{center}
  \caption{Flow of water past a cylinder (Album of Fluid Motion \cite{VD82}).}
  \label{album}
\end{figure}

We notice a few points from the diagrams; as the Reynolds number is increased, the flow appears to simultaneously become rougher (including smaller and smaller scale motions), and also spottier (what can and will be termed intermittent). Reynolds numbers encountered in nature may be many orders of magnitude larger than those depicted.  For instance, in the wake of a commercial airline, typical Reynolds numbers are  around $\re \approx 10^{8}$, around $10^{12}$ in a tropical storm, and much much larger in astrophysics.
A clearly relevant abstraction is to continue the procedure indefinitely to probe the limit $\re\to \infty$, which we shall conflate with the \emph{inviscid limit} $\nu\to 0$ without much danger of confusing the reader.  What will likely cause confusion is that it is this limit what we shall use to operationally \emph{define} fully developed (as opposed to transitional) turbulence for the sake of our discussion.  Appropriate caveats here should be made, since the limit $\re\to \infty$ may, in some situations, not look turbulent at all.  For example, if Navier-Stokes is posed in periodic three-space and initialized with some smooth initial conditions, then for times within the period of local existence for Euler given by Theorem \ref{lethm}, Navier-Stokes converges to the classical Euler solution as $\nu\to 0$ which we have already declared to be non-turbulent.  The caveat is then that we are not in this good situation -- namely, either there are physical boundaries that trigger singular behaviors, or that the solution is becoming non-smooth in some other way, either from being forced or seeded with rough objects or by their emergence in finite time from smooth ones (Theorem \ref{tarekthm}).

Formally, the limiting $\re\to \infty$ velocity $u^\re \to u^\infty$ solves the "Euler equation"
\begin{align}\label{eel1}
``\partial_t u^\infty + u^ \infty\cdot\nabla u^ \infty&= -\nabla p^ \infty" ,\\\label{eel2}
``\nabla \cdot u^\infty &=0".
\end{align}
Thus, the motion of dry water is relevant, even universal, as a description! But in what sense?  Clearly from Figure \ref{album}, this limit is plagued by singularities. In particular,  $u^\infty$ is expected to be well below continuously differentiable, so the Euler equation cannot be satisfied in a classical sense.

\begin{figure}[h!]
  \begin{center}
    \includegraphics[width=0.43\textwidth]{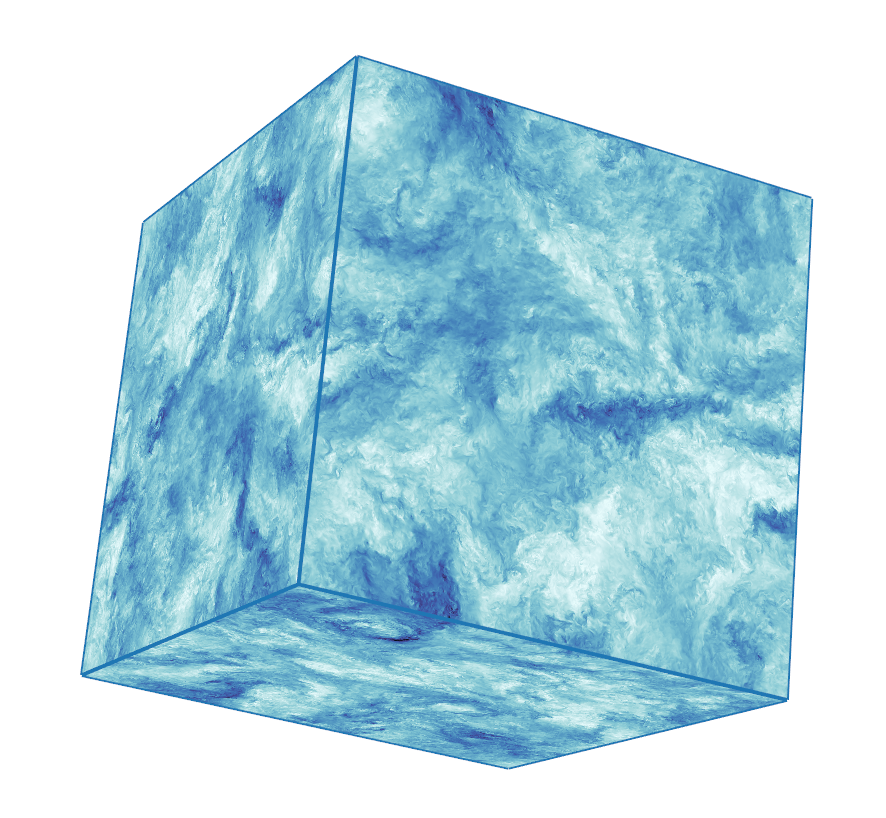}
       \hspace{-2mm} 
       {
        \includegraphics[width=0.45\textwidth]{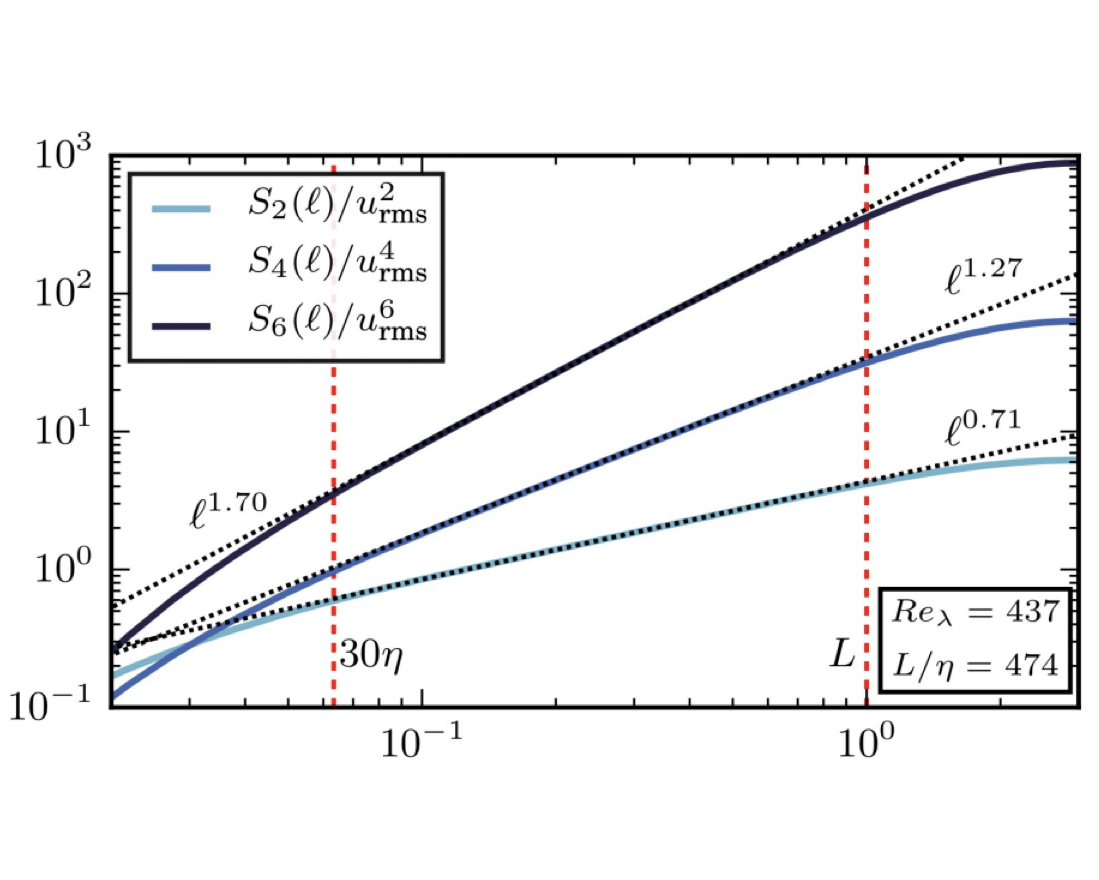}}
  \end{center}
  \caption{Turbulence in periodic box. Second, fourth and sixth order structure functions indicate emergence of non-smooth velocities. \cite{Lcomp, DJLW17} }
  \label{fixbox}
\end{figure}

Before continuing, we issue a small caveat regarding our treatment of the Navier-Stokes fluid.  Here-forth, we will assume that Navier-Stokes solutions do not suffer from any finite time singularities and are smooth enough to justify any manipulation performed.  This is not essential for four reasons.  First, all the mathematical analysis that we perform can be slightly modified to hold for Leray-Hopf weak solutions of Navier-Stokes, which always exist starting from any finite energy data \cite{CF88}.  I will not do this in these notes because it unnecessarily complicates the discussion, and also for the next three reasons. Second, it appears that Navier-Stokes singularities, if they exist, are non-generic and do not play a relevant role in the subject. Thirdly, the specific mechanism of dissipation does not, apparently, qualitatively alter the turbulent features that are our focus.  Thus, even if the Navier-Stokes model develops singularities, models with stronger -- and possibly more fundamental --  dissipative operators will not.  Finally, it has been argued that a more accurate description of non-ideal fluids is a finite dimensional stochastic dynamical system, and there is no such issue in these models, which arise only in the continuum idealization \cite{BGME22, EP25}. Luckily, we are in good company in doing this as, according to Y. Sinai \cite{VL} \emph{"Kolmogorov was never seriously interested in the problem of existence and uniqueness of solutions of the Navier-Stokes system"}.

We now return to the issue of \emph{in what sense is the Euler equation satisfied}?   Apriori, the only known bound (which holds with or without boundaries) that holds uniformly in the viscosity, is the one from the energy -- the $L^\infty_tL^2_x$ norm of the velocity.  Such weak control  allows to deduce only weak convergence as $\nu \to 0$. Due to the quadratic nonlinearity, these limits are only known to be solutions of Euler in a very weak sense, either measure valued \cite{DM87,CD95} or their alternatives such as Lion's dissipative solutions \cite{L96}.  However, in practice, a great deal more regularity than $L^2$ is retained \emph{uniformly in the viscosity}.  But this is based on observations of the system, from either physical or numerical experiment.  For example, by numerical experiments of so-called homogeneous isotropic turbulence on the periodic box.  See the left panel of Figure \ref{fixbox}. Given such a simulation of a Navier-Stokes solution $u^\nu$, one can compute a measure of its fractional regularity via its (absolute) $p$th--order \emph{structure functions}:
\begin{myshade}
\be\label{SP}
S_p^{u^\nu}(\ell) := \fint_0^T \fint_M\fint_{\mathbb{S}^{d-1}} |u^\nu(x+\ell\hat{z},t) -u^\nu (x,t)|^p \ \rmd \sigma(\hat{z}) \rmd x \rmd t,
\ee
\end{myshade}
\noindent where  $\ell>0$ and $\sigma$ is the standard surface measure on the sphere.  These objects, in the turbulent regime, will be the primary subject of our investigations in these notes.

\begin{figure}[h!]
  \begin{center}
    \includegraphics[width=0.47\textwidth]{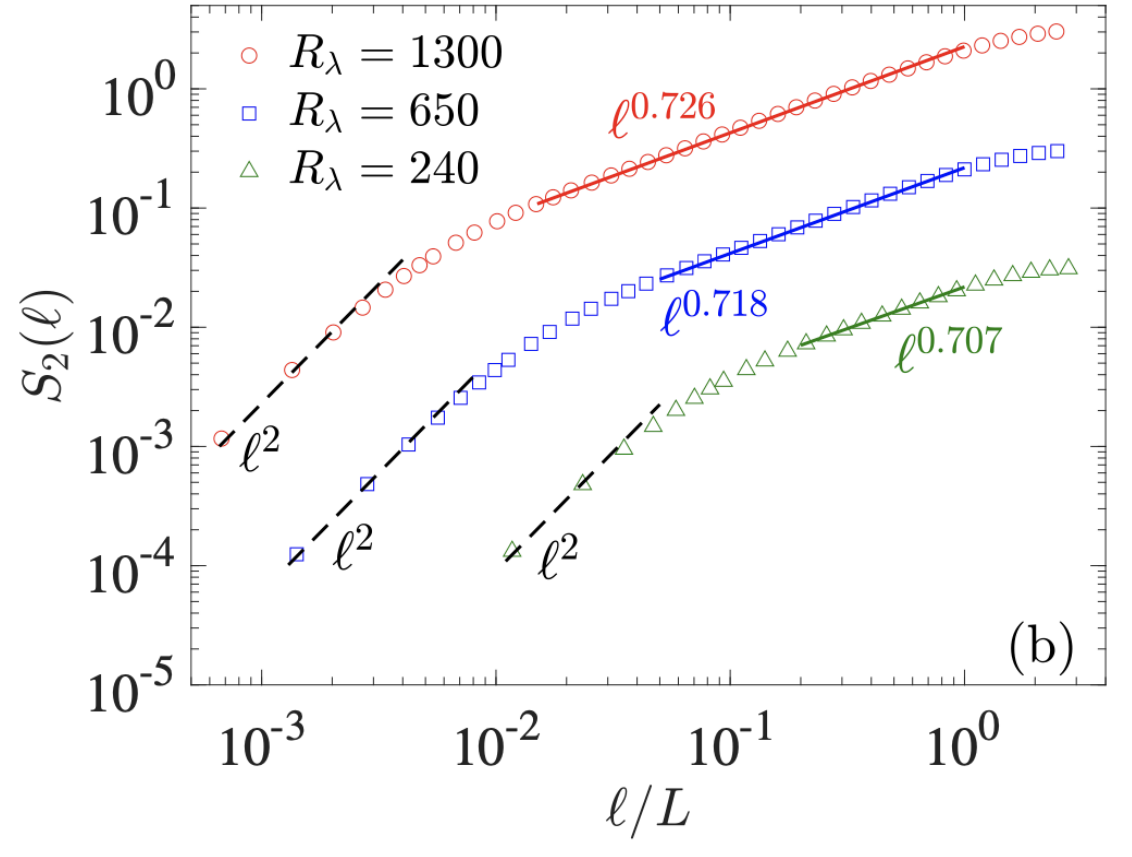}
       \hspace{2mm} 
       {
        \includegraphics[width=0.47\textwidth]{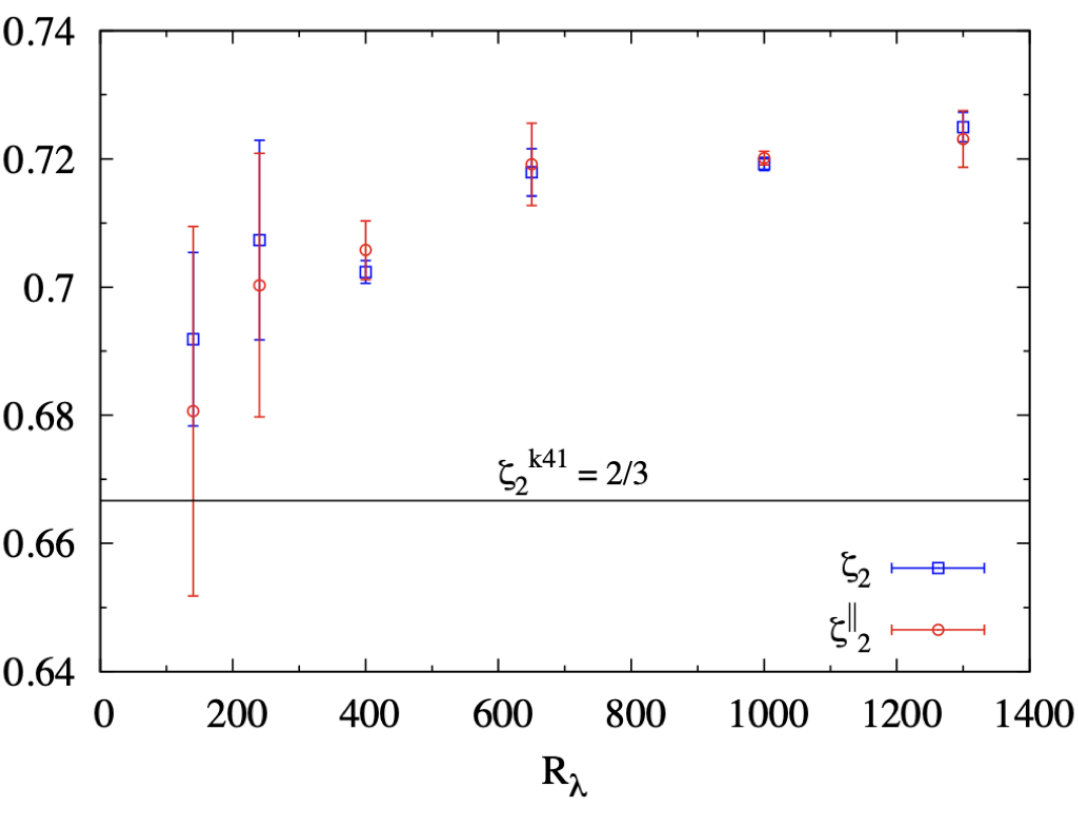}}
  \end{center}
  \caption{Measurements of second order structure functions and their exponents from numerical simulation \cite{D22}. Right panel is best fit measured exponent $\zeta_2^\nu:=\log S_2^{u^\nu}(\ell)/\log\ell$ for $\ell$ in the inertial range.}
  \label{S2}
\end{figure}

The following behavior of the $p$th--order structure functions is robustly observed: there exists a ``dissipative scale" $\ell_\nu/L\sim \nu^\gamma$ for some $\gamma>0$ and numbers $\zeta_p\in(0, p]$ such that
\be\label{Spbeh}
S_p^{u^\nu}(\ell)  \sim  \begin{cases} \nu^{-\gamma(p-\zeta_p)}  \left(\tfrac{\ell}{L}\right)^p &\text{for} \ \ 0 \leq \ell \ll \ell_\nu  \\   \left(\tfrac{\ell}{L}\right)^{\zeta_p} & \text{for} \  \ell_\nu \ll \ell \ll L
\end{cases},
\ee
where the implicit constants and $\zeta_p$ are \emph{independent of viscosity}.  In principle, all constants may depend on the time window of observation $T$ although in the "statistical steady state" this is not observed to be the case. The range of scales $\ell\in [0,\ell_\nu)$ is termed the \emph{dissipative range}, where diffusion becomes dominant and the vector field looks smooth.  The subrange of $(\ell_\nu, L)$ where the scaling $S_p^{u^\nu}(\ell) \sim \left({\ell}/{L}\right)^{\zeta_p} $ holds is called the inertial range, which is dominated by Euler-like behavior. See the right panel of Figure \ref{fixbox} for the behavior at fixed viscosity of moments $p=2,4,6$, and Figure \ref{S2} for the behavior as viscosity decreases for $p=2$.  It is a primary goal of many interested in the ``turbulence problem" to compute, based on a principled theory, what the spectrum of numbers $\{\zeta_p\}_{p\in \mathbb{N}}$ are in a "generic turbulent setting".  We shall comment more later about what can be said rigorously about their behavior. For now, we note that from the observed behavior \eqref{Spbeh} we may infer a uniform bound of 
\be\label{Spbnd}
S_p^{u^\nu}(\ell)  \lesssim \left(\tfrac{\ell}{L}\right)^{\zeta_p}
\ee
for \emph{some} $\zeta_p>0$, with the implicit constant being uniform in viscosity.  This behavior holds at least for moderately low moments $p$, say $p$ less than twelve as, in practice, moments higher than ten are difficult to stably compute.   Together with a bound on $L_t^pL^p_x$ norms, \eqref{Spbnd} gives a measure of compactness in $L^p$.  Indeed, such a vector field could be said to have $\sigma_p:=\sfrac{\zeta_p}{p}$ derivatives in $L^p$, and thus it sits in the ``time-averaged" Besov space $L^p(0,T; B_{p,\infty}^{\sigma_p}(M))$.  The following equivalence is shown in \cite{DGP25}:
\begin{lemma}\label{Slem}
Fix $p\geq 1$ and  $\zeta_p>0$.  The following two  are equivalent
\begin{enumerate}
\item $u\in L^p(0,T;L^p(\mathbb{T}^d))$ and $S_p^{u}(\ell)\lesssim \ell^{\zeta_p}$;
\item $u$ is finite in the norm $\|u\|_{L^p_t B_{p,\infty}^{\sigma_p}(\mathbb{T}^d)}:= \|u\|_{L^p_tL^p_x} + [u]_{L^p_t B_{p,\infty}^{\sigma_p}(\mathbb{T}^d)}$ with $\sigma_p:=\sfrac{\zeta_p}{p}$ and 
\be
[u]_{L^p_t B_{p,\infty}^{\sigma_p}(\mathbb{T}^d)}:= \sup_{z\neq 0} \frac{1}{|z|^{\sigma_p}} \left(\fint_0^T \| u(x+  z,t) - u(x,t)\|_{L^p(\mathbb{T}^d)}^p\rmd t\right)^{1/p}.
\ee
\end{enumerate}
\end{lemma}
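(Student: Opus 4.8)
The plan is to reformulate both statements in terms of the time-averaged $L^p$ increment modulus and then recognize that they differ only by an average-versus-supremum over directions, which I would reconcile using subadditivity of the increment together with a Chebyshev/covering argument on the sphere. Write $D_z u(x,t):=u(x+z,t)-u(x,t)$ and set
\[
N(z):=\left(\fint_0^T \|D_z u(\cdot,t)\|_{L^p(\mathbb{T}^d)}^p\,\rmd t\right)^{1/p}.
\]
Since $\fint_M$ and $\|\cdot\|_{L^p(\mathbb{T}^d)}$ differ only by the fixed volume factor $|\mathbb{T}^d|$, one has $S_p^{u}(\ell)=c\,\fint_{\mathbb{S}^{d-1}} N(\ell\hat z)^p\,\rmd\sigma(\hat z)$, whereas the Besov seminorm is exactly $[u]_{L^p_tB^{\sigma_p}_{p,\infty}}=\sup_{z\neq0} N(z)/|z|^{\sigma_p}$ (recall $\sigma_p=\zeta_p/p$). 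The implication $(2)\Rightarrow(1)$ is then immediate: finiteness of the seminorm gives $N(z)^p\le [u]^p|z|^{\zeta_p}$ for every $z$, so restricting to $|z|=\ell$ and averaging over $\mathbb{S}^{d-1}$ yields $S_p^u(\ell)\lesssim\ell^{\zeta_p}$, and the $L^p_tL^p_x$ membership is built into the norm.

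The substance is $(1)\Rightarrow(2)$, i.e.\ upgrading the \emph{isotropic average} bound $\fint_{\mathbb{S}^{d-1}} N(\ell\hat z)^p\,\rmd\sigma\lesssim\ell^{\zeta_p}$ to the \emph{directional supremum} bound $N(z)\lesssim|z|^{\sigma_p}$. The key structural fact I would exploit is that $z\mapsto N(z)$ is subadditive: from the pointwise identity $D_{z_1+z_2}u=T_{z_2}(D_{z_1}u)+D_{z_2}u$, translation invariance of the $L^p(\mathbb{T}^d)$ norm, and Minkowski's inequality in the time variable, one gets $N(z_1+z_2)\le N(z_1)+N(z_2)$; moreover $N(0)=0$ and $N(-z)=N(z)$. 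This subadditivity lets a bound in one direction be transferred to nearby directions at the cost of a smaller-scale increment.

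Concretely, I would argue by a bootstrap on $\Phi(\ell):=\sup_{|z|\le\ell}N(z)$. Fixing $\ell$, Chebyshev applied to the average bound shows that the ``good'' set $\{\hat z: N(\ell\hat z)\le C_0\,\ell^{\sigma_p}\}$ occupies all but an arbitrarily small fraction of $\mathbb{S}^{d-1}$ once $C_0$ is large enough; choosing that fraction below the measure of a spherical cap of chord-radius $\tfrac12$ guarantees that \emph{every} direction $\hat z_0$ has a good neighbour $\hat w$ with $|\hat z_0-\hat w|\le\tfrac12$. Subadditivity then gives
\[
N(\ell\hat z_0)\le N(\ell\hat w)+N\big(\ell(\hat z_0-\hat w)\big)\le C_0\,\ell^{\sigma_p}+\Phi(\tfrac{\ell}{2}),
\]
and taking the supremum over $|z|\le\ell$ produces the self-improving inequality $\Phi(\ell)\le C_0\,\ell^{\sigma_p}+\Phi(\ell/2)$. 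Iterating and summing the geometric series (convergent because $\sigma_p>0$), while using $\Phi(2^{-n}\ell)\to0$ as $n\to\infty$, finally yields $\Phi(\ell)\lesssim\ell^{\sigma_p}$, which is the desired seminorm bound. The decay $\Phi(2^{-n}\ell)\to0$ is just continuity of translation in $L^p$, valid since $p<\infty$ and $u\in L^p_tL^p_x$ (dominated convergence in time).

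I expect the passage from the spherical average to the directional supremum to be the only real obstacle, and the whole argument hinges on two points that I would check carefully: that subadditivity of $N$ genuinely follows from translation invariance on $\mathbb{T}^d$ (this is where the torus, as opposed to a general domain, matters), and that the Chebyshev threshold constant $C_0=C_0(\delta)$—which blows up as the admissible cap shrinks—can be frozen, since a single fixed cap radius (here chord $\tfrac12$, giving contraction factor $2^{-\sigma_p}<1$) suffices to close the iteration. Matching the $\fint_M$ normalization against $\|\cdot\|_{L^p(\mathbb{T}^d)}$ is a harmless constant that I would track but which does not affect any of the scalings.
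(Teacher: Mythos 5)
Your proof is correct, but it takes a genuinely different route from the paper's. The paper passes from the spherical-average hypothesis to a ball-averaged bound $\fint_0^T\fint_M\fint_{B_\ell(0)}|\delta_z u|^p\,\rmd z\,\rmd x\,\rmd t\lesssim \ell^{p\sigma_p}$ (by integrating over shells), deduces the mollification estimates $\|u-\ol{u}_\ell\|_{L^p_{t,x}}\lesssim\ell^{\sigma_p}$ and $\|\nabla\ol{u}_\ell\|_{L^p_{t,x}}\lesssim\ell^{\sigma_p-1}$, and then controls the increment in an arbitrary direction $z$ by passing through the smooth intermediary $\ol{u}_\ell$ and optimizing $\ell=|z|$. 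You instead attack the average-versus-supremum-over-directions issue head on: subadditivity $N(z_1+z_2)\le N(z_1)+N(z_2)$ of the time-averaged increment seminorm, a Chebyshev/cap-covering argument to find a good direction within chord-distance $\tfrac12$ of any given one, and the resulting self-improving inequality $\Phi(\ell)\le C_0\ell^{\sigma_p}+\Phi(\ell/2)$, closed by a geometric series using $\sigma_p>0$ and $\Phi(2^{-n}\ell)\to 0$ (continuity of translation in $L^p$, $p<\infty$). Both arguments are sound; the two points you flag as needing care (subadditivity via translation invariance on the torus, and freezing the Chebyshev constant $C_0$ once a single fixed cap radius is chosen) do check out, and the a priori finiteness of $\Phi$ needed to start the iteration follows from $u\in L^p_tL^p_x$. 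What the paper's route buys is that the mollification estimates are reusable infrastructure (they reappear in Lemma 3.9 and the dissipation bounds), and the mollifier automatically converts a ball average into control of increments in every direction; what your route buys is a self-contained, mollifier-free argument that isolates exactly where the isotropic averaging is being upgraded to a directional supremum. One cosmetic remark: in $d=1$ the sphere is two points and the cap-covering step degenerates, but there $N(-z)=N(z)$ makes the average equal to the supremum and the implication is immediate.
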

We remark that, this definition of the Besov space differs slightly from the conventional one in that the sup and time integral are ordinarily reversed.  This space is slightly weaker, but in fact is more natural to measure and, as we will see from the example of the Burgers equation and Kraichnan model, also theoretically. We also remark that this result holds true locally for a bounded domain $M\subset \mathbb{R}^d$, provided the increment does not leave the domain.  We now sketch the proof of this fact, which is elementary.
\begin{proof}[Proof of Lemma \ref{Slem}]
Note first that one direction, $(2)\implies (1)$ holds trivially.  For the other direction,  note that $(1)$ easily implies   (by disintegrating the integral to shells) that
\be\label{ballavebnd}
 \fint_0^T \fint_M\fint_{B_\ell(0)} |u^\nu(x+z,t) -u^\nu (x,t)|^p \ \rmd z \rmd x \rmd t \lesssim \ell^{p\sigma_p}.
\ee
This, in turn, implies the following  mollification estimates:
\be\nonumber
\|u- \ol{u}_\ell\|_{L^p(0,T;L^p(\mathbb{T}^d))}\lesssim \ell^{\sigma_p} \qquad \text{and} \qquad \|\nabla \ol{u}_\ell\|_{L^p(0,T;L^p(\mathbb{T}^d))}\lesssim \ell^{\sigma_p-1},
\ee
where $\ol{u}_\ell(x,t) :=\int_{\mathbb{T}^d} G_\ell(r) u(x+r)\rmd r$ and $G_\ell(r) = \ell^{-d} G(r/\ell)$ with $G(z)$ a standard mollifier (positive, even, integral one, supported in the unit ball).  Note that these follow more or less immediately from the ball-averaged bound \eqref{ballavebnd}.  We conclude by noting
\begin{align*}
 \fint_0^T \| u(x+  z,t) - u(x,t)\|_{L^p(\mathbb{T}^d)}^p\rmd t &\lesssim   \fint_0^T \| u(x,t) - \ol{u}_\ell (x,t)\|_{L^p(\mathbb{T}^d)}^p \rmd t\\
 &\ \  +  \fint_0^T \| \ol{u}_\ell(x+  z,t) - \ol{u}_\ell(x,t)\|_{L^p(\mathbb{T}^d)}^p \rmd t \lesssim \ell^{p\sigma_p} (1 + \tfrac{|z|}{\ell})^p .
\end{align*}
Optimizing the bound by choosing $\ell = |z|$ gives the result.
\end{proof}

We remark further that, for Navier-Stokes solutions, there is correspondence (at order $p=2$) between the absolute structure function $S_p^{u^\nu}(\ell)$ and the \emph{longitudinal structure function}, which is an object that is far more frequently computed in experiment:
\begin{myshade}
\be\label{longsp}
S_{p,\|}^{u^\nu}(\ell) := \fint_0^T \fint_M\fint_{\mathbb{S}^{d-1}} \Big(\hat{z}\cdot(u^\nu(x+\ell\hat{z},t) -u^\nu (x,t))\Big)^p \ \rmd \sigma(\hat{z}) \rmd x \rmd t.
\ee
\end{myshade}
Specifically, it is proved that $S_2^{u^\nu}(\ell)\lesssim \ell^{\zeta_2}$ if and only if $S_{2,\|}^{u^\nu}(\ell)\lesssim \ell^{\zeta_2}$ \cite[Lemma 1.3]{D22}. 
This correspondence does not hold for a general vector field in $u\in L^2_t B_{2,\infty}^{\zeta_2/2}$ and crucially makes use of a structural property of the fluid nonlinearity $u\cdot \nabla u$ which holds for weak solutions.
 As such, all available measurements from experiment and numerical simulation (typically for longitudinal structure functions) of which the author is aware point to a uniform--in--Reynolds modulus of continuity in $L^2$, as in Figure \ref{S2}.  This, in turn, provides $L^2$ compactness, as we now describe.

Indeed,  by the Fréchet–Kolmogorov compactness theorem together with the  Aubin-Lions-Simon lemma (see \cite[Proposition 2.10]{LMP21} for a precise application), this bound suffices to ensure the existence of strong-$L_{x,t}^p$ sub-sequential limits $u^{\nu_n}\to u^\infty$.  This was observed in a number of works \cite{CG12,CV18,DE19,DN19,EP25} and is summarized by
\begin{theorem}[Emergence of weak Euler solutions] \label{cmpthm}
Fix $p\geq 2$ and suppose the $p$th--order structure function $S_p^{u^\nu}(\ell)$ satisfies the bound  \eqref{Spbnd} for some $\zeta_p>0$.  Then, the family $\{u^\nu\}_{\nu>0}$ is precompact in $L^p_{x,t}$ and all sub-sequential limits $u^{\nu_n}\to u^\infty$, possibly non-unique, are distributional (weak) solutions of the Euler equations, e.g. 
\begin{align}\label{we1}
 (u^\infty, \partial_t \varphi)_{L^2_{t,x}} +  (u^\infty\otimes u^\infty; \nabla  \varphi)_{L^2_{t,x}} &=0,\\
( u^\infty; \nabla  \psi)_{L^2_{t,x}} &=0, \label{we2}
\end{align}
for arbitrary solenoidal $\varphi\in C_0^\infty((0,T)\times\mathbb{T}^d;\mathbb{R}^d)$ and function $\psi \in C_0^\infty((0,T)\times \mathbb{T}^d;\mathbb{R})$.
\end{theorem}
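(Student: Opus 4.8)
The plan is to split the argument into two stages: (i) establishing strong space--time precompactness of the family $\{u^\nu\}$ in $L^p_{x,t}$, and (ii) passing to the limit in the weak Navier--Stokes formulation to recover \eqref{we1}--\eqref{we2}. For stage (i), the hypothesis \eqref{Spbnd} together with the uniform energy bound already furnishes spatial compactness: by Lemma \ref{Slem} the structure-function bound is equivalent to a uniform bound for $u^\nu$ in $L^p_t B^{\sigma_p}_{p,\infty}$ with $\sigma_p:=\sfrac{\zeta_p}{p}>0$, which supplies a viscosity-independent $L^p$ modulus of continuity in space. Combined with the uniform $L^\infty_t L^2_x\cap L^p_{x,t}$ control coming from the energy, the Fréchet--Kolmogorov theorem gives compactness in the space variable.

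To upgrade spatial to space--time compactness I would invoke the Aubin--Lions--Simon lemma, for which one needs a uniform bound on $\partial_t u^\nu$ in some negative-regularity space. Reading it off the equation, $\partial_t u^\nu = -\mathbb{P}\,\nabla\cdot(u^\nu\otimes u^\nu) + \nu\Delta u^\nu + \mathbb{P}f$, the nonlinear term is controlled because $u^\nu\otimes u^\nu$ is uniformly bounded in $L^{p/2}_{x,t}$ (here $p\geq 2$ is used so that $p/2\geq 1$), whence $\mathbb{P}\,\nabla\cdot(u^\nu\otimes u^\nu)$ is bounded in $L^{p/2}_t W^{-1,p/2}_x$; the viscous term is handled by the energy-dissipation bound $\nu\int_0^T\|\nabla u^\nu\|_{L^2}^2\lesssim 1$, so that $\nu\Delta u^\nu=\sqrt\nu\,\nabla\cdot(\sqrt\nu\,\nabla u^\nu)$ is bounded, and in fact tends to $0$, in $\sqrt\nu\, L^2_t H^{-1}_x$; the forcing $\mathbb{P}f$ is $\nu$-independent and harmless. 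With $u^\nu$ bounded in the compactly-embedding space $L^p_t B^{\sigma_p}_{p,\infty}$ and $\partial_t u^\nu$ bounded in a negative-order space, the precise functional statement of \cite[Proposition 2.10]{LMP21} yields a subsequence $u^{\nu_n}\to u^\infty$ converging strongly in $L^p_{x,t}$.

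For stage (ii), I would write the weak form of Navier--Stokes tested against a solenoidal $\varphi\in C_0^\infty$, so the pressure drops out, and pass to the limit term by term. The time-derivative term converges by the strong convergence of $u^{\nu_n}$; the viscous term $\nu_n(u^{\nu_n};\Delta\varphi)$ vanishes since $\nu_n\to 0$ while $u^{\nu_n}$ is uniformly bounded in $L^2$, and any fixed forcing term converges trivially. The decisive term is the quadratic one: strong $L^p_{x,t}$ convergence with $p\geq 2$ gives $u^{\nu_n}\otimes u^{\nu_n}\to u^\infty\otimes u^\infty$ in $L^{p/2}_{x,t}$, and since $\nabla\varphi$ is smooth this lets us pass $(u^{\nu_n}\otimes u^{\nu_n};\nabla\varphi)\to(u^\infty\otimes u^\infty;\nabla\varphi)$, yielding \eqref{we1}; testing the divergence constraint against $\psi$ gives \eqref{we2} immediately.

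The main obstacle is precisely this passage to the limit in the nonlinearity, which is why strong rather than merely weak convergence is indispensable: weak $L^p$ limits need not commute with the quadratic product $u\otimes u$, so without compactness one could only reach measure-valued or dissipative Euler solutions. All the genuine work is therefore in securing the strong space--time convergence of stage (i); spatial equicontinuity is handed to us essentially for free by \eqref{Spbnd}, but ruling out temporal oscillations via the negative-norm bound on $\partial_t u^\nu$, with the viscous contribution shown to be asymptotically negligible, is the crux, and the hypothesis $p\geq 2$ enters exactly to keep $u^\nu\otimes u^\nu$ in a genuine Lebesgue space.
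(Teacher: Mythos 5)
Your proposal is correct and follows essentially the same route the paper indicates: the structure-function bound is converted via Lemma \ref{Slem} into a uniform $L^p_t B^{\sigma_p}_{p,\infty}$ bound giving spatial equicontinuity, the Fréchet--Kolmogorov criterion combined with an Aubin--Lions--Simon argument (using the equation to bound $\partial_t u^\nu$ in a negative-order space, exactly as in the cited \cite[Proposition 2.10]{LMP21}) yields strong $L^p_{x,t}$ subsequential convergence, and strong convergence with $p\geq 2$ lets the quadratic term pass to the limit in the weak formulation. The only cosmetic caveat is that at the endpoint $p=2$ you should place $\mathbb{P}\nabla\cdot(u^\nu\otimes u^\nu)$ in a sufficiently negative Sobolev space such as $L^1_t H^{-s}_x$ with $s>d/2+1$ rather than $W^{-1,1}_x$, since the Leray projector is not bounded on $L^1$; this changes nothing in the argument.
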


So, keeping in mind the observations, it is in this weak, distributional, sense \eqref{we1}--\eqref{we2} that the ideal Euler equations emerge as the description of the infinite Reynolds number, zero viscosity, limit of non-ideal Navier-Stokes flows.  Of course, such objects maybe very irregular in space -- the compactness theorem tells that the limit object $u^\infty$ has just shy of $\sigma_p=\sfrac{\zeta_p}{p}$ derivatives in $L^p$, but need not have more.  Indeed, the behavior observed in Fig \ref{S2} indicates $\zeta_2^\infty \approx 0.72$ and not better.  

As a brief aside, one might wonder why it is reasonable to make an assumption about  space regularity and not time. At first sight, it appears unbalanced.  Interestingly,   for \emph{any} weak solution of the Euler equations,  space regularity  implies a similar type of temporal regularity (see e.g.  \cite{Isettreg}, \cite[Theorem 1.1]{CDRF} and also \cite[Theorem 4]{DE18}):

\begin{lemma}[Space regularity implies time regularity]
Let $u$ be a weak solution of the Euler equations.  For $p\geq 2$, if $u\in L^{p}(0,T; B_{p,\infty}^{\sigma_p}(\mathbb{T}^d))$ then  $u\in B_{p/2,\infty}^{\sigma_p}(0,T; L^{p/2}(\mathbb{T}^d))$, e.g. $u$ is finite in the norm $\|u\|_{B_{p/2,\infty}^{\sigma_p}(0,T; L^{p/2}(\mathbb{T}^d))}:= \|u\|_{L^{p/2}_tL^{p/2}_x} + [u]_{B_{p/2,\infty}^{\sigma_p}(0,T;L^{p/2}(\mathbb{T}^d))}$ where the seminorm is
\be
[u]_{B_{p,\infty}^{s}(0,T;L^{p}(\mathbb{T}^d))}:= \sup_{\tau\neq 0} \frac{1}{|\tau|^{s}} \left(\fint_\tau^{T-\tau} \| u(t+\tau,\cdot) - u(t,\cdot)\|_{L^p(\mathbb{T}^d)}^p \rmd t\right)^{1/p}.
\ee
\end{lemma}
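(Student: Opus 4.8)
The plan is to transfer the spatial modulus of continuity through the equation of motion by mollifying in space and estimating the time derivative of the mollification. Write $\ol{u}_\ell = G_\ell * u$ for the spatial mollification used in the proof of Lemma \ref{Slem}, and for a temporal increment $\tau$ decompose
\[
u(t+\tau) - u(t) = \big(u - \ol{u}_\ell\big)(t+\tau) + \big(\ol{u}_\ell(t+\tau) - \ol{u}_\ell(t)\big) + \big(\ol{u}_\ell - u\big)(t).
\]
Since $\mathbb{T}^d$ has finite measure and $(0,T)$ is a finite interval, $L^p\hookrightarrow L^{p/2}$ in both space and time, so the two mollification-error terms are controlled in the $L^{p/2}_tL^{p/2}_x$ norm by $\|u - \ol{u}_\ell\|_{L^p_tL^p_x}\lesssim\ell^{\sigma_p}$, which is exactly the mollification estimate already recorded in the proof of Lemma \ref{Slem}. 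The entire difficulty is thus concentrated in the middle term, the time increment of the smooth field $\ol{u}_\ell$.

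For that term I would use the weak formulation: mollifying the Euler equation and applying the Leray projection $\mathbb{P}$ (which annihilates the pressure and commutes with $G_\ell*$) gives the pointwise-in-time identity $\partial_t\ol{u}_\ell = -\mathbb{P}\,\mathrm{div}\,\overline{(u\otimes u)}_\ell$, whence $\ol{u}_\ell(t+\tau) - \ol{u}_\ell(t) = -\int_t^{t+\tau}\mathbb{P}\,\mathrm{div}\,\overline{(u\otimes u)}_\ell\,\rmd s$. The key is a Constantin--E--Titi-type commutator gain: because $\int\nabla G_\ell = 0$, one rewrites $\mathrm{div}\,\overline{(u\otimes u)}_\ell(x) = \int\nabla G_\ell(y)\cdot\big[(u\otimes u)(x - y) - (u\otimes u)(x)\big]\,\rmd y$, and estimates the increment of the quadratic flux by $\|\delta_y(u\otimes u)(s)\|_{L^{p/2}_x}\lesssim|y|^{\sigma_p}\|u(s)\|_{B^{\sigma_p}_{p,\infty}}^2$. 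Since $\int|\nabla G_\ell(y)|\,|y|^{\sigma_p}\,\rmd y\simeq\ell^{\sigma_p - 1}$, this yields the instantaneous bound $\|\partial_t\ol{u}_\ell(s)\|_{L^{p/2}_x}\lesssim\ell^{\sigma_p - 1}\|u(s)\|_{B^{\sigma_p}_{p,\infty}}^2$, i.e. the mollification costs only $\ell^{\sigma_p - 1}$ rather than the naive $\ell^{-1}$, which is what makes the sharp exponent attainable.

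It then remains to integrate in time. Setting $N(s) := \|u(s)\|_{B^{\sigma_p}_{p,\infty}}$, one has $\|\ol{u}_\ell(t+\tau)-\ol{u}_\ell(t)\|_{L^{p/2}_x}\lesssim\ell^{\sigma_p-1}\int_t^{t+\tau}N(s)^2\,\rmd s$; raising to the power $p/2$, integrating in $t$, and using H\"older on the interval of length $\tau$ together with Fubini gives $\big(\fint\|\ol{u}_\ell(t+\tau)-\ol{u}_\ell(t)\|_{L^{p/2}_x}^{p/2}\,\rmd t\big)^{2/p}\lesssim\tau\,\ell^{\sigma_p-1}\|u\|_{L^p_tB^{\sigma_p}_{p,\infty}}^2$. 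Combining with the $\ell^{\sigma_p}$ bound on the error terms, the total is $\lesssim\ell^{\sigma_p} + \tau\ell^{\sigma_p-1}$, and the choice $\ell = \tau$ produces the desired $\lesssim\tau^{\sigma_p}$, with the supremum over large $\tau$ handled trivially by the $\|u\|_{L^{p/2}_tL^{p/2}_x}$ bound. This is precisely finiteness of the seminorm $[u]_{B^{\sigma_p}_{p/2,\infty}(0,T;L^{p/2})}$.

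The main obstacle is the pressure, or equivalently the boundedness of the Leray projection (and of the Riesz transforms producing the increment estimate for $p = R_iR_j(u_iu_j)$) on $L^{p/2}$: this is valid precisely when $p/2 > 1$, i.e. $p > 2$. At the endpoint $p = 2$ one works in $L^1$, where $\mathbb{P}$ and the Calder\'on--Zygmund operators fail to be bounded; recovering the statement there requires a more delicate argument --- e.g. exploiting $H^1$--$\mathrm{BMO}$ duality for the projected flux increment --- and this is where I would expect the real work to lie. One should also take minor care near the temporal endpoints $t\in(0,\tau)\cup(T-\tau,T)$, which is routine given the averaged definition of the seminorm.
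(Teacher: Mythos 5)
The paper itself gives no proof of this lemma --- it is stated with pointers to \cite{Isettreg}, \cite[Theorem 1.1]{CDRF} and \cite[Theorem 4]{DE18} --- and your argument is exactly the standard one underlying those references: mollify in space, use the equation together with the vanishing-mean commutator trick to obtain $\|\partial_t\ol{u}_\ell\|_{L^{p/2}_x}\lesssim\ell^{\sigma_p-1}\|u\|^2_{B^{\sigma_p}_{p,\infty}}$, and optimize $\ell=\tau$. The steps are correct for $p>2$, and your caveat at the endpoint $p=2$, where the Leray projector and the Calder\'on--Zygmund estimate for the pressure fail on $L^1$ (cf.\ the restriction $p\in(2,\infty)$ in Lemma \ref{P: pressure CZ}), correctly identifies the one genuinely delicate point in the stated range $p\geq 2$.
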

Thus, up to changing the integrability exponents, turbulent flows look the same in time as they do in space.
Such a statement is important to build up our space-time picture of the structure of turbulent flow.

 Returning to our discussion of Euler, the distributional notion of solution builds in two global conserved quantities: momentum \eqref{we1}  and mass \eqref{we2} (the former holding provided the domain $M$ has symmetry).
 However, at this level of (ir)regularity, we cannot run the computations of \S \ref{sec:e} on such a weak solution to derive other conserved quantities.  
 
 Take the kinetic energy, for instance (let alone circulations, which are far more delicate \cite{E06,BDLW}).  The object itself makes sense for weak solutions, a.e. in time (being that they are at least space-time square integrable), but we are not, in general, allowed to insert $u^\infty$ as a test function into \eqref{we1} to derive a balance of energy, let alone say that 
\be\label{onethird}
``(u^\infty\otimes u^\infty; \nabla  u^\infty)_{L^2_{t,x}} =0"
\ee
which would clearly be the case for any \emph{smooth}, solenoidal $u^\infty$.  Given the form \eqref{onethird}, which is cubic in the velocity and has one derivative, one is already tempted to think that having $\sfrac{1}{3}$ of a derivative in $L^3$ should be relevant for conservation or lack-there-of. This we shall discuss in the next section, as well as the broader point that this failure of weak solutions to predict conservation of energy is a great merit of their descriptive power, rather than a weakness.
\vspace{2mm}

One standout features of the Navier-Stokes equations is that they \emph{dissipate} kinetic energy rather than conserve it.  The local form of the energy balance is
\begin{align}\label{energybal}
\partial_t \left(\tfrac{1}{2} |u^\nu|^2 \right) + \div \left((\tfrac{1}{2} |u^\nu|^2+p^\nu) u^\nu - \nu \nabla \tfrac{1}{2} |u^\nu|^2\right) &= -\varepsilon^\nu[u^\nu] + u^\nu \cdot f
\end{align}
where we have introduced the local measure of energy dissipation $\varepsilon^\nu[u^\nu]$ via
\begin{align}\label{locdissform}
\varepsilon^\nu[u^\nu] &:= \nu |\nabla u^\nu|^2.
\end{align}
Consequently,  strong solutions enjoy a \emph{global} kinetic energy balance
\begin{align}
\frac{\rmd}{\rmd t} \int_M \tfrac{1}{2} |u^\nu(x,t)|^2 \rmd x =-  \int_M \varepsilon^\nu[u^\nu](x,t) \rmd x + \int_M u^\nu(x,t) \cdot f(x,t)\rmd x.
\end{align}
What is the fate of this balance in the high Reynolds number limit $\nu \to 0$?  In a non-turbulent scenario where, say, the Euler solution starting from the same initial data is classical, then $\int_M \varepsilon^\nu[u^\nu](x,t) \rmd x\lesssim \nu$, and energy conservation is restored in the limit.  However,  a primary \emph{observed} feature of turbulence is the violation of this expectation.  Specifically, from numerous physical experiments and some numerical simulations (see e.g. Figure \ref{fig:ad}), it is observed that the turbulent regime displays \emph{anomalous dissipation}
\begin{myshade}
\be\label{anomalousdiss}
\liminf_{\nu \to 0} \fint_0^T  \fint_M\varepsilon^\nu[u^\nu](x,t) \rmd t\rmd x >0.
\ee
\end{myshade}
This observation appears so robust and primal to our understanding of turbulence (as we will see), that it has been termed the \emph{zeroth law}.  Nevertheless, we remark here that it may not be that \emph{all} regimes that could rightly be called "turbulent" behave in the same way, see \cite{IDES25} for a study of statistically stationary turbulence without walls.   Its status for real flows where turbulence is generated by boundaries (say, flow past a grid or in the wake of an airplane) seems far more clear \cite{S84,W18,NFS11}.  See \cite{Eyinkw}. It seems fitting to say that the zeroth law of turbulence is that there are zero laws of turbulence!

\begin{figure}[h!]
  \begin{center}
    \includegraphics[width=0.47\textwidth]{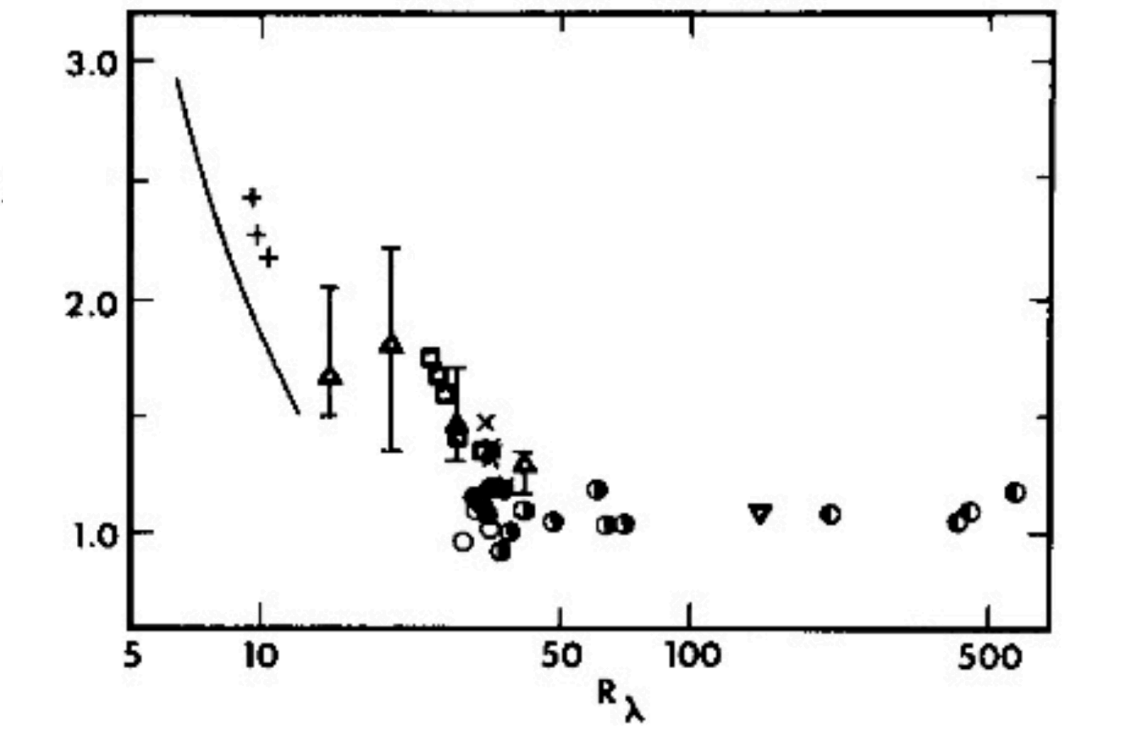}
       \hspace{-2mm} 
       {
        \includegraphics[width=0.38\textwidth]{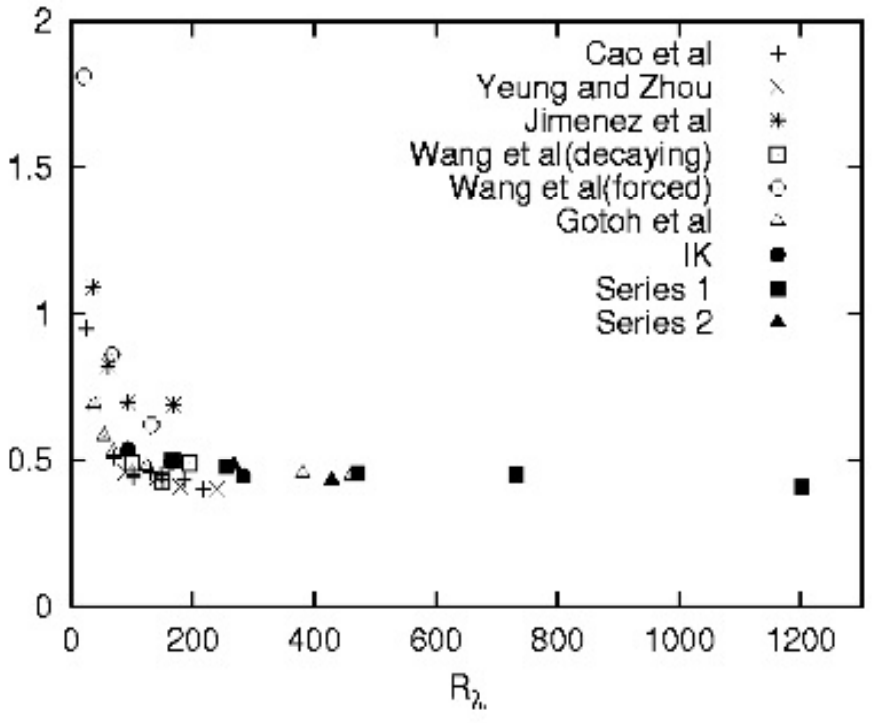}}
  \end{center}
  \caption{Dissipation anomaly from experiment \cite{S84} and numerics \cite{K03}}
  \label{fig:ad}
\end{figure}

For now, let us adopt this observation as an empirical fact, and try to understand more of its root cause.  Clearly, it requires that irregularities emerge in the limit $\nu\to 0$.  Specifically, 
$
\|\nabla u^\nu\|_{L^2_tL^2_x} \sim \tfrac{1}{\sqrt{\nu}}
$
 is required to sustain anomalous dissipation  \eqref{anomalousdiss} from the formula \eqref{locdissform}.  But can we say more about what singularities must develop?  

Let us begin by investigating the fate of the local energy balance for weak solutions.  The balance law is cubic, so to pass to the limit in the sense of distributions we should have strong compactness in $L^3_{t,x}$.  But this, according to Theorem \ref{cmpthm} results from a uniform bound on the third order structure functions $S_3^{u^\nu}(\ell)\lesssim \ell^{\zeta_3}$.  This is, again, readily observed in experiment and simulation.   See, for instance, \cite{S96,ISY20,G02,IDES25}. Thus, under this hypothesis, $u^\nu\to u$ strongly in $L^3_{t,x}$ and we may pass to the limit on the left hand side of the energy balance \eqref{energybal}:
\begin{align}\nonumber
\partial_t \left(\tfrac{1}{2} |u^\nu|^2 \right) + \div \left((\tfrac{1}{2} |u^\nu|^2+p^\nu) u^\nu - \nu \nabla \tfrac{1}{2} |u^\nu|^2\right) \quad \stackrel{\nu \to 0}{\longrightarrow} \quad \partial_t \left(\tfrac{1}{2} |u|^2 \right) + \div \left((\tfrac{1}{2} |u|^2+p) u \right)
\end{align}
where the limit is interpreted in the sense of distributions, e.g. holding upon integration against a smooth test function. As such, from the energy balance  \eqref{energybal} at finite viscosity, the right hand side must also converge in the sense of distributions
\be\label{loclimdis}
\varepsilon^\nu[u^\nu] \ \ \stackrel{\nu \to 0}{\longrightarrow}  \ \ \varepsilon[u] =:\partial_t \left(\tfrac{1}{2} |u|^2 \right) + \div \left((\tfrac{1}{2} |u|^2+p) u \right).
\ee
Note that the distribution $ \varepsilon[u]$ inherits  positivity from the approximation by $\varepsilon^\nu[u^\nu]$ and thus is, in fact,  a non-negative Radon measure.  The phenomenon of anomalous dissipation can now be phrased directly on the inviscid weak solution $u$ as
\be\label{andisse}
(\varepsilon[u] , \varphi)_{L^2_{t,x}} > 0 \qquad \text{for some smooth} \quad \varphi\geq 0.
\ee
Such weak solutions are termed  \emph{locally dissipative}. 
For the reasons discussed above, it is reasonable to believe that such dissipative weak solutions $u$ of the Euler equations \eqref{we1}--\eqref{we2} satisfying \eqref{andisse} provide  descriptions for turbulence in its Platonic form.

\section{Onsager's ideal turbulence}\label{onssec}
\vspace{2mm}

\begin{quotation}
\emph{“My tentative limiting formula for the correlation function in isotropic turbulence is not so obvious
that any one student could be expected to find it.
However, it seemed very probable to me that
somebody would have investigated the line of reasoning, which is not far fetched.”}\\
\phantom{adsf} \hfill L. Onsager in letter to C.C. Lin  \cite{ES06}
\end{quotation}
\vspace{2mm}

The result of the last section is that, under an observed uniform regularity assumption, the high Reynolds number limit of a Navier-Stokes solution is a locally dissipative weak solution of the Euler equation.  Anomalous dissipation is the phenomenon that this local dissipation measure is non-trivial.  Moreover, we see that it can be computed in two ways.  First, it can be computed as an inviscid limit of the \emph{viscous} local dissipation measure \eqref{locdissform}.  But, evidently, it can also be computed on the limiting weak solution itself, see \eqref{loclimdis}.  As such, we now seek another way to compute the dissipation on the inviscid weak solution $u$ which illuminates its ultimate cause. Recall from our earlier considerations around \eqref{onethird} that we expect having exactly $\sfrac{1}{3}$ of a derivative in $L^3_{t,x}$ to play a special role.  The person to envisage irregular weak Euler solutions as describing turbulent dynamics, and to realize this precise regularity threshold was the Nobel Laureate Lars Onsager, see left panel of  Figure \ref{fig:ons}.  He wrote a now famous statement to this effect in his 1949 paper on Statistical Hydrodynamics \cite{O49}.  There he stated that, in order for anomalous dissipation to take place, the fluid cannot satisfy a H\"{o}lderian condition of the form 
\be
|u(x+r) - u(x)|\lesssim |r|^h, \ \  \text{for} \ \ h>1/3
\ee
everywhere in the domain.
 In fact, Onsager knew much more than what he wrote, having effectively proved a strengthening of this statement in his unpublished notes, see right panel of  Figure \ref{fig:ons}.  See \cite{ES06} for an excellent discussion of this history, before which this fact was apparently unknown.  The first rigorous result in the literature towards Onsager's statement was Eyink's work \cite{E94}, followed by the work of Constantin, E and Titi \cite{CET94}.  Duchon and Robert \cite{DR00} framed the results in the context of the inviscid limit, as we have done here, and proved a stronger local statement.  Their approach is the most similar to Onsager's own unpublished work on the matter.
\begin{figure}[h!]
  \begin{center}
    \includegraphics[width=0.3\textwidth]{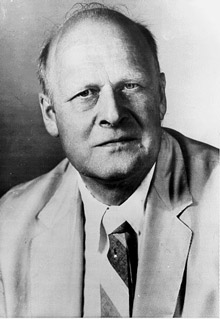}
       \hspace{5mm} 
       {
        \includegraphics[width=0.55\textwidth]{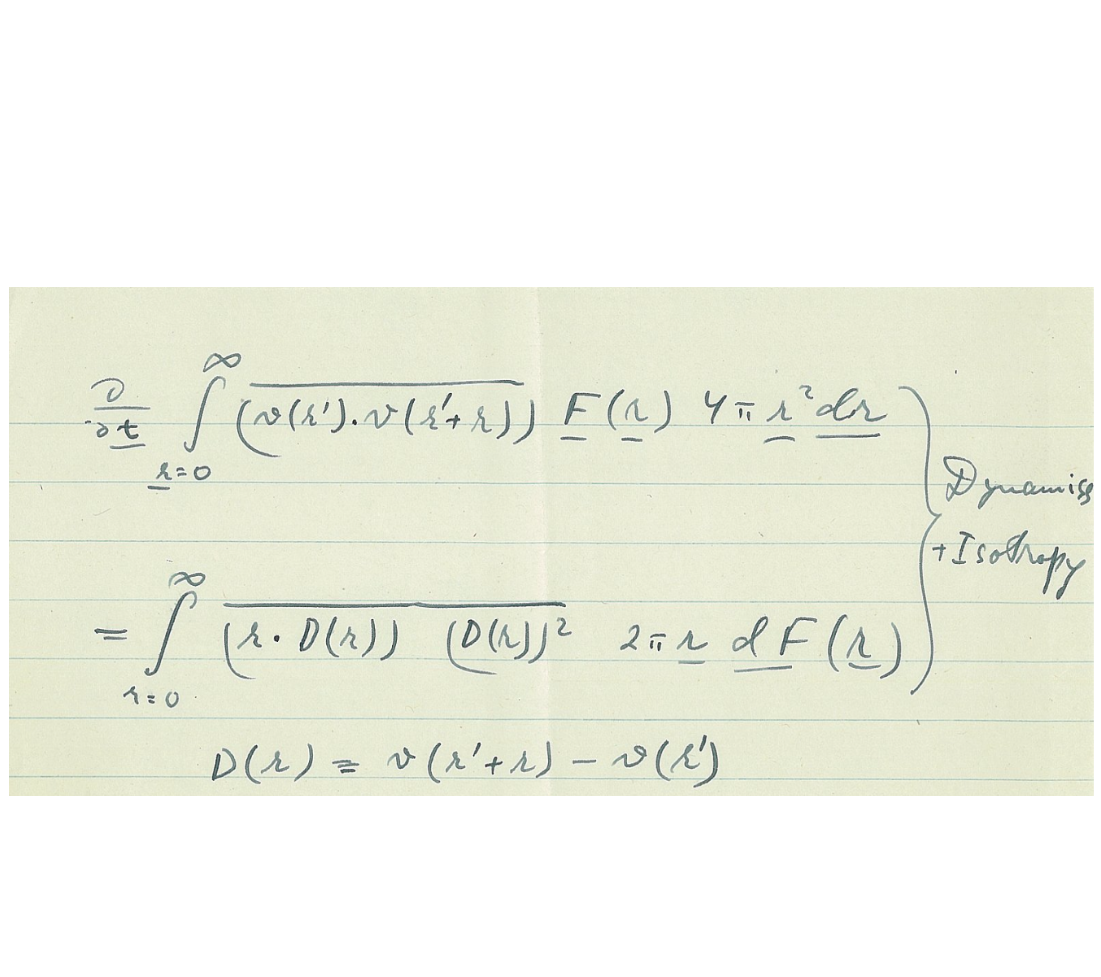}}
  \end{center}
  \caption{Lars Onsager and the anomaly \cite[Folder 11:129, p.14]{O75,ES06}}
  \label{fig:ons}
\end{figure}

Following Onsager, we first obtain a point-split energy balance (working on $M=\mathbb{T}^d$):
\begin{proposition}\label{invthm}
 Let $u^\nu\in L^3_{t,x}$ be a weak solution to \eqref{ns1}--\eqref{ns2} for $\nu\geq 0$. If $\nu>0$ we assume the solution is smooth. Denote by $\ol{u}^{\nu}_\ell =u^\nu *G_\ell$ the space mollification of $u^\nu$ with an even kernel $G$. Also, set $\delta_{\ell z} u^\nu(x,t):=u^\nu(x+\ell z,t)-u^\nu(x,t)$ and 
\be \label{Dell}
    \ve_\ell[u^\nu]:= \frac{1}{4}\int_{B_1(0)} \nabla G(z)\cdot \frac{\delta_{\ell z} u^\nu(x,t)}{\ell} |\delta_{\ell z} u^\nu(x,t)|^2\,\rmd z,
  \ee
    and 
    $$
    J_\ell^\nu[u^\nu]:=\tfrac{1}{2}(u^\nu\cdot \ol{u}^{\nu}_\ell) u +\tfrac{1}{2}(p^\nu \ol{u}^{\nu}_\ell +\ol{p}^{\nu}_\ell u)-\tfrac{\nu }{2}\nabla   (u^\nu\cdot \ol{u}^{\nu}_\ell)+ \tfrac14\big( \ol{(|u^\nu|^2 u^\nu)}_\ell - \ol{|u^\nu|^2}_\ell u^\nu \big)  .
    $$
    Then the identity 
   \begin{align}  \label{moll en identity NS}
       \partial_t \tfrac{1}{2}(u^\nu\cdot \ol{u}^{\nu}_\ell) &+ {\rm div} J_\ell^\nu [u^\nu] =-\nu\nabla u^\nu:\nabla \ol{u}^{\nu}_\ell -    \ve_\ell[u^\nu] + \tfrac{1}{2}( u^\nu\cdot \ol{f}_\ell  + \ol{u}_\ell^\nu\cdot {f}  )
   \end{align}
   holds in the sense of distributions.   
\end{proposition}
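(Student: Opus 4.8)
The plan is to reproduce, in the point-split setting, the mollification argument of Duchon and Robert, treating \eqref{moll en identity NS} as an exact identity between the regularized fields. Since the only standing hypothesis is $u^\nu \in L^3_{t,x}$ (plus smoothness when $\nu>0$), the first thing to record is that every product below --- in particular the cubic quantities $\overline{(|u^\nu|^2 u^\nu)}_\ell$ and the increment integral defining $\ve_\ell[u^\nu]$ --- is locally integrable, so that convolution with $G_\ell$ renders all expressions smooth in $x$ and the distributional manipulations are justified.

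First I would mollify the momentum equation against $G_\ell$ to obtain, valid classically in $x$,
\[
\partial_t \ol{u}^\nu_\ell + \nabla\cdot\overline{(u^\nu\otimes u^\nu)}_\ell = -\nabla \ol{p}^\nu_\ell + \nu\Delta \ol{u}^\nu_\ell + \ol{f}_\ell .
\]
Pairing the original equation with $\ol{u}^\nu_\ell$, pairing this mollified equation with $u^\nu$, summing and halving produces the left-hand side $\partial_t\tfrac12(u^\nu\cdot\ol{u}^\nu_\ell)$ of \eqref{moll en identity NS}; here the evenness of $G$ ensures $u^\nu\cdot\ol{u}^\nu_\ell$ is symmetric in its two evaluation points, which is what makes the symmetric (point-split) energy the natural object. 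The three linear groups are then routine. The pressure terms, using $\nabla\cdot u^\nu = 0 = \nabla\cdot\ol{u}^\nu_\ell$, collapse to $\nabla\cdot\big(\tfrac12(p^\nu\ol{u}^\nu_\ell + \ol{p}^\nu_\ell u^\nu)\big)$. The viscous terms, via $\Delta u^\nu\cdot\ol{u}^\nu_\ell + u^\nu\cdot\Delta\ol{u}^\nu_\ell = \Delta(u^\nu\cdot\ol{u}^\nu_\ell) - 2\nabla u^\nu:\nabla\ol{u}^\nu_\ell$, split into the flux $\nabla\cdot\big(-\tfrac{\nu}{2}\nabla(u^\nu\cdot\ol{u}^\nu_\ell)\big)$ and the bulk term $-\nu\nabla u^\nu:\nabla\ol{u}^\nu_\ell$. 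The forcing gives $\tfrac12(u^\nu\cdot\ol{f}_\ell + \ol{u}^\nu_\ell\cdot f)$ at once.

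The heart of the matter --- and the step I expect to be the main obstacle --- is showing that the nonlinear advection collapses to $-\ve_\ell[u^\nu]$ up to a flux. The engine is the increment representation $f(x+r) = f(x) + \delta_r f$ together with integration by parts in $r$ (using $\nabla_x f(x+r) = \nabla_r f(x+r)$ and $\int\nabla G_\ell = 0$). A first application yields the exact identity
\[
\tfrac14\nabla\cdot\big(\overline{(|u^\nu|^2 u^\nu)}_\ell - \overline{|u^\nu|^2}_\ell\, u^\nu\big) = -\ve_\ell[u^\nu] - \tfrac12\int (\nabla G_\ell(r)\cdot\delta_r u^\nu)\,(u^\nu(x)\cdot\delta_r u^\nu)\,\rmd r ,
\]
where the quadratic-in-$u^\nu(x)$ piece drops because $\int \nabla G_\ell(r)\cdot\delta_r u^\nu\,\rmd r = -\int G_\ell\,(\nabla\cdot u^\nu)(x+r)\,\rmd r = 0$, and the surviving cubic increment is precisely $\ve_\ell$. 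It then remains to integrate by parts the genuine advective contributions $\tfrac12\ol{u}^\nu_\ell\cdot(u^\nu\cdot\nabla)u^\nu + \tfrac12\,u^\nu\cdot\nabla\cdot\overline{(u^\nu\otimes u^\nu)}_\ell$, peel off the transport flux $\nabla\cdot\big(\tfrac12(u^\nu\cdot\ol{u}^\nu_\ell)u^\nu\big)$, and verify --- again by the increment representation and $\nabla\cdot u^\nu = 0$ --- that the remainder equals $+\tfrac12\int(\nabla G_\ell(r)\cdot\delta_r u^\nu)(u^\nu(x)\cdot\delta_r u^\nu)\,\rmd r$, which exactly consumes the cross term produced above so that only $-\ve_\ell$ survives. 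This bookkeeping, where the $|u^\nu(x)|^2$ contributions must be seen to vanish by incompressibility, is where the real work sits. Assembling the surviving fluxes reproduces $J_\ell^\nu[u^\nu]$ and the right-hand side of \eqref{moll en identity NS}, completing the identity in the sense of distributions.
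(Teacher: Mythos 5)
Your proposal follows essentially the same route as the paper: test the weak formulation against the mollified field, symmetrize to get the point-split energy $\tfrac12(u^\nu\cdot\ol{u}^\nu_\ell)$, handle the linear terms as you describe, and identify the cubic increment integral by expanding $|\delta_r u^\nu|^2\,\delta_r u^\nu$ and using $\int\nabla G_\ell=0$ together with incompressibility. Your two-step bookkeeping is algebraically equivalent to the paper's single identity
\begin{equation*}
4\ve_\ell[u]= -\div \ol{(|u|^2 u)}_\ell  + u \cdot \nabla \ol{(|u|^2)}_\ell  + 2\, u\cdot \div \ol{(u\otimes u)}_\ell - 2\, u\otimes u :\nabla \ol{u}_\ell,
\end{equation*}
and I have checked that your first displayed identity and the cancellation of the cross term are correct (modulo a sign: the remainder after peeling off the transport flux is $-\tfrac12\int(\nabla G_\ell\cdot\delta_r u^\nu)(u^\nu\cdot\delta_r u^\nu)\,\rmd r$, not $+\tfrac12\int$; with that sign the cancellation against the cross term in your first identity goes through exactly as you intend).

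The one place where your argument is genuinely thinner than the paper's is the justification of ``pairing the original equation with $\ol{u}^\nu_\ell$'' when $\nu=0$ and $u$ is only $L^3_{t,x}$. Your blanket assertion that every product is locally integrable is not quite enough: the pressure is not given but must be recovered from $-\Delta p = \div\div(u\otimes u)$, and its membership in $L^{3/2}_{t,x}$ is a Calder\'on--Zygmund estimate, not a triviality; moreover $\ol{u}_\ell$ is smooth in $x$ but only $L^3$ in $t$, so to use $\psi\,\ol{u}_\ell$ as a test function one needs $\partial_t\ol{u}_\ell\in L^{3/2}_{t,x}$, which is read off from the mollified equation \emph{after} the pressure estimate is in hand. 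The paper isolates exactly this point in a separate pressure-regularity lemma before running the computation you describe; your sketch should do the same.
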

An immediate consequence of this result is the following integrated balance for $\nu,f=0$:
\be \nonumber 
\frac{\rmd}{\rmd t} \int_{\mathbb{T}^d} \int_{\mathbb{R}^d}  u(x)\cdot u(x+ z) {F}_\ell(|z|)\rmd z \rmd x= - \frac{1}{2} \int_{\mathbb{T}^d} \int_{\mathbb{R}^d} \tfrac{z}{|z|}\cdot \delta_{ z} u(x,t) |\delta_{ z} u(x,t)|^2\, {F}_\ell'(|z|)\rmd z \rmd x.
\ee
where we have taken $G(z) = F(|z|)$ to be spherically symmetric, and replace his isotropy assumption with integration over the fluid vessel (interpreting his overline as a spatial average, in this case over $\mathbb{T}^d$),  then one immediately recovers Onsager's formula for dimension $d=3$ from his notes (left panel of Figure \ref{fig:ons}).
\begin{proof}[Proof of Proposition \ref{invthm}]
We drop superscripts of $\nu$ and set $f=0$ in the course of the proof. Since  $u$ is a weak solution to \eqref{ns1} we have 
    \begin{equation}\label{NS_weak_sol}
( u ; \partial_t \varphi )_{L^2_{t,x}}+ (u \otimes u \ ;\ \nabla \varphi)_{L^2_{t,x}}-\nu  ( \nabla u \ ; \  \nabla  \varphi)_{L^2_{t,x}}+ (p ,\div \varphi)_{L^2_{t,x}} = 0 \qquad \forall \varphi \in C^2_0.
\end{equation}
We need to discuss the pressure; it can be recovered a posteriori as the unique zero average solution to the elliptic problem 
\begin{equation}\label{pressure eq}
-\Delta p =\div \div (u\otimes u) \ \  \text{on } \mathbb{T}^d.
\end{equation}
This Poisson equation for the  pressure  is formally obtained by taking the divergence of \eqref{ns1}.  This can be made more precise, but we omit this here. We simply require
\begin{lemma}[Pressure Regularity] \label{P: pressure CZ}
    If $u\in  L^q([0,T];L^p(\mathbb{T}^d))$ for some $q\in [1,\infty]$ and  $p\in (2,\infty)$, then the unique zero average weak solution to \eqref{pressure eq} satisfies 
    $$
     \|p\|_{L^{\frac{q}{2}}_t L^{\frac{p}{2}}_x}\leq C  \|u\|^2_{L^q_t L^p_x}.
    $$
\end{lemma}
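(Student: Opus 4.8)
The plan is to recognize that solving \eqref{pressure eq} on $\mathbb{T}^d$ amounts to applying a second-order Riesz-transform operator to $u\otimes u$, and then to invoke the Calder\'on--Zygmund $L^r$ theory in the one range where it is valid. First I would pass to Fourier series: since $p>2$ implies $p/2>1$, each product $u_iu_j$ lies in $L^{p/2}(\mathbb{T}^d)\subset L^1(\mathbb{T}^d)$ (finite measure), so the coefficients $\widehat{u_iu_j}(k)$ are well defined and bounded. Taking Fourier coefficients in \eqref{pressure eq} and imposing the zero-average normalization yields the explicit representation
\[
\widehat{p}(k)=\sum_{i,j}\frac{k_ik_j}{|k|^2}\,\widehat{u_iu_j}(k)\quad(k\neq 0),\qquad \widehat{p}(0)=0,
\]
which simultaneously establishes existence and uniqueness of the mean-zero weak solution and identifies $p=\sum_{i,j}R_iR_j(u_iu_j)$, where $R_i$ denotes the $i$th Riesz transform on the torus.

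Next I would apply the singular-integral estimate. The symbol $-k_ik_j/|k|^2$ is bounded and satisfies the Mikhlin--H\"ormander conditions, so the operator $R_iR_j$ is bounded on $L^r(\mathbb{T}^d)$ for every $r\in(1,\infty)$. The key point is that the hypothesis $p\in(2,\infty)$ is exactly what places $r:=p/2$ strictly inside $(1,\infty)$, away from the forbidden endpoints. Combined with H\"older's inequality $\|u_iu_j\|_{L^{p/2}_x}\le \|u_i\|_{L^p_x}\|u_j\|_{L^p_x}$, this gives the pointwise-in-time spatial bound
\[
\|p(\cdot,t)\|_{L^{p/2}_x}\le C\sum_{i,j}\|u_iu_j(\cdot,t)\|_{L^{p/2}_x}\le C\,\|u(\cdot,t)\|_{L^p_x}^2
\]
for a.e.\ $t\in[0,T]$, with $C=C(d,p)$ the Calder\'on--Zygmund constant.

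Finally I would integrate in time. Raising the previous inequality to the power $q/2$ and integrating over $[0,T]$ gives
\[
\|p\|_{L^{q/2}_tL^{p/2}_x}^{q/2}=\int_0^T\|p(\cdot,t)\|_{L^{p/2}_x}^{q/2}\,\rmd t\le C^{q/2}\int_0^T\|u(\cdot,t)\|_{L^p_x}^{q}\,\rmd t=C^{q/2}\,\|u\|_{L^q_tL^p_x}^{q},
\]
and taking the $(q/2)$th root yields the claimed estimate; the case $q=\infty$ is identical with the essential supremum replacing the time integral.

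I do not expect a genuine obstacle here: the argument is a textbook application of Calder\'on--Zygmund theory. The only point requiring care is the restriction $p\in(2,\infty)$, which must be respected precisely because it is what keeps $p/2$ within the open interval $(1,\infty)$ on which the double Riesz transform is bounded; at the endpoint $p=2$ one would be forced onto $L^1$, where the operator fails to be bounded, and similarly $p=\infty$ would correspond to the inadmissible $L^\infty$ endpoint. A minor bookkeeping step is to confirm that the distribution defined by the multiplier formula coincides with the notion of weak solution used in \eqref{pressure eq}, which follows by testing against trigonometric polynomials.
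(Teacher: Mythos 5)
Your argument is correct and is essentially the paper's own proof: the paper likewise writes $p=\mathcal{K}(u\otimes u)$ for a Calder\'on--Zygmund operator (your double Riesz transform $\sum_{i,j}R_iR_j$), bounds $\|u(t)\otimes u(t)\|_{L^{p/2}_x}\leq C\|u(t)\|^2_{L^p_x}$, and integrates in time. Your version merely spells out the Fourier-multiplier representation and the role of the restriction $p\in(2,\infty)$, which the paper leaves implicit.
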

\begin{proof}[Proof of Lemma \ref{P: pressure CZ}]
    We have $p=\mathcal K (u\otimes u)$ for a Calderón–Zygmund operator $\mathcal K$. Clearly $ \|u(t)\otimes u(t)\|_{L^{\frac{p}{2}}_x}\leq C \| u(t)\|^2_{L^{p}_x}.$
The lemma follows by continuity of Calderón–Zygmund operators in the spaces under consideration, together with an integration in time.
\end{proof}
Thus,  the pressure enjoys the regularity $p\in L^\frac{3}{2}_{x,t}$ by Lemma \ref{P: pressure CZ}.  We now introduce mollification of the weak solution $\ol{u}_\ell$, on which we may perform manipulations. The mollification procedure  was described already  in the proof of Lemma \ref{Slem}, see Figure \ref{fig:moll} for a visualization.  Specifically,  we choose $\ol{u}_\ell$ as our test function in weak formulation of the Euler/Navier-Stokes equations \eqref{we1}--\eqref{we2}. 
\begin{figure}[h!]
  \begin{center}
    \includegraphics[width=0.7\textwidth]{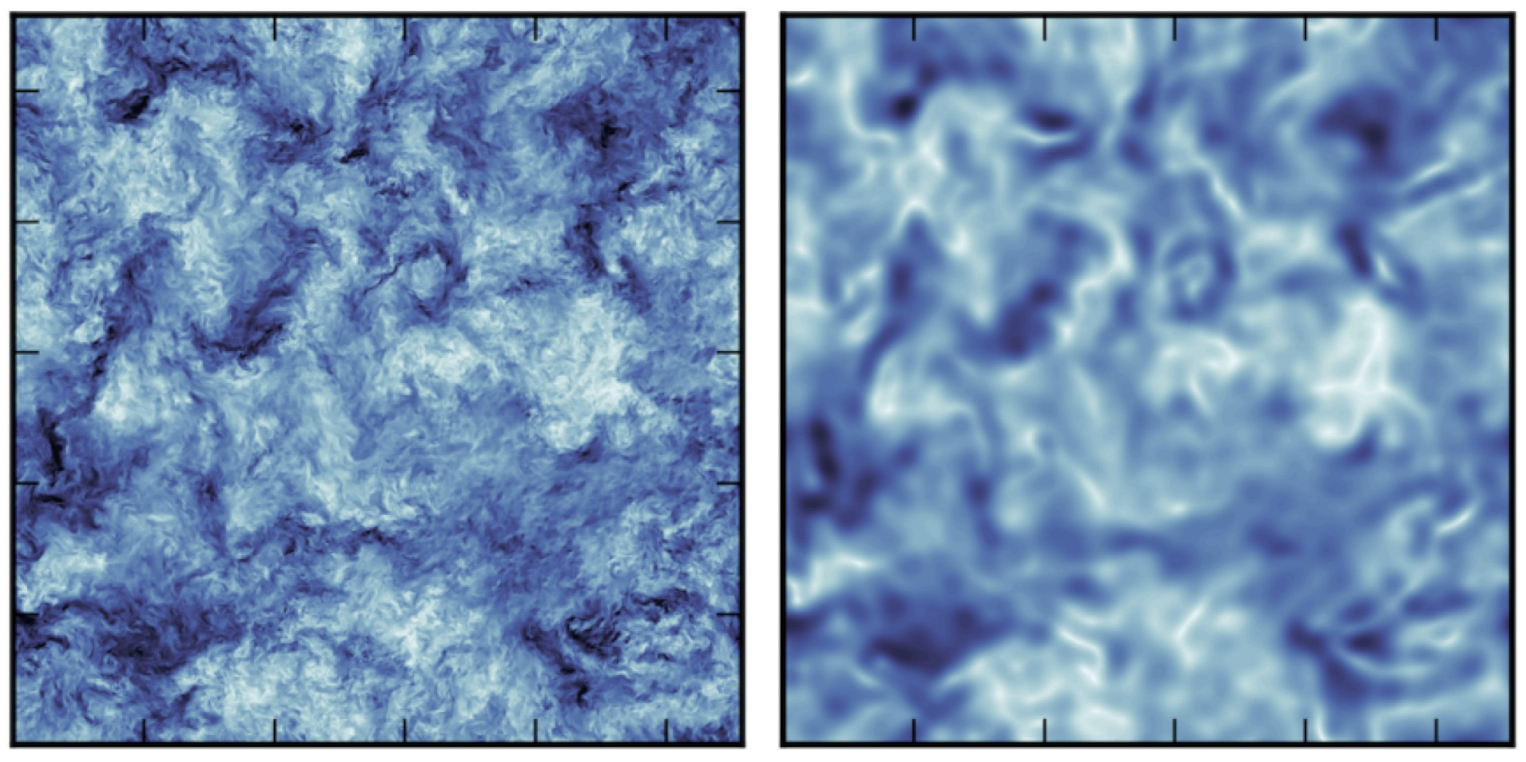}
  \end{center}
  \caption{Side by side comparison of fine grained velocity $u$, and its coarse grained counterpart $\ol{u}_\ell$ (taken from \cite{DJLW17}). }
  \label{fig:moll}
\end{figure}
Clearly $\ol{u}_\ell \in L^3_t C^\infty_x$.  Moreover, by
$$
\partial_t \ol{u}_\ell = - \div \ol{(u \otimes u)}_\ell - \nabla \ol{p}_\ell +\nu \Delta \ol{u}_\ell
$$
we also get $\partial_t \ol{u}_\ell \in L^{\frac{3}{2}}_{x,t}$  (better, in fact. See \cite[Proposition 2]{D19}). This allows to choose $\varphi = \psi \ol{u}_\ell$ as a test function in \eqref{NS_weak_sol}, for any arbitrary scalar $\psi\in C^\infty_0(\mathbb{R}\times \mathbb{T}^d)$. We get
\begin{align*}
0&=(  u \cdot \partial_t \ol{u}_\ell, \psi)_{L^2_{t,x}} +(  u \cdot \ol{u}_\ell, \partial_t\psi)_{L^2_{t,x}}  +( u \cdot \nabla \ol{u}_\ell \cdot  u ,  \psi)_{L^2_{t,x}} + (  (u\cdot \ol{u}_\ell)  u; \nabla \psi)_{L^2_{t,x}}   \\
&\quad+ ( p \ol{u}_\ell ; \nabla \psi)_{L^2_{t,x}} -\nu ( \nabla u \ ; \ \nabla \psi \otimes \ol{u}_\ell)_{L^2_{t,x}} - \nu ( \nabla u : \nabla \ol{u}_\ell, \psi)_{L^2_{t,x}}  \\
&=-( u\cdot \div \ol{( u \otimes u)}_\ell, \psi)_{L^2_{t,x}}+ ( \ol{p}_\ell u ; \nabla \psi)_{L^2_{t,x}} +(  u \cdot \ol{u}_\ell, \partial_t\psi)_{L^2_{t,x}} \\
&\quad+(  u \otimes u:\nabla \ol{u}_\ell ,\psi)_{L^2_{t,x}} +(  (u\cdot \ol{u}_\ell) u; \nabla \psi)_{L^2_{t,x}} +( p \ol{u}_\ell ; \nabla \psi)_{L^2_{t,x}} \\
&\qquad+\nu ( u\cdot  \ol{u}_\ell ,\Delta \psi)_{L^2_{t,x}}  -2 \nu ( \nabla u : \nabla \ol{u}_\ell, \psi)_{L^2_{t,x}} .
\end{align*}
where, by the incompressibility of $u$ we have used
$
- ( u \cdot \nabla \ol{p}_\ell, \psi)_{L^2_{t,x}} = ( \ol{p}_\ell u ; \nabla \psi)_{L^2_{t,x}}
$
and we integrated by parts in the viscous terms. Moreover, by expanding the cube we obtain 
\begin{align*}
    4\ve_\ell[u]&= -\div \ol{(|u|^2 u)}_\ell  + u \cdot \nabla \ol{(|u|^2)}_\ell  + 2 u\cdot \div \ol{(u\otimes u)}_\ell - 2 u\otimes u :\nabla \ol{u}_\ell.
\end{align*}
Thus, by reassembling the terms in the above identity we establish the identity. \end{proof}

\begin{remark}[Longitudinal flux]\label{45thlawrem}
A similar balance as the one established in  Proposition \ref{invthm}  holds for an important related quantity.   Specifically, one can prove that 
\be \label{Dell45}
    \ve_\ell^\|[u^\nu]:= \frac{d(d+2)}{12 } \int_{B_1(0)} \nabla G(z)\cdot \frac{\delta_{\ell z} u^\nu(x,t)}{\ell} |z \cdot \delta_{\ell z} u^\nu(x,t)|^2\,\rmd z,
  \ee
appears as the flux  in a similar distributional identity to \eqref{moll en identity NS}. 
 See Eyink \cite{E03} and  Novack \cite{N24}.  This form is related to the celebrated Kolmogorov $\sfrac{4}{5}$ law, as we shall see.
\end{remark}

An immediate corollary of Proposition \ref{invthm} captures the approach of Onsager and reproduces the result of  Duchon--Robert \cite{DR00}  (see \cite{Dubrulle18} for experimental confirmation):
\begin{corollary}[Local energy balance]\label{invthmcor4}
 Any $u\in L^3_{t,x}$ weak Euler solution satisfies 
\be\label{disenbal}
 \partial_t \left(\tfrac{1}{2} |u|^2 \right) + {\rm div} \left((\tfrac{1}{2} |u|^2+p) u \right) =-\ve [u]+ u \cdot f
\ee
in the sense of distributions,   where the defect term is given by the distributional limit 
\be\label{fluxbal}
\ve[u] =\lim_{\ell\rightarrow 0}\ve_\ell[u]
\ee
 with $\ve_\ell[u]$ defined by \eqref{Dell}.
    \end{corollary}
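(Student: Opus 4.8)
The plan is to specialize the mollified balance \eqref{moll en identity NS} of Proposition~\ref{invthm} to $\nu=0$ and then let the filter scale $\ell\to 0$, controlling every term except the dissipation defect by elementary mollification estimates, so that the convergence of $\ve_\ell[u]$ is \emph{forced} by the identity itself. Setting $\nu=0$ kills the two viscous terms and leaves
\[
\partial_t \tfrac12(u\cdot \ol{u}_\ell) + \div J_\ell[u] = -\ve_\ell[u] + \tfrac12\big(u\cdot \ol{f}_\ell + \ol{u}_\ell\cdot f\big),
\]
with $J_\ell[u] = \tfrac12(u\cdot \ol{u}_\ell)u + \tfrac12(p\,\ol{u}_\ell + \ol{p}_\ell\, u) + \tfrac14\big(\ol{(|u|^2 u)}_\ell - \ol{|u|^2}_\ell\, u\big)$. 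I would test this against an arbitrary $\psi\in C^\infty_0((0,T)\times\mathbb{T}^d)$ and pass to the limit term by term.

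The routine part is to identify the limits by standard mollification together with H\"older. Since $u\in L^3_{t,x}$ one has $\ol{u}_\ell\to u$ in $L^3_{t,x}$ and $\ol{|u|^2}_\ell\to |u|^2$ in $L^{3/2}_{t,x}$; combining these with $L^3_{t,x}\cdot L^{3/2}_{t,x}\hookrightarrow L^1_{t,x}$ gives $u\cdot \ol{u}_\ell \to |u|^2$ in $L^{3/2}_{t,x}$ (so the kinetic density converges), $\tfrac12(u\cdot\ol{u}_\ell)u\to \tfrac12|u|^2 u$ in $L^1_{t,x}$, and the commutator $\ol{(|u|^2 u)}_\ell - \ol{|u|^2}_\ell\, u\to 0$ in $L^1_{t,x}$. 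For the pressure contribution I would invoke Lemma~\ref{P: pressure CZ}, which at the level $u\in L^3_{t,x}$ yields $p\in L^{3/2}_{t,x}$; hence $p\,\ol{u}_\ell\to pu$ and $\ol{p}_\ell\, u\to pu$ in $L^1_{t,x}$, so the pressure piece converges to $pu$. Assembling, $J_\ell[u]\to (\tfrac12|u|^2 + p)u$ in $L^1_{t,x}$, while $\tfrac12(u\cdot \ol{f}_\ell + \ol{u}_\ell\cdot f)\to u\cdot f$ because $f$ is regular.

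The conceptual crux — and the only non-mechanical step — is the existence of the limit \eqref{fluxbal}. I would not try to extract it directly from the third-increment formula \eqref{Dell}; rather, since the time-derivative term, the flux divergence, and the forcing all converge as distributions, the identity leaves $\ve_\ell[u]$ no choice but to converge as a distribution as well, and I would \emph{define} $\ve[u]:=\lim_{\ell\to 0}\ve_\ell[u]$. The limiting balance \eqref{disenbal} then holds in $\mathcal{D}'$. This is the mechanism by which the anomaly is isolated: every term in $J_\ell$ collapses to its naive smooth counterpart $(\tfrac12|u|^2+p)u$, and whatever obstruction to classical energy conservation survives is exactly the Duchon--Robert defect $\ve[u]$. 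The main thing to watch is the bookkeeping of integrability exponents so that all products land in $L^1$ — precisely where the sharp $L^3$ regularity enters, matching the cubic, one-derivative structure anticipated in \eqref{onethird}.
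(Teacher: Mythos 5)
Your proposal is correct and follows essentially the same route as the paper: specialize the mollified balance of Proposition~\ref{invthm} to $\nu=0$, use strong $L^p$-continuity of mollification (plus the Calder\'on--Zygmund pressure bound) to pass to the limit in every term of $J_\ell[u]$ and the forcing, and observe that the identity then forces $\ve_\ell[u]$ to converge in $\mathcal{D}'$, which defines $\ve[u]$. The paper's proof is just a terser version of the same argument; your exponent bookkeeping fills in exactly the details it leaves implicit.
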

    \begin{remark}
     Recall from \S \ref{nssec}, if $u$ arises from a strong $L^3$ limit of Navier-Stokes solutions $u^\nu$ then $\lim_{\nu\to 0} \ve^\nu[u^\nu] = \ve[u]$ holds in the sense of distributions. As such, Corollary \ref{invthmcor4} identifies the non-linear Euler energy flux with the viscous dissipation:
     \be\label{fluxbalvisc}
   \lim_{\nu\to 0} \ve^\nu[u^\nu] = \lim_{\ell\rightarrow 0}\ve_\ell[u].
     \ee
    \end{remark}
\begin{proof}[Proof of Corollary \ref{invthmcor4}]
Since translations are strongly continuous in $L^p$, it is easy to see
\begin{align*}
  (  J_\ell[u] , \psi)_{L^2_{t,x}}  &\to (\tfrac{1}{2}|u|^2 ,\partial_t\psi )_{L^2_{t,x}}+ (\tfrac{1}{2}|u|^2+p )u ; \nabla \psi)_{L^2_{t,x}},\\
\left(\tfrac{1}{2}( u\cdot \ol{f}_\ell  + \ol{u}_\ell\cdot {f}  ), \psi\right)_{L^2_{t,x}} &\to (u \cdot f, \psi)_{L^2_{t,x}}.
\end{align*}
It follows that the distributional limit of $\ve_\ell[u]$ exists and must obey \eqref{disenbal}.  
\end{proof}

Thus, we return to Onsager's assertion about energy conservation.  The first published results are due to Eyink  \cite{E94} and Constantin, E and Titi \cite{CET94}.  The final result is
\begin{theorem}[Energy Conservation]\label{invthmcor3}
 Fix $p\geq 3$ and $\sigma_p>1/3$.  Any  weak Euler solution in $u\in  L^p(0,T; B_{p,\infty}^{\sigma_p}(\mathbb{T}^d))$   or, equivalently,
\be
S_p^u(\ell)\lesssim \ell^{\zeta_p} \ \text{ with }\  \zeta_p>\frac{p}{3},
\ee
has a trivial energy defect  $\ve[u]=0$. If also $u\in C(0,T; L^2(\mathbb{T}^d))$,  energy is conserved.
\end{theorem}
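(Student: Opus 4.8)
The plan is to run the Constantin--E--Titi / Duchon--Robert scheme directly on the Duchon--Robert flux representation. By Corollary~\ref{invthmcor4} the energy defect is the distributional limit $\ve[u]=\lim_{\ell\to 0}\ve_\ell[u]$, with $\ve_\ell[u]$ the cubic expression \eqref{Dell}. Since convergence in $L^1((0,T)\times\mathbb{T}^d)$ implies convergence in $\mathcal{D}'$, it suffices to show $\|\ve_\ell[u]\|_{L^1_{t,x}}\to 0$ as $\ell\to 0$; uniqueness of distributional limits then forces $\ve[u]=0$.

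For the key estimate I would bound the integrand pointwise. Since $\nabla G$ is bounded and supported in $B_1(0)$,
\be\nonumber
|\ve_\ell[u](x,t)|\le \frac{C}{\ell}\int_{B_1(0)}|\nabla G(z)|\,|\delta_{\ell z}u(x,t)|^3\,\rmd z,
\ee
so, integrating in $x$ and $t$ and applying Tonelli,
\be\nonumber
\|\ve_\ell[u]\|_{L^1_{t,x}}\le \frac{C}{\ell}\int_{B_1(0)}|\nabla G(z)|\left(\int_0^T\|\delta_{\ell z}u(\cdot,t)\|_{L^3(\mathbb{T}^d)}^3\,\rmd t\right)\rmd z.
\ee
Because $\mathbb{T}^d$ has unit measure and $p\ge 3$, H\"older in space gives $\|\delta_{\ell z}u\|_{L^3}\le\|\delta_{\ell z}u\|_{L^p}$, and H\"older in time (with exponents $p/3$ and its conjugate) turns the inner integral into $T^{1-3/p}\big(\int_0^T\|\delta_{\ell z}u\|_{L^p}^p\,\rmd t\big)^{3/p}$. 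The weak Besov bound of Lemma~\ref{Slem}, equivalently $S_p^u(\ell)\lesssim\ell^{\zeta_p}$ with $\zeta_p=p\sigma_p$, controls $\fint_0^T\|\delta_{\ell z}u\|_{L^p}^p\,\rmd t\lesssim(\ell|z|)^{p\sigma_p}$ uniformly in $z$. Collecting powers of $T$ yields $\int_0^T\|\delta_{\ell z}u\|_{L^3}^3\,\rmd t\lesssim(\ell|z|)^{3\sigma_p}$, hence
\be\nonumber
\|\ve_\ell[u]\|_{L^1_{t,x}}\lesssim \ell^{3\sigma_p-1}\int_{B_1(0)}|\nabla G(z)|\,|z|^{3\sigma_p}\,\rmd z.
\ee

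The remaining integral is finite since $G$ is smooth with compact support, so the whole bound behaves like $\ell^{3\sigma_p-1}$, which tends to $0$ \emph{precisely} when $\sigma_p>1/3$ --- Onsager's exponent enters exactly here --- giving $\ve[u]=0$. For the final assertion, insert $\ve[u]=0$ (with $f=0$) into the local balance \eqref{disenbal}: the transport/pressure term is a spatial divergence, so testing against $\phi(t)\otimes 1$ with $\phi\in C^\infty_0(0,T)$ annihilates it over the closed manifold $\mathbb{T}^d$, leaving $\int_0^T e(t)\phi'(t)\,\rmd t=0$ where $e(t)=\tfrac12\|u(t)\|_{L^2}^2$. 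Thus $e$ has vanishing distributional derivative, and the hypothesis $u\in C(0,T;L^2)$ makes $e$ continuous, hence constant: energy is conserved.

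The main obstacle is not a single hard estimate but the bookkeeping that converts the \emph{time-averaged} structure-function hypothesis into a space-time $L^1$ bound on the cubic flux; the weak Besov norm of Lemma~\ref{Slem}, with its supremum pulled outside the time integral, is tailored exactly so that $\fint_0^T\|\delta_{\ell z}u\|_{L^p}^p\,\rmd t\lesssim(\ell|z|)^{p\sigma_p}$ holds uniformly in $z$ and survives the two H\"older reductions. A minor point worth recording is the harmless reduction of $p>3$ to $p=3$: on the finite-measure torus one has $B_{p,\infty}^{\sigma_p}\hookrightarrow B_{3,\infty}^{\sigma_p}$, so the hypothesis genuinely constrains only third-order increments, consistent with Onsager's original $L^3$-based criterion.
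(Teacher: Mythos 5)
Your argument is correct and follows essentially the same route as the paper: both pass through Corollary~\ref{invthmcor4}, bound $\|\ve_\ell[u]\|_{L^1_{t,x}}$ by $\frac{C}{\ell}\int |\nabla G(z)|\,\|\delta_{\ell z}u\|^3_{L^p_{x,t}}\,\rmd z \lesssim \ell^{3\sigma_p-1}$ using the H\"older reduction to $p\geq 3$ and the time-averaged Besov seminorm of Lemma~\ref{Slem}, and conclude $\ve[u]=0$ for $\sigma_p>1/3$. Your additional step deducing conservation by testing the local balance \eqref{disenbal} against $\phi(t)\otimes 1$ on the torus is a correct spelling-out of what the paper leaves implicit.
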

\begin{proof}[Proof of Theorem \ref{invthmcor3}]
In view of Corollary \ref{invthmcor4}, we bound $\ve_\ell[u]$. Since $p\geq 3$ we have
\be\label{fluxbound}
    \|\ve_\ell[u]\|_{L^{1}_{x,t}}\leq     \|\ve_\ell[u]\|_{L^{p/3}_{x,t}}\leq  \frac{C}{\ell}\int_{\mathbb{R}^d} |\nabla G(z)| \|\delta_{\ell z}u\|^3_{L^p_{x,t}}\,\rmd z\lesssim  \ell^{3\sigma_p -1}\|u\|^3_{L^p_tB^{\sigma_p}_{p,\infty}},
   \ee
    which vanishes if $\sigma_p>\frac13$. Thus   $\ve[u]=0$ is trivial for solutions $u\in L^p(0,T; B_{p,\infty}^{1/3+}(\mathbb{T}^d))$.
\end{proof}

Theorem \ref{invthmcor3} proves a strengthened version of Onsager's original 1949 assertion that the flow cannot possess more that $\sfrac{1}{3}$ of a derivative and be consistent with anomalous dissipation.  The "negative side" of this conjecture was finally proved by Isett \cite{Isett} where he exhibited a weak solution of Euler with slightly less than $\sfrac{1}{3}$ of a derivative that did not conserve energy (such solutions can, in fact, be dissipative \cite{BDSV}). This followed a remarkable series of works on the subject of constructing wild weak solutions via a technique called convex integration \cite{Sch,Shn,DL1,DL2}. See the comprehensive review by Buckmaster and Vicol \cite{BV20}.  The union of these two sides, positive and negative, is termed Onsager's conjecture.  However, an important question about the excluded middle remains open (and therefore, so does "Onsager's conjecture"):

\begin{question}\label{consques}
Suppose $u\in L^\infty(0,T; L^2(\mathbb{T}^d)) \cap L^p(0,T; B_{p,\infty}^{1/3}(\mathbb{T}^d))$ for any $p\geq 3$ is a weak solution of Euler.  Does it conserve energy?
\end{question}

We are mainly interested in the case without forcing $f=0$, but it's reasonable to consider it for $f\in L^1(0,T;L^2(\mathbb{T}^d))$.  Note, being finite energy $u\in L^\infty_t L_x^2$, one can choose a representative $u\in C_tL^2_w$ (namely, continuous in time, weakly in $L^2_x$).  It is simple to see that this object satisfies the integrated energy balance for times $0\leq s\leq t\leq T$:
\be
\tfrac{1}{2} \|u(t)\|_{L^2}^2 =\tfrac{1}{2} \|u(s)\|_{L^2}^2 - \int_s^t \int_{\mathbb{T}^d} \ve[u(x,\tau)]\rmd x \rmd \tau +  \int_s^t \int_{\mathbb{T}^d} u(x,\tau)\cdot f(x,\tau)\rmd x \rmd \tau
\ee
where $\ve[u] =\lim_{\ell\rightarrow 0}\ve_\ell[u]$
 with $\ve_\ell[u]$ defined by \eqref{Dell}.
Moreover $\int_{\mathbb{T}^d} \ve[u(x,t)]\rmd x $ is $L^1_t$ since $u\in L^3_tB_{3,\infty}^{1/3}$\footnote{This requires the true Besov space (not as in  Lemma \ref{Slem}), with supremum taken before time integral.}.  This means, that  $ \int_s^t \int_{\mathbb{T}^d} \ve[u(x,\tau)]\rmd x \rmd \tau$ is continuous in $s$ and $t$, giving zero mass to zero measure time intervals.

Question \ref{consques} is particularly important because these spaces, in particular $u\in L^3_tB_{3,\infty}^{1/3}$, are natural candidates for the regularity of \emph{actual} turbulent flows. We will discuss this further point  in \S \ref{K41sec}.  For the moment, I remark that there is only one ``critical space" (namely a space where the flux $\ve_\ell[u]$ is bounded but does not vanish for general vector fields in that class) for which the answer to Question \ref{consques} is known.  This critical space is $u\in L^\infty_{t,x}\cap L^1_t BV$, in which De Rosa and Inversi showed weak solutions \emph{conserve energy} \cite{DI24} (see also \cite{I25}).   This result is remarkable since  $L^\infty_{t,x}\cap L^1_t BV_x\subset L^3_t B_{3,\infty}^{1/3}$ sharply. Indeed, if $\|u(\cdot+ r)- u(\cdot)\|_{L^1}\lesssim |r|$ and $\|u\|_{L^\infty}<\infty$, then by interpolation $\|u(\cdot+ r)-u(\cdot)\|_{L^p}\lesssim |r|^{1/p}$ for all $p\geq 1$.  Moreover for compressible models such as Burgers, shocks live in  $u\in L^\infty_{t,x}\cap L^1_t BV_x$ and \emph{are} dissipative.  As for Question \ref{consques}, I would guess the answer is no, although recent work \cite{DI24} as well as \cite{IDES25} brings into question some of the intuition behind this guess. This will be discussed around Conjecture \ref{conj45}.

\begin{remark}[Vortex sheets conserve energy] In this remark, we will show that regular vortex sheets are conservative, despite living in the critical regularity class $L^\infty_tB_{3,\infty}^{1/3}$ and no better.  For the same reason as for \cite{DI24}, this result distinguishes the incompressible case from the compressible case of shocks. This statement was first proved by Shvydkoy \cite{Sh9} and then in \cite{DIN24} whose discussion we follow. It can also be understood as a consequence of the aforementioned result \cite{DI24}, but it is simple to prove directly. We do this here. We need a preliminary result about bounded measure divergence fields:
\begin{lemma}\label{sheetlem}
Let $v:\mathbb{R}^{N}\to \mathbb{R}^{N}$ be a bounded vector field. Suppose that $\nabla \cdot v= \mu$, where $\mu$ is a Radon measure.  Then, for  any smooth oriented codimension one surface $\Sigma$ in space-time with normal $\hat{n}$, we have
\be
\mu(\Sigma)=\int_\Sigma [\![ v ]\!]\cdot \hat{n} \ \rmd \sigma 
\ee
 where $v^{\mp}$ are the traces (provided they exist) of $v$ along $\Sigma$ and $ [\![ v ]\!]= v^--v^+$.
\end{lemma}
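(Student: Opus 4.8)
The plan is to prove the jump formula by a direct mollification-and-slicing argument, localizing near $\Sigma$ through the signed distance function. Since the claim is local in nature and $\mu$ is Radon (hence locally finite), I would first use a partition of unity to reduce to a single coordinate patch in which $\Sigma$ is the zero set of a smooth signed distance $d$, so that $\nabla d = \hat n$ and $|\nabla d|=1$ in a tubular neighborhood; a fixed cutoff $g\in C_0^\infty$ with $g\equiv 1$ on the relevant piece of $\Sigma$ confines all integrals to this patch, and the global identity follows by summing the patches. In the associated tubular (normal exponential) coordinates $(y,s)$, $y\in\Sigma$, $s=d$, one has $\rmd x = J(y,s)\,\rmd\sigma(y)\,\rmd s$ with $J(y,0)=1$. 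I orient $\hat n=\nabla d$ and write $v^-$ for the trace of $v$ from $\{d>0\}$ and $v^+$ for the trace from $\{d<0\}$, which is the labelling consistent with the sign in the statement.

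Next I would test the defining relation $\nabla\cdot v=\mu$ against $\varphi_\epsilon:=g\,\chi(d/\epsilon)$, where $\chi$ is smooth, even, $\chi(0)=1$, and supported in $(-1,1)$. The definition of distributional divergence gives $\langle\mu,\varphi_\epsilon\rangle=-\int v\cdot\nabla\varphi_\epsilon\,\rmd x$. On the measure side, $\varphi_\epsilon\to\mathbf 1_\Sigma$ pointwise with $|\varphi_\epsilon|\le 1$, so by dominated convergence (here local finiteness of $\mu$ is what is used) $\langle\mu,\varphi_\epsilon\rangle\to\mu(\Sigma)$, taking $g\equiv 1$ on $\Sigma$.

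On the field side I expand $\nabla\varphi_\epsilon=\chi(d/\epsilon)\nabla g+\tfrac1\epsilon\,g\,\chi'(d/\epsilon)\,\hat n$. The first term is supported in $\{|d|<\epsilon\}$, a set of volume $O(\epsilon)$, and $v\cdot\nabla g$ is bounded, so its contribution is $O(\epsilon)\to 0$. The second term, rewritten in tubular coordinates and rescaled by $s=\epsilon t$, equals
\[
\int_\Sigma\int_{-1}^{1}\chi'(t)\,g(y,\epsilon t)\,\big(v\cdot\hat n\big)(y,\epsilon t)\,J(y,\epsilon t)\,\rmd t\,\rmd\sigma(y).
\]
Splitting at $t=0$ and using $\int_{-1}^{0}\chi'=1$, $\int_{0}^{1}\chi'=-1$, together with $J\to 1$, $g(y,\epsilon t)\to g(y)$, and the one-sided traces $(v\cdot\hat n)(y,\epsilon t)\to v^+\cdot\hat n$ for $t<0$ and $\to v^-\cdot\hat n$ for $t>0$, this converges to $\int_\Sigma g\,(v^+-v^-)\cdot\hat n\,\rmd\sigma$. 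Hence $\langle\mu,\varphi_\epsilon\rangle\to -\int_\Sigma g\,(v^+-v^-)\cdot\hat n\,\rmd\sigma=\int_\Sigma g\,[\![ v ]\!]\cdot\hat n\,\rmd\sigma$, and comparing with the measure side yields the claimed identity.

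The main obstacle is the passage to the limit in the sliced integral: justifying that $(v\cdot\hat n)(y,\epsilon t)\to v^\mp(y)\cdot\hat n$ for a.e. $y\in\Sigma$ and that this convergence may be pushed through the $\rmd\sigma(y)$ integration. Boundedness of $v$ supplies the uniform domination $|\chi'(t)\,g\,(v\cdot\hat n)\,J|\lesssim\|v\|_\infty|\chi'(t)|$ needed for dominated convergence, but the pointwise convergence to the one-sided traces is precisely the hypothesis that the traces exist (as one-sided Lebesgue/approximate limits along the normal fibers); this is exactly what the caveat ``provided they exist'' encodes. Everything else — the dominated convergence on the measure side and the $O(\epsilon)$ estimate — is routine. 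I would finally remark that the same identity can be read off from the Gauss--Green/normal-trace theory for bounded measure-divergence fields (Anzellotti, Chen--Frid--Torres), but the direct slicing argument above is self-contained and, as the paper notes, elementary.
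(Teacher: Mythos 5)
Your proof is correct, and it takes a somewhat different route from the paper's. The paper's own argument is a two-line Gauss--Green computation: it takes an arbitrary $\varphi\in C_0^\infty$ supported in a ball $B$ that $\Sigma$ splits into $B^-\cup B^+\cup(\Sigma\cap B)$, integrates $-\int_B v\cdot\nabla\varphi\,\rmd x$ by parts separately on $B^-$ and on $B^+$, and compares the result with $\int\varphi\,\rmd\mu$ to identify the restriction of $\mu$ to $\Sigma$ as the density $[\![ v ]\!]\cdot\hat n$ against $\rmd\sigma$ (marginally more information than the single number $\mu(\Sigma)$, since it holds tested against every $\varphi$). What that version leaves implicit is precisely the justification of the divergence theorem on $B^{\pm}$ for a field that is merely bounded with measure divergence and only one-sided traces -- and that justification is, in effect, your mollification-and-slicing computation with $\varphi_\epsilon=g\,\chi(d/\epsilon)$. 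So your write-up is the self-contained version: it makes explicit where boundedness of $v$ enters (to dominate the transversal integral after the rescaling $s=\epsilon t$), where local finiteness of $\mu$ enters (dominated convergence of $\langle\mu,\varphi_\epsilon\rangle$ to $\mu(\Sigma)$), and isolates the existence of the one-sided normal traces as the one irreducible hypothesis -- the same caveat the paper flags by saying ``provided they exist.'' The only points worth tightening are cosmetic: fix once and for all the orientation of $\hat n$ relative to the $\pm$ sides so that the sign of $[\![ v ]\!]=v^--v^+$ is unambiguous (the paper leaves this implicit too), and note that for non-compact $\Sigma$ the identity is to be read locally, which your partition of unity already handles.
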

\begin{proof}[Proof of Lemma \ref{sheetlem}]
We do not want to get into the technicalities of when an appropriate trace exists.  Suffice it to say that having uniform $C^\alpha(\mathbb{R}^N)$, $\alpha\in (0,1)$ bounds up to $\Sigma$ implies they do; see  \cite{DIN24} for more general conditions.
Let $\varphi \in C_0^\infty(\mathbb{R}^{N})$ with ${\rm supp} \ \! ( \varphi )\subset B$ where $B$ separated by  $\Sigma$  into $B= B^-\cup B^+\cup (\Sigma\cap B)$. The claim follows from:
\begin{align*}
\int_{\mathbb{R}^{N}} \varphi \rmd \mu &=- \int_{B}v\cdot \nabla \varphi \rmd x 
=\int_\Sigma (v^+ -v^-)\cdot \hat{n} \ \rmd \sigma +  \int_{B^- \cup B^+} \varphi \rmd \mu.
\end{align*}
\end{proof}
With this in hand, we can establish energy conservation for vortex sheets. The key point is we have $d+2$ space-time vector fields which have measure divergences:
\begin{enumerate}
\item 1 mass $\nabla_{t,x}\cdot (1,u)=0$
\item $d$ momenta $\nabla_{t,x}\cdot (u,u\otimes u+p)=0$
\item 1 energy  $\nabla_{t,x}\cdot \big(\frac{1}{2}|u|^2,(\frac{1}{2}|u|^2+p)u\big)=-\ve[u]$
\end{enumerate}
Suppose that $(u,p)$ is a regular vortex sheet, namely a weak solution of Euler which is bounded and smooth away from a co-dimension one spacetime surface $S\subset \mathbb{R}^{d+1}$.  We also assume it is dissipative, namely that  $\ve[u]$ is a non-negative measure although this isn't truly necessary.  We aim to show $\ve[u]\equiv 0$. For this we need only prove that $\ve[u](S)=0$, since $(u,p)$ are smooth away from $S$.   Letting $n=(n_t,n_x)$, from (1) and (2) we deduce
\begin{enumerate}
\item[(1$'$)] $[\![ u \cdot n_x ]\!]=0$
\item[(2$'$)] $[\![ u n_t ]\!]+ [\![ u u_n + p n_x ]\!]=0$
\end{enumerate}
where $u_n= u \cdot n_x$. Dotting (2$'$) with $n_x$, we have $[\![ p ]\!]=0$,
consistent with the vortex sheet being a contact discontinuity.  Whence (2$'$)  becomes
\be\label{cons1}
[\![ u ]\!] (n_t + u_n)=0.
\ee
Finally note that,  by Lemma \ref{sheetlem}, for any codimension one surface $\Sigma$ we have
\be\label{cons2}
-\ve[u](\Sigma) = [\![ \tfrac{1}{2}|u|^2 ]\!] n_t + [\![ \tfrac{1}{2}|u|^2 u_n ]\!] = [\![ \tfrac{1}{2}|u|^2 ]\!](n_t+ u_n).
\ee
With $  u_{\rm ave}=\tfrac{1}{2}( u^-+ u^+)$ we have 
$
 [\![ \tfrac{1}{2}|u|^2 ]\!]=  [\![ u ]\!]  \cdot u_{\rm ave}.
$
Thus, by \eqref{cons1} and \eqref{cons2},
\be
-\ve[u](\Sigma) = u_{\rm ave}  [\![ u ]\!] (n_t+ u_n)=0.
\ee
This completes the proof that vortex sheets conserve energy.
\end{remark}

\begin{remark}[Wall bounded flows]
In our whole discussion, we have ignored the effect of solid walls on the flow.  Of course, this is a largely  unphysical assumption as most turbulence in most natural flows is triggered or even driven by shedding of vorticity from the wall.  The status of anomalous dissipation in wall bounded flows is, in some context, far more clear than it is for spatially periodic flows.  For instance, see \cite{W18} for a clean numerical example in two-dimensional flow.
We cannot hope to give a complete discussion of this very interesting and rich subject here, having decided to focus only on the simplest possible aspects of turbulent flow. Instead, we mention the great recent review paper by Eyink \cite{Eyinkw} and point to the works \cite{DN18, BTW19, EKQ22, EQ25} and references therein.
\end{remark}

To conclude our discussion, we  state one additional identity concerning the viscous dissipation  $\ve^\nu[u^\nu]$ (it can also apply directly to the Duchon-Robert distribution $\ve[u]$).  It was derived in \cite{DDII25} and is a local form of the integrated identity used in \cite{DE19}.  We will then apply it to give quick proofs of some known results. 
\begin{proposition}[Energy dissipation identity] 
    \label{P:decomposition_NS}
    Let $\nu \geq 0$. Let $u^\nu$ be a strong solution to Navier-Stokes. Letting $u^\nu_\ell$ be the space mollification of $u^\nu$, we set  
     \begin{align}
     E^{\ell,\nu}&:=\frac{|{u^\nu-\ol{u^\nu}_\ell}|^2}{2}, \\
     Q^{\ell,\nu} &:= \left(\frac{|{u^\nu-\ol{u^\nu}_\ell}|^2}{2} + (p^\nu-\ol{p^\nu}_\ell) \right) (u^\nu-\ol{u^\nu}_\ell) ,\\
     R^{\ell,\nu}&:= \ol{u^\nu}_\ell\otimes \ol{u^\nu}_\ell - \ol{(u^\nu\otimes u^\nu)}_\ell, \\
     C^{\ell,\nu} &:= (u^\nu-\ol{u^\nu}_\ell) \cdot {\rm div}  R^{\ell,\nu} + (u^\nu-\ol{u^\nu}_\ell)\otimes (u^\nu-\ol{u^\nu}_\ell):\nabla \ol{u^\nu}_\ell.
     \end{align}
For any $\ell>0$ we have the identity
     \begin{equation}
         \label{D_decomp_NS}
-\ve^\nu[u^\nu]= (\partial_t+u^\nu_\ell \cdot \nabla - \nu\Delta) E^{\ell,\nu} +  {\rm div}  Q^{\ell,\nu} + C^{\ell,\nu} + \nu |{\nabla \ol{u^\nu}_\ell}|^2 -2\nu  \nabla u^\nu:\nabla \ol{u^\nu}_\ell.
     \end{equation}
     When $\nu=0$, the  \eqref{D_decomp_NS} holds with $\ve^\nu[u^\nu]$ replaced by the defect distribution $\ve[u]$.
\end{proposition}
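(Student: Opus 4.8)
The plan is to track the kinetic-energy budget of the small-scale (fluctuation) field produced by coarse-graining, writing $u^\nu=\ol{u^\nu}_\ell+w$ with fluctuation $w:=u^\nu-\ol{u^\nu}_\ell$. Throughout I drop the $\nu$ superscripts, write $\ol u:=\ol{u^\nu}_\ell=u^\nu_\ell$ and $\ol p:=\ol{p^\nu}_\ell$, and take $f=0$ since \eqref{D_decomp_NS} carries no forcing. For $\nu>0$ the solution is smooth by hypothesis and every step below is a classical pointwise computation.

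First I would mollify the Navier--Stokes momentum equation. Since mollification is linear and commutes with derivatives, and $\ol{(u\otimes u)}_\ell=\ol u\otimes\ol u-R^{\ell,\nu}$ with ${\rm div}\,\ol u=0$, the large-scale field solves
\be
\partial_t\ol u+\ol u\cdot\nabla\ol u=-\nabla\ol p+\nu\Delta\ol u+{\rm div}\,R^{\ell,\nu}.
\ee
Subtracting this from the momentum equation and regrouping the convective terms using $u=\ol u+w$ yields the fluctuation equation
\be
\partial_t w+\ol u\cdot\nabla w=-\,w\cdot\nabla\ol u-w\cdot\nabla w-\nabla(p-\ol p)+\nu\Delta w-{\rm div}\,R^{\ell,\nu}.
\ee

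Next I would contract with $w$, using $\partial_t E^{\ell,\nu}+\ol u\cdot\nabla E^{\ell,\nu}=w\cdot(\partial_t w+\ol u\cdot\nabla w)$. Incompressibility ${\rm div}\,w=0$ converts the cubic term and the pressure gradient into a single divergence, $w\cdot(w\cdot\nabla w)+w\cdot\nabla(p-\ol p)={\rm div}\,Q^{\ell,\nu}$; the two remaining production terms assemble exactly into $w\otimes w:\nabla\ol u+w\cdot{\rm div}\,R^{\ell,\nu}=C^{\ell,\nu}$; and $\nu\,w\cdot\Delta w=\nu\Delta E^{\ell,\nu}-\nu|\nabla w|^2$. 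Collecting these,
\be
\partial_t E^{\ell,\nu}+\ol u\cdot\nabla E^{\ell,\nu}-\nu\Delta E^{\ell,\nu}+{\rm div}\,Q^{\ell,\nu}+C^{\ell,\nu}=-\nu|\nabla w|^2.
\ee
Expanding $\nu|\nabla w|^2=\nu|\nabla u|^2-2\nu\,\nabla u:\nabla\ol u+\nu|\nabla\ol u|^2$ and recognizing $\ve^\nu[u^\nu]=\nu|\nabla u|^2$ rearranges this into \eqref{D_decomp_NS}.

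The main obstacle is the case $\nu=0$, where $u$ is only a weak Euler solution in $L^3_{t,x}$, so the chain-rule step $\partial_t E=w\cdot\partial_t w$ and the conversion of $w\cdot\nabla w$ to a divergence cannot be carried out classically; this is precisely where the energy anomaly hides. Every object $E^{\ell,0},Q^{\ell,0},R^{\ell,0},C^{\ell,0}$ nevertheless remains a well-defined distribution (because $\ol u$ is smooth in $x$ and $w\in L^3$), and the large-scale equation still holds exactly. Rather than differentiate $u$ directly, I would establish the identity by assembling it from results already in hand, through the pointwise algebraic splitting $E^{\ell,0}=\tfrac12|u|^2-u\cdot\ol u+\tfrac12|\ol u|^2$: Corollary \ref{invthmcor4} supplies the balance of $\tfrac12|u|^2$ with defect $\ve[u]$, Proposition \ref{invthm} (at $\nu=0$) supplies the balance of the point-split energy $u\cdot\ol u$, and $\tfrac12|\ol u|^2$ obeys a purely classical balance coming from the smooth large-scale equation above. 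The one delicate bookkeeping step is to verify that the increment-form fluxes $\ve_\ell[u]$ appearing in Proposition \ref{invthm} reorganize into the Reynolds-stress production $C^{\ell,0}$ and the flux $Q^{\ell,0}$; this is a genuine algebraic identity between the two standard representations of the subscale flux, valid a.e.\ for $L^3$ fields, and I expect it---rather than any estimate---to be the only point requiring care.
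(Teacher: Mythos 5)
The paper does not actually prove this proposition --- it states ``We don't prove Proposition \ref{P:decomposition_NS} here -- it is a computation'' and defers to \cite{DDII25} --- so there is no in-paper argument to compare against. Your derivation supplies that computation, and for $\nu>0$ it is correct and complete: I checked the algebra, and the fluctuation equation for $w=u^\nu-\ol{u^\nu}_\ell$, the conversion of $w\cdot(w\cdot\nabla w)+w\cdot\nabla(p-\ol p)$ into ${\rm div}\,Q^{\ell,\nu}$ via ${\rm div}\,w=0$, the identification of the production terms with $C^{\ell,\nu}$, and the expansion $\nu|\nabla w|^2=\ve^\nu[u^\nu]-2\nu\nabla u^\nu:\nabla\ol{u^\nu}_\ell+\nu|\nabla\ol{u^\nu}_\ell|^2$ reproduce \eqref{D_decomp_NS} exactly. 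Your strategy for $\nu=0$ --- splitting $E^{\ell,0}=\tfrac12|u|^2-u\cdot\ol u_\ell+\tfrac12|\ol u_\ell|^2$ and invoking Corollary \ref{invthmcor4}, Proposition \ref{invthm}, and the classical balance for the mollified field --- is also the right move and mirrors the techniques the paper already uses for Proposition \ref{invthm}. One caveat on your final paragraph: the increment-form flux $\ve_\ell[u]$ and the stress-form terms $C^{\ell,0}$ are \emph{not} related by a pointwise a.e.\ identity; they agree only modulo exact space-time divergences (this is precisely the difference between the Duchon--Robert and Constantin--E--Titi representations of the subscale flux), so the verification consists in checking that those divergence discrepancies are absorbed into $\ol u_\ell\cdot\nabla E^{\ell,0}+{\rm div}\,Q^{\ell,0}$ rather than in establishing an a.e.\ equality of fluxes. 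With that adjustment the plan closes.
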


We don't prove Proposition \ref{P:decomposition_NS} here -- it is a computation (see \cite{DDII25}).  The utility of the identity \eqref{D_decomp_NS} is that $\ell$ on the right hand side is an absolutely free parameter which can be optimized in a variety of ways.  For example,   let us prove the result of  \cite{DE19}: 
\begin{corollary}[Vanishing of energy dissipation] \label{edisscor}
Let $\{u^\nu\}_{\nu>0}$ be a sequence of solutions  to Navier-Stokes. Suppose $\sup_{\nu>0} \|{u^\nu}\|_{L^3_t B^\sigma_{3,\infty}}<\infty$ for some $\sigma\geq \frac13$.  Then
$$
\langle \ve^\nu\rangle  \lesssim \nu^\frac{3\sigma-1}{1+\sigma}.
$$
\end{corollary}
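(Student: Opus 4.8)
The plan is to integrate the pointwise dissipation identity \eqref{D_decomp_NS} over $\mathbb{T}^d\times[0,T]$, time-average, and then use the free parameter $\ell$ to balance an inviscid flux contribution against a purely viscous one. Upon averaging over the torus the transport term drops because $\overline{u^\nu}_\ell$ is divergence free, $\langle u^\nu_\ell\cdot\nabla E^{\ell,\nu}\rangle=-\langle E^{\ell,\nu}\,{\rm div}\,\overline{u^\nu}_\ell\rangle=0$; the diffusion term $\nu\langle\Delta E^{\ell,\nu}\rangle$ and the flux divergence $\langle{\rm div}\,Q^{\ell,\nu}\rangle$ vanish by periodicity; and the time average of $\partial_t E^{\ell,\nu}$ collapses to a boundary term $B=\frac1{T|M|}\int_{\mathbb{T}^d}\big[E^{\ell,\nu}(\cdot,T)-E^{\ell,\nu}(\cdot,0)\big]\,\rmd x$, which the mollification estimate $\|u^\nu-\overline{u^\nu}_\ell\|_{L^2_x}\lesssim\ell^{\sigma}$ (as in the proof of Lemma \ref{Slem}) controls by $O(\ell^{2\sigma}/T)$.

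The decisive observation is that the two viscous terms on the right of \eqref{D_decomp_NS} complete a square: $\nu|\nabla\overline{u^\nu}_\ell|^2-2\nu\nabla u^\nu:\nabla\overline{u^\nu}_\ell=\nu|\nabla(u^\nu-\overline{u^\nu}_\ell)|^2-\nu|\nabla u^\nu|^2=\nu|\nabla(u^\nu-\overline{u^\nu}_\ell)|^2-\varepsilon^\nu$, using $\varepsilon^\nu=\nu|\nabla u^\nu|^2$. Feeding this back into \eqref{D_decomp_NS}, the $\varepsilon^\nu$ cancels and the fully integrated identity reduces to the exact fluctuation balance $\nu\langle|\nabla(u^\nu-\overline{u^\nu}_\ell)|^2\rangle=-B-\langle C^{\ell,\nu}\rangle$. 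This is the step that converts the stubborn full-gradient cross term $\nabla u^\nu:\nabla\overline{u^\nu}_\ell$, which is not controlled uniformly in $\nu$, into something harmless, since only the mollified gradient and velocity increments survive.

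It then remains to bound $\langle C^{\ell,\nu}\rangle$, which is precisely an inter-scale energy flux and is estimated by the same Hölder/commutator argument used for \eqref{fluxbound} in the proof of Theorem \ref{invthmcor3}. Writing $R^{\ell,\nu}=\overline{u^\nu}_\ell\otimes\overline{u^\nu}_\ell-\overline{(u^\nu\otimes u^\nu)}_\ell$, the Constantin--E--Titi commutator estimates give $\|R^{\ell,\nu}\|_{L^{3/2}_x}\lesssim\ell^{2\sigma}$ and $\|\nabla R^{\ell,\nu}\|_{L^{3/2}_x}\lesssim\ell^{2\sigma-1}$, so that Hölder in the form $L^3\cdot L^{3/2}\to L^1$, together with $\|u^\nu-\overline{u^\nu}_\ell\|_{L^3_x}\lesssim\ell^{\sigma}$ and $\|\nabla\overline{u^\nu}_\ell\|_{L^3_x}\lesssim\ell^{\sigma-1}$, yields $|\langle C^{\ell,\nu}\rangle|\lesssim\ell^{3\sigma-1}\|u^\nu\|_{L^3_tB^\sigma_{3,\infty}}^3$. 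Hence $\nu\langle|\nabla(u^\nu-\overline{u^\nu}_\ell)|^2\rangle\lesssim \ell^{2\sigma}/T+\ell^{3\sigma-1}$.

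Finally I recover the total dissipation from the fluctuation dissipation by the triangle inequality, $\langle\varepsilon^\nu\rangle=\nu\langle|\nabla u^\nu|^2\rangle\le 2\nu\langle|\nabla(u^\nu-\overline{u^\nu}_\ell)|^2\rangle+2\nu\langle|\nabla\overline{u^\nu}_\ell|^2\rangle$, the second piece being $\lesssim\nu\ell^{2\sigma-2}$ by mollification. Collecting, $\langle\varepsilon^\nu\rangle\lesssim\ell^{3\sigma-1}+\nu\ell^{2\sigma-2}$ (the boundary term being subdominant for $\sigma\le 1$), and optimizing by equating the two terms, i.e. $\ell=\nu^{1/(1+\sigma)}$, produces the claimed rate $\nu^{(3\sigma-1)/(1+\sigma)}$. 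The main obstacle is the flux estimate on $\langle C^{\ell,\nu}\rangle$: one must confirm that the commutator/Reynolds-stress term is controlled purely by velocity increments, never touching the unbounded $\nabla u^\nu$ — which is exactly what the completed-square reduction, combined with the CET gradient bound on $R^{\ell,\nu}$, secures.
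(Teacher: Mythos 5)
Your proof is correct and follows essentially the same route as the paper: integrate the decomposition \eqref{D_decomp_NS}, bound the surviving terms via the commutator estimates of Lemma \ref{comlem} to arrive at $\langle \ve^\nu\rangle \lesssim \ell^{3\sigma-1}+\nu\ell^{2(\sigma-1)}$, and optimize with $\ell^{1+\sigma}=\nu$. Your completed-square reorganization of the two viscous terms into $\nu|\nabla(u^\nu-\ol{u^\nu}_\ell)|^2-\ve^\nu$ is a clean, equivalent way of never touching $\nabla u^\nu$ directly; the paper instead integrates the cross term by parts onto $\Delta\ol{u^\nu}_\ell$, which yields the same $\nu\ell^{2(\sigma-1)}$ contribution and the same rate.
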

\begin{proof}[Proof of Corollary \ref{edisscor}]
First, recall the elementary result
\begin{lemma}[Commutator Estimates] \label{comlem}
For any function $f: \R^d \to \R$ denote by $\ol{f}_\ell = f* G_\ell$, where $G$ is a Friedrichs' mollifier. Fix $\sigma, \alpha \in (0,1)$ and $p \in [1, \infty]$. There exist implicit constants independent on $\ell$ such that
\begin{align*}
    \|{f-\ol{f}_\ell}\|_{L^p}\leq  &  \ \ell^\sigma \|{ f}\|_{B^\sigma_{p,\infty}},\\
    \|{\nabla^n \ol{f}_\ell}\|_{L^p}\lesssim & \ \ell^{\sigma-n} \|{f}\|_{B^\sigma_{p,\infty}}\qquad\qquad\qquad\qquad n\geq 1,\\
    \|{\nabla^n (\ol{f}_\ell \ol{g}_\ell- \ol{(fg)}_\ell)}\|_{L^{p}}\lesssim& \  \ell^{\sigma+ \alpha-n}\|{ f}\|_{B^\sigma_{rp,\infty}} \|{ g}\|_{B^\alpha_{r'p,\infty}} \qquad\quad  n\geq 0, \ \ \ \,\frac{1}{r}+\frac{1}{r'}=1.
\end{align*}
\end{lemma}
\begin{proof}[Proof of Lemma \ref{comlem}]
These are not difficult to show, see \cite{Eyink} for a clear discussion. 
Let us prove the estimates for $p<\infty$ as $p=\infty$ is simpler. Since $\int_{\mathbb{R}^d}  G_\ell(z)\rmd z =1$, we have 
$
\ol{f}_\ell(x) -f (x)=\int_{\mathbb{R}^d}\left( f(x-z)-f(x)\right)G_\ell (z)\,\rmd z.
$
By Jensen's inequality, we have
\begin{align*}
    \|{f-\ol{f}_\ell}\|_{L^p}^p&\leq \int_{\mathbb{R}^d} \left(\int_{\mathbb{R}^d}  \left| f(x-z)-f(x)\right|^p\,\rmd x \right)G_\ell (z)\,\rmd z
    \leq \ell^{\sigma p} \|f\|^p_{B^\sigma_{p,\infty}}.
\end{align*}
Similarly, since $\int_{\mathbb{R}^d}  \nabla^n G_\ell(z)\rmd x=0$ for all $n\geq 1$, we have
\begin{align*}
\|{\nabla^n \ol{f}_\ell}\|_{L^p}^p
    &\lesssim  \ell^{n(1-p)} \iint_{\mathbb{R}^d\times \mathbb{R}^d }   |f(x-z)-f(x)|^p |\nabla^n G_\ell (z)|\,\rmd x\rmd z\lesssim  \ell^{(\sigma- n)p}\|f\|^p_{B^\sigma_{p,\infty}}.
\end{align*}
Finally, we treat the quadratic commutator. We rewrite it as 
$$
\ol{(fg)}_\ell (x)-\ol{f}_\ell(x) \ol{g}_\ell(x)=\int_{\mathbb{R}^d}  (f(x-z)-\ol{f}_\ell(x))(g(x-z)-g_\ell(x))G_\ell(z)\,\rmd z.
$$
Thus, by an application of H\"{o}lder's inequality, we have
\begin{align*}
   \|\ol{f}_\ell \ol{g}_\ell- &\ol{(fg)}_\ell\|_{L^{p}}^p \leq \int_{\mathbb{R}^d}  \left(\int_{\mathbb{R}^d}  |f(x-z)-\ol{f}_\ell(x)|^p |g(x-z)-\ol{g}_\ell(x)|^p\right)\, G_\ell(z)\,\rmd z\\
    &\leq \int_{\mathbb{R}^d}  \|f(\cdot -z)-\ol{f}_\ell (\cdot)\|_{L^{rp}}^p \|g(\cdot -z)-\ol{g}_\ell (\cdot)\|_{L^{r'p}}^pG_\ell(z)\,\rmd z\\
    &\lesssim  \|f\|^p_{B^\sigma_{rp,\infty}}\|g\|^p_{B^\alpha_{r'p,\infty}}\int_{\mathbb{R}^d}  \left(|z|^\sigma +\ell^\sigma\right)^p \left(|z|^\alpha +\ell^\alpha\right)^p G_\ell(z)\,\rmd z\lesssim \|f\|^p_{B^\sigma_{rp,\infty}}\|g\|^p_{B^\alpha_{r'p,\infty}} \ell^{(\sigma +\alpha)p}
\end{align*}
where we used that  ${\rm supp} (G_\ell) \subset B_\ell(0)$, as well as  the following  estimate
\begin{align*}
    \|f(\cdot -z)-\ol{f}_\ell (\cdot)\|_{L^{p}}&\leq \|f(\cdot -z)-f(\cdot)\|_{L^{p}}+ \|f-\ol{f}_\ell\|_{L^{p}}\leq \left(|z|^\sigma +\ell^\sigma\right) \|f\|_{B^\sigma_{p,\infty}}.
\end{align*}
Estimates on higher derivatives of the quadratic commutator follow similarly.
\end{proof}
Returning now to the proof of Corollary \ref{edisscor}, we apply  Lemma \ref{comlem} to show there exist implicit constants independent of $\ell$ such that
\begin{align}
    \|{E^\ell}\|_{L^\frac{p}{2}_{x,t}}&\lesssim \ell^{2\sigma} \|{u}\|_{L^p_t B^\sigma_{p,\infty}}^2 \label{est_Eell}, \\
     \|{Q^\ell}\|_{L^\frac{p}{3}_{x,t}}&\lesssim \ell^{3\sigma} \|{u}\|_{L^p_t B^\sigma_{p,\infty}}^3\label{est_Qell}, \\
         \|{R^\ell}\|_{L^\frac{p}{2}_{x,t}} + \ell  \|{\div R^\ell}\|_{L^\frac{p}{2}_{x,t}}&\lesssim \ell^{2\sigma} \|{u}\|_{L^p_t B^\sigma_{p,\infty}}^2\label{est_Rell}, \\
         \|{C^\ell}\|_{L^\frac{p}{3}_{x,t}}&\lesssim \ell^{3\sigma-1} \|{u}\|_{L^p_t B^\sigma_{p,\infty}}^3\label{est_Cell}.
\end{align}
 Since $\{u^\nu\}_{\nu>0}$ is bounded in $L^3_t B^\sigma_{3,\infty}$, by \eqref{est_Eell}, \eqref{est_Qell}, \eqref{est_Cell} we estimate from \eqref{D_decomp_NS}
    \begin{align} \nonumber
\langle \ve^\nu\rangle  & \lesssim \|{E^{\ell,\nu}}\|_{L^{\frac{3}{2}}_{x,t}} \left(1+ \|{ \ol{u^\nu}_\ell}\|_{L^3_{t,x}} \right) + \|{Q^{\ell,\nu}}\|_{L^{1}_{x,t}} + \|{C^{\ell,\nu}}\|_{L^{1}_{x,t}}\\ \nonumber
        &\quad + \nu \|{E^{\ell,\nu}}\|_{L^{\frac{3}{2}}_{x,t}} + \nu \left(\|{\nabla \ol{u^\nu}_\ell}\|_{L^2_{x,t}}^2 + \|{u^\nu-\ol{u^\nu}_\ell}\|_{L^2_{x,t}} \left(\|{\Delta \ol{u^\nu}_\ell}\|_{L^2_{x,t}}+ \|{\nabla \ol{u^\nu}_\ell}\|_{L^2_{x,t}}\right)\right)\\\nonumber 
        &\lesssim \ell^{2\sigma} + \ell^{3\sigma}+ \ell^{3\sigma-1}+\nu \ell^{2\sigma} + \nu \ell^{2(\sigma-1)} +\nu \ell^{2\sigma -1 }\lesssim \ell^{3\sigma-1}+ \nu \ell^{2(\sigma-1)}.
    \end{align}
    The proof is concluded by choosing $\ell^{1+\sigma}=\nu$, the (generalized) Kolmogorov scale.
\end{proof}
Meanwhile there is evidence, which should be further investigated, that Corollary \ref{edisscor} is sharp and dissipation vanishes in periodic turbulence \cite{IDES25}.  See Figure \ref{disanom}. 
\begin{figure}[h!]
  \begin{center}
    \includegraphics[width=0.95\textwidth]{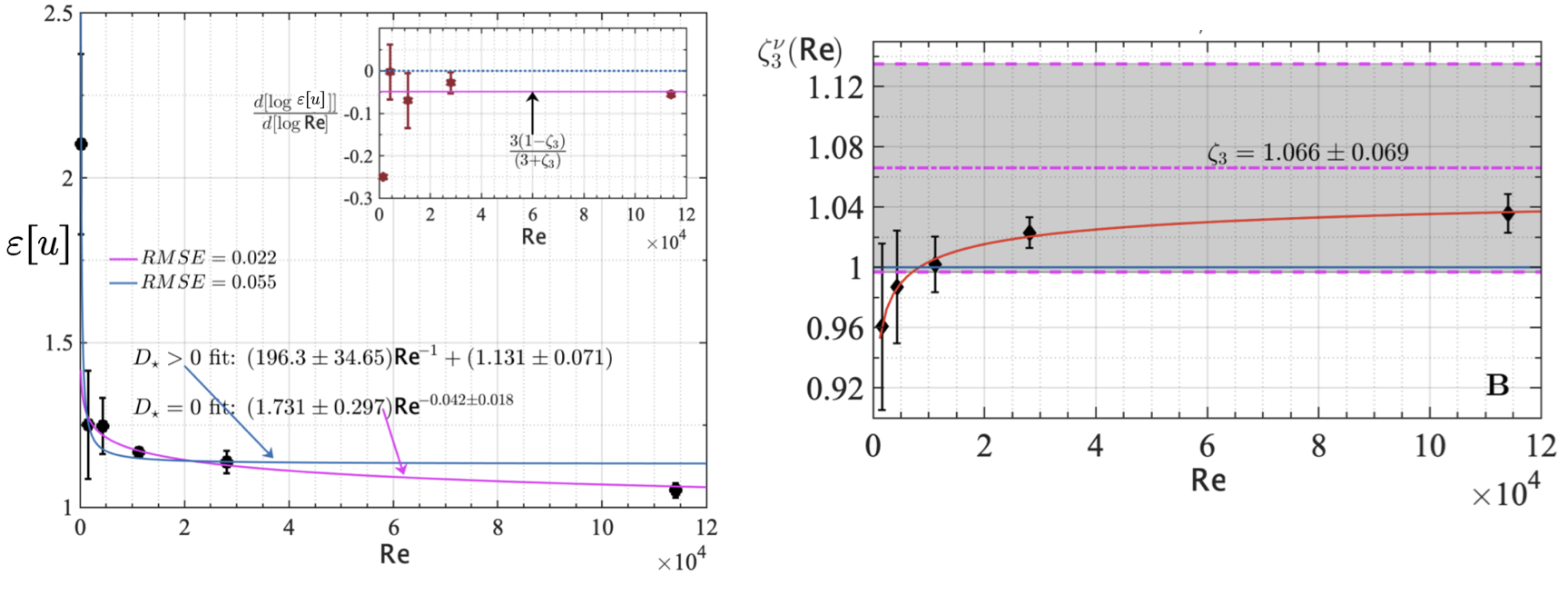}
  \end{center}
  \caption{Numerical evidence for vanishing dissipation and sub-critical Onsager regularity ($\sigma_3>1/3$ of $\zeta_3>1$) from  \cite{IDES25}.  }\label{disanom}
\end{figure}

As one last application of the identity \eqref{D_decomp_NS} for energy, we give a quick estimate of the regularity of the energy profile for weak solutions of Euler originally due to Isett \cite{Isettreg}.  

\begin{corollary}[Kinetic energy regularity]\label{C:en reg}
Let 
$
e(t):=\frac{1}{2}\int_{\mathbb{T}^d} |{u(x,t)}|^2 \,\rmd x
$
denote the kinetic energy.
If $u\in L^\infty_t B^{\sigma_p}_{p,\infty}$ for $p\geq 2$, $\sigma_p\in(0,1)$  is a weak solution of Euler, 
\begin{equation}\label{en_holder}
|{e(t)-e(s)}|\lesssim |{t-s}|^\frac{2\sigma_2}{1+2\sigma_2-3\sigma_3} \lesssim|{t-s}|^\frac{2\sigma_3}{1-\sigma_3}  \qquad \text{for a.e. } t,s.
\end{equation}
If $u\in L^p_t B^{1/3}_{3,\infty}$ for some $p\geq3$, then $e\in W^{1,p/3}(0,T)$.
\end{corollary}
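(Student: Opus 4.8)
The plan is to integrate the energy decomposition identity \eqref{D_decomp_NS} (in the $\nu=0$ form, with $\ve^\nu[u^\nu]$ replaced by the Duchon--Robert defect $\ve[u]$) over the spatial torus, and then in time, treating the mollification scale $\ell>0$ as a free parameter to be optimized at the end. First I would integrate \eqref{D_decomp_NS} in $x$ over $\mathbb{T}^d$: the flux $\mathrm{div}\,Q^\ell$ integrates to zero by periodicity, and the advective term $\int_{\mathbb{T}^d} u_\ell\cdot\nabla E^\ell\,\rmd x$ vanishes after integration by parts because $\nabla\cdot u_\ell = G_\ell*(\nabla\cdot u)=0$. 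Comparing with the spatially integrated local balance of Corollary \ref{invthmcor4}, which gives $e'(t)=-\int_{\mathbb{T}^d}\ve[u]\,\rmd x$ distributionally in $t$, this produces the master relation
\begin{equation*}
e'(t)=\frac{\rmd}{\rmd t}\int_{\mathbb{T}^d} E^{\ell}\,\rmd x + \int_{\mathbb{T}^d} C^{\ell}\,\rmd x,
\qquad\text{hence}\qquad
e(t)-e(s)=\Big[\int_{\mathbb{T}^d} E^{\ell}\,\rmd x\Big]_s^t+\int_s^t\!\!\int_{\mathbb{T}^d} C^{\ell}\,\rmd x\,\rmd\tau,
\end{equation*}
valid for every $\ell$.

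For the H\"older bound I would estimate the two terms using the commutator estimates of Lemma \ref{comlem} \emph{pointwise in time}, which is legitimate since $u\in L^\infty_t B^{\sigma_p}_{p,\infty}$ furnishes uniform-in-$\tau$ control of $\|u(\tau)\|_{B^{\sigma_p}_{p,\infty}}$. Taking $p=2$ for the energy term gives $\int_{\mathbb{T}^d} E^{\ell}\,\rmd x=\tfrac12\|u-\ol{u}_\ell\|_{L^2_x}^2\lesssim \ell^{2\sigma_2}$ uniformly in time, and taking $p=3$ for the flux term gives $\|C^{\ell}(\tau)\|_{L^1_x}\lesssim \ell^{3\sigma_3-1}$ uniformly in $\tau$, so that $|e(t)-e(s)|\lesssim \ell^{2\sigma_2}+|t-s|\,\ell^{3\sigma_3-1}$. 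When $\sigma_3<\tfrac13$ the two exponents have opposite signs, and balancing the contributions forces $\ell\sim|t-s|^{1/(1+2\sigma_2-3\sigma_3)}$, which yields the first exponent $\tfrac{2\sigma_2}{1+2\sigma_2-3\sigma_3}$ (and at $\sigma_3=\tfrac13$ degenerates correctly to the Lipschitz exponent $1$). The second inequality in \eqref{en_holder} is then bookkeeping: for $|t-s|\le 1$ it is equivalent to $\tfrac{2\sigma_2}{1+2\sigma_2-3\sigma_3}\ge\tfrac{2\sigma_3}{1-\sigma_3}$, and clearing the positive denominators reduces this to $(\sigma_2-\sigma_3)(1-3\sigma_3)\ge 0$, which holds because turbulent scaling gives $\sigma_2\ge\sigma_3$ and the Onsager range gives $\sigma_3\le\tfrac13$.

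For the $W^{1,p/3}$ statement, the value $\sigma_3=\tfrac13$ renders the flux bound scale invariant: Lemma \ref{comlem} gives $\|C^{\ell}\|_{L^{p/3}_{x,t}}\lesssim \ell^{3\cdot\frac13-1}=1$ uniformly in $\ell$, so (using $\|C^{\ell}\|_{L^1_x}\lesssim\|C^{\ell}\|_{L^{p/3}_x}$ for $p\ge3$) the functions $g_\ell:=\int_{\mathbb{T}^d}C^{\ell}\,\rmd x$ are bounded in $L^{p/3}(0,T)$. Meanwhile $\int_{\mathbb{T}^d}E^{\ell}\,\rmd x\to 0$ as $\ell\to 0$, pointwise in $\tau$ for a.e.\ endpoints, so the master formula yields $e(t)-e(s)=\lim_{\ell\to0}\int_s^t g_\ell\,\rmd\tau$. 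For $p>3$ the space $L^{p/3}$ is reflexive, a weakly convergent subsequence $g_{\ell_k}\rightharpoonup g$ gives $e(t)-e(s)=\int_s^t g$ (testing against $\mathbf{1}_{[s,t]}$), and hence $e$ is absolutely continuous with $e'=g\in L^{p/3}$; the borderline $p=3$ is covered directly by the already-noted fact that $\int_{\mathbb{T}^d}\ve[u]\,\rmd x\in L^1_t$, i.e.\ $e'\in L^1$.

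The hard part will be the rigorous passage from the distributional identity \eqref{D_decomp_NS} to the pointwise-in-$t$ relation for $e'$ and its integration against the non-smooth indicator $\mathbf{1}_{[s,t]}$: one must justify that the spatial integral of the defect really represents $e'$ (cleanest through Corollary \ref{invthmcor4} rather than by manipulating $\ve[u]$ directly) and control the endpoint limits of $\int E^\ell$ off a null set of times. Everything else — the commutator estimates from Lemma \ref{comlem} and the single-parameter optimization over $\ell$ — is routine once this identification is in place.
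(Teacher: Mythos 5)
Your proposal is correct and follows essentially the same route as the paper: test the $\nu=0$ decomposition \eqref{D_decomp_NS} against a time test function (equivalently, integrate in space so that the flux and advective terms drop), invoke the estimates \eqref{est_Eell} and \eqref{est_Cell} to get $|e(t)-e(s)|\lesssim \ell^{2\sigma_2}+|t-s|\,\ell^{3\sigma_3-1}$, and optimize in $\ell$; the $W^{1,p/3}$ claim is likewise handled by putting the test function in the dual space. Your choice $\ell^{1+2\sigma_2-3\sigma_3}=|t-s|$ is in fact the one that yields the stated exponent $\tfrac{2\sigma_2}{1+2\sigma_2-3\sigma_3}$ (the paper's written choice $\ell^{1-\sigma_2}=|t-s|$ appears to be a slip), and your weak-compactness argument for $e'\in L^{p/3}$ fills in the detail the paper delegates to the reference.
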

\begin{proof}[Proof of Corollary \ref{C:en reg}]
    Testing \eqref{D_decomp_NS} for $\nu=0$ with $\eta\in C^\infty_t$ yields
    $
    \langle e, \eta' \rangle_{L^2_t}=\left\langle   \ve[u] ,\eta\right\rangle_{L^2_t}   = \left\langle  E^\ell, \eta' \right\rangle_{L^2_{t,x}} -  \left\langle  C^\ell, \eta \right\rangle_{L^2_{t,x}} .
        $
By  \eqref{est_Eell} and \eqref{est_Cell} we get 
    $
 |  \langle e, \eta' \rangle_{L^2}| \lesssim   \|{\eta}'\|_{L^1_t} \ell^{2\sigma_2} + \|{\eta}\|_{L^1_t} \ell^{3\sigma_3-1}.
    $
    For a.e. $t,s$ (say $t>s$), by keeping $\|\eta'\|_{L^1_t}\lesssim 1$, we let $\eta\rightarrow \mathbf{1}_{[s,t]}$ and obtain $|{e(t)-e(s)}|\lesssim \ell^{2\sigma_2} + |{t-s}| \ell^{3\sigma_3-1}.$ The choice $\ell^{1-\sigma_2}=|{t-s}|$ proves  \eqref{en_holder}.  By embedding, $\sigma_3\leq \sigma_2$.  The next claim follows by putting $\eta$ in the appropriate dual space, see \cite{Isettreg}.\footnote{We note that for this final statement, the Besov space used is slightly different from one defined in Lemma \ref{Slem}.  The supremum over increments must be taken before the time integration rather than after.}
\end{proof}

Let us make some remarks. First, the above result gives another proof of Onsager's conjecture by taking $\sigma_3>1/3$ and concluding $e(t)$ must be constant.  
Second, when $u$ is in the natural critical class $u\in L^3_t B^{1/3}_{3,\infty}$, we see that the energy profile is $W^{1,1}(0,T)$.  Being a function of a single variable, it is thus continuous in time. This excludes the possibility that the profile does not have bounded variation on any $I \subset (0,T)$ -- a wild feature that is common among all convex integration constructions, separating them from behavior forced in critical classes.

Finally, one implication of this result, pointed out by Isett, is that the quality of energy dissipation for Euler
flows is stable under perturbation, in the class  $u\in L^\infty_t B^{1/3}_{3,\infty}$.  Namely, in this class $\|e'\|_{L^\infty} \lesssim \|u\|_{L^\infty_t B^{1/3}_{3,\infty}}^3$ and $e'(t)$ varies continuously in $L^\infty$ as $u$ varies in the $L^\infty_t B^\sigma_{3,\infty}$ topology, namely $\|e'_1 - e'_2\|_{L^\infty} \lesssim \|u_1-u_2\|_{L^\infty_t B^{1/3}_{3,\infty}}$. This means that anomalous dissipation $e'(t)\leq -\ve<0$ is an open condition in the space of weak Euler solutions in the critical class endowed with the strong topology. By contrast, Isett conjectured that for weak solutions with regularity below $\sfrac{1}{3}$ of a derivative, or with time integrability below $q=\infty$, the  energy profile should generically fail to be of bounded variation on every time interval \cite{Isettreg,IsettOh}. In particular, a dissipative solution could be perturbed in this class to produce a nearby solution whose energy profile  fails to be monotonic on every time interval.  A version of this conjecture was proved by  De Rosa and Tione \cite{DT22} -- namely the set of weak Euler solutions with energy profile not of bounded variation in any open interval is Baire generic  in an appropriate complete metric space which is a strict subset of all $\alpha$--H\"{o}lder  Euler solutions with $\alpha<1/3$.

This conjecture is clearly very interesting and aims to capture a mechanism, or rather a cause, for regularization in turbulence.  However, it should be mentioned that any weak Euler solution which arises from a strong vanishing viscosity limit, regardless of its regularity, must have a non-increasing energy profile.  Such solutions, of course, should be meager in the space of all weak Euler solutions. Thus, the physical relevance of the stability of anomalous dissipation in any critical space $u\in L^p_t B^{1/3}_{p,\infty}$ for $p\in[3,\infty)$ is not completely clear.  We will issue Conjecture \ref{conj45} below which is similar in spirit to the above, but based on the $\sfrac{4}{5}$ law.

\section{Kolmogorov's 1941 theory}\label{K41sec}

\begin{quotation}
\vspace{2mm}

\emph{"I had heard several times Kolmogorov talking about turbulence and had always been given the impression that these were talks by a pure physicist. One could easily forget that Kolmogorov was a great mathematician. He could discuss concrete equations of state of real gases and liquids, the latest data of experiments, etc. When Kolmogorov was close to eighty I asked him about the history of his discoveries of the scaling laws. He gave me a very astonishing answer by saying that for half a year he studied the results of concrete measurements. In the late Sixties Kolmogorov undertook a trip on board a scientific ship participating in the experiments on oceanic turbulence. Kolmogorov was never seriously interested in the problem of existence and uniqueness of solutions of the Navier-Stokes system. He also considered his theory of turbulence as purely phenomenological and never believed that it would eventually have a mathematical framework."}\\
\phantom{adsf} \hfill  --  Y. Sinai \cite{VL}
\end{quotation}
\vspace{2mm}

\begin{figure}[h!]
  \begin{center}
    \includegraphics[width=0.41\textwidth]{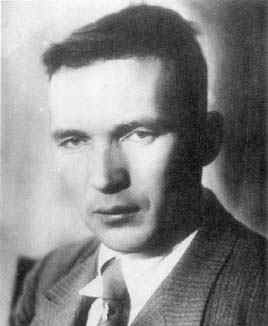}
    \hspace{5mm}
        \includegraphics[width=0.42\textwidth]{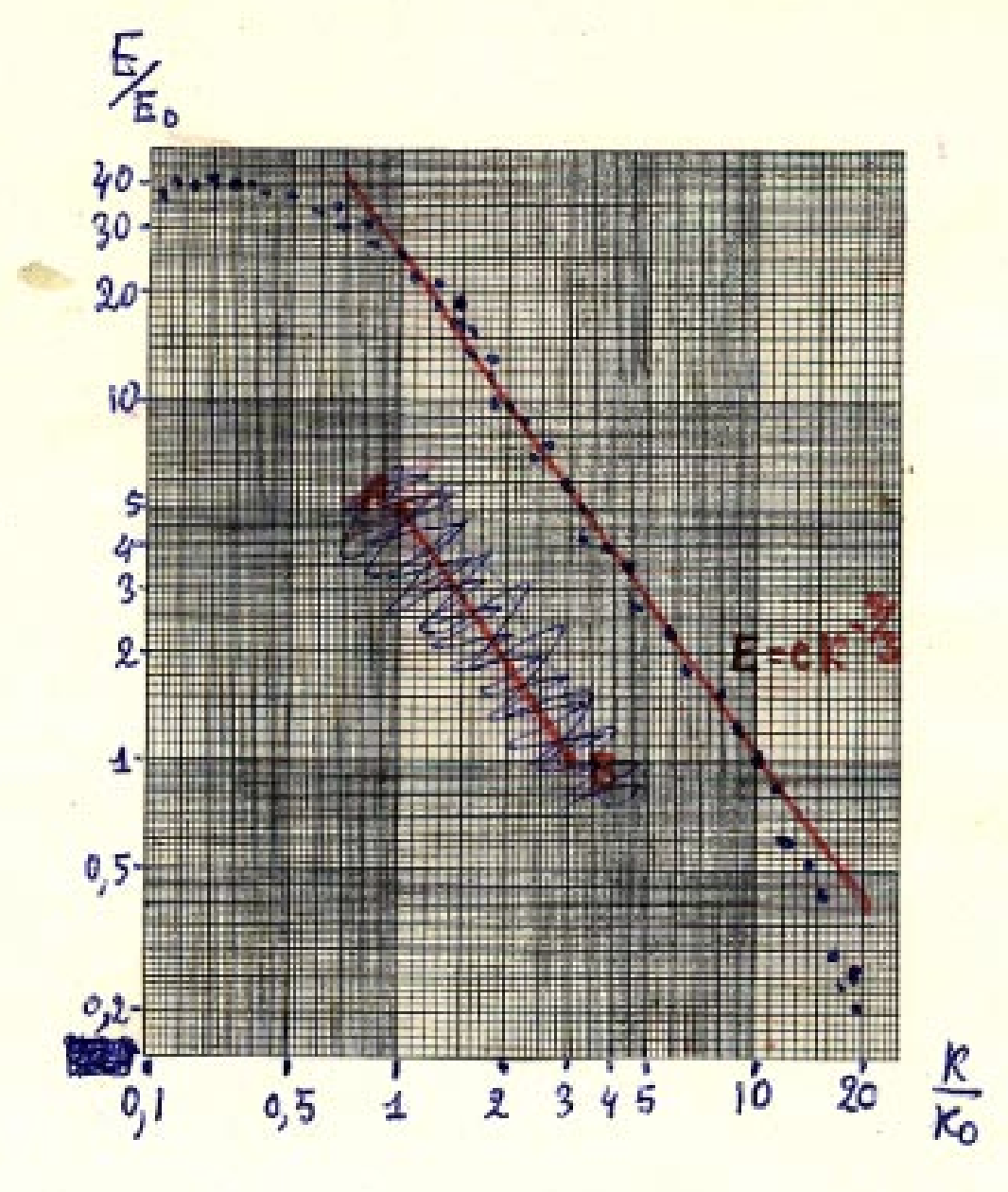}
  \end{center}
  \caption{Andrey N. Kolmogorov and his hand-drawn $\sfrac{5}{3}$--law \cite{Shiryaev}.}\label{Kfig}
\end{figure}

Kolmogorov, in his celebrated 1941 works, used the balance associated to the quantity \eqref{Dell45}, to argue that $\ve_\ell^\|[u]$, which represents a flux of energy through scale $\ell$,  should be effectively constant over a long range of scales $\ell_\nu \ll \ell \ll \ell_I$ called the inertial range.  The "dissipative" scale $\ell_\nu$ is where viscosity eventually removes the energy which cascades down, and the "integral" scale $\ell_I$ is the scale of the domain, or where the force is applied. In this scale range, Kolmogorov argued the flux would  balance  the anomalous rate of energy dissipation $\langle \ve_\ell^\|[u^\nu] \rangle \approx \langle \ve\rangle$, as in \eqref{fluxbalvisc}.  From \eqref{Dell45}, this results in his $\sfrac{4}{5}$ law
\be\label{k45law}
S_{3,\|}^{u^\nu}(\ell)\sim- \tfrac{4}{5} \langle \ve \rangle \ell, \qquad \ell_\nu \ll \ell \ll \ell_I.
\ee
Under the assumptions of \emph{statistical homogeneity, isotropy and no-intermittency}, Kolmogorov used the above  "law" to make predictions for any $p$th order structure functions $S_{p,\|}^{u^\nu}(\ell)$ in the inertial range.  This is often presented on strictly dimensional analysis grounds, assuming that $\nu, \langle \ve \rangle$ (and not $\ell_I$)  are the only relevant parameters. Here I present a slightly different "derivation" of the same, which highlights the role of the  assumptions concerning the dissipation rate, which we will reexamine in the next section.
\begin{align} \nonumber
S_{p,\|}^{u^\nu}(\ell)&:= \fint_0^T \fint_M\fint_{\mathbb{S}^{d-1}} (\hat{z}\cdot\delta_{\ell \hat{z}}u^\nu )^p \ \rmd \sigma(\hat{z}) \rmd x \rmd t\qquad\qquad\qquad \qquad\! \! \! \! \! \! \! \text{by definition \eqref{longsp}}\\ \nonumber
&\approx \fint_0^T \fint_M \left(\fint_{\mathbb{S}^{d-1}} (\hat{z}\cdot\delta_{\ell \hat{z}}u^\nu )^3 \ \rmd \sigma(\hat{z}) \right)^{p/3}\rmd x \rmd t\qquad \qquad\     \text{by isotropy}\\ \nonumber
&\approx \left(\tfrac{12}{d(d+2)}\right)^{p/3} \fint_0^T \fint_M \ ( \ve[u] \ell)^{p/3}  \rmd x \rmd t  \quad \ \quad \quad \quad \quad \quad \ \  \ \ \text{by identity \eqref{Dell45}/\eqref{k45law}} \\ \nonumber
&\approx C_p\ell^{p/3}\left(\fint_0^T \fint_M\ve[u]  \rmd x \rmd t\right)^{p/3} \qquad \qquad \quad \quad \quad \quad \quad\ \!  \ \text{by no "intermittency"} \\
 &\sim C_p (\langle \ve \rangle \ell)^{p/3} \qquad  \qquad \qquad\qquad\qquad\qquad\qquad\qquad \qquad\text{}\nonumber
\end{align}
with the approximations holding in the range $\ell_\nu \ll \ell \ll \ell_I$.
That is $\zeta_p^\| = \sfrac{p}{3}$ in \eqref{Spbeh}. This prediction is often referred to as the K41 theory of turbulence: 
\be
  \label{k41} 
S_{p,\|}^{u^\nu}(\ell)  \sim C_p (\langle \ve \rangle \ell)^{p/3} \qquad \text{for} \qquad \ell_\nu \ll \ell \ll \ell_I.
\ee
As one can see, the various assumptions such as no-intermittency are used to justify commuting averages with non-linearities (powers).  Such approximations would hold well if, for example, $p\approx 3$ or if the dissipation measure $\ve[u]$ does not vary too widely from point to point in space time.  However, if $\ve[u]$ charges some lower dimensional set, then the error made in commuting these operations can be great.  We shall return to this point in the next section.

For now, one famous consequence for $p=2$ --  the $\sfrac{2}{3}$ law --  makes also a prediction of a $-\sfrac{5}{3}$ decay of the time averaged $\langle \cdot \rangle$ Fourier energy spectrum:
\be\label{k41spec}
S_{2,\|}^{u^\nu}(\ell) \sim \ell^{2/3} \qquad \iff \qquad E(k) := \sum_{|k'|=k}\langle |\hat{u}(k')|^2\rangle \sim  k^{-5/3}.
\ee
Incidentally, Kolmogorov was the "somebody" to have arrived first at Onsager's formula for the correlation function in isotropic turbulence.
See Figure \ref{Kfig} for Kolmogorov's hand drawn prediction over experimental data!  Figure \ref{Kfig2} shows the strength of this prediction. 
The equivalence \eqref{k41spec} can be proved as follows
\begin{lemma}\label{eklem}
Let $u\in L_t^2 L_x^2$. Then $S_{2,\|}(\ell)\lesssim \ell^{s-}$ if $E(k)\leq k^{-(1+s)}$.
\end{lemma}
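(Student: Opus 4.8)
The plan is to pass to the Fourier side, where the hypothesis lives, and to bound the structure function by splitting the spectral sum at the crossover frequency $k\sim 1/\ell$.

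First I would reduce the longitudinal structure function to the absolute one. Since for any unit vector $\hat{z}$ and any $v\in\mathbb{R}^d$ one has $(\hat{z}\cdot v)^2\le |v|^2$, it follows pointwise that
\be
S_{2,\|}(\ell)\le S_2(\ell):=\fint_0^T\fint_{\mathbb{T}^d}\fint_{\mathbb{S}^{d-1}}|u(x+\ell\hat{z},t)-u(x,t)|^2\,\rmd\sigma(\hat{z})\,\rmd x\,\rmd t,
\ee
so it suffices to bound the absolute structure function $S_2(\ell)$.

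Next I would express $S_2$ through the spectrum via Plancherel. Writing $u(x,t)=\sum_k \hat{u}(k,t)e^{ik\cdot x}$ on $\mathbb{T}^d$, Parseval gives $\fint_{\mathbb{T}^d}|u(x+r)-u(x)|^2\,\rmd x=\sum_k |\hat{u}(k)|^2\,2(1-\cos(k\cdot r))$. Averaging over directions $r=\ell\hat{z}$ and then over time, the factor multiplying $|\hat u(k)|^2$ depends only on $|k|$, so grouping into shells $|k'|=k$ yields
\be
S_2(\ell)=2\sum_{k} E(k)\big(1-\Phi(\ell k)\big),\qquad \Phi(t):=\fint_{\mathbb{S}^{d-1}}\cos(t\,\hat{e}\cdot\hat{z})\,\rmd\sigma(\hat{z}),
\ee
where $\Phi$ is the rotation-invariant angular average of a plane wave (the sinc $\sin t/t$ when $d=3$, a Bessel function in general). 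The only facts I need about this kernel are the elementary bounds $0\le 1-\Phi(t)\lesssim \min(t^2,1)$, the first from the Taylor expansion near $t=0$, the second from $|\Phi|\le 1$.

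Finally I would split the sum at $k\sim 1/\ell$ and insert the hypothesis $E(k)\le k^{-(1+s)}$. On the low-frequency part $k\lesssim 1/\ell$ I use $1-\Phi(\ell k)\lesssim (\ell k)^2$, giving $\ell^2\sum_{k\le 1/\ell}k^{1-s}\lesssim \ell^2\cdot \ell^{-(2-s)}=\ell^s$ provided $s<2$; on the high-frequency part $k\gtrsim 1/\ell$ I use $1-\Phi\lesssim 1$, giving $\sum_{k\ge 1/\ell}k^{-(1+s)}\lesssim \ell^{s}$ provided $s>0$. Adding the two contributions proves $S_2(\ell)\lesssim \ell^s$ for $0<s<2$. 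The one delicacy — and the source of the ``$\ell^{s-}$'' in the statement rather than a clean $\ell^s$ — is the comparison of the discrete shell sums to integrals at the two endpoints: at $s=2$ the low-frequency sum $\sum k^{1-s}$ produces a logarithm $\log(1/\ell)$, and the high-frequency tail degrades as $s\downarrow 0$, so absorbing these borderline logarithmic losses into an arbitrarily small power is exactly what ``$s-$'' records. A minor preliminary point is that $u\in L^2_tL^2_x$ guarantees $\sum_k E(k)<\infty$, which justifies the interchange of summation, angular integration, and time average used above.
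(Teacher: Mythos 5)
Your proof is correct, and it takes a somewhat different route from the paper's. The paper argues through a Sobolev norm: it observes that $E(k)\lesssim k^{-(1+s)}$ makes $\fint_0^T\|u(t)\|_{H^{\sigma}}^2\,\rmd t=\sum_k k^{2\sigma}E(k)$ finite for every $\sigma<s/2$ (the borderline $\sigma=s/2$ gives the divergent $\sum_k k^{-1}$, which is exactly where the ``$s-$'' enters), and then invokes the standard fact that $L^2_tH^\sigma_x$ regularity controls increments, $S_2(\ell)\lesssim\ell^{2\sigma}$. You instead expand $S_2(\ell)$ itself on the Fourier side, identify the angular kernel $1-\Phi(\ell k)$ with the bound $\min((\ell k)^2,1)$, and split the shell sum at $k\sim 1/\ell$. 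The two arguments share the same Plancherel backbone (indeed the Sobolev-to-increment step the paper uses implicitly is proved by the same kernel bound), but your splitting is sharper: for fixed $0<s<2$ it yields the clean power $\ell^{s}$ with no $\epsilon$-loss, the logarithm appearing only at the endpoint $s=2$, whereas the paper's route concedes an arbitrarily small power for every $s$. Your reduction $S_{2,\|}\le S_2$ via $(\hat z\cdot v)^2\le|v|^2$ is the same (implicit) reduction the paper makes, since its proof also concludes with a bound on the absolute structure function. The only slightly off remark in your write-up is the claim that the ``$s-$'' is forced by a degradation as $s\downarrow 0$; in your argument the constant merely blows up there, and for each fixed $s\in(0,2)$ you in fact prove more than the lemma asserts.
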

\begin{proof}[Proof of Lemma \ref{eklem}]
The result follows from the elementary identity 
\begin{align*}
\fint_0^T\|u(t)\|_{H^{s-}}^2\rmd t &:= \sum_{k'\in \mathbb{Z}^d} |k'|^{2s-}\fint_0^T |\hat{u}(k',t)|^2\rmd t =  \sum_{k\in \mathbb{N}} k^{2s-}  \sum_{|k'|=k}\fint_0^T |\hat{u}(k',t)|^2\rmd t \\
&=: \sum_{k\in \mathbb{N}} k^{2s-}E(k)\leq  \sum_{k\in \mathbb{N}} k^{-1-}<\infty.
\end{align*}
Assuming $E(k) \lesssim k^{-(1+s)}$, we conclude from the above that $S_2(\ell) \leq \ell^{s-}$.
\end{proof}
An equivalence holds for near-power laws (as in \eqref{eklem}) as a consequence of the Wiener-Khinchin theorem \cite{V19}.

\begin{figure}[h!]
  \begin{center}
    \includegraphics[width=0.47\textwidth]{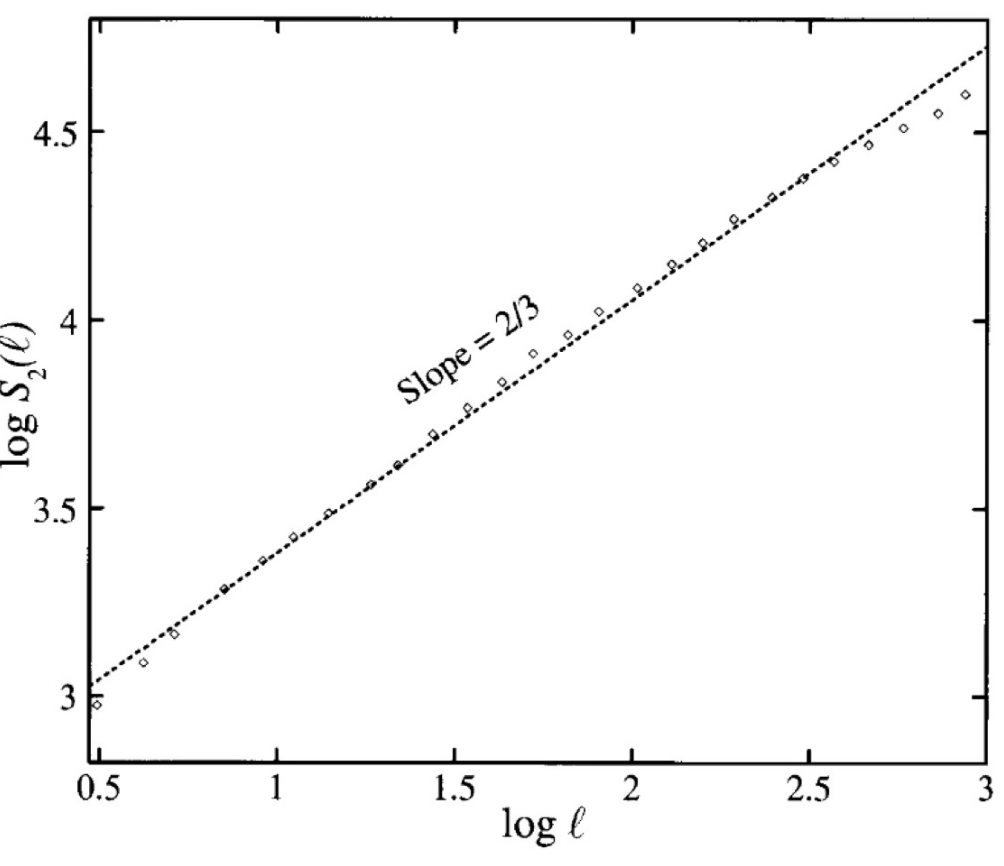}
       \hspace{2mm} 
       {
        \includegraphics[width=0.42\textwidth]{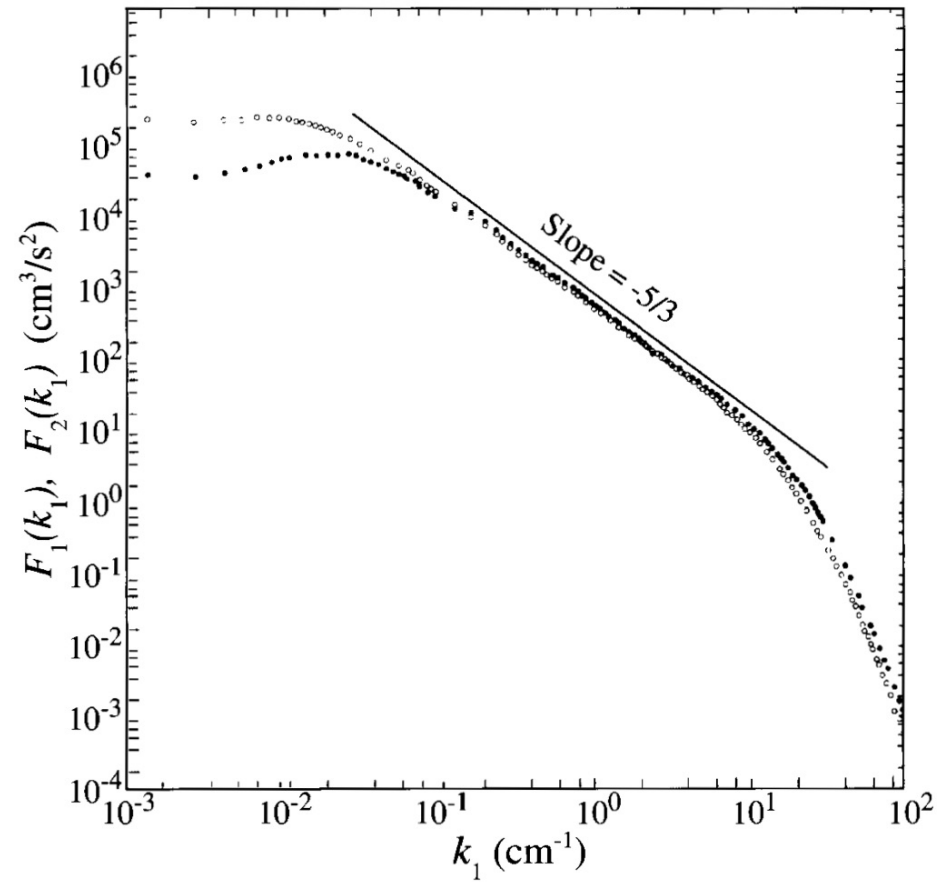}}
  \end{center}
  \caption{$S_2(\ell)$ and $E(k)$ plotted from data. \cite{C78, F95}  }\label{Kfig2}
\end{figure}

Kolmogorov's law \eqref{k45law} can, in fact, be made quite rigorous.  Following Duchon-Robert \cite{DR00}, this was done by Eyink \cite{E03} and  Novack \cite{N24} who show Kolmogorov's law holds in a space-time local and deterministic sense, at least asymptotically as $\ell \to 0$.  They prove
\begin{proposition}[Kolmogorov formulae for dissipation]\label{invthmcor2}
Let $u\in L^3(0,T; L^3(\mathbb{T}^d))$ be a locally dissipative weak solution of incompressible Euler as in Theorem \ref{invthm}. Then, the dissipation measure $\ve[u]$  in the distributional equality
\be
 \partial_t \left(\tfrac{1}{2} |u|^2 \right) + {\rm div} \left((\tfrac{1}{2} |u|^2+p) u \right) = -\ve[u] + u \cdot f
\ee
admits the following equivalent representations
\begin{align}\label{43ident}
\varepsilon[u]&= -\lim_{\ell \to 0} \frac{d}{4 } \langle (\hat{z} \cdot \delta_{\ell \hat{z}}u^\nu) |\delta_{\ell \hat{z}}u^\nu|^2 \rangle_{\rm ang}\\
&= -\lim_{\ell \to 0} \frac{d(d+2)}{12 } \langle (\hat{z} \cdot \delta_{\ell \hat{z}}u^\nu)^3\rangle_{\rm ang} \label{45ident}
\end{align}
in the sense of distributions, where
$\langle \cdot \rangle_{\rm ang}= \fint_{\mathbb{S}^{d-1}} \ \cdot \ \rmd\sigma(\hat{z})$
 is the angle average with respect to the usual spherical surface measure.
\end{proposition}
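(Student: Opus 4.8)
The plan is to read off both representations from the two mollified flux densities already at hand: the Duchon--Robert flux $\ve_\ell[u]$ of \eqref{Dell}, whose distributional limit is $\ve[u]$ by Corollary \ref{invthmcor4}, and the longitudinal flux $\ve_\ell^\|[u]$ of \eqref{Dell45}, which by Remark \ref{45thlawrem} is the defect in an analogous balance and therefore shares the \emph{same} limit $\ve[u]$ (the defect in the local energy balance being unique). The freedom in the even kernel $G$ is the main lever: since $\lim_{\ell\to 0}\ve_\ell[u]$ is independent of $G$, I am free to take $G$ radially symmetric, $G(z)=F(|z|)$, so that $\nabla G(z)=F'(|z|)\,\hat z$.

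First I would establish \eqref{43ident}. Inserting the radial kernel into \eqref{Dell} and passing to polar coordinates $z=r\hat z$, $\rmd z=r^{d-1}\,\rmd r\,\rmd\sigma(\hat z)$, factorizes the flux into a radial integral against $F'$ and a purely angular one:
\[
\ve_\ell[u]=\frac{1}{4\ell}\int_0^1 F'(r)\,r^{d-1}\,\Phi(\ell r)\,\rmd r,\qquad \Phi(\rho):=\int_{\mathbb{S}^{d-1}}(\hat z\cdot\delta_{\rho\hat z}u)\,|\delta_{\rho\hat z}u|^2\,\rmd\sigma(\hat z).
\]
Writing $\Phi(\ell r)/\ell=r\,\Phi(\ell r)/(\ell r)$ and using that $\int_0^1 F'(r)r^d\,\rmd r=-d\int_0^1 F(r)r^{d-1}\,\rmd r=-d/|\mathbb{S}^{d-1}|$ — the last equality being exactly the normalization $\int_{B_1(0)}G=1$ — the $G$-dependent weights collapse to a universal constant, leaving $\ve[u]=-\tfrac{d}{4}\lim_{\ell\to 0}\tfrac{1}{\ell}\langle(\hat z\cdot\delta_{\ell\hat z}u)|\delta_{\ell\hat z}u|^2\rangle_{\rm ang}$, which is \eqref{43ident}.

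For \eqref{45ident} the same polar reduction applied to $\ve_\ell^\|[u]$ turns the angular integrand into the longitudinal cube $(\hat z\cdot\delta_{\ell r\hat z}u)^3$ but, because of the extra factor $|z\cdot\delta_{\ell z}u|^2=r^2(\hat z\cdot\delta_{\ell r\hat z}u)^2$, loads the radial weight with an additional $r^2$; the resulting moment $\int_0^1 F(r)r^{d+1}\,\rmd r$ is \emph{not} pinned down by $\int_{B_1(0)}G=1$. This signals that the clean constant $\tfrac{d(d+2)}{12}$ cannot come from kernel bookkeeping and must be supplied by incompressibility. Rather than track that moment, I would instead deduce \eqref{45ident} from \eqref{43ident} via the kinematic identity forced by $\nabla\cdot u=0$, namely $\langle(\hat z\cdot\delta_{\ell\hat z}u)|\delta_{\ell\hat z}u|^2\rangle_{\rm ang}=\tfrac{d+2}{3}\langle(\hat z\cdot\delta_{\ell\hat z}u)^3\rangle_{\rm ang}+o(\ell)$, the rough angle-averaged analogue of the Monin--Yaglom relation between mixed and longitudinal third-order structure functions. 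Establishing this at the level of a mere $L^3$ weak solution — by mollifying, using $\nabla\cdot\ol{u}_\ell=0$, and passing to the limit — simultaneously yields \eqref{45ident} and certifies that the two displayed limits coincide, consistently with $\ve_\ell^\|[u]\to\ve[u]$ from Remark \ref{45thlawrem}.

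The main obstacle is the step glossed over in both reductions: upgrading convergence of the \emph{scale-averaged} quantity $\tfrac{1}{\ell}\int_0^1 F'(r)r^{d-1}\Phi(\ell r)\,\rmd r$ (immediate from Corollary \ref{invthmcor4}) to convergence of the \emph{bare} angle average $\tfrac{1}{\ell}\Phi(\ell)=\tfrac{|\mathbb{S}^{d-1}|}{\ell}\langle(\hat z\cdot\delta_{\ell\hat z}u)|\delta_{\ell\hat z}u|^2\rangle_{\rm ang}$ at the single scale $\ell$. This is a Tauberian, scale-locality statement: one must preclude the defect being carried by scales far from $\ell$, and it cannot be extracted from the kernel freedom alone (a band of admissible weights $F'$ does not invert the radial averaging without a one-sided bound). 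It is precisely the hard analytic content of the local Kolmogorov laws of Eyink \cite{E03} and, in the sharp deterministic form needed here, Novack \cite{N24}, following Duchon--Robert \cite{DR00}; I would import their argument, which exploits local dissipativity together with the available $L^3$/Besov control to localize the flux in scale.
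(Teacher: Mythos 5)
First, a point of calibration: the paper does not actually prove Proposition \ref{invthmcor2} --- it says explicitly that the proof ``involves some cumbersome manipulations'' and defers to Eyink \cite{E03} and Novack \cite{N24}, proving instead the finite-scale-range Theorem \ref{45thlaw}. Measured against that proof and those references, your outline is essentially the right one: specialize the kernel in the Duchon--Robert flux \eqref{Dell} to collapse it onto a sphere average, and pass from \eqref{43ident} to \eqref{45ident} via the incompressibility (Monin--Yaglom) relation $\langle(\hat z\cdot\delta_{\ell\hat z}u)\,|\delta_{\ell\hat z}u|^2\rangle_{\rm ang}=\tfrac{d+2}{3}\langle(\hat z\cdot\delta_{\ell\hat z}u)^3\rangle_{\rm ang}+o(\ell)$, whose constant indeed reconciles $\tfrac d4\cdot\tfrac{d+2}{3}=\tfrac{d(d+2)}{12}$. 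Your honest deferral of the longitudinal computation to \cite{E03,N24} is the same deferral the paper makes.

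The one place your reasoning goes astray is the claimed Tauberian obstruction in the $\sfrac43$ step. By restricting to smooth radial profiles $F$ you correctly find that $\ve_\ell[u]$ is a radial average of $\Phi(\ell r)/(\ell r)$ over $r\in(0,1)$, and you conclude that ``kernel freedom alone'' cannot invert this averaging. But the admissible kernel class is larger than you allow: at fixed $\ell$ one may let $G\to\frac{1}{{\rm Vol}(B_1)}\mathbf{1}_{B_1}$, so that $\nabla G\overset{*}{\rightharpoonup}-\frac{1}{{\rm Vol}(B_1)}\hat z\,\rmd\sigma(\hat z)$ as measures; the pairing $(\ve_\ell[u],\psi)_{L^2_{t,x}}$ is continuous under this limit because $z\mapsto\delta_{\ell z}u$ is strongly continuous in $L^3_{t,x}$, and the whole balance \eqref{moll en identity NS} passes to the limit with it. For this kernel the radial weight is a Dirac mass at $r=1$, so $\ve_\ell[u]$ equals exactly $-\frac{d}{4\ell}\langle(\hat z\cdot\delta_{\ell\hat z}u)\,|\delta_{\ell\hat z}u|^2\rangle_{\rm ang}$ at the single scale $\ell$ --- there is no radial average left to invert --- and \eqref{43ident} then follows from the argument of Corollary \ref{invthmcor4} verbatim. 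This is precisely the opening move of the paper's proof of Theorem \ref{45thlaw}. What genuinely remains after that is only the incompressibility manipulation behind the longitudinal form at $L^3$ regularity, i.e.\ the ``cumbersome'' part the paper outsources.
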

The identity \eqref{43ident} is Kolmogorov's  $\sfrac{4}{3}$ law, and \eqref{45ident} is the   $\sfrac{4}{5}$ law if these are interpreted as holding for asymptotically small scales.    It should be remarked that they are purely deterministic statements, without need to appeal to any notion of statistical ensembles (commonly thought necessary).  Moreover, they are local in space and time statements, illustrating the power of Kolmogorov's prediction.

We do not prove Proposition \ref{invthmcor2} here, as it unfortunately involves some cumbersome manipulations very similar to those which give rise to \eqref{moll en identity NS}.  Instead, we use Proposition \ref{invthm} to establish a stronger result -- the Kolmogorov  $\sfrac{4}{3}$ law in a \emph{finite}  range of scales
\begin{align}\nonumber
     S_{3,\rm mix}^{u^\nu}(\ell) &:=\fint_0^T\fint_M \langle (\hat{z} \cdot \delta_{\ell \hat{z}}u^\nu) |\delta_{\ell \hat{z}}u^\nu|^2 \rangle_{\rm ang} \rmd x\rmd t \sim - \frac{4}{3} \langle \ve \rangle \ell, \qquad \ell_\nu \ll \ell \ll \ell_I.
\end{align}
This identity follows by analyzing the balance \eqref{moll en identity NS}.  As such, one can understand the result a defining the inertial range of scales to be the one in which $     S_{3,\rm mix}^{u^\nu}(\ell)$ or $     S_{3,\|}^{u^\nu}(\ell)$ exhibit power law scalings. The size of these scales $ \ell_\nu $ and $ \ell_I$ can be mathematically controlled, under very mild assumptions on the velocity field. Indeed, we prove
\begin{theorem}[Kolmogorov's $\sfrac{4}{3}$ and $\sfrac{4}{5}$ laws]\label{45thlaw}
 Let $u^\nu$ be a smooth solution to \eqref{ns1}--\eqref{ns2} for $\nu> 0$ with forcing $f\in L^\infty_tC^2_{x}$.  Suppose, that $\{u^\nu\}_{\nu>0}$ has a (uniform in $\nu$) $L^3$ modulus $\phi_u(\ell):=\sup_{|r|\leq \ell}\sup_{\nu>0}\|u^\nu(\cdot +r,\cdot)-u^\nu(\cdot, \cdot)\|_{L^3_{t,x}}$. 
    If $u^\nu\rightarrow u$ in $L^3_{t,x}$, there exists a monotone decreasing sequence $\ell_\nu\rightarrow 0$ such that 
\begin{align}\label{finite43}
\limsup_{\nu\rightarrow 0}\sup_{\ell\in [\ell_\nu,\ell_I]} \left|  \left(  \frac{\langle (\hat{z} \cdot \delta_{\ell \hat{z}}u^\nu) |\delta_{\ell \hat{z}}u^\nu|^2 \rangle_{\rm ang}}{\ell}+\frac{4}{d}\varepsilon^\nu[u^\nu] ,\psi\right)_{L^2_{t,x}}  \right| &\lesssim \phi_u(\ell_I) \\
\limsup_{\nu\rightarrow 0}\sup_{\ell\in [\ell_\nu,\ell_I]} \left|  \left(  \frac{\langle (\hat{z} \cdot \delta_{\ell \hat{z}}u^\nu)^3 \rangle_{\rm ang}}{\ell}+\frac{12}{d(d+2)}\varepsilon^\nu[u^\nu] ,\psi\right)_{L^2_{t,x}}  \right| &\lesssim \phi_u(\ell_I) 
\end{align}
 for all space-time scalar functions $\psi \in C^\infty_0$, where   $\varepsilon^\nu[u^\nu] := \nu |\nabla u^\nu|^2$.  For space-time averages, introducing    $\langle \ve^\nu \rangle :=\fint_0^T \fint_{\mathbb{T}^d} \varepsilon^\nu[u^\nu]  \rmd x\rmd t$, we have
\be
\sup_{\ell\in [\bar{\ell}_\nu, \bar{\ell}_I]}  \left| \frac{ S_{3,\rm mix}^{u^\nu}(\ell) }{\frac{4}{d} \langle \ve^\nu \rangle \ell} +1\right| \ll 1, \qquad \sup_{\ell\in [\bar{\ell}_\nu, \bar{\ell}_I]}  \left| \frac{ S_{3,\|}^{u^\nu}(\ell) }{\frac{12}{d(d+2)} \langle \ve^\nu \rangle \ell} +1\right| \ll 1
\ee
where $\bar{\ell}_\nu:=  \left({\nu}/{\langle \ve^\nu \rangle}\right)^{{\frac{1}{2}-}}$  and $\bar{\ell}_I :=   \langle \ve^\nu \rangle^{\frac{1}{2}-}$, provided $T\gg \langle \ve^\nu \rangle^{-1}$.
\end{theorem}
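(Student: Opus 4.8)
The plan is to prove both laws at once by combining a \emph{kinematic} identity with a \emph{dynamic} flux--dissipation balance. The kinematic identity rewrites the single-scale angle-averaged structure functions appearing in \eqref{finite43} in terms of the Onsager flux $\varepsilon_\ell$ of Proposition \ref{invthm}, while the balance, read off from the distributional identity \eqref{moll en identity NS}, shows that this flux coincides with the viscous dissipation $\varepsilon^\nu[u^\nu]$ throughout the inertial range. Concretely, I would first specialize Proposition \ref{invthm} to the (even) kernel $G=|B_1|^{-1}\mathbf 1_{B_1}$. Then $\nabla G$ is the inward surface measure $-|B_1|^{-1}\hat z\,\rmd\sigma$ on $\mathbb S^{d-1}$, so the volume integral in \eqref{Dell} collapses to a spherical average and one obtains, pointwise in $(x,t)$ and using $|\mathbb S^{d-1}|/|B_1|=d$,
\[
\varepsilon_\ell[u^\nu] = -\frac{d}{4\ell}\,\langle (\hat z\cdot\delta_{\ell\hat z}u^\nu)\,|\delta_{\ell\hat z}u^\nu|^2\rangle_{\rm ang},
\]
with the longitudinal flux \eqref{Dell45} of Remark \ref{45thlawrem} reducing in the same way to a multiple of $\langle(\hat z\cdot\delta_{\ell\hat z}u^\nu)^3\rangle_{\rm ang}/\ell$ carrying the constant $\tfrac{12}{d(d+2)}$. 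Since $\mathbf 1_{B_1}$ is not smooth I would justify this by approximating it with smooth even kernels and passing to the limit, the flux terms converging by strong $L^3_{t,x}$-continuity of translations. The bracket in \eqref{finite43} is then exactly $\tfrac4d(\varepsilon^\nu[u^\nu]-\varepsilon_\ell[u^\nu])$, and similarly in the longitudinal case.

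Next I would test \eqref{moll en identity NS} against $\psi$ and subtract the exact viscous energy balance \eqref{energybal} tested against the same $\psi$. The time-derivative and divergence terms reorganize into $(\tfrac12 u^\nu\cdot(u^\nu-\ol{u}^\nu_\ell),\partial_t\psi)$ plus a transport/pressure/commutator flux paired with $\nabla\psi$; each such term is an increment of $u^\nu$ times bounded factors and is controlled by the uniform modulus via Lemma \ref{comlem}-type estimates, hence $\lesssim\phi_u(\ell)\le\phi_u(\ell_I)$ for $\ell\le\ell_I$. The forcing commutator is $\lesssim\ell^2\|f\|_{C^2_x}+\phi_u(\ell)$ and the viscous flux inside $J^\nu_\ell$ is $O(\nu)$. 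What survives is the viscous cross-term $\nu(\nabla u^\nu:\nabla\ol{u}^\nu_\ell,\psi)$, which --- and this is the conceptual heart --- is \emph{not} comparable to $\varepsilon^\nu$ but is instead negligible once $\ell$ lies well above the dissipative scale, because the filter at scale $\ell$ removes precisely the steep gradients that carry the dissipation.

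Quantifying this last term is the crux. Using only the a priori identity $\nu\,\fint_0^T\fint_M|\nabla u^\nu|^2=\langle\varepsilon^\nu\rangle$ and the crude smoothing bound $\|\nabla\ol{u}^\nu_\ell\|_{L^2}\lesssim\ell^{-1}\|u^\nu\|_{L^2}$, Cauchy--Schwarz gives
\[
\nu\,|(\nabla u^\nu:\nabla\ol{u}^\nu_\ell,\psi)| \lesssim \nu^{1/2}\langle\varepsilon^\nu\rangle^{1/2}\,\ell^{-1}\,\|u^\nu\|_{L^2}\,\|\psi\|_{L^\infty},
\]
which vanishes as $\nu\to0$ whenever $\nu^{1/2}/\ell\to0$. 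For the local statement \eqref{finite43} I would thus fix any monotone $\ell_\nu\to0$ with $\nu^{1/2}/\ell_\nu\to0$; then $\limsup_{\nu\to0}\sup_{\ell\in[\ell_\nu,\ell_I]}$ of the assembled error is $\lesssim\phi_u(\ell_I)$, and the longitudinal estimate is identical. For the space-time averaged law I would average \eqref{moll en identity NS} over $\mathbb T^d\times(0,T)$: periodicity annihilates the divergence, the time derivative leaves a boundary term of size $O(1/T)$, and dividing by $\tfrac4d\langle\varepsilon^\nu\rangle\ell$ turns the bracket into $|\langle\varepsilon^\nu\rangle-\langle\varepsilon_\ell\rangle|/\langle\varepsilon^\nu\rangle$. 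The three error contributions are then $O\big(1/(T\langle\varepsilon^\nu\rangle)\big)$, a relative viscous error of order $\nu^{0+}$ (a small positive power of $\nu$) at $\ell=\bar\ell_\nu=(\nu/\langle\varepsilon^\nu\rangle)^{1/2-}$, and a forcing/integral-scale error for $\ell\le\bar\ell_I=\langle\varepsilon^\nu\rangle^{1/2-}$; the hypotheses $T\gg\langle\varepsilon^\nu\rangle^{-1}$ and $\ell\in[\bar\ell_\nu,\bar\ell_I]$ render each $\ll1$.

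The step I expect to be genuinely delicate is the viscous cross-term together with the choice of $\bar\ell_\nu$. Because we have no control on $\nabla u^\nu$ beyond the $L^2$ bound tied to the dissipation rate, scales below $(\nu/\langle\varepsilon^\nu\rangle)^{1/2}$ simply cannot be resolved, and this threshold is exactly the borderline at which the unresolved steep gradients would otherwise contaminate the inertial-range flux. Extracting the genuine $\nu^{1/2}/\ell$ gain --- rather than the useless $O(1)$ bound coming from $\nu\,\fint_0^T\fint_M|\nabla u^\nu|^2=\langle\varepsilon^\nu\rangle$ --- is what guarantees a nonempty inertial range $[\bar\ell_\nu,\bar\ell_I]$ in the limit $\nu\to0$.
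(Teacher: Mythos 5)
Your proposal is correct and follows essentially the same route as the paper's proof: specializing Proposition \ref{invthm} to the kernel $G\to|B_1|^{-1}\mathbf 1_{B_1}$ so that $\nabla G$ weak-$*$ converges to the spherical surface measure and $\ve_\ell$ collapses to the angle-averaged flux, then subtracting the pointwise Navier--Stokes energy balance from \eqref{moll en identity NS} and controlling the increment, pressure, and forcing terms by $\phi_u(\ell_I)$ and the viscous cross-term $\nu\,\nabla u^\nu:\nabla\ol{u}^\nu_\ell$ by Cauchy--Schwarz with $\nu\|\nabla u^\nu\|_{L^2_{x,t}}^2=\langle\ve^\nu\rangle$. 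The only cosmetic difference is that the paper bounds $\|\nabla\ol{u}^\nu_\ell\|_{L^2}\lesssim\phi_u(\ell)\ell^{-1}$ rather than your cruder $\ell^{-1}\|u^\nu\|_{L^2}$, which sharpens the admissible $\ell_\nu$ slightly but changes nothing essential.
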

\begin{figure}[h!]
  \begin{center}
    \includegraphics[width=\textwidth]{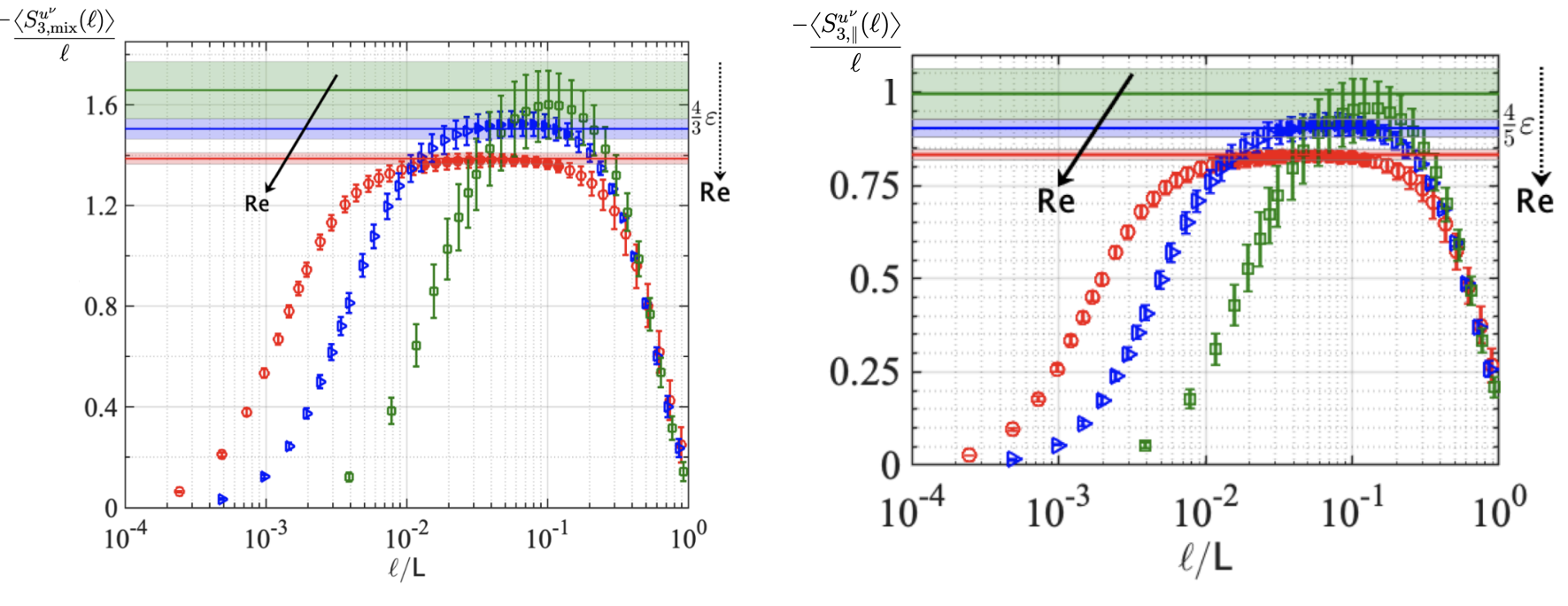}
  \end{center}
  \caption{ $\sfrac{4}{3}$  and  $\sfrac{4}{5}$ laws from direct numerical simulation  \cite{IDES25}}  \label{fig45}
\end{figure}
See Figure \ref{fig45} for a numerical demonstration of the $\sfrac{4}{3}$ and $\sfrac{4}{5}$ laws over finite ranges. A stochastic version of this theorem was proved in \cite{BCZPSW,BCZPSW2}.
 We remark that the information on the scales above can be sharpened with further regularity assumptions on the solution.  This is done in the proof below.
\begin{proof}[Proof of Theorem \ref{45thlaw}]
    Since the choice of the mollifying kernel $G$ in \eqref{moll en identity NS} is arbitrary, we let 
    $
    G\rightarrow \frac{1}{{\rm Vol}(B_1)} \mathbf{1}_{B_1}.
    $
 Since $\nabla G\overset{*}{\rightharpoonup} -  \frac{1}{{\rm Vol} (B_1)} \hat{z} d\mathcal \sigma(\hat{z}) $ in the sense of measures,     for all $x,t$
    \begin{align*}
    D^\nu_\ell(x,t)\rightarrow &-\frac{1}{4  \mathcal {\rm Vol}(B_1) }\int_{\mathbb S^{d-1}} \hat{z} \cdot \frac{\delta_{\ell  \hat{z} }u^\nu(x,t)}{\ell}|\delta_{\ell  \hat{z} }u^\nu(x,t)|^2 \, \rmd \sigma(\hat{z}) \\
    & =-\frac{d}{4}\fint_{\mathbb S^{d-1}}  \hat{z}  \cdot \frac{\delta_{\ell  \hat{z} }u^\nu(x,t)}{\ell}|\delta_{\ell  \hat{z} }u^\nu(x,t)|^2  \rmd \sigma(\hat{z})=-\frac{d}{4}\frac{\langle (\hat{z} \cdot \delta_{\ell \hat{z}}u^\nu) |\delta_{\ell \hat{z}}u^\nu|^2 \rangle_{\rm ang}}{\ell}.
    \end{align*}
Moreover,  $u^\nu$ is assumed smooth enough for the  Navier-Stokes energy balance to hold:
    $$
  \nu |\nabla u^\nu|^2=-  \partial_t \frac{|u^\nu|^2}{2} -\div  \left(\left(\frac{|u^\nu|^2}{2}+p^\nu\right)u^\nu\right)+ \nu \Delta \frac{|u^\nu|^2}{2} + f\cdot u^\nu
    $$
      in the sense of distributions.  Thus, we have 
    \begin{align} \nonumber
        \frac{d}{4}\frac{\langle (\hat{z} \cdot \delta_{\ell \hat{z}}u^\nu) |\delta_{\ell \hat{z}}u^\nu|^2 \rangle_{\rm ang} }{\ell} +\nu |\nabla u^\nu|^2 &=\partial_t \left(\frac{u^\nu\cdot (u^\nu_\ell - u^\nu)}{2}\right) +\div \left(\frac{u^\nu\cdot(u^\nu_\ell - u^\nu) }{2}u^\nu \right)\\ \nonumber
        &\quad\quad +\div \left(\frac{p^\nu (u^\nu_\ell - u^\nu) +(p^\nu_\ell-p^\nu) u^\nu}{2} \right)\\ \nonumber
        &\quad \quad \quad +\nu \Delta \left(\frac{|u^\nu|^2 -u^\nu\cdot u^\nu_\ell}{2}\right) +\nu\nabla u^\nu :\nabla u^\nu_\ell\\
        &\qquad \quad \quad - \tfrac{1}{2} u^\nu\cdot (\ol{f}_\ell -  f) - \tfrac{1}{2} (\ol{u}_\ell^\nu-u^\nu)\cdot {f},  \label{balanceS}
    \end{align}
    in the sense of distributions. Since $u^\nu$ is compact in $L^3_{t,x}$, by the Fr\`echet--Kolmogorov compactness criterion, the quantity 
    $
\phi_u(\ell):=\sup_{|y|\leq \ell}\sup_{\nu>0}\|u^\nu(\cdot +y,\cdot)-u^\nu(\cdot, \cdot)\|_{L^3_{t,x}}
    $
    defines a modulus of continuity, i.e. $\lim_{\ell\rightarrow 0}\phi_u(\ell)=0$. Moreover, a direct consequence of the Calderón–Zygmund estimate in Lemma \ref{P: pressure CZ} is
    $$
    \sup_{|y|\leq \ell}\sup_{\nu>0}\|p^\nu(\cdot +y,\cdot)-p^\nu(\cdot, \cdot)\|_{L^{\frac32}_{x,t}}\leq C \sup_{|y|\leq \ell}\sup_{\nu>0}\|u^\nu(\cdot +y,\cdot)-u^\nu(\cdot, \cdot)\|_{L^3_{t,x}}\lesssim \phi_u(\ell).
    $$
    The forcing terms are simple
        $
    \sup_{|y|\leq \ell}\sup_{\nu>0}| u^\nu\cdot (\ol{f}_\ell -  f) +  (\ol{u}_\ell^\nu-u^\nu)\cdot {f}  | \lesssim \ell  + \phi_u(\ell) \lesssim \phi_u(\ell).
    $
    We can thus estimate 
    \begin{equation*}
    \left|  \left( \frac{d}{4}\frac{\langle (\hat{z} \cdot \delta_{\ell \hat{z}}u^\nu) |\delta_{\ell \hat{z}}u^\nu|^2 \rangle_{\rm ang} }{\ell} +\varepsilon^\nu[u^\nu] , \psi \right)_{L^2_{x,t}} \right| \leq C \left( \phi_u(\ell)  +\nu \|\nabla u^\nu\|_{L^2_{x,t}}\|\nabla u^\nu_\ell\|_{L^2_{x,t}}\right).
    \end{equation*}
    Since $u^\nu$ is a sequence of Navier-Stokes solutions with $L^2_x$ bounded initial data we have  
    $
    \|\nabla u^\nu\|_{L^2_{x,t}}=  \left(\frac{\langle \ve^\nu \rangle}{\nu}\right)^{\frac12}.
    $
Together with $\|\nabla u^\nu_\ell\|_{L^2_{x,t}}\leq C  \phi_u(\ell)  \ell^{-1}$, this yields 
$$
     \left|  \left( \frac{\langle (\hat{z} \cdot \delta_{\ell \hat{z}}u^\nu) |\delta_{\ell \hat{z}}u^\nu|^2 \rangle_{\rm ang} }{ \frac{4}{d}   \langle \ve^\nu \rangle \ell} +\frac{\varepsilon^\nu[u^\nu] }{\langle \ve^\nu \rangle}, \psi \right)_{L^2_{x,t}} \right|  \lesssim   \frac{\phi_u(\ell)}{ \langle \ve^\nu \rangle} \left(1  +\ell^{-1} \left(\nu {\langle \ve^\nu \rangle}\right)^{\frac12}\right).
$$
Whence \eqref{finite43} holds if $\ell_\nu \geq   \left(\nu {\langle \ve^\nu \rangle}\right)^{\frac12}$.  Provided $\ell_I\ll  \phi_u^{-1}( \langle \ve^\nu \rangle)$ we get 
$
\text{(left hand side)} \ll   1.
$
This holds even if $\langle \ve^\nu \rangle\to 0$ as $\nu\to 0$, but it means the scale range shrinks.

 For the  global balance, there are improvements that one can have if one considers only space averages.  Specifically, integrating \eqref{balanceS} yields
\be
\left| \frac{ S_{3,\rm mix}^{u^\nu}(\ell) }{ \frac{4}{d}   \langle \ve^\nu \rangle \ell} +1\right| \lesssim \frac{S_2^{u^\nu}(\ell)}{ \langle \ve^\nu \rangle T}  +  \frac{\ell^2}{ \langle \ve^\nu \rangle } +  \frac{S_2^{u^\nu}(\ell) }{\ell} \left(\nu {\langle \ve^\nu \rangle}\right)^{\frac12},
\ee
since $ \|f-\ol{f}_\ell \|_{L^\infty}\lesssim \|f\|_{C^2}  \ell^2$ for even mollification kernels, see \cite[Lemma 14.1]{Isetthesis}. Keep in mind that everything has been non-dimensionalized by the $L^2$ norm $\mathsf{U} = \|u^\nu\|_{L^2}$, so the constants are indeed viscosity independent. Moreover, in the space-integrated identity, the $S_2^{u^\nu}(\ell)$  is  a $L_t^\infty L^2_x$ modulus, not $L_{t,x}^3$.    Suppose that $S_2(\ell)^{1/2} \leq  \ell^{\sigma_2}$, as observed from data (Figure \ref{S2}). 
Then, taking $\bar{\ell}_\nu:=  \left({\nu}{\langle \ve^\nu \rangle}\right)^{{\frac{1}{2(1-\sigma_2)}-}}$,  $\bar{\ell}_I \ll \sqrt{  \langle \ve^\nu \rangle}$ and $T\gg \langle \ve^\nu \rangle^{\frac{\sigma_2}{2}-1}$, we have
$
\sup_{\ell\in [\bar{\ell}_\nu, \bar{\ell}_I]}  \left| \frac{ S_{3,\rm mix}^{u^\nu}(\ell) }{\frac{4}{d} \langle \ve^\nu \rangle \ell} +1\right| \ll 1.
$
 We treated only the $\sfrac{4}{3}$ law here. The $\sfrac{4}{5}$ law version is slightly more involved computationally, but it follows similarly.  See, e.g.  Remark \ref{45thlawrem}.
\end{proof}

\begin{remark}[Kolmogorov laws without the zeroth law]
It is worth considering the case in which the dissipation vanishes, perhaps  slowly, as a power of inverse Reynolds:
\be\label{vanish}
\langle \ve^\nu \rangle\sim \nu^\alpha, \qquad \alpha>0.
\ee
For example, in \cite{DE19} it was proved that if $u^\nu\in L^3(0,T; B_{3,\infty}^{\sigma_3}(\mathbb{T}^d))$ uniformly, then $(\varepsilon^\nu[u^\nu] , \psi) \leq \nu^{\frac{3\sigma_3-1}{\sigma_3+1}}$ (Corollary \ref{edisscor}).  When $\sigma_3>1/3$, this gives a bound on the rate $\alpha$. Our theorem  shows a long scale range where the  $\sfrac{4}{3}$ law $S_{3,\rm mix}^{u^\nu}(\ell) \approx  -\frac{4}{d} \varepsilon^\nu[u^\nu] \ell$ (similarly the  $\sfrac{4}{5}$ law $S_{3,\|}^{u^\nu}(\ell) \approx  -\frac{12 }{d(d+2)}  \varepsilon^\nu[u^\nu] \ell$)  is observed despite formally there is not existing ``anomalous dissipation" in the limit $\nu \to 0$.   	This is our interpretation of recent numerical data \cite{IDES25}, from which Figure \ref{fig45} is drawn. The relevant  \emph{inertial range} scales are
$$
 \bar{\ell}_\nu\sim   {\nu}^{{\frac{1+\alpha}{2(1-\sigma_2)}-}}, \qquad \bar{\ell}_I \sim \nu^{\alpha/2}, \qquad T\gg \nu^{\alpha(\frac{\sigma_2}{2}-1)}.
$$
Indeed, by eye one can see a trend in the figure for the lower limit of the scaling range to decrease faster than the upper (which does look to be slowly decreasing).  
\end{remark}

With Theorem \ref{45thlaw} in sight, it is tempting to make the following conjecture
\begin{conjecture}\label{conj45}
For generic initial conditions $u_0\in L^2$, inviscid limits of sequences of Leray-Hopf weak solutions of the Navier-Stokes remain bounded uniformly in $L_t^3 B_{3,\infty}^{1/3}$. 
\end{conjecture}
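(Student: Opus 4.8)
The plan is to reduce the conjecture to a purely quantitative, structure-function statement and then to pursue the regularity bound by a genericity argument. By Lemma \ref{Slem} applied with $p=3$ and $\sigma_3 = 1/3$, the assertion that $\{u^\nu\}$ stays bounded in $L^3_t B^{1/3}_{3,\infty}$ is \emph{equivalent} to the uniform-in-$\nu$ estimate $S_3^{u^\nu}(\ell) \lesssim \ell$. So the target is, for generic $u_0 \in L^2$, a constant $M=M(u_0)$ with $\limsup_{\nu\to 0} S_3^{u^\nu}(\ell)\leq M\ell$ for all $\ell$. First I would collect the bounds that come for free: the Leray--Hopf energy inequality gives $\sup_{\nu>0}\langle \ve^\nu\rangle<\infty$ depending only on the energy of the data, so the total dissipation is controlled uniformly. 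Feeding this into the $\sfrac{4}{5}$-type balance of Theorem \ref{45thlaw} yields, in the inertial range, the lower bound $S_3^{u^\nu}(\ell)\geq |S_{3,\|}^{u^\nu}(\ell)|\approx \tfrac{12}{d(d+2)}\langle \ve^\nu\rangle\ell$. This is the ``ceiling'' $\zeta_3\leq 1$, so the entire content of the conjecture is the reverse ``floor'' inequality $S_3^{u^\nu}(\ell)\lesssim \ell$.

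The essential difficulty is that the $\sfrac{4}{5}$ law controls only the \emph{signed} third moment, and $|S_{3,\|}|\leq S_3$ points the wrong way: heavy cancellation can keep the net energy flux moderate while the absolute increments $|\delta_{\ell\hat z}u^\nu|^3$ are large. Consequently no bootstrap from the dissipation identity (Proposition \ref{P:decomposition_NS}) or from Corollary \ref{edisscor} can close the argument, since those estimates \emph{assume} a uniform $B^\sigma_{3,\infty}$ bound rather than produce one. This is where genericity must do the work. I would set up a Baire-category framework on the complete space $X=L^2_\sigma(\mathbb{T}^d)$: for $M>0$ let $G_M := \{ u_0\in X : \limsup_{\nu}\|u^\nu\|_{L^3_t B^{1/3}_{3,\infty}}\leq M \text{ for every Leray--Hopf selection}\}$, and aim to prove $\bigcup_M G_M$ is comeager (one should also decide whether the intended ``generic'' is Baire-residual or, in the infinite-dimensional setting, prevalent). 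The natural source of density is smooth data: for $u_0\in C^\infty$, on the Euler local-existence interval of Theorem \ref{lethm} the vanishing-viscosity sequence converges to the \emph{classical} Euler solution, for which $S_3^{u^\nu}(\ell)\lesssim \ell^3\lesssim \ell$ holds uniformly in $\nu$. Since smooth data are dense in $X$, one would then try to upgrade this to a residual set by realizing each $G_M$ as the trace of a closed set, exploiting lower semicontinuity of the Besov seminorm under strong $L^3_{x,t}$ limits together with the compactness of Theorem \ref{cmpthm}.

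The main obstacle, and the reason the statement is only a conjecture, is that this scheme requires a \emph{stability theory for the inviscid limit} that is presently unavailable. Three gaps must be bridged: (i) Leray--Hopf solutions, and their $\nu\to 0$ limits, are non-unique, so $G_M$ is not the good set of a single dynamical map and one must quantify over selections; (ii) there is no known modulus of continuity for $u_0\mapsto u^\nu$ in the $L^2$ topology, so closedness of $G_M$ cannot be read off from any estimate in the excerpt; and (iii) the short-time smooth argument must be continued past the Euler regularity time $T_*$, precisely the regime where Theorem \ref{tarekthm} shows classical theory can break down. Concretely, one must \emph{rule out, for generic data, convex-integration-type wild limits} whose absolute structure functions overshoot the Kolmogorov scaling while respecting the energy budget. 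I expect this to be the crux: the companion genericity result of De Rosa--Tione \cite{DT22} shows that in the \emph{weak-solution} topology wild solutions are Baire generic, so any proof must genuinely use that genericity here is measured in the \emph{initial-data} topology, and must extract the $1/3$ floor from the viscous regularization rather than from the inviscid equation alone.
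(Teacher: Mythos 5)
The statement you are addressing is Conjecture \ref{conj45}: the paper offers no proof of it, only a heuristic motivation (the $\sfrac{4}{5}$ law forces $\zeta_3\leq 1$, while ``generic'' fields with $\zeta_3<1$ would make $S_{3,\|}(\ell)/\ell$ blow up) and the observation that the analogous statement \emph{is} a theorem for the Burgers equation (Theorem \ref{thm}) and for the Kraichnan model. So there is no proof in the paper to compare yours against, and your writeup --- candidly --- is a research program rather than a proof. Within that caveat, your reduction via Lemma \ref{Slem} to the bound $S_3^{u^\nu}(\ell)\lesssim\ell$ is the right normalization, and your identification of the central obstruction is exactly the one the paper itself emphasizes: the Kolmogorov flux controls only the \emph{signed} longitudinal third moment, so it is not coercive, and no bootstrap through Proposition \ref{P:decomposition_NS} or Corollary \ref{edisscor} can produce the absolute bound it presupposes.

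Where you diverge from the paper's intended line of attack is in what should replace that missing coercivity. The paper points to the mechanism that actually closes the argument for Burgers and Kraichnan: find an ideally conserved, anomalously dissipated quantity whose flux is coercive (or nearly so) and bounded by the energy input --- for Burgers this is the full family of convex entropies in Lemma \ref{lem} (following \cite{GJO15}), for Kraichnan it is Yaglom's law; for Navier--Stokes the closest available surrogate is the alignment-conditional regularity of \cite{D22}. Your alternative --- a Baire-category argument in the initial-data topology, seeded by density of smooth data --- has a structural flaw beyond the gaps you list: density of the smooth data alone yields nothing about comeagerness of $\bigcup_M G_M$ (a dense set can be meager), so unless each $G_M$ can be shown closed, or to contain an open set, the genericity claim does not even get off the ground; and closedness of $G_M$ would require precisely the stability of the inviscid limit in the $L^2$ data topology that is unavailable. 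Moreover the short-time classical bound $S_3\lesssim\ell^3$ degrades to nothing past the Euler lifespan, which by Theorem \ref{tarekthm} can be finite, so the seed set does not even control the fixed time horizon in the conjecture. In short: your diagnosis of why the statement is hard is accurate and consistent with the paper, but the categorical route you sketch points away from the coercive-flux mechanism that the paper identifies as the only strategy known to succeed in any model.
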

The intuition behind this conjecture is the following. Assume anomalous dissipation occurs $\langle \varepsilon \rangle >0$.  Then, Theorem \ref{45thlaw} says that $\langle S_{3,\|}^{u^\nu}(\ell)\rangle /\ell$ is finite  over a finite (inertial) scale range.  This range extends to zero length as $\nu\to 0$, and $\langle S_{3,\|}^{u^\nu}(\ell)\rangle /\ell$ is non-vanishing as $\ell \to 0$ . Now, for any vector field in $L_t^3 B_{3,\infty}^{s}$ with $s\geq 1/3$,  the $\langle S_{3,\rm mix}^{u^\nu}(\ell)\rangle /\ell $ is bounded uniform in $\ell$ but if $s>1/3$ then the ratio clearly vanishes as $\ell\to 0$.  Therefore we should have $s\leq 1/3$.  But    "generic" vector fields with $s<1/3$ \emph{should} have $|\langle S_{3,\|}^{u}(\ell)\rangle /\ell \to \infty|$ as $\ell \to 0$, since $\langle S_{3,\|}^{u}(\ell)\rangle /\ell\sim (\delta_\ell u)^3/\ell\sim \ell^{3s-1}$. Turbulence should place the fluid velocity in such a generic category, why not?  It thus seems that $L_t^3 B_{3,\infty}^{1/3}$  is the only reasonable space, at least in the Besov scale. See also \cite{D22}.

Indeed, this conjecture ends up being true for some model problems.  For the Burgers equation it is known by works \cite{JKM,GP,TT}.  We offer a proof here in Theorem \ref{thm}, following \cite{GJO15}. For the Kraichnan model of scalar turbulence (scalar is transported by a Gaussian random velocity field with H\"older regularity $\alpha$), it has been proved in \cite{GGM24,DGP25} that the scalar field retains $1-\alpha$ derivatives in $L^2$, at least after averaging. Both these results follow from the same principle: the flux of energy is limited by the input of energy (initial conditions), meanwhile it takes a coercive or nearly coercive form and this limits the possible (ir)regularity of the solution uniform in viscosity.

Aside from the clearly shaky reasoning behind Conjecture \ref{conj45}, the recent numerical results obtained in \cite{IDES25} call into question some components of the intuition, namely  the answer to Question \ref{consques} being negative.  In particular, \cite{IDES25}  gives evidence that  the dissipation slowly vanishing in the inviscid limit.  In this case, the information in the $\sfrac{4}{5}$ law degenerates, and does not shed any particular light on the regularity of the limiting object.  In fact, those numerics indicate the $L^3$ Besov regularity tends towards a value larger that $\sfrac{1}{3}$ asymptotically which, via Corollary \ref{edisscor}, is consistent with the perceived rate of decay of energy dissipation. See Figure \ref{disanom}. This may be taken as an (admittedly very weak) point in favor of the answer to Question \ref{consques} being positive, since why would the turbulent solution land on any particular $\sigma_3>1/3$?

\section{Landau's objection and intermittency}\label{landausec}
\vspace{2mm}

\begin{quotation}
\emph{"...the energy associated with large wave-numbers is very unevenly distributed in space. There appear to be isolated regions in which the large wave-numbers are ‘activated’, separated by regions of comparative quiescence."} \hfill Batchelor and Townsend \cite{BT}\\
\end{quotation}
\vspace{2mm}

Let us return to Kolmogorov's prediction \eqref{k41} concerning the scaling laws for 
\be\label{Sppred}
S_p^\| (\ell) \sim C_p  \ell^{\zeta_p}, \qquad \ell_\nu \ll \ell \ll \ell_I.
\ee
Kolmogorov predicted $\zeta_p=p/3$, which agrees with the  "rigorous" prediction from the $\sfrac{4}{5}$ law (see Theorem \ref{45thlaw}) at order $p=3$.
How does it hold up?  One can see in Figure \ref{figexp} a plot of the exponents $\zeta_p$ vs $p$.  Nearby $p=3$, Kolmogorov's prediction does remarkably well (in particular, at order $p=2$ it is only very slightly an underestimate, see also Figure \ref{Kfig2}).  However, for higher moments $p \gg 3$ the data clearly does not agree with the linear-in-$p$ prediction. Instead, $\zeta_p$ appears to be some non-linear convex function of $p$.  Such behavior is termed \emph{intermittency} since it necessitates non-uniform irregularities in space and time.  The flow field is spotty, with some regions being markedly smoother than others, beautifully captured by the quote of Batchelor and Townsend.  All this happening, in principle, on fractal sets of lower dimension.

Predicting $\zeta_p$ has been one of the major goals of physicists working on turbulence \cite{SL,K62,FSN,Y01}.  Constructing solutions which display intermittency \cite{NV23,GKN}, as well imposing fundamental constraints the exponents $\zeta_p$, is a goal of mathematicians. We will focus on the issue of imposing constraints on $\zeta_p$ here. 
\begin{figure}[h!]
  \begin{center}
    \includegraphics[width=0.4\textwidth]{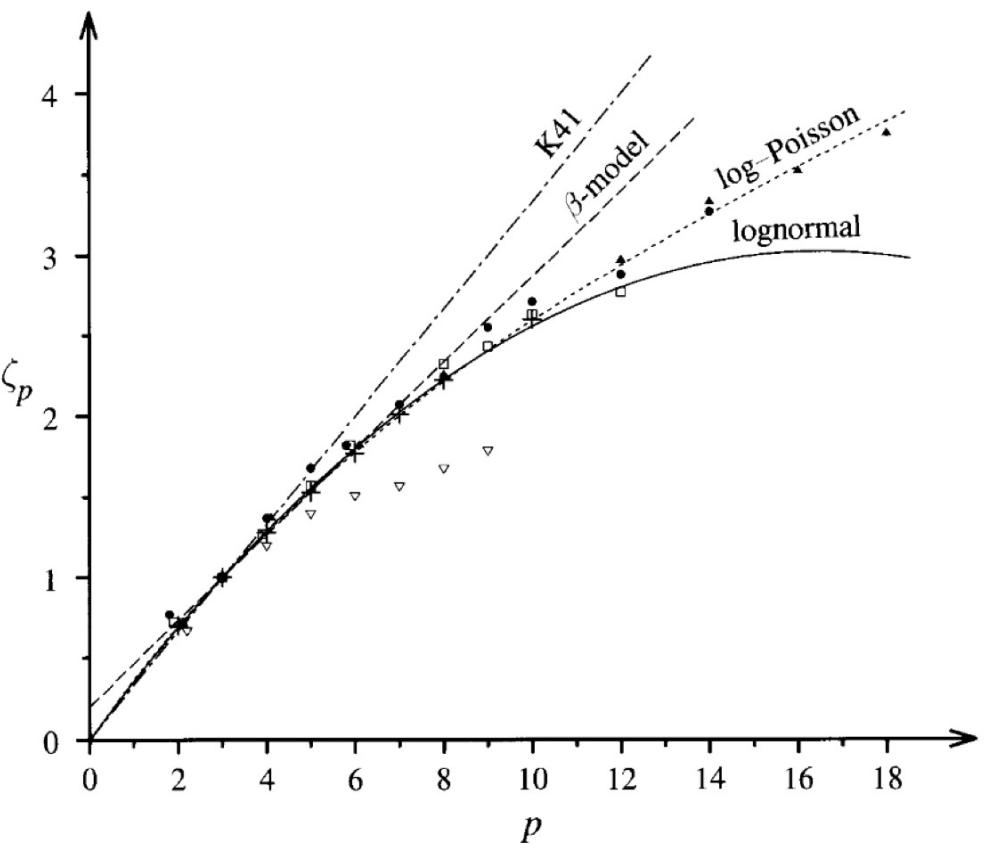}
       \hspace{2mm} 
       {
        \includegraphics[width=0.4\textwidth]{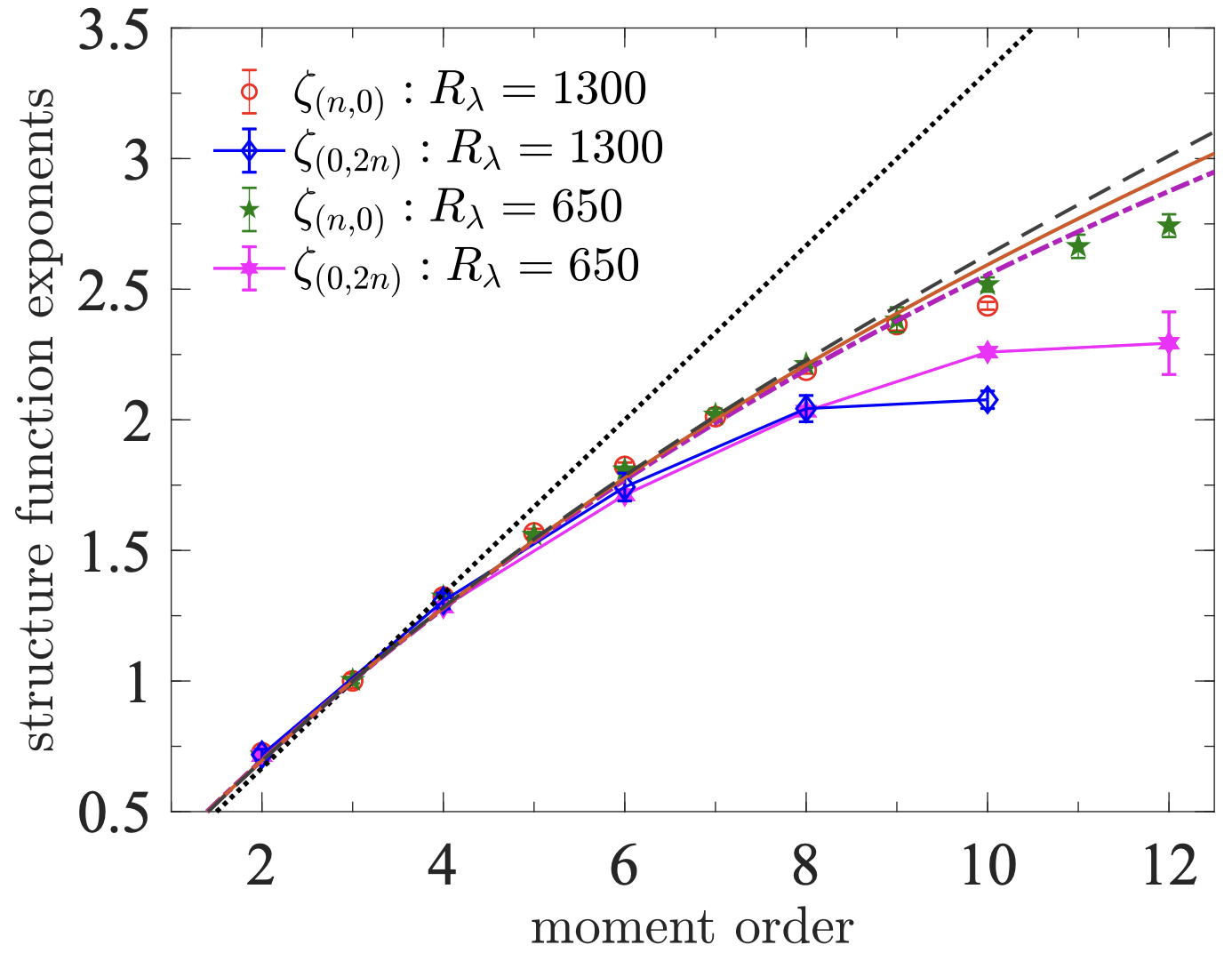}}
  \end{center}
  \caption{Anomalous scaling exponents from  \cite[Chapter 8]{F95} and \cite{ISY20} }\label{figexp}
\end{figure}

Where is the weak point in the argument we sketch for the behavior  \eqref{Sppred}?  As we have seen from Theorem \ref{45thlaw}, the application of the $\sfrac{4}{5}$ law is quite justified. This theorem also, in part, justifies the constancy of the flux in the inertial range since $\langle\varepsilon\rangle$ directly appears there.  Well then, what is the evidence towards this non-intermittent dissipation postulate?  As it turns out, this seems to be the biggest gap in Kolmogorov's picture since in the typically 3D turbulence situation, it appears badly violated. Lev Landau was apparently the first to recognize this\footnote{Onsager also knew of intermittency. In his 1945 letter to C.C. Lin stated that energy spectral exponent might more negative than $-\sfrac{5}{3}$ 
because of “spottiness” of velocity-increments in space. \cite{ES06}}, and pointed it out to Kolmogorov after a seminar in Kazan in 1942 (see discussion in Frisch \cite{F95} for a deeper historical analysis). In their text, Landau and Lifshitz \cite{LL}.
\vspace{2mm}

\begin{quotation}
\emph{"It might be thought that the possibility exists in principle of obtaining a universal formula, applicable to any turbulent flow, which should give $S_2(\ell)$ for all distances $\ell$ that are small compared to $\ell_I$.   In fact, however, there can be no such formula as we see from the following argument.  The instantaneous value of $(\hat{z}\cdot \delta_{\ell\hat{z}} u)^2$ might, in principle, be expressed as a universal function of the dissipation $\ve[u]$ at the instant considered.  When we average these expressions, however, an important part will be played by the manner of variation of $\ve[u]$ over times of order of the periods of the large eddies (with size $\sim \ell_I$), and this variation is different for different flows. The result of the averaging therefore cannot be universal."}\\
\phantom{adsf} \hfill  --  Landau and Lifshitz,  \emph{Fluid Mechanics} \cite{LL}
\end{quotation}

\begin{figure}[h!]
  \begin{center}
    \includegraphics[width=0.33\textwidth]{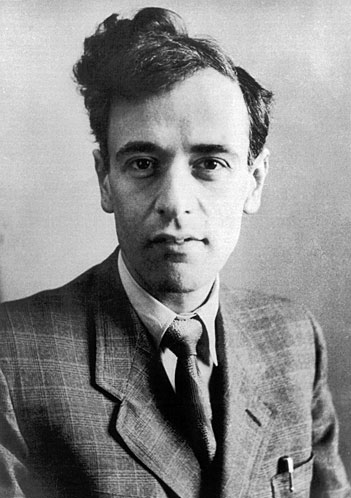}
     \hspace{5mm} 
        \includegraphics[width=0.55\textwidth]{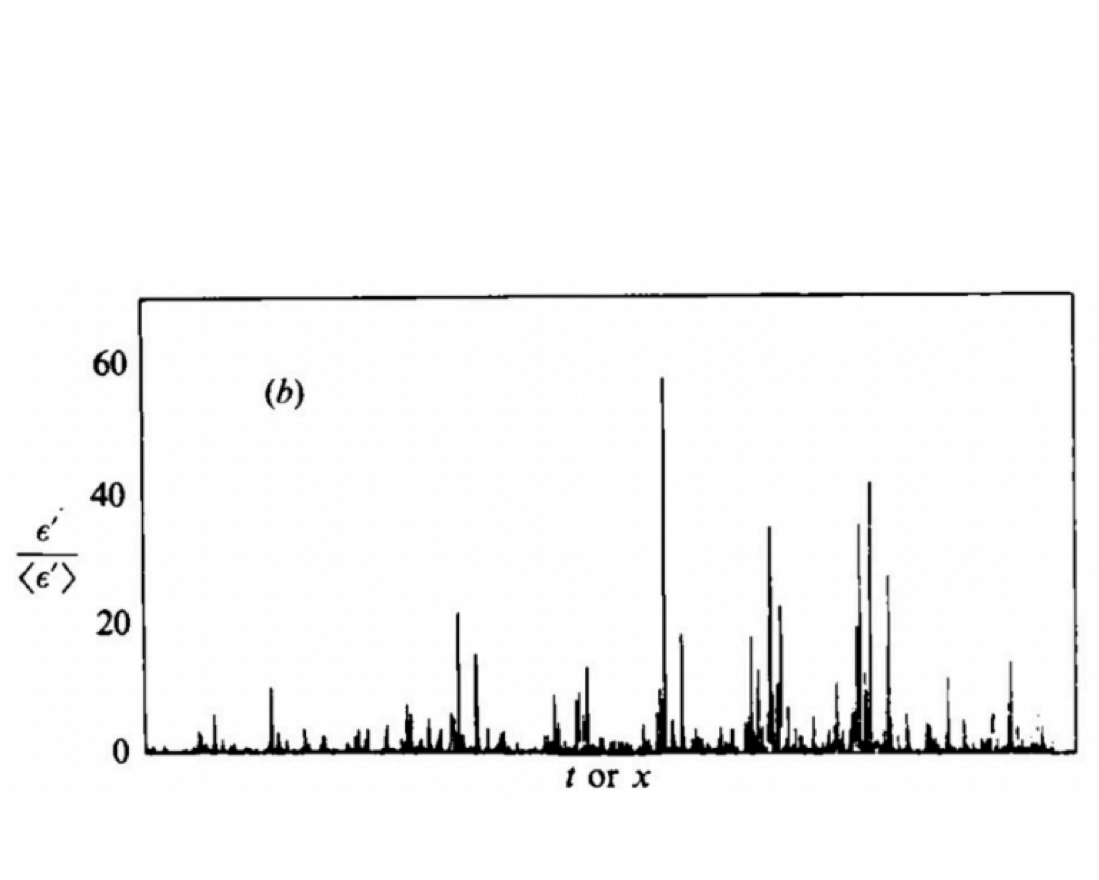}
  \end{center}
  \caption{Lev D. Landau and intermittent energy dissipation \cite{MS91} } \label{Lanfig}
\end{figure}

Landau's text concerns the temporal variability of the energy dissipation rate, see Figure \ref{Lanfig}.  More generally, the dissipation thought of as a measure on spacetime should be highly non-uniform/intermittent, giving positive mass to lower dimensional sets (think of, e.g. a shock wave). See Figure \ref{figdiss} for a rendering of volumes with extreme energy dissipation from numerical simulation of turbulence in a periodic box. The upshot of charging a lower dimensional set is that the dissipation measure must \emph{diverge} at some spacetime locations.  Let us reexamine Kolmogorov's argument for  \eqref{Sppred}, in view of the rigorous Theorem \ref{45thlaw}.  This theorem says that, in the sense of distributions, Kolmogorov's $\sfrac{4}{5}$ law holds. Specifically, it implies the following coarse grained identity should hold in the inertial range $\ell_\nu \ll \ell \ll \ell_I$:
\be\label{mol45}
\frac{\ol{( \langle  (\hat{z}\cdot\delta_{\ell \hat{z}}u^\nu )^3 \rangle_{\rm ang})}_{\ell}}{\ell}\approx - \frac{12}{d(d+2)} \ol{\ve^\nu[u^\nu]}_\ell,
\ee
where, recall $\ol{f}_\ell = f* G_\ell$ for some \emph{space-time} mollifier $G_\ell$.  
Thus, repeating Kolmogorov's  argument slightly more precisely, we have
\begin{align} \nonumber
S_{p,\|}^{u^\nu}(\ell)&:= \fint_0^T \fint_M  \langle  (\hat{z}\cdot\delta_{\ell \hat{z}}u^\nu )^p \rangle_{\rm ang} \rmd x \rmd t\qquad\qquad\qquad \ \ \quad\! \! \! \text{by definition \eqref{longsp}}\\ \nonumber
&\approx \fint_0^T \fint_M \langle  (\hat{z}\cdot\delta_{\ell \hat{z}}u^\nu )^3 \rangle_{\rm ang}^{p/3} \ \rmd x \rmd t\qquad \qquad \qquad  \ \    \ \text{by isotropy}\\ \nonumber
&= \fint_0^T \fint_M\ol{ ((\hat{z}\cdot\delta_{\ell \hat{z}}u^\nu )^3 \rangle_{\rm ang}^{p/3}) }_\ell \rmd x \rmd t\\ \nonumber
&\approx \fint_0^T \fint_M\ol{( \langle  (\hat{z}\cdot\delta_{\ell \hat{z}}u^\nu )^3 \rangle_{\rm ang})}_{\ell}^{p/3}\rmd x \rmd t\\ \nonumber
&\approx \fint_0^T \fint_M \ ( \tfrac{12}{d(d+2)} \ol{\ve[u^\nu]}_\ell \ell)^{p/3}  \rmd x \rmd t  \quad \ \quad\quad\quad \quad \ \   \    \text{by identity \eqref{Dell45}/\eqref{k45law}} 
\end{align}
In the above, we have introduced a  mollification so as to make use of the  exact identity \eqref{mol45} in a more precise fashion. As a result, the mollified energy dissipation makes its way into the picture (this is also the basis for Kolmogorov's own refinement of his theory from 1962 \cite{K62},  which features  ball averaged dissipation). See also Kraichnan \cite{Kr94}.
At this point, Kolmgorov's original argument relates the $\sfrac{p}{3}$ moment of the dissipation simply to its mean.  Instead, following Landau's suggestion, a reasonable implication of the observed spottiness of the dissipation would be power-law divergences of higher moments
\be\label{landaubw}
\fint_0^T \fint_M \  \ol{\ve[u]}_\ell ^{p} \ \rmd x \rmd t  \sim  \langle \varepsilon \rangle^{p} \left(\frac{\ell}{\ell_I}\right)^{-\alpha_{p}}.
\ee
See, e.g., the discussion in \cite{MS87,MS91,AFLV}.
Here, we must have $\alpha_1=0$,  and based on the above, we might expect $\alpha_1\geq 0$ for $p> 1$ and   $\alpha_p\leq  0$  for $p\in[0,1)$ which relates to the size of the sets that $\ve[u]$ charges, and the rate at which $\ve[u]$ diverges.  
Indeed, consider the case of the dissipation being concentrated on a smooth, co-dimension one hypersurface.  This essentially reduces to thinking of $\ve[u](x,t)=  \langle\varepsilon \rangle \delta_{x=0}$ in one dimension, so that   $ \ol{\ve[u]}_\ell (x,t) = \langle\varepsilon \rangle G_\ell(x)$ as an approximation to the identity on the real line $\mathbb{R}$ at each time $t$, where $G$ is smooth, positive, even and ${\rm supp} \ \!G \subset [-\ell_I, \ell_I]$.  Then
\begin{align}\label{shockbd}
\fint_0^T \fint_M\  \ol{\ve[u]}_\ell^{p} \ \rmd x \rmd t  &=  \langle\varepsilon \rangle^{p}  \fint_\mathbb{R} G_\ell^{p}(x)  \rmd x =\|G\|_{L^{p}}^{p}\langle\varepsilon \rangle^{p}  \left(\frac{\ell}{\ell_I}\right)^{-(p-1)}.
\end{align}
So $\alpha_p=p-1$ in this case.
Using \eqref{landaubw} inside Kolmogorov's argument, we arrive at 
\be
  \label{k41a} 
S_{p,\|}^{u^\nu}(\ell)  \sim C_p (\langle \ve \rangle \ell)^{p/3}  \left(\tfrac{\ell}{\ell_I}\right)^{-\alpha_{p/3}} \sim   \left(\tfrac{\ell}{\ell_I}\right)^{ \sfrac{p}{3} - \alpha_{p/3}}  \qquad \text{for} \qquad \ell_\nu \ll \ell \ll \ell_I.
\ee
This formula relates two unknown scalings for the structure function \eqref{Sppred}, $\zeta_p= \sfrac{p}{3} - \alpha_{p/3}$.
For the dissipation charging codimension one sets (e.g. shocks) this predicts $\zeta_p=1$, as $\alpha_1=0$. This indeed holds, at least for $p\geq 3$ in models like the Burgers equation (see, e.g. Theorem \ref{thm} and discussion herein) and compressible Euler \cite{Daf83,DE18,BDSV22}.

We remark that the above argument can be made tighter by actually controlling the error made from exchanging mollification with the $\sfrac{p}{3}$ power.  Potentially, one could show that the error from doing this is the same, or lower, order in $\ell$ than \eqref{k41a} by bootstrapping this information.  On the other hand, I don't see a clean way to quantify the error incurred in the first step of the argument which appeals to "isotropy".

We remark also that the longitudinal structure function $S_{p,\|}^{u^\nu}(\ell)$ is not coercive in that it does not imply regularization in the Besov-type space introduced in Lemma \ref{Slem}. Nontrivial scaling exponents (or bounds) on absolute structure functions  $S_{p}^{u^\nu}(\ell)$ \eqref{SP} are. This can be related to multifractal picture of the turbulence velocity field \cite{FP,Eyink93,Eyink95}.  It is the \emph{absolute} structure functions that will be the subject of the results  to follow.

\begin{figure}[h!]
  \begin{center}
    \includegraphics[width=0.47\textwidth]{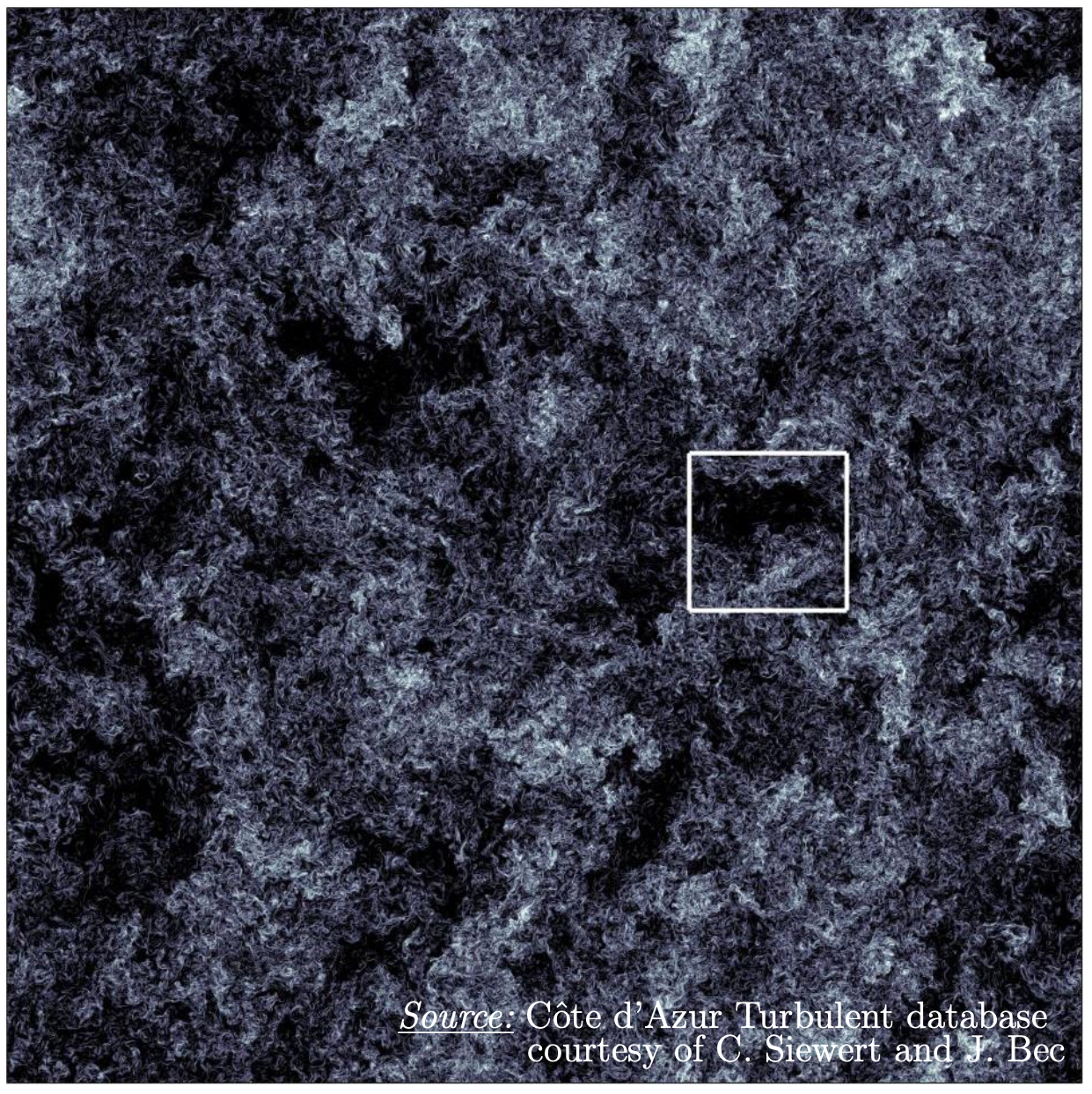}
       \hspace{2mm} 
       {
        \includegraphics[width=0.47\textwidth]{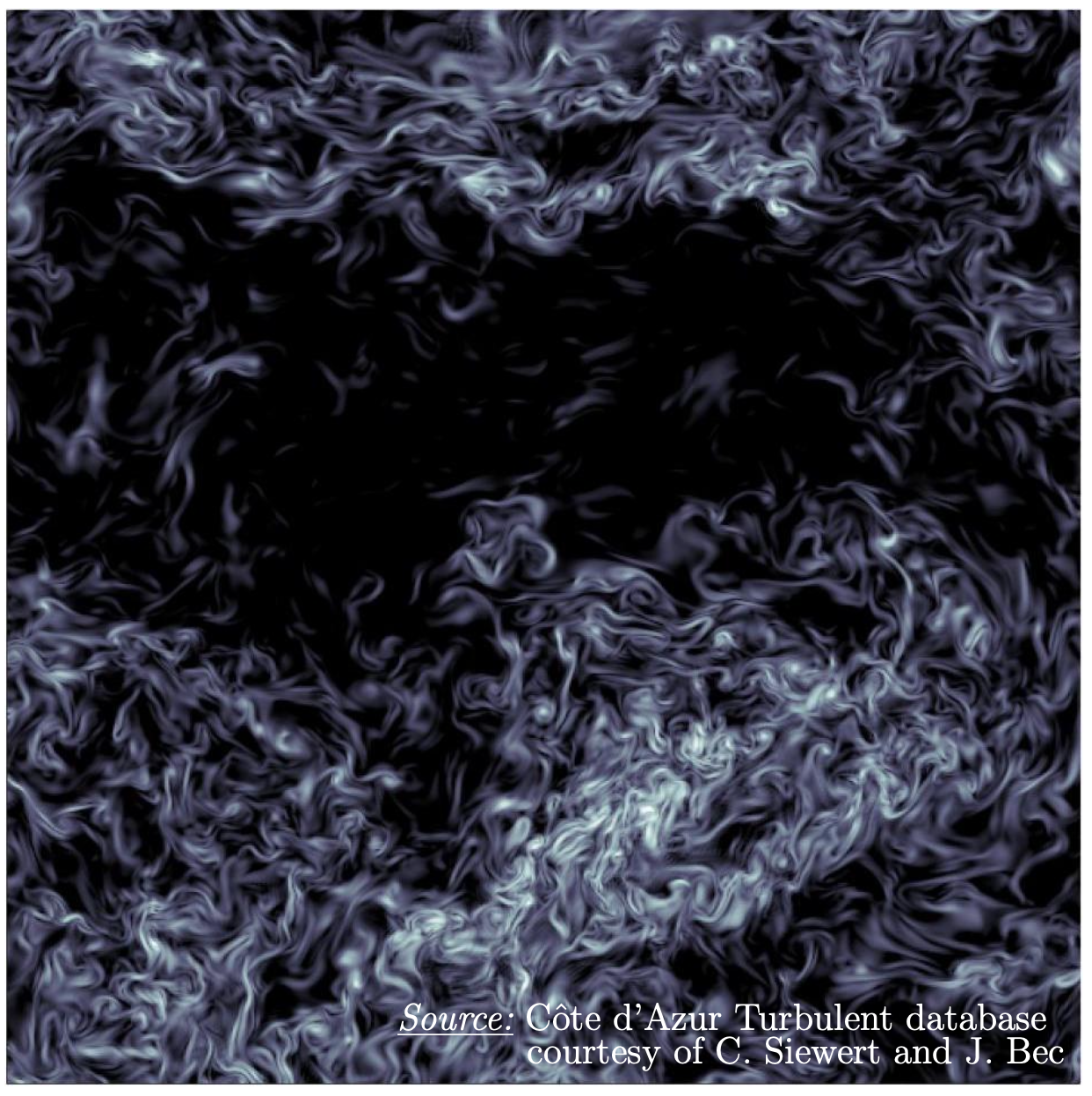}}
  \end{center}
  \caption{Dissipation field in homogeneous isotropic turbulence. }\label{figdiss}
\end{figure}

Following \cite{DDII24,DDII25}, to relate dissipation with regularity and (absolute) structure functions, we take a slightly different perspective based on the distributional energy balance which holds for $L^3$ weak solutions.  Specifically, it will be based on the  balance law in  Proposition \ref{P:decomposition_NS},   which relates the dissipation measure to, effectively, increments of the weak solution.

To exploit this relation, we must quickly introduce a framework which allows us to study a class of lower dimensional measures on $\mathbb{R}^d$ that allow measurement of fractals (sets with  non-integer dimensions). Such measures are called Hausdorff measures, and they will be denoted by $\mathcal{H}^\gamma$. The idea is then to say that a set $A \subset \mathbb{R}^d$ has dimension  $\gamma\in [0,d]$ if $0<\mathcal{H}^\gamma(A)<\infty$, regardless of how complicated the set might be geometrically. To do this, given any $A\subset \R^N$ and $\gamma \geq 0$, for any $\delta >0$ we set 
$$\mathcal{H}^\gamma_\delta (A) := \inf \left\{ \sum_i r_i^\gamma \, :\, A\subset \bigcup_i B_{r_i}, \, r_i<\delta \text{ for all } i \right\}.$$
Namely, it is a sort of $\gamma$--dimensional volume of an optimal covering of the set $A$ by balls of radius no bigger than $\delta$ but may be not uniform.
Then, the $\gamma$-dimensional Hausdorff measure is defined as
\begin{equation}\nonumber
    \mathcal{H}^\gamma(A) := \sup_{\delta > 0} \mathcal H^{\gamma}_\delta(A). 
\end{equation}
This is a non-negative Borel measure on $\R^N$  and $\mathcal H^d$ is equivalent to the $d$-dimensional Lebesgue measure for all $d\in \mathbb{N}$, $d\leq N$. The Hausdorff dimension is obtained as
\begin{equation}\nonumber
    \dim_{\mathcal{H}}A: = \inf \{ \gamma\geq 0\,:\,\mathcal{H}^\gamma(A) = 0 \}. 
\end{equation}
A simpler notion of fractal dimension can be made by demanding that all the balls covering the set $A$ in the definition $\mathcal{H}^\gamma_\delta (A)$ should be of fixed radius $\delta$.  This results in the Minkowski or Box counting dimension.  In particular, the $\gamma$-dimensional upper Minkowski content is 
$$\overline{\mathcal{M}}^{\gamma}(A) := \limsup_{r \to 0} \frac{\mathcal{H}^N([A]_r)}{r^{N-\gamma}} \qquad \text{with }[A]_r = \left\{ x \in \R^N \colon \dist(x,A) < r\right\}.$$
Then, the corresponding upper Minkowski dimension is given by
$$\overline{\dim }_{\mathcal{M}} A := \inf \{\gamma \geq 0\,:\,\overline{ \mathcal{M}}^\gamma(A) = 0\}.  $$
The  lower Minkowski content $\underline{\mathcal{M}}^{\gamma}(A)$ is defined instead by a liminf, and this is used to define the  lower Minkowski dimension $\underline{\dim }_{\mathcal{M}} A$.  Of course, there is the following relationship between the dimensions,
\be
    \dim_{\mathcal{H}}A\leq \underline{\dim }_{\mathcal{M}} A\leq \overline{\dim }_{\mathcal{M}} A,
\ee
with Hausdorff being the sharpest notion among them.

Our first result will be to give \emph{lower bounds} on the smallest dimension of a set the dissipation measure can charge. It is a special case of results from  \cite{DDII24}.
 \begin{theorem}[Lower bounds on the support of dissipation ]\label{lddisst}
Bounded weak solutions of Euler on  $(0,T) \times \mathbb{T}^d$ with non-trivial energy dissipation measure $\ve[u]$  have
$$
    \dim_{\mathcal{H}} ({\rm supp} \ \! \ve[u]) \geq d.
$$
 \end{theorem}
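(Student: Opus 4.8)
The plan is to argue by contradiction: setting $K:=\supp\ve[u]$, I assume $\dim_{\mathcal{H}}(K)<d$ and show that $\ve[u]$ must then vanish identically, contradicting non-triviality. The engine is the energy dissipation identity of Proposition \ref{P:decomposition_NS} at $\nu=0$,
\begin{equation*}
-\ve[u]= (\partial_t+\ol{u}_\ell \cdot \nabla) E^{\ell} + {\rm div}\, Q^{\ell} + C^{\ell},
\end{equation*}
which holds in the sense of distributions \emph{for every fixed $\ell>0$}; crucially I will fix one value, say $\ell=1$, and never send it to zero. Since $u\in L^\infty_{t,x}$, the fluxes obey covering-independent bounds: $\|E^\ell\|_{L^\infty}\lesssim \|u\|_{L^\infty}^2$ and $\|C^\ell\|_{L^\infty}\lesssim \ell^{-1}\|u\|_{L^\infty}^3$ (the factor $\ell^{-1}$ coming from the commutator ${\rm div}\,R^\ell$ and from $\nabla\ol{u}_\ell$). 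The term $Q^\ell$ splits as $E^\ell(u-\ol{u}_\ell)+(p-\ol{p}_\ell)(u-\ol{u}_\ell)$; the first summand is bounded in $L^\infty$ by $\|u\|_{L^\infty}^3$, while for the second I only have, via the Calderón–Zygmund bound of Lemma \ref{P: pressure CZ}, that $p\in L^q_{t,x}$ for every finite $q$ with $\|p\|_{L^q_{t,x}}\lesssim\|u\|_{L^\infty}^2$, hence $\|(p-\ol{p}_\ell)(u-\ol{u}_\ell)\|_{L^q_{t,x}}\lesssim\|u\|_{L^\infty}^3$.

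Next I exploit the smallness of $K$. Since $\dim_{\mathcal{H}}(K)<d$, I fix $\gamma\in(\dim_{\mathcal{H}}(K),d)$, so that $\mathcal{H}^\gamma(K)=0$. Thus for every $\eta>0$ there is a covering of $K$ by space–time balls $\{B_{r_i}\}$, with $r_i<\delta$ and $\sum_i r_i^\gamma<\eta$; passing to a Besicovitch subfamily I may assume bounded overlap. I then build a cutoff $\chi\in C^\infty_0((0,T)\times\mathbb{T}^d)$ with $\chi\equiv 1$ on $K$, $0\le\chi\le1$, $\supp\chi\subset\bigcup_i 2B_{r_i}$, and $|\partial_t\chi|+|\nabla\chi|\lesssim r_i^{-1}$ on each $2B_{r_i}$. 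Bounded overlap yields $|\supp\chi|\lesssim\sum_i r_i^{d+1}$ as well as $\|\nabla\chi\|_{L^{q'}_{t,x}}^{q'}\lesssim\sum_i r_i^{d+1-q'}$ for any $q'\ge1$.

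Testing the identity against $\chi$ and integrating by parts (using ${\rm div}\,\ol{u}_\ell=0$) gives
\begin{equation*}
\langle\ve[u],\chi\rangle=\langle E^\ell,\partial_t\chi\rangle+\langle E^\ell,\ol{u}_\ell\cdot\nabla\chi\rangle+\langle Q^\ell,\nabla\chi\rangle-\langle C^\ell,\chi\rangle.
\end{equation*}
The $E^\ell$–terms are controlled by $\|u\|_{L^\infty}^3\int(|\partial_t\chi|+|\nabla\chi|)\lesssim\sum_i r_i^{d}\le\sum_i r_i^\gamma<\eta$; the $C^\ell$–term by $\ell^{-1}\|u\|_{L^\infty}^3\,|\supp\chi|\lesssim\sum_i r_i^{d+1}<\eta$; and the $E^\ell(u-\ol{u}_\ell)$ part of $Q^\ell$ likewise by $\sum_i r_i^{d}$. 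The only delicate contribution is the pressure part of $Q^\ell$, estimated by Hölder with the finite exponent $q'=d+1-\gamma>1$ (admissible precisely because $\gamma<d$) and its conjugate $q<\infty$:
\begin{equation*}
\big|\langle (p-\ol{p}_\ell)(u-\ol{u}_\ell),\nabla\chi\rangle\big|\lesssim\|u\|_{L^\infty}^3\,\|\nabla\chi\|_{L^{q'}_{t,x}}\lesssim\|u\|_{L^\infty}^3\Big(\sum_i r_i^{\gamma}\Big)^{1/q'}<\|u\|_{L^\infty}^3\,\eta^{1/q'}.
\end{equation*}
Letting $\eta\to0$ forces $\langle\ve[u],\chi\rangle=0$. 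The same bounds apply to $\langle\ve[u],\phi\chi\rangle$ for arbitrary $\phi\in C^\infty_0$ (the cross terms involving $\nabla\phi$ or $\partial_t\phi$ are supported on $\supp\chi$ and are $O(|\supp\chi|)\to0$), while $\phi(1-\chi)$ vanishes near $K$, so $\langle\ve[u],\phi\rangle=\langle\ve[u],\phi\chi\rangle=0$ for all $\phi$. Hence $\ve[u]\equiv0$, the desired contradiction.

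The crux — and the reason the bound halts exactly at the codimension-one threshold $d$ — is the pressure term. Because the pressure of an $L^\infty$ velocity is merely $L^q$ for every finite $q$ (it lies in $\mathrm{BMO}$, not in $L^\infty$), the dual exponent $q'$ that must absorb $\nabla\chi$ has to be taken strictly above $1$, which is affordable only when there is room $\gamma<d$ beneath the critical dimension. At the endpoint $\gamma=d$ one would be forced to put $p\in L^\infty$, which genuinely fails, so the scheme cannot be pushed past $\dim_{\mathcal{H}}\ge d$; this is consistent with the fact that codimension-one concentration is permitted (e.g. a full time slice, or a shock in the compressible analogue). I expect this pressure estimate to be the main obstacle, with a secondary, routine technical point being the covering/cutoff construction with controlled gradients, which I would handle by a standard Besicovitch covering to guarantee bounded overlap.
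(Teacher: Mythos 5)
Your proof is correct in substance, but it takes a genuinely different route from the paper's. The paper's argument is shorter and blunter: it writes $\ve[u]=\nabla_{t,x}\cdot V$ with $V=-\big(\tfrac{1}{2}|u|^2,(\tfrac{1}{2}|u|^2+p)u\big)$, tests against a single cutoff $\chi^\delta$ adapted to the uniform $\delta$-neighborhood of ${\rm supp}\,\ve$, and bounds $\|V\|_{L^\infty}\|\nabla_{t,x}\chi^\delta\|_{L^1}\lesssim\delta^{-1}\mathcal{H}^{d+1}([{\rm supp}\,\ve]_{2\delta})\lesssim\delta^{d-\gamma}$ with $\gamma$ the \emph{upper Minkowski} dimension; it explicitly retreats from Hausdorff to Minkowski to avoid Frostman's lemma, and it treats $V$ as bounded even though the pressure of an $L^\infty$ velocity is only $\mathrm{BMO}$, deferring that point to \cite{DDII24}. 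You instead work with a genuine Hausdorff covering of non-uniform radii $\{B_{r_i}\}$ with $\sum_i r_i^\gamma<\eta$, which buys you the Hausdorff statement directly (no Frostman needed, since you are using $\mathcal{H}^\gamma(K)=0$ in the contradiction direction rather than producing a Frostman measure), and you explicitly isolate the pressure flux and close it by Calder\'on--Zygmund plus H\"older with $q'=d+1-\gamma>1$ --- precisely the place where $\gamma<d$ enters. This is arguably a repair of the one glib step in the paper's sketch, and your closing remark correctly identifies the pressure as the reason the argument stops at codimension one. Your reliance on Proposition \ref{P:decomposition_NS} is heavier machinery than necessary: the same H\"older trick applies verbatim to the simpler Duchon--Robert divergence form, where the pressure again enters only through $pu$; and once $\langle\ve[u],\chi\rangle\to0$ the conclusion follows immediately from positivity of $\ve[u]$ and $\chi\geq\mathbf{1}_K$, so the final $\phi\chi$ step is superfluous. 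Two routine points deserve a sentence in a full write-up: the bounded-overlap cutoff must be built after recentering the covering balls on $K$ so that Besicovitch applies, and the possible accumulation of ${\rm supp}\,\ve$ at the temporal endpoints $t=0,T$ must be handled so that $\chi$ is an admissible test function (the paper's proof has the same issue).
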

 \begin{remark}[Sharpness of Theorem  \ref{lddisst}]
In fact, this theorem applies essentially to any conservation law equation, such as compressible fluids and transported scalars.  In fact, examples from those other models show the result of the theorem is, in fact, sharp.  For instance, the one-dimensional Burgers equation exhibits shock discontinuities that move along space-time rectifiable curves and support the energy dissipation measure.  As such, the dimension of the support is codimension one in the two-dimensional spacetime \cite{Daf83}. Likewise, regular shock interfaces give examples where the \emph{entropy production} measure is supported on codimension one surfaces \cite{BDSV22}.  Finally, examples of anomalous dissipation for passive scalar transport can be built in such a way that dissipation takes place at a fixed instant of time and therefore is again co-dimension one \cite{DEJI22}.  In all these situations, the solution is bounded and the theorem guarantees the dissipation cannot concentrate on a set of any smaller dimension.
\end{remark}
\begin{proof}[Proof of Theorem \ref{lddisst}]
The proof is given in \cite{DDII24}.
We prove the result here for the upper box counting dimension instead of Hausdorff, so that we do not need to appeal to Frostman's lemma. 
Denote by $[{\rm supp} \ \! \ve]_\delta$ the space-time $\delta$-neighbourhood of ${\rm supp} \ \! \ve$. Let $\chi^\delta\in C^\infty_{x,t}$ be such that
\be \label{chiinfo}
    0\leq \chi^\delta\leq 1,\qquad \chi^\delta \big|_{[{\rm supp} \ \! \ve]_\delta}\equiv 1,\qquad \chi^\delta \big|_{[{\rm supp} \ \!\ve]_{2\delta}^c}\equiv 0 \qquad \text{and} \qquad |\nabla_{x,t}\chi^\delta| \leq 4\delta^{-1}.
\ee
Then, using the fact that $\ve[u]=-\nabla_{t,x} \big(\frac{1}{2}|u|^2, (\frac{1}{2}|u|^2+p)u\big)=: \nabla_{t,x} \cdot V$ is assumed to be a positive measure (this is not essential) we have
\begin{align*}
\ve[u]([0,T]\times \mathbb{T}^d) = \ve[u]([{\rm supp} \ \! \ve]_\delta) &\leq \int_0^T \int_{ \mathbb{T}^d} \chi^\delta \rmd \ve[u]= -\int_0^T \int_{\mathbb{T}^d} V\cdot \nabla_{t,x} \chi^\delta \rmd x \rmd t \\
& \leq \|V\|_{L^\infty_{t,x}} \|\nabla_{t,x} \chi^\delta\|_{L^1} \lesssim \frac{1}{\delta} \mathcal{H}^{d+1}([{\rm supp} \ \! \ve]_{2\delta}) \lesssim \delta^{d-\gamma}
\end{align*}
where $\gamma =  \overline{\dim }_{\mathcal{M}} [{\rm supp} \ \! \ve]$. This vanishes if $\gamma<d$, contradicting the non-triviality of the dissipation measure. The conclusion follows.
\end{proof}

\begin{remark}[The Meneveau--Sreenivasan estimate of the fractal dimension]\label{remms}
The celebrated works of Meneveau and Sreenivasan studied the relationship between properties of the energy dissipation measure to intermittency 
\cite{MS87,MS91,SM88}. These papers  suggest from experiments that, in the infinite Reynolds number limit, the anomalous energy dissipation measure at fixed time is concentrated on a fractal subset of dimension less than the space dimension 3, about 2.87, and has volume zero \cite{MS87}. Moreover, based on the data, it is supposed that this fractal dimension is roughly constant in time, making the inferred space-time support of the dissipation measure to be of dimension 3.87. 
\end{remark}

 \begin{theorem}[Lower dimensional dissipation implies intermittency]\label{lddiss}
     On $(0,T) \times \mathbb{T}^d$, let $\{u^\nu\}_{\nu>0}$ be a sequence of smooth solutions to Navier-Stokes. Assume that, in the limit as $\nu\rightarrow 0$, $\ve^\nu[u^\nu] :=\nu |\nabla u^\nu|^2$ converges in $\mathcal D'_{x,t}$ to a measure $ \ve[u]$ whose singular part with respect to the Lebesgue measure is non-trivial and concentrated on a set $S$ with $\dim_{\mathcal H} S=\gamma \in [ 1,d+1]$.  For all  $p\in [3,\infty]$ for which there exists $\zeta_p\in (0,p)$ such that $\{u^\nu\}_{\nu>0}$ stays bounded in $ L^p_t B^{\frac{\zeta_p}{p}}_{p,\infty}$, it must hold that $S_p(\ell)\lesssim \ell^{\zeta_p}$ with
    \begin{equation}\label{eq: intermittenc viscos}
    \zeta_p \leq \frac{p}{3} -  \frac{2(d+1-\gamma)(p-3)p}{ 9p -3(d+1-\gamma) (p-3)}.
            \end{equation}
 \end{theorem}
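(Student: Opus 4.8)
The plan is to combine the exact energy--balance decomposition of Proposition \ref{P:decomposition_NS} (in its $\nu=0$ form, valid for the inviscid limit $u$ with defect $\ve[u]$) with the concentration of the singular part of $\ve[u]$ on the low--dimensional set $S$, and to optimize over \emph{two} independent scales: the mollification scale $\ell$ appearing in the decomposition and a localization scale $\delta$ around $S$. Write $\sigma:=\zeta_p/p$ and $\beta:=d+1-\gamma$ for the spacetime codimension of $S$, and set $B_0:=\beta(1-3/p)=\beta\tfrac{p-3}{p}$. That $S_p(\ell)\lesssim \ell^{\zeta_p}$ is immediate from the assumed uniform bound in $L^p_tB^{\sigma}_{p,\infty}$ via Lemma \ref{Slem}; the content is the upper bound on $\zeta_p$.

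First I would localize. Since $\ve[u]\geq 0$ and its singular part carries mass $c_0>0$ on $S$, a cutoff $\chi^\delta$ with $\chi^\delta\equiv 1$ on $[S]_\delta$, $\supp\chi^\delta\subset [S]_{2\delta}$ and $|\nabla_{t,x}\chi^\delta|\lesssim \delta^{-1}$ (as in the proof of Theorem \ref{lddisst}) satisfies $c_0\leq \int \chi^\delta\,\rmd\ve[u]$. Feeding in the decomposition $-\ve[u]=\partial_t E^\ell+{\rm div}(E^\ell \ol{u}_\ell+Q^\ell)+C^\ell$ (using ${\rm div}\,\ol{u}_\ell=0$ to write the transport term as a divergence) and integrating by parts moves all derivatives onto $\chi^\delta$, producing the four contributions $\int(\partial_t\chi^\delta)E^\ell$, $\int\nabla\chi^\delta\cdot E^\ell\ol u_\ell$, $\int\nabla\chi^\delta\cdot Q^\ell$ and $-\int\chi^\delta C^\ell$.

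Next I would estimate each term by H\"older against the spacetime volume bound $|[S]_{2\delta}|\lesssim \delta^{\beta}$, using the commutator bounds \eqref{est_Eell}--\eqref{est_Cell}, namely $\|E^\ell\|_{L^{p/2}}\lesssim\ell^{2\sigma}$, $\|Q^\ell\|_{L^{p/3}}\lesssim\ell^{3\sigma}$, $\|C^\ell\|_{L^{p/3}}\lesssim\ell^{3\sigma-1}$ together with $\|\ol u_\ell\|_{L^p}\lesssim 1$. This yields the four sizes $\delta^{-1}\ell^{2\sigma}\delta^{\beta(1-2/p)}$, $\delta^{-1}\ell^{2\sigma}\delta^{B_0}$, $\delta^{-1}\ell^{3\sigma}\delta^{B_0}$ and $\ell^{3\sigma-1}\delta^{B_0}$. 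At the relevant scaling the first is dominated by the second and the third by the fourth, so the binding pair is the ``current'' term $\nabla\chi^\delta\cdot E^\ell\ol u_\ell\sim \ell^{2\sigma}\delta^{B_0-1}$ and the coarse--grained flux term $\chi^\delta C^\ell\sim \ell^{3\sigma-1}\delta^{B_0}$. Setting $\ell=\delta^\theta$ and letting $\delta\to0$, both vanish provided $2\sigma\theta+B_0-1>0$ and $(3\sigma-1)\theta+B_0>0$; if such a $\theta$ existed we would force $c_0\leq 0$. Hence nontriviality of $\ve[u]$ requires that no admissible $\theta$ makes both exponents positive, i.e. $\sigma\leq \inf_\theta \max\big(\tfrac{1-B_0}{2\theta},\ \tfrac13-\tfrac{B_0}{3\theta}\big)$. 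The decreasing and increasing thresholds cross at $\theta^*=\tfrac{3-B_0}{2}$, giving $\sigma\leq \tfrac{1-B_0}{3-B_0}$; clearing denominators turns this into exactly
\[
\zeta_p\leq \frac{p\,(p-\beta(p-3))}{3p-\beta(p-3)}=\frac{p}{3}-\frac{2\beta(p-3)p}{9p-3\beta(p-3)},
\]
which is \eqref{eq: intermittenc viscos}.

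The main obstacle is the volume bound $|[S]_{2\delta}|\lesssim\delta^{d+1-\gamma}$: this is an \emph{upper Minkowski} statement, whereas the theorem is phrased for the (smaller) Hausdorff dimension, and in general $\dim_{\mathcal H}S<\overline{\dim}_{\mathcal M}S$. As in Theorem \ref{lddisst}, the clean argument above proves the estimate with $\gamma=\overline{\dim}_{\mathcal M}S$ directly; upgrading to $\dim_{\mathcal H}S$ requires replacing the single neighborhood $[S]_{2\delta}$ by an efficient Hausdorff covering and invoking Frostman's lemma, and the delicate point is that the varying ball radii must be reconciled with the $\nabla\chi$ (current) term, which does not sum into a clean total--volume factor the way the $\chi C^\ell$ term does. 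A secondary, purely bookkeeping, matter is checking that the admissible range $\ell\lesssim\delta$ contains the optimizer $\theta^*=\tfrac{3-B_0}{2}\in(1,\tfrac32]$, which it does whenever $B_0<1$ (the only regime in which the stated bound is informative).
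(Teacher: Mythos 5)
Your proposal is correct and follows essentially the same route as the paper's own proof: localize the measure with a cutoff $\chi^\delta$ on a $\delta$-neighborhood of $S$, feed in the decomposition of Proposition \ref{P:decomposition_NS} with the commutator bounds \eqref{est_Eell}--\eqref{est_Cell}, use the Minkowski volume bound $|[S]_{2\delta}|\lesssim\delta^{d+1-\gamma}$, and balance $\ell$ against $\delta$ (your optimizer $\theta^*=\tfrac{3-B_0}{2}$ is exactly the paper's choice $\ell^{1-\sigma}=\delta$). You also correctly identify the same caveat the paper makes, namely that the clean argument yields the bound for the upper Minkowski dimension and the genuine Hausdorff statement requires a Frostman/covering refinement, which the paper likewise defers to \cite{DDII25}.
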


\begin{proof}[Proof of Theorem \ref{lddiss}].
The proof is given in \cite{DDII25}.  As we did for Theorem \ref{lddisst},  we give a self-contained proof of this statement replacing the Hausdorff dimensions assumption with an upper Minkowski dimension (the result of the prior work \cite{DI24}). We also prove it directly on the level of weak solutions of Euler which, under the stated assumptions of the theorem, arise as zero viscosity limits. 

Specifically, we show the following.  Let $p\in [3,\infty]$, $\sigma\in (0,1)$ and let $u\in L^p_tB^\sigma_{p,\infty}$ be a weak solution to Euler with dissipation measure $ \ve[u]$ such that $\overline \dim_{\mathcal{M}}\left( {\rm supp} \ \! \ve\right)\leq \gamma$. Then 
$$
\frac{2\sigma}{1-\sigma}>1-\frac{p-3}{p}(d+1-\gamma) \qquad \Longrightarrow \qquad  \ve[u]\equiv 0.
$$
One can easily check the above condition reduces to that reported on $\zeta_p$ above.
To prove our claim, let $\varphi\in C^\infty_{x,t}$ be any compactly supported test function. Denote by $[{\rm supp} \ \! \ve]_\delta$ the space-time $\delta$-neighborhood of ${\rm supp} \ \! \ve$. Let $\chi^\delta\in C^\infty_{x,t}$ be such that \eqref{chiinfo} holds.    Then\footnote{For simplicity, let us assume that there is no Lebesgue part of the dissipation measure. Indeed, if there were we could split $\ve[u]= \ve_{\rm sing}[u]+ \ve_{\rm leb}[u]$.  Our argument would apply to $\langle \ve_{\rm sing}[u], \varphi\rangle=\langle \ve_{\rm sing}[u], \varphi\chi_\delta \rangle =\langle \ve[u], \varphi \chi_\delta\rangle- \langle \ve_{\rm leb}[u], \varphi \chi_\delta\rangle$. For the first term, we proceed as in the proof.  For the second it suffices to say that it tends to zero as $\delta\to 0$, as  $ \ve_{\rm leb}[u]$ has an $L^1$ density and the set $S$ has dimension strictly less that $d+1$ (otherwise the statement is trivial).} $\left\langle   \ve[u],\varphi\right\rangle  =\left\langle    \ve[u],\varphi \chi^\delta\right\rangle  $, and  by the  decomposition \eqref{D_decomp_NS} we infer
    \begin{align*}
        |{\left\langle    \ve[u],\varphi\right\rangle}|&\lesssim \left( \|E^\ell\|_{L^{\frac{p}{2}}_{x,t}} \left(1+ \|{u_\ell}\|_{L^p_{x,t}} \right) + \|{Q^\ell}\|_{L^{\frac{p}{3}}_{x,t}} + \|{C^\ell}\|_{L^{\frac{p}{3}}_{x,t}} \right) \|{\chi^\delta}\|_{L^{\frac{p}{p-3}}_{x,t}}\\
        &\qquad +\left( \|{E^\ell}\|_{L^{\frac{p}{2}}_{x,t}} \left(1+ \|{u_\ell}\|_{L^p_{x,t}} \right)+ \|{Q^\ell}\|_{L^{\frac{p}{3}}_{x,t}}  \right) \|{\nabla_{x,t}\chi^\delta}\|_{L^{\frac{p}{p-3}}_{x,t}}.
    \end{align*}
 Let $\alpha>0$ be a small parameter. The assumption $\overline \dim_{\mathcal{M}}\left( {\rm supp} \ \! \ve\right)\leq \gamma$ implies 
    $$
   \|{\chi^\delta}\|_{L^{\frac{p}{p-3}}_{x,t}}+\delta\|{\nabla_{x,t}\chi^\delta}\|_{L^{\frac{p}{p-3}}_{x,t}}\lesssim \delta^{\frac{p-3}{p}(d+1-\gamma- \alpha)},
    $$
    if $\delta$ is sufficiently small. Thus, by \eqref{est_Eell}, \eqref{est_Qell} and \eqref{est_Cell} we deduce 
    \begin{align*}
         |{\left\langle    \ve[u],\varphi\right\rangle}| &\lesssim \left( \ell^{2\sigma} + \ell^{3\sigma} + \ell^{3\sigma-1}\right) \delta^{\frac{p-3}{p}(d+1-\gamma -\alpha)} + \left( \ell^{2\sigma} + \ell^{3\sigma} \right) \delta^{\frac{p-3}{p}(d+1-\gamma-\alpha)-1}\\
         &\lesssim \ell^{3\sigma-1}\delta^{\frac{p-3}{p}(d+1-\gamma-\alpha)} + \ell^{2\sigma}\delta^{\frac{p-3}{p}(d+1-\gamma-\alpha)-1},
    \end{align*}
    where the constant depends only on norms of $\varphi$ and $u$. The choice $\ell^{1-\sigma}=\delta$ yields to
     \begin{align*}
         |{\left\langle    \ve[u],\varphi\right\rangle  } 
       |& \lesssim \delta^{\frac{2\sigma}{1-\sigma}-1+\frac{p-3}{p}(d+1-\gamma-\alpha)}.
    \end{align*}
    Since $\sigma,p,\gamma$ satisfy an open condition, we find $\alpha>0$ sufficiently small so that the exponent of $\delta$ in the above inequality is positive. The proof follows by contradiction of the non-triviality of $ \ve[u]$ by letting $\delta\rightarrow 0$.
\end{proof}

\begin{remark}[Implication for intermittency]
Note that if $\gamma$ is less than $d + 1$, then for $p > 3$ this bound implies the structure function exponent $\zeta_p$ must be strictly below $\sfrac{p}{3}$. In particular, non-trivial lower dimensional dissipation necessarily results in a quantitative downward deviation from the Besov $\sfrac13$ regularity for all $p>3$. Recall also from Remark \ref{remms} that Meneveau and Sreenivasan estimated that  the  fractal dimension of the dissipation support to be $\gamma \approx  3.87$. 
We now give an argument for this value based on Theorem \ref{lddiss}.
Suppose Theorem \ref{lddiss} is sharp locally\footnote{We are mainly appealing to two facts for $p\approx 3$: the dissipation is what constraints the regularity the most and almost saturates the upper bound. Modulo the possible gap of longitudinal vs. absolute increments, the exactness of the $\sfrac{4}{5}$ law makes both claims valid at $p=3$.}  
at $p=3$.  Set then
$$
\zeta_p^* := \tfrac{p}{3} -  \tfrac{2(d+1-\gamma)(p-3)p}{ 9p -3(d+1-\gamma) (p-3)},
$$
i.e. the right hand side in $\eqref{eq: intermittenc viscos}$.
In dimension $d=3$, 
$$
\tfrac{d \zeta_p^*}{dp}\Big|_{p=3}=\tfrac{2\gamma -5}{9}.
$$
Numerical simulations of incompressible turbulence  \cite{ISY20} indicate $\frac{d \zeta_p}{dp}\Big|_{p=3} = 0.303   \pm 6.4\times 10^{-4}$.  This corresponds to $\gamma = 3.85$, remarkably close to the observations of  Meneveau and Sreenivasan.  See Figure \ref{fig1} for an inspection of numerical data and this bound. 
		\begin{figure}[h!]
		\centering
			\includegraphics[width=0.5\textwidth]{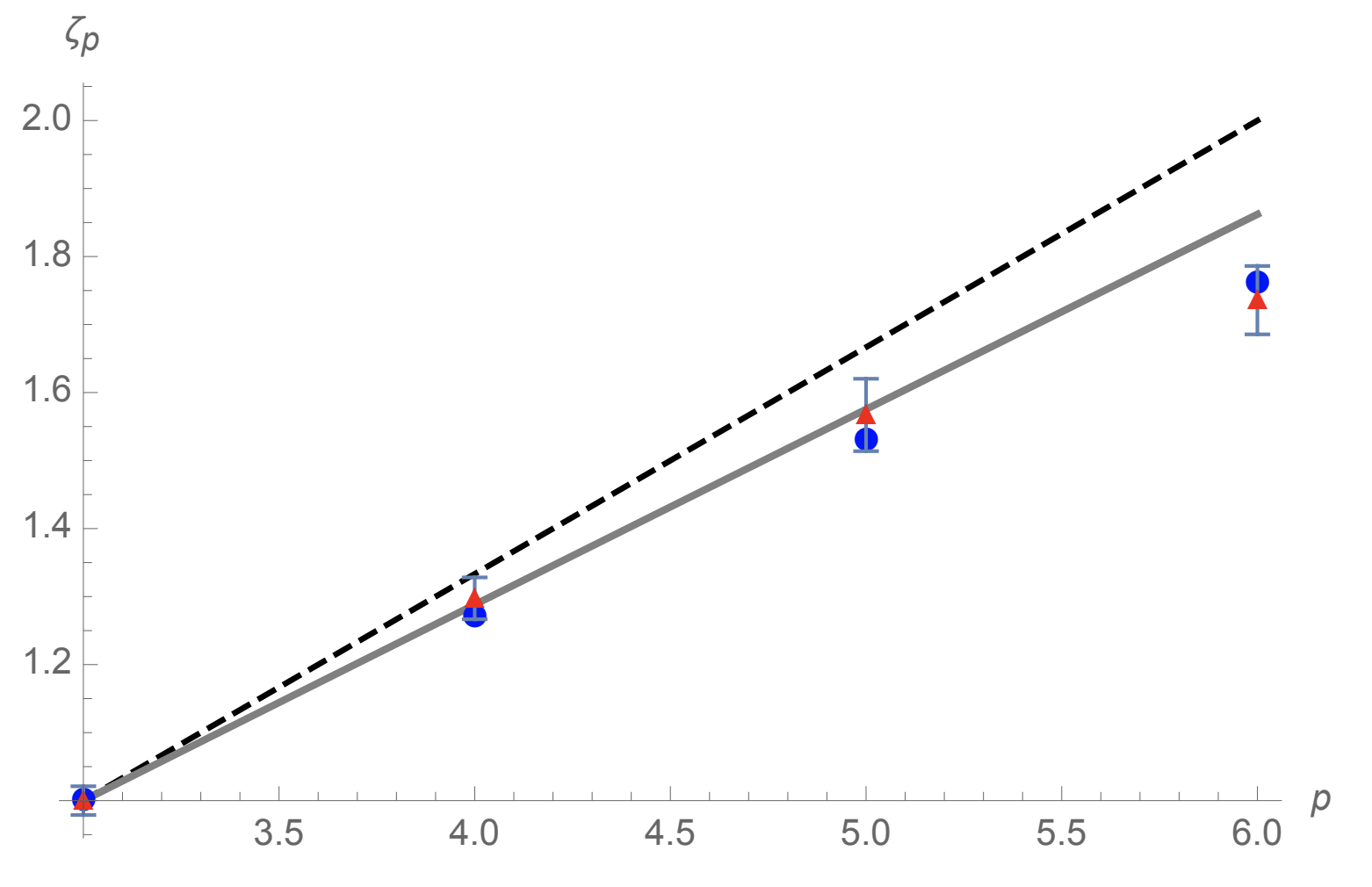} 
					\caption{Structure function exponents for $p\in[3,6]$ \cite{DDII25}.  Blue dots are absolute structure function exponents measured from the JHU turbulence database \cite{JHUDB}. Red triangles are transverse exponents reported in \cite{ISY20}.  Dashed black line corresponds to the Kolmogorov prediction of $\frac{p}{3}$.  Solid grey line corresponds to $\zeta_p^*$ with $\gamma = 3.85$.}\label{fig1}
	\end{figure}
\end{remark}

Finally, we report one additional result from \cite{DDII25} which limits the rate at which moments of ball-averaged dissipation can diverge.
 \begin{theorem}[Bounds on ball-averaged dissipation]\label{badiss}
     On $(0,T) \times \mathbb{T}^d$, let $u$ be a  weak Euler solution  with anomalous dissipation measure $\ve[u]$. Assume $u\in L^p_t B^{\sigma_p}_{p,\infty}$ for some $p\in [3,\infty]$ and $\sigma_p \in \left(0,1\right)$. Then there exists an $\ell_0$ such that for all $\ell<\ell_0$
            \begin{equation}
                \ve[u](B_\ell (x,t))\lesssim \ell^{\frac{2\sigma_p}{1-\sigma_p}-1 +\frac{p-3}{p}(d+1)} \qquad \forall (x,t) \in (0,T) \times \mathbb{T}^d.
                \end{equation}
  \end{theorem}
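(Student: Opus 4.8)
The plan is to localize the energy balance \eqref{D_decomp_NS} to a small space-time ball and then optimize its free mollification parameter against the ball radius $\ell$. Fix a base point $(x_0,t_0)\in (0,T)\times\mathbb{T}^d$ and take $\ell$ small enough that the space-time ball $B_{2\ell}(x_0,t_0)$ is compactly contained in $(0,T)\times\mathbb{T}^d$. First I would select a cutoff $\varphi\in C^\infty_{x,t}$ with $0\leq \varphi\leq 1$, $\varphi\equiv 1$ on $B_\ell(x_0,t_0)$, $\supp\varphi\subset B_{2\ell}(x_0,t_0)$ and $|\nabla_{t,x}\varphi|\lesssim \ell^{-1}$. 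Since $\ve[u]$ is a nonnegative measure we then have $\ve[u](B_\ell(x_0,t_0))\leq \langle \ve[u],\varphi\rangle$, so it suffices to bound the smeared quantity $\langle\ve[u],\varphi\rangle$ uniformly in the base point.

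Next I would introduce an \emph{independent} mollification scale $r>0$ (not to be confused with the ball radius $\ell$) and pair the decomposition \eqref{D_decomp_NS}, taken at $\nu=0$ with the defect $\ve[u]$, against $\varphi$. Writing $E^r,Q^r,C^r$ for the quantities of Proposition \ref{P:decomposition_NS} built from the space mollification $\ol{u}_r$, integration by parts together with $\div\ol{u}_r=0$ yields
\[
\langle \ve[u],\varphi\rangle = \langle E^r,\partial_t\varphi\rangle + \langle E^r\,\ol{u}_r,\nabla\varphi\rangle + \langle Q^r,\nabla\varphi\rangle - \langle C^r,\varphi\rangle.
\]
The only structural input needed beyond this is that $\varphi$ and its derivatives live on a set of space-time measure $\lesssim \ell^{d+1}$, so that $\|\varphi\|_{L^q_{x,t}}\lesssim \ell^{(d+1)/q}$ and $\|\nabla_{t,x}\varphi\|_{L^q_{x,t}}\lesssim \ell^{(d+1)/q-1}$ for every $q$.

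I would then estimate the four terms by H\"older, pairing $E^r\in L^{p/2}_{x,t}$ and the cubic objects $E^r\ol{u}_r,\,Q^r,\,C^r\in L^{p/3}_{x,t}$ against the conjugate Lebesgue norms of $\varphi$, and inserting the commutator bounds \eqref{est_Eell}--\eqref{est_Cell} with $\sigma=\sigma_p$ and mollification scale $r$ in place of $\ell$, namely $\|E^r\|_{L^{p/2}}\lesssim r^{2\sigma_p}$, $\|Q^r\|_{L^{p/3}}\lesssim r^{3\sigma_p}$ and $\|C^r\|_{L^{p/3}}\lesssim r^{3\sigma_p-1}$. Setting $A:=\tfrac{p-3}{p}(d+1)$, the dominant contributions are $\langle E^r\ol{u}_r,\nabla\varphi\rangle$ (bounded by $r^{2\sigma_p}\ell^{A-1}$) and $\langle C^r,\varphi\rangle$ (bounded by $r^{3\sigma_p-1}\ell^{A}$); the remaining two, $\langle E^r,\partial_t\varphi\rangle$ and $\langle Q^r,\nabla\varphi\rangle$, are subdominant — the former carries a higher power of $\ell$, the latter a higher power of $r$ — and may be absorbed for small $\ell$. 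This gives
\[
\ve[u](B_\ell(x_0,t_0))\lesssim r^{2\sigma_p}\ell^{A-1} + r^{3\sigma_p-1}\ell^{A}.
\]

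Finally I would optimize in $r$. Balancing the two contributions forces $r^{1-\sigma_p}=\ell$, i.e.\ $r=\ell^{1/(1-\sigma_p)}$ — the generalized Kolmogorov scaling, now read off from the ball radius; because $\sigma_p\in(0,1)$ this scale tends to $0$ with $\ell$ and keeps all commutator estimates in force. Substituting, both terms collapse to the single exponent $\tfrac{2\sigma_p}{1-\sigma_p}-1+\tfrac{p-3}{p}(d+1)$, which is exactly the asserted bound; since the implicit constants depend only on $\|u\|_{L^p_tB^{\sigma_p}_{p,\infty}}$ they are independent of $(x_0,t_0)$, and the estimate holds for all $\ell<\ell_0$. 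The one genuinely delicate point is that the two surviving terms must balance under the \emph{same} choice of $r$: this works precisely because $\tfrac{2\sigma_p}{1-\sigma_p}-1=\tfrac{3\sigma_p-1}{1-\sigma_p}$, so $r^{2\sigma_p}\ell^{A-1}$ and $r^{3\sigma_p-1}\ell^{A}$ carry the identical exponent after substitution. A minor technical caveat is that base points within $2\ell$ of the temporal endpoints $t=0,T$ require shrinking the support of $\varphi$, which only improves the support-measure factor and hence does not affect the bound.
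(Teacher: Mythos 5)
Your argument is correct and is essentially the paper's approach: the paper proves Theorem \ref{badiss} (in \cite{DDII25}, summarized here as placing $\ve[u]$ locally in $B^{\frac{2\sigma_p}{1-\sigma_p}-1}_{p/3,\infty}$) by exactly this pairing of the decomposition \eqref{D_decomp_NS} against a bump on $B_\ell$, with the commutator bounds \eqref{est_Eell}--\eqref{est_Cell} at an independent mollification scale and the balance $r^{1-\sigma_p}=\ell$, just as in the written proof of the companion Theorem \ref{lddiss} with the $\delta$-neighborhood of the support replaced by a ball. Your identification of the two dominant terms and the verification that they carry the same exponent after substitution are both accurate.
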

This bound follows from placing the distribution $\ve[u]$ in an appropriate Besov space with negative regularity index, $B_{p/3,\infty}^{\frac{2\sigma_p}{1-\sigma_p}-1}$, locally in space-time. Let us inspect the implications of this last result to our heuristic Kolmogorov/Landau argument at the start of the section.  Conflating $\frac{\ve[u](B_\ell (x,t))}{|B_\ell(x,t)|}$ with $\ol{\ve[u]}_\ell$, we obtain a constraint on \eqref{landaubw}, namely
  \be\nonumber
\fint_0^T \fint_M \  \ol{\ve[u]}_\ell ^{p} \ \rmd x \rmd t  \gtrsim  \langle \varepsilon \rangle^{p} \left(\frac{\ell}{\ell_I}\right)^{-\alpha_{p}} \quad \implies \quad \alpha_p\leq p\left(1  +\frac{3}{p}(d+1)- \frac{2\sigma_p}{1-\sigma_p}\right).
\ee              
In principle, such information can be used to further constrain the argument of Kolmogorov and Landau.

            In fact, this quality of the dissipation measure was also studied by Meneveau and Sreenivasan. Based on a multifractal model \cite{FP}, they given evidence for the prediction  
            \be
\fint_0^T \fint_M \  \ve[u](B_\ell (x,t))^p \ \rmd x \rmd t  \sim  \langle \varepsilon \rangle^{p}  \left(\frac{\ell}{\ell_I}\right)^{\tau_p}
            \ee
              where the spectrum of exponents $\tau_p$ are determined by the formula
              $\tau_p = \inf_\alpha [\alpha p - f(\alpha)]$,
              which requires as input a spectrum of ``dimensions" $f(\alpha)$ 
              \be\nonumber
               f(\alpha) = \text{codimension on the set of $(x,t)$ such that $\ve[u](B_\ell (x,t))\sim \ell^\alpha$ as $\ell\to 0$}.
              \ee
              
Let us conclude the section with some discussion of the results.
All the results of this section \emph{assume} the dissipation measure charges some lower dimensional set, and derives implications of this to the structure of the velocity field.  Specifically, we show the velocity cannot be too regular for this to happen, and the regularity cannot be uniform in $p$, at least for $p>3$. This implies that some form of intermittency must occur.  On the other hand, we do not have any way of deducing that the dissipation measure \emph{must} charge lower dimensional sets from measurements of the velocity field itself.  One must instead measure the dissipation field directly, as was done by Meneveau and Sreenivasan.  However, the multifractal model picture for Frisch and Parisi \cite{FP} (see also lecture notes of Eyink \cite{Eyink} and review of DuBrulle \cite{Dubrulle19}) predicts that the slope of the curve of $\zeta_p$ at $p=0$, namely $h_*:= \frac{\rmd \zeta_p}{\rmd p}\Big|_{p=0}$ relates to the "most probable" H\"{o}lder exponent in the flow. Namely the one occupying a full space-time volume, with other local exponents $h\neq h_*$ occupying fractal sets of lesser dimension.  Since $\zeta_p$ is a concave function of $p$, if any value of $\zeta_p>p/3$ for $p\in (0,3)$, then it follows that this exponent $h_*>1/3$.  Thus, in light of Onsager's theorem about  $\sfrac{1}{3}$ being the requisite singularity for dissipativity,  one might guess that $\ve[u]$ will charge lower dimensional if $\zeta_p>p/3$ for $p\in (0,3)$.  We therefore end with the following

\begin{question}\label{smallp}
Give sufficient conditions, perhaps involving intermittency of exponents $\zeta_p$ for $p\in (0,3)$, to ensure lower dimensionality of the dissipation measure.
\end{question}

\section{Lessons from model problems}\label{models}
\vspace{2mm}

\begin{quotation} 
\emph{"It might be worth while to investigate the properties of certain systems of
mathematical equations, which, although much simpler in structure than
the equations of hydrodynamics, nevertheless show features which can be
considered as the analogues of typical properties of the hydrodynamic
equations."} \hfill J.M. Burgers \cite{B39}
\end{quotation}
\vspace{2mm}

In this section, we will give details on two model problems that shed light on a variety of aspects of the theory explained above.  In these models, one can make simple, rigorous and unconditional statements.
The first model is the Burgers equation, the second is passive scalar transport.

\subsection{Burgers' equation for pressureless gas}

Consider the viscous Burgers equation  \cite{B48} on $x\in \mathbb{T}_L :=[0,L)$ with periodic boundary conditions for $t\geq 0$
\begin{align}\label{Burgers}
\partial_t u^\nu + u^\nu \partial_x u^\nu &= \nu \partial_x^2 u^\nu + f,\\
u|_{t=0}&=u_0.\nonumber
\end{align}
		\begin{figure}[h!]
		\centering
			\includegraphics[width=0.9\textwidth]{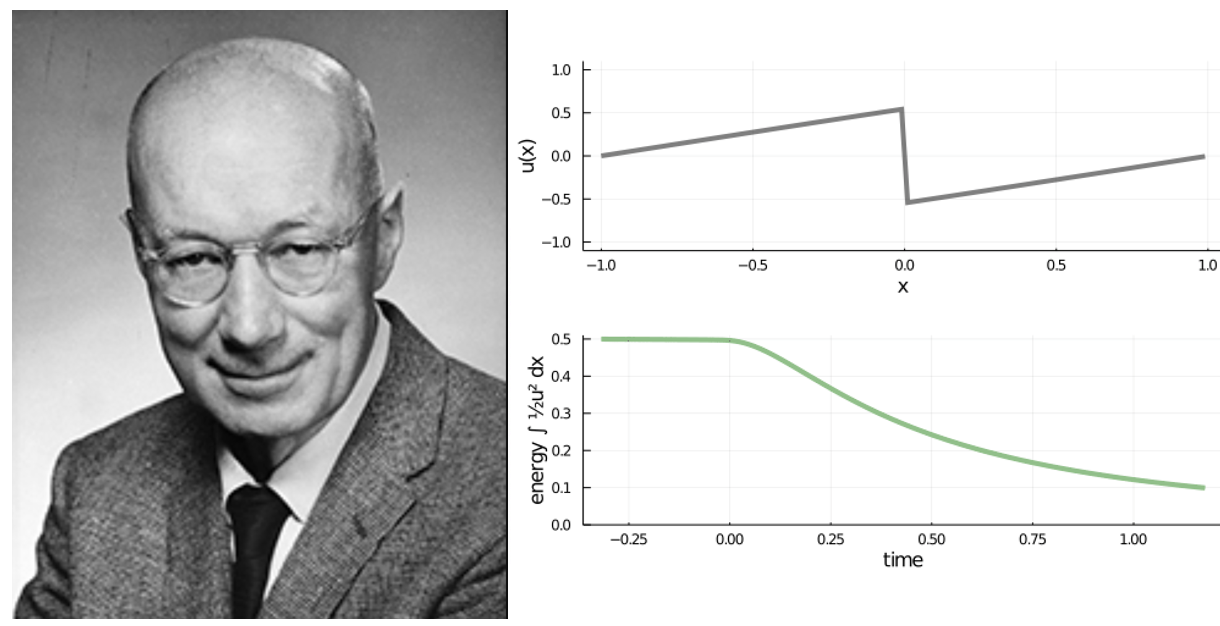} 
					\caption{J.M. Burgers and a dissipative shock.} \label{burgfig}
	\end{figure}
This equation was introduced by Burgers as a simple model displaying a number of the salient features of turbulence, see \cite{FB02, BK} for excellent surveys.  It is also a model of one-dimensional pressureless gas dynamics, and its multidimensional analogue has been used to great effect in cosmology by Zeldovich \cite{Z70}.   In what follows, we aim to see
\begin{itemize}
\item singularity formation from smooth data in the inviscid case,
\item formation of discontinuities in  inviscid weak solutions,
\item anomalous dissipation in the limit of zero viscosity,
\item self-regularization of dissipative inviscid weak solutions, and intermittency.
\end{itemize}
We tackle the first two points above at once, following the proof of Alinhac \cite{A10} (it becomes a statement about singularity formation  as soon as $u_0$ is in a strong enough class, having a local well-posedness theory, e.g. $C^k$ with $k\geq 1$).
\begin{theorem}[Shock formation]\label{thmB1}
\textit{For any non-constant  $u_0\in C(\mathbb{T}_L)$, there is no  global-in-time \underline{continuous} weak solution $u\in C(\mathbb{R}^+\times \mathbb{T}_L)$ of the $\nu=0$ Burgers equation.}
\end{theorem}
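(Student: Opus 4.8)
The plan is to argue by contradiction via the method of characteristics, exploiting that the inviscid Burgers flux $u^2/2$ forces characteristics emanating from a region where the data decreases to collide in finite time. Suppose a continuous weak solution $u\in C(\mathbb{R}^+\times\mathbb{T}_L)$ of $\partial_t u+\partial_x(u^2/2)=0$ exists for all time. Lifting to the universal cover, I would regard $u$ as a continuous, $L$-periodic-in-$x$ weak solution on $\mathbb{R}^+\times\mathbb{R}$, with initial datum the $L$-periodic lift of $u_0$, which remains non-constant. Working on $\mathbb{R}$ (rather than the circle directly) avoids wrap-around bookkeeping in the crossing argument.

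First I would locate a \emph{decreasing pair} in the data: there exist $a<b$ in $\mathbb{R}$ with $u_0(a)>u_0(b)$. Indeed, were there no such pair, $u_0$ would be non-decreasing on all of $\mathbb{R}$; being also $L$-periodic, it would then be constant, contradicting the hypothesis. The core of the argument is the \emph{characteristic property}: along each straight line $\ell_{x_0}(t):=x_0+t\,u_0(x_0)$ the solution is constant, $u(\ell_{x_0}(t),t)=u_0(x_0)$. Granting this, let $T^\ast:=\inf\{t>0:\ell_{x_0}(t)=\ell_{x_1}(t)\text{ for some }x_0\neq x_1\}$; the decreasing pair gives $T^\ast\le (b-a)/(u_0(a)-u_0(b))<\infty$. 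Whenever two such lines meet at time $T^\ast$, say $\ell_{x_0}(T^\ast)=\ell_{x_1}(T^\ast)$ with $x_0\neq x_1$, one necessarily has $u_0(x_0)\neq u_0(x_1)$ (different slopes are required to meet at a positive time); yet the characteristic property, passed to the limit $t\to T^\ast$ using continuity of $u$, forces $u$ to equal both $u_0(x_0)$ and $u_0(x_1)$ at the single common point. This contradiction completes the proof.

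It remains to justify the characteristic property, and \textbf{this is the main obstacle}, precisely because $u_0$ (hence $u$) is only continuous, so the classical computation $\frac{\rmd}{\rmd t}u(X(t),t)=(\partial_t u+u\,\partial_x u)(X(t),t)=0$ along a curve $\dot X=u(X,t)$ cannot be performed pointwise. Since $u$ is continuous, the field in $\dot X=u(X,t)$ is continuous, so Peano's theorem furnishes a $C^1$ characteristic through $(x_0,0)$; once constancy of $t\mapsto u(X(t),t)$ is known, $\dot X=u_0(x_0)$ is constant and $X$ coincides with the straight line $\ell_{x_0}$. The constancy itself I would establish by space-time mollification $u_\epsilon:=u\ast\eta_\epsilon$, which satisfies $\partial_t u_\epsilon+u_\epsilon\partial_x u_\epsilon=-\partial_x r_\epsilon$ with commutator $r_\epsilon:=\tfrac12(\overline{u^2}_\epsilon-u_\epsilon^2)\to 0$ locally uniformly by the local uniform continuity of $u$, and then transporting $u_\epsilon$ along its own characteristics and sending $\epsilon\to 0$.

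The delicate point in the mollification route is controlling $\int_0^t\partial_x r_\epsilon$ along the approximate characteristics, where only continuity of $u$---not any Besov or $BV$ bound---is available; this is exactly where the argument must be carried out with care, and is the reason the statement is phrased for continuous weak solutions. An alternative, which sidesteps the commutator analysis, is to invoke the known rigidity of continuous weak solutions of scalar conservation laws, for which constancy along straight characteristics (up to the first crossing) is classical. In either form, the characteristic property holds on $[0,T^\ast)$ and extends by continuity to $T^\ast$, triggering the contradiction above. Finally, the parenthetical sharpening is immediate: if $u_0\in C^k$ with $k\ge 1$, local well-posedness produces a classical solution that, by the theorem, cannot persist as a global continuous solution, so its gradient must blow up in finite time---this is shock formation.
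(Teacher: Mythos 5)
Your argument is correct in substance but follows a genuinely different route from the paper. The paper (following Alinhac) never uses constancy of $u$ along all characteristics: it uses the weak formulation to get momentum conservation, reduces to mean-zero data, locates two zeros $a<b$ of $u_0$ with $u_0>0$ in between, invokes only the propagation of those \emph{zeros} (citing Dafermos), and then runs a virial/Riccati argument on the weighted average $U_\psi(t)=\frac{1}{b-a}\int_a^b u\,\psi\,\rmd x$, whose differential inequality $\dot U_\psi\gtrsim U_\psi^2$ forces blow-up of a quantity controlled by $\|u\|_{L^\infty}$. You instead run the classical crossing-of-characteristics argument, which needs the full rigidity statement that a continuous weak solution is constant along straight characteristics through every point. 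That statement is exactly Dafermos's theorem on continuous solutions of balance laws --- the same reference the paper leans on for the weaker fact about zeros --- and there is no ``up to first crossing'' caveat in it: the theorem holds on the whole domain of the continuous solution, which is precisely why two crossing lines with distinct slopes yield an immediate contradiction. So if you take the characteristic property as the cited black box, your proof is complete, arguably more elementary in spirit, and identifies the mechanism (compression of characteristics) more transparently; the paper's integral argument buys independence from the full characteristic rigidity and generalizes more readily to settings where pointwise transport along characteristics is unavailable. One caveat: your mollification route does \emph{not} close as sketched --- $r_\epsilon=\tfrac12(\overline{u^2}_\epsilon-u_\epsilon^2)\to 0$ locally uniformly, but you must integrate $\partial_x r_\epsilon$ along approximate characteristics, and mere continuity of $u$ gives no control on that derivative (this is why Dafermos's proof proceeds through the divergence structure and generalized characteristics rather than a naive commutator estimate). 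So the proof should rest on the citation, not on the mollification sketch. Minor points: the infimum $T^\ast$ over all crossing pairs is unnecessary --- the single pair $a<b$ with $u_0(a)>u_0(b)$, which exists by periodicity and non-constancy, already does the job at time $t=(b-a)/(u_0(a)-u_0(b))$.
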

\begin{proof}[Proof of Theorem \ref{thmB1}]
Set $f=0$ for simplicity - the proof can easily be modified to accommodate external forcing.
Suppose there existed a continuous weak solution $u\in C(\mathbb{R}^+\times \mathbb{T}_L)$ of the Burgers equation starting from data $u_0$: for all $\varphi\in C_0^\infty((0,\infty)\times \mathbb{T}_L)$ the following holds
\be
(u, \partial_t \varphi)_{L^2(\mathbb{R}^+\times \mathbb{T}_L)}+ (\tfrac{1}{2} u^2, \partial_x \varphi)_{L^2(\mathbb{R}^+\times \mathbb{T}_L)} =0.
\ee
Let $\varphi(x,t)= \phi(x)\eta(t)$ where $\phi\in C_0^\infty(0,L)$ and $\eta\in C_0^\infty(\mathbb{R})$.  
Letting $\phi\rightarrow 1$ and $\eta(s)\rightarrow \chi_{[0,t]}(s)$ for any $t>0$, we see that momentum is conserved:
\be
\int_{ \mathbb{T}_L} u(x,t) \rmd x = \int_{ \mathbb{T}_L} u_0(x) \rmd x:= \mathsf{m} \qquad \forall \ t\geq 0.
\ee
Thus, we may consider the data $u_0$ to be mean zero (e.g. take $ \mathsf{m}=0$) without loss of generality, since otherwise we may generate such a new weak solution ${u}(x,t) = u(x+\mathsf{m}t,t)- \mathsf{m}$  by subtracting off the momentum and moving to the appropriate Galilean reference frame.  Now, given that  $u_0$ is continuous, mean-zero and $L$--periodic, it must have at least two zeros in $[0,L)$, call them $a$ and $b$. Without loss of generality, suppose $a<b$ and that $u_0|_{(a,b)}>0$. By virtue of being continuous, these zeros can be seen to propagate for any such weak solution \cite{Daf}. Let now $\psi$ be a smooth  function having  the property that $ \psi'|_{[a,b]} >0$ and $\psi|_{[a,b]}\geq 0$.  Choose $\phi \to \frac{1}{b-a}\chi_{[a,b]}\psi$ and $\eta\wc   \delta_t$.  This results in the following information on the integral $U_\psi(t) = \frac{1}{b-a}\int_a^b u(x, t) \psi(x) \rmd x$ (which, as $u$ is a continuous weak solution of Burgers, is a differentiable function of time): 
\be
\frac{\rmd}{\rmd t} U_\psi(t) = \frac{1}{2(b-a)}\int_a^b u^2(t,x)  \psi'(x)\rmd x \geq  \frac{U_\psi^2(t)}{(b-a)\mathsf{K}_\psi } 
\ee
where $K_\psi:= \frac{2}{b-a}\int_a^b \frac{\psi^2(x)}{\psi'(x)} \rmd x$.  The above  inequality follows by  Cauchy–Schwarz. As, by our assumptions, we have that $U_\psi(0) >0$ it follows that as $t$ approaches the time $\frac{(b-a)K_\psi}{ U_\psi(0)}>0$, the integral $U_\psi(t)$ becomes unbounded.  This contradicts $u$ itself being continuous and hence bounded.  
\end{proof}

From the above, we know any weak solution must develop a discontinuity of some kind.  In fact, entropy solutions of Burgers develop shocks which propagate along rectifiable curves in space time. What should we expect, then, about anomalous dissipation and scaling of structure functions? There is a simple exact weak solution of Burgers that can give us an expectation: the Khokhlov sawtooth solution on $\mathbb{T}_L= [-L,L)$ and $t>0$
\be\label{ksol}
u(x,t) =\begin{cases} \frac{x+L}{t}  & -L\leq x\leq 0 \\
\frac{x-L}{t}  &\ \ 0\leq x\leq L
\end{cases}.
\ee
This solution has one downwards jump  $\Delta u = u^--u^+= \frac{2L}{t}$, where $u^\mp$ are the left/right traces of $u$ at zero.   
One immediately sees this solution is dissipative; for $t_0<t$ one has
\be
\frac{\rmd}{\rmd t} \frac{1}{2} \|u(t)\|_{L^2}^2  = -\frac{1}{3} \frac{L^2}{t^3} = - \frac{1}{12 } \frac{(\Delta u)^3}{L}<0.
\ee
Indeed, the inviscid limit of the dissipation measure can be studied  for this data (see, e.g. \cite{ED15}) and it converges to a Dirac mass at the shock.
\be\label{dissrelburg}
\ve^\nu[u^\nu] := \nu |\partial_x u^\nu|^2 \to \frac{1}{12 } \frac{(\Delta u)^3}{L}\delta_0, \qquad \langle \ve \rangle := \langle \ve^\nu[u^\nu]  \rangle  = \frac{1}{12 } \frac{(\Delta u)^3}{L}.
\ee
Structure functions can also easily be computed
\begin{align} \nonumber
S_p(\ell) &:=\frac{1}{2L} \int_{-L}^L |u(x+\ell)-u(x)|^p \rmd x\\ \nonumber
&= (1-\frac{\ell}{2L}) \left(\frac{\ell}{t}\right)^p + \frac{\ell}{2L} \left(\frac{2L+\ell}{t}\right)^p \sim_p (\Delta u)^p \begin{cases} \left(\frac{\ell}{2L}\right)^p & 0<p<1\\
 \frac{\ell}{2L} & \ \phantom{0<}p>1
\end{cases}.
\end{align}
Using the relationship with total dissipation \eqref{dissrelburg}, we find for $p>1$ that
\be
S_p(\ell)\sim (\ell  \langle \ve \rangle)^{p/3} \left(\frac{\ell}{L} \right)^{-(\frac{p}{3}-1)}.
\ee
This fully agrees with  our heuristic prediction following Kolmogorov and Landau \eqref{shockbd}--\eqref{k41a}.
In contrast with $p\geq 1$, the behavior for $p\in (0,1)$ does not fall into this prediction and in fact may not be fully universal in decaying Burgulence \cite{SAF92, MHPF}. The significance of $p=1$ here is that it is the dimension of space, which limits the dimension of the set of shock points.   For instance, for bounded data, the entropy dissipation is a (space-time) measure concentrated on countably many Lipschitz curves \cite{BM17}.

Let us prove some rigorous and general results in this direction, tackling the second two points at once.  First we remark about two important apriori estimates for Burgers solutions.
The most basic estimates for the solution $u^\nu$ are derived from energy balance
\be\label{energybal}
\nu \int_0^T \int_{\mathbb{T}} | \partial_x u^\nu |^2\rmd x\rmd t = \tfrac{1}{2} \|u_0\|_{L^2}^2- \tfrac{1}{2} \|u^\nu(\cdot,T)\|_{L^2}^2 + \int_0^T \int_{\mathbb{T}}u^\nu f \rmd x \rmd t,
\ee
which confers $L^2(0,T;H^1(\mathbb{T}))$ regularity to the solution, but  not uniformly in the viscosity.  The strongest \textit{uniform}  estimate at first sight is $L^\infty(0,T;L^\infty(\mathbb{T}))$, following from the maximum principle 
\be\label{maximumprinciple}
\sup_{t\in [0,T]}\|u^\nu(t)\|_{L^\infty(\mathbb{T})} \leq \|u_0\|_{L^\infty(\mathbb{T})} + T\|f\|_{L^\infty(\mathbb{T})}.
\ee
In fact, these two bounds are essentially enough to prove global regularity of the viscous Burgers equation.  Starting from any bounded data $u_0\in L^\infty$, the solution immediately becomes real analytic, is unique, and exists for all times.  So from now on, any manipulation with the viscous Burgers solution will be justified.

Our main result is a \emph{uniform} fractional regularity of $u^\nu$ related to the $p$th--order structure functions
\be \nonumber
S_p^\nu(\ell,T) :=\frac{1}{T} \int_0^T \int_{\mathbb{T}} |u^\nu(x+\ell,t) - u^\nu(x,t)|^p \rmd x \rmd t.
\ee
As discussed at length in \S \ref{nssec}, for turbulent flows these objects develop scaling ranges
\be\nonumber
S_p^\nu(\ell,T) \sim |\ell|^{\zeta_p} \qquad \text{for} \quad \ell_\nu \ll \ell \ll L.
\ee
Our aim is to establish uniform upper bounds $S_p^\nu(\ell,T) \leq C_p|\ell|^{\zeta_p}$ for all $|\ell|>0$.
\begin{theorem}[Self-regularization and Intermittency]\label{thm}
\textit{For each $T>0$ and $p\geq 3$ and data $u_0\in L^\infty(\mathbb{T})$, $f\in L^\infty(\mathbb{T})$, there is  a  constant $C_p:=C_p(\|u_0\|_{L^\infty},\|f\|_{L^\infty},T)$ so that $\zeta_p \geq 1$ for all $p\geq 3$, that is for  all $|\ell|>0$ and $T>0$, we have
\be\label{unifbnd}
S_p^\nu(\ell,T)  \leq C_p |\ell|.
\ee
If $f$ is time-periodic, then  $C_p$ is independent of $T$.
If $f=0$, for $c_p:= c_p(\|u_0\|_{L^\infty})$ we have uniform decay
\be\label{unifbnd2}
S_p^\nu(\ell,T)  \leq c_p \frac{|\ell|}{T}.
\ee}
\end{theorem}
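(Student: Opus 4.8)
The plan is to reduce the whole family of estimates to the single cubic exponent $p=3$ and to run everything on two uniform-in-$\nu$ a priori bounds for the viscous Burgers flow. The first is the maximum principle \eqref{maximumprinciple}, giving $\|u^\nu(t)\|_{L^\infty}\le M(t):=\|u_0\|_{L^\infty}+t\|f\|_{L^\infty}$. The second is an Oleinik-type one-sided Lipschitz estimate, uniform in $\nu$: applying a maximum principle to $w=\partial_x u^\nu$, which solves the Riccati-type equation $\partial_t w + u^\nu\partial_x w + w^2=\nu\partial_x^2 w+\partial_x f$, and comparing with the scalar ODE $\dot W=-W^2$ (which flows down from $+\infty$ to $1/t$) yields $\sup_x w(\cdot,t)\le 1/t$ when $f=0$, and $\le 1/t+C(t)$ in general.

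First I would bound the time-integrated \emph{positive} increments. Writing $\delta_\ell u^\nu=u^\nu(\cdot+\ell)-u^\nu$, the one-sided bound gives the pointwise estimate $(\delta_\ell u^\nu)_+\le \min(2M(t),\ell/t)$, while periodicity forces $\int_{\mathbb{T}}\delta_\ell u^\nu\,dx=0$ and hence $\int_{\mathbb{T}}(\delta_\ell u^\nu)_-\,dx=\int_{\mathbb{T}}(\delta_\ell u^\nu)_+\,dx\le L\min(2M(t),\ell/t)$. Combining the pointwise and integral controls gives $\int_{\mathbb{T}}(\delta_\ell u^\nu)_+^3\,dx\le L\,\min(2M(t),\ell/t)^3$, and integrating in time (splitting at $t_*=\ell/(2M_0)$, with $M_0=M(0)$) produces $\int_0^T\!\int_{\mathbb{T}}(\delta_\ell u^\nu)_+^3\,dx\,dt\lesssim \ell$, \emph{with no logarithmic loss} precisely because the $1/t$ factors are cubed and so remain integrable past $t_*$.

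Next I would control the \emph{signed} third moment by the energy flux. The point-split balance of Proposition~\ref{invthm}, specialized to one-dimensional Burgers, has a purely cubic flux (in $1$D, $\delta_{\ell z}u\,|\delta_{\ell z}u|^2=(\delta_{\ell z}u)^3$, so \eqref{Dell} reduces to a cubic object). Integrating it over the periodic domain and over $[0,T]$ equates $\tfrac{1}{\ell}\int_0^T\!\int_{\mathbb{T}}(\delta_\ell u^\nu)^3$ to a boundary-in-time term plus a viscous contribution, all bounded through the energy identity \eqref{energybal} by $\tfrac12\|u_0\|_{L^2}^2$ plus the forcing work, hence by $1+T$ uniformly in $\nu$. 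Since the solution is dissipative the signed moment has a definite sign, so $\big|\int_0^T\!\int_{\mathbb{T}}(\delta_\ell u^\nu)^3\big|\lesssim (1+T)\ell$; subtracting the positive part, the \emph{negative} (shock) increments inherit $\int_0^T\!\int_{\mathbb{T}}(\delta_\ell u^\nu)_-^3=\int_0^T\!\int(\delta_\ell u^\nu)_+^3-\int_0^T\!\int(\delta_\ell u^\nu)^3\lesssim \ell$. This is exactly where $p=3$ enters: it is the exponent at which the flux is cubic and coercive enough to pin the negative increments, which the one-sided bound alone cannot reach.

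Finally, for $p\ge 3$ I would interpolate through the $L^\infty$ bound: $|\delta_\ell u^\nu|^p\le (2M(t))^{p-3}|\delta_\ell u^\nu|^3$, so that $\int_0^T\!\int_{\mathbb{T}}|\delta_\ell u^\nu|^p\lesssim (2M_0)^{p-3}\ell$ and dividing by $T$ yields \eqref{unifbnd}. When $f=0$ the maximum principle fixes $M(t)\equiv\|u_0\|_{L^\infty}$ and the flux input loses its $T$ (only $\tfrac12\|u_0\|_{L^2}^2$ survives), so the whole bound is $T$-independent and division by $T$ gives the decay \eqref{unifbnd2}; for time-periodic $f$ one replaces the $O(1+T)$ energy input by the $O(T)$ input of a statistically stationary state, whose ratio to $T$ is $T$-independent. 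The main obstacle is the Oleinik estimate under the stated hypothesis $f\in L^\infty$ only, since $\partial_x f$ is then merely a distribution; I would handle this by regularizing $f$, deriving $\partial_x u^\nu\le 1/t+C(t)$ with $C(t)$ controlled by the semiconcavity constant of the associated Hopf--Lax potential, and passing to the limit. Keeping $C(t)$ harmless enough that the positive-increment integral stays $O(\ell)$ is the delicate bookkeeping, and it is also what governs the $T$-dependence of $C_p$ in the forced case.
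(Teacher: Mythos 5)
Your control of the positive increments rests entirely on a uniform-in-$\nu$ Oleinik bound $\partial_x u^\nu \le 1/t + C(t)$ with $C(t)=C(t,\|f\|_{L^\infty})$, and the difficulty you flag at the end is not mere bookkeeping: under the stated hypothesis $f\in L^\infty(\mathbb{T})$ with no spatial regularity, no such bound exists. Take $u_0=0$ and $f$ with an upward jump (or $f(x)\sim\sqrt{|x|}$ near a point). Then for short times $u^\nu\approx t\,\ol{f}_{\sqrt{\nu t}}$, so $\sup_x \partial_x u^\nu \sim t\|f\|_{L^\infty}/\sqrt{\nu t}\to\infty$ as $\nu\to 0$; equivalently, at the level you actually use it, $(\delta_\ell u^\nu)_+\approx t(\delta_\ell f)_+$ is of order $\|f\|_{L^\infty}$ (not $O(\ell/t)$) near a jump of $f$, so the pointwise bound $(\delta_\ell u^\nu)_+\le \min(2M(t),\ell/t+\ell C(t))$ fails and with it the estimate $\int_0^T\!\int(\delta_\ell u^\nu)_+^3\lesssim \ell$. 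The semiconcavity constant of the Hopf--Lax potential is governed by a one-sided bound on $\partial_x f$, not by $\|f\|_{L^\infty}$, so regularizing $f$ cannot restore a constant depending only on the quantities allowed in the statement. The theorem is nevertheless true for such $f$ (jumps of $f$ imprint BV-type structure on $u$, which is exactly borderline for $\zeta_p=1$), but your mechanism cannot see it.

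The paper's proof sidesteps this entirely: for each $p\ge 3$ it writes the exact increment balance of Lemma \ref{lem} with $\varphi(x)=\tfrac{p}{p-2}x|x|^{p-2}$, whose $\ell$-flux $\Pi$ is the coercive quantity $|\delta_\ell u|^p$ itself; integrating in $x$, $t$ and $\ell'\in[0,\ell]$ bounds $\int_0^T\!\int_{\mathbb{T}}|\delta_\ell u^\nu|^p$ by $\ell$ times quantities controlled only by the maximum principle \eqref{maximumprinciple} and the energy balance, the forcing entering solely through $\int \varphi'(\delta_{\ell'}u)\,\delta_{\ell'}f$, which is harmless for any bounded $f$. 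Your route --- Oleinik for the rarefaction (positive) part, the $p=3$ flux identity to pin the shock (negative) part without the logarithmic loss that a $BV\times L^\infty$ interpolation would incur, then $L^\infty$ interpolation up to $p>3$ --- is genuinely different and is sound when $f=0$ (or whenever $\partial_x f$ is bounded above); it buys a sharper structural picture at the price of one-sided regularity of the force. One further small point: the two-sided bound $\bigl|\int_0^T\!\int_{\mathbb{T}}(\delta_\ell u^\nu)^3\bigr|\lesssim (1+T)\ell$ follows directly from the integrated flux identity, so the appeal to the signed moment having ``a definite sign'' is unnecessary (and is not what the identity gives you).
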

		\begin{figure}[h!]
		\centering
			\includegraphics[width=\textwidth]{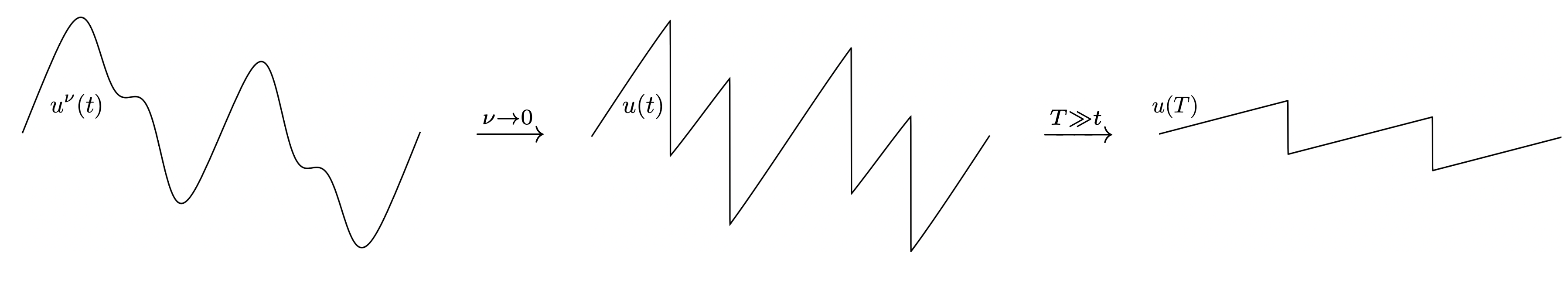} 
					\caption{Burgers solution with $\nu>0$, $f=0$, first for small viscosity and subsequently for long times. } \label{burgfig}
	\end{figure}

Some remarks are in order.  First, the estimate \eqref{unifbnd2} clearly shows anomalous dissipation, since the decay of the structure function is uniform in $\nu$.  Indeed, assuming without loss of generality that the mean is trivial $\mathsf{m}:=\fint_{\mathbb{T}} u^\nu(y,t)\rmd y =0$, then
\be\nonumber
\|u^\nu(\cdot, t)\|_{L^2}^2 = \left\|u^\nu(\cdot, t)-\fint_{\mathbb{T}} u^\nu(y,t)\rmd y \right\|_{L^2}^2 \leq \int_{\mathbb{T}^2} |u(x,t) - u(y,t)|^2 \rmd x \rmd y \lesssim \sup_{|r|\leq L} S_2^\nu(r,t).
\ee
The rate of decay in time may not be sharp, as can be seen by inspection of the exact solution \eqref{ksol}. We remark that the origin of anomalous dissipation in this model can be interpreted as non-uniqueness of particle trajectories backwards in time \cite{ED15} (more on this perspective at the end of the next subsection). 
Secondly, condition \eqref{unifbnd} is equivalent to  $\dashint_0^T\|u^\nu(t)\|_{\dot{B}^{1/p}_{p,\infty}(\mathbb{T})}^p \rmd t \leq C_p,$ with the Besov semi-norm introduced in Lemma \ref{Slem}.
As such, we have a self-regularization result for viscous solutions -- from bounded data, they immediately enter into the space  $X=\bigcap_{p\geq 3} L^\infty(0,T; B^{1/p}_{p,\infty}(\mathbb{T}))$ \textit{uniformly} in the viscosity.
Entropic shocks solutions saturate this regularity in that they live in $X$ and no better space within the Besov scale. 
 Combining Theorems \ref{thm} and \ref{thmB1}, a cartoon of the evolution of the inviscid limit emerges in the form of Figure \ref{burgfig}.

The regularization result is well known \cite{JKM,GP,TT}; in fact there is a stronger uniform estimate for the total variation $L^\infty(0,T;(L^\infty\cap BV)(\mathbb{T}))\subset L^\infty(0,T;B_{p,\infty}^{1/p}(\mathbb{T}))$ for all $p\geq 1$. Our primary interest is in giving a short, intrinsically physical space, argument based on the following principle: nonlinear ideal conserved quantities that may be anomalously dissipated limit the degree to which the solution may suffer irregularities.  A bound of this type was established by Goldman, Josien and Otto \cite{GJO15} using a modified energy flux. For Burgers, there are infinitely many such quantities (any convex function of the solution), suggesting there should be infinitely many such bounds \eqref{unifbnd}, which we here show.  These estimates show that $p$th-order absolute structure functions obey a uniform bound with $\zeta_p=1$ for $p\geq 3$.  This fact underlies the rigidity of the so-called multifractal spectrum of anomalous exponents in Burgulence.  For Navier-Stokes, the only known inviscid invariant that is dissipated is the kinetic energy.  Its flux, related to the dissipation by the Kolmogorov $\frac{4}{5}$--law \cite{K}, is not coercive unlike the present setting.  With an additional alignment hypothesis, the law does confer limited regularity \cite{D22}. Whether this is a generic feature of turbulence is open.

	Finally, we remark that much of the analysis here could carry over to general one-dimensional conservation laws, even those with non-degenerate, nonlinear viscosity
\be\nonumber
\partial_t u^{\ve} + \partial_x h(u^\ve) = \ve \partial_x (\nu(u^\ve) \partial_x u^\ve),
\ee
provided, at least, that the Hamiltonian function $h(u)$ is sufficiently close to quadratic.  For simplicity, we choose to focus here on the viscous Burgers equation. We require the identity (similar to that appearing in \cite{O}):
\begin{lemma}[Burgers increment balance laws]\label{lem}
\textit{Let $\varphi:\mathbb{R}\to \mathbb{R}$ have Lipschitz first derivative. Let $\Phi$ be the primitive of $\varphi$ and $\tilde{\Phi}$ of  $x \varphi'(x)$. Then
\be\label{balance}
\partial_t \varphi(\delta_\ell u) + \partial_x J_\ell[u] + \partial_\ell \Pi[\delta_\ell u] = -\nu \varphi''(\delta_\ell u) |\partial_x \delta_\ell u|^2+ \varphi'(\delta_\ell u)  \delta_\ell f
\ee
where we defined
\vspace{-6mm}
\begin{align*}
J_\ell[ u]& := u'\varphi(\delta_\ell u) - \tilde{\Phi}(\delta_\ell u) - \nu  \partial_x \varphi(\delta_\ell u) , \\
\Pi[\delta_\ell u] & :=  \tilde{\Phi}(\delta_\ell u)-\Phi(\delta_\ell u). 
\end{align*}
}
\end{lemma}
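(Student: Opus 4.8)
The plan is to obtain \eqref{balance} by direct time-differentiation of $\varphi(\delta_\ell u)$, substituting the Burgers equation \eqref{Burgers} evaluated at the two points $x$ and $x+\ell$, and then sorting every resulting term into an $x$-divergence, an $\ell$-derivative, and genuine source terms. Throughout I write $u=u(x,t)$, $u'=u(x+\ell,t)$ (the \emph{primed} quantity is evaluation at the shifted point, not a derivative) and $w:=\delta_\ell u=u'-u$. The single most useful preliminary observation is the kinematic relation between the two independent derivatives of the increment: since $u$ is independent of $\ell$ while $u'$ is not, $\partial_\ell w=(\partial_x u)(x+\ell,t)$ and $\partial_x w=(\partial_x u)(x+\ell,t)-\partial_x u$, whence
\be\label{kinrel}
\partial_x u=\partial_\ell w-\partial_x w.
\ee
Since $\nu>0$ makes $u^\nu$ real-analytic (as recalled above) and $\varphi'$ is Lipschitz so that $\varphi''\in L^\infty$ exists a.e., all the manipulations below are justified pointwise a.e.

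First I would compute $\partial_t\varphi(w)=\varphi'(w)\partial_t w$ and use \eqref{Burgers} at $x$ and at $x+\ell$ to get $\partial_t w=-u'\partial_x u'+u\,\partial_x u+\nu\,\partial_x^2 w+\delta_\ell f$. Observing $\partial_x u'=\partial_\ell w$ and $u'=u+w$, and feeding in \eqref{kinrel}, the awkward product collapses: $-u'\partial_x u'+u\,\partial_x u=-w\,\partial_\ell w-u\,\partial_x w$. Multiplying by $\varphi'(w)$ and invoking the defining relations $\tilde\Phi'(x)=x\varphi'(x)$ and $\Phi'=\varphi$ converts the first term into $-\partial_\ell\tilde\Phi(w)$ and the second into $-u\,\partial_x\varphi(w)$.

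The central reorganization is of this transport term. I would write $-u\,\partial_x\varphi(w)=-\partial_x\big(u\varphi(w)\big)+(\partial_x u)\varphi(w)$ and apply \eqref{kinrel} once more to split $(\partial_x u)\varphi(w)=\partial_\ell\Phi(w)-\partial_x\Phi(w)$. Collecting, the full inviscid contribution becomes $-\partial_x\big(u\varphi(w)+\Phi(w)\big)-\partial_\ell\big(\tilde\Phi(w)-\Phi(w)\big)$, which already exhibits $\Pi[\delta_\ell u]=\tilde\Phi(w)-\Phi(w)$. The viscous term is treated by the chain rule $\varphi'(w)\partial_x^2 w=\partial_x^2\varphi(w)-\varphi''(w)(\partial_x w)^2$, which contributes the flux $-\nu\partial_x\varphi(w)$ (to be placed inside $J_\ell$) together with the sink $-\nu\varphi''(w)|\partial_x w|^2$; the forcing contributes $\varphi'(w)\delta_\ell f$ directly.

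The final, and only genuinely delicate, step is to recognize that the flux just produced, $u\varphi(w)+\Phi(w)$, equals the stated $u'\varphi(w)-\tilde\Phi(w)$. This rests on the integration-by-parts identity $\Phi(s)+\tilde\Phi(s)=s\,\varphi(s)$ (valid after normalizing $\Phi(0)=\tilde\Phi(0)=0$, which is free since additive constants do not affect derivatives): indeed $u\varphi(w)+\Phi(w)=u\varphi(w)+w\varphi(w)-\tilde\Phi(w)=(u+w)\varphi(w)-\tilde\Phi(w)=u'\varphi(w)-\tilde\Phi(w)$. Appending the viscous flux reproduces exactly $J_\ell[u]$, and \eqref{balance} follows. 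I expect the bookkeeping of the two independent variables $x$ and $\ell$ through \eqref{kinrel} to be where sign errors are most likely to slip in, while the structural identity $\Phi+\tilde\Phi=s\varphi$ — which is what causes the \emph{primed} velocity $u'$ to appear in the flux — is the conceptual crux of the calculation.
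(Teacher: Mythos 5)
Your proof is correct and follows essentially the same route as the paper: derive the evolution equation for the increment $\delta_\ell u$ using the kinematic relation $\partial_x u' = \partial_\ell \delta_\ell u$, multiply by $\varphi'(\delta_\ell u)$, and regroup into $x$- and $\ell$-fluxes. The only (cosmetic) difference is bookkeeping: the paper writes the nonlinearity as $\delta_\ell u\,\partial_\ell \delta_\ell u + u'\partial_x \delta_\ell u - \delta_\ell u\,\partial_x \delta_\ell u$ so that $u'$ appears in the transport term from the start, whereas you keep $u$ and recover the stated flux $u'\varphi(\delta_\ell u)-\tilde\Phi(\delta_\ell u)$ at the end via the identity $\Phi(s)+\tilde\Phi(s)=s\varphi(s)$ --- an equivalent rearrangement of the same algebra.
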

\begin{proof}[Proof of Lemma \ref{lem}]
To see this, let $u'=u(\cdot+\ell)$, $u=u(\cdot)$ and $\delta_\ell u = u'-u$.  Then
\be\nonumber
\partial_t \delta_\ell u + u' \partial_x \delta_\ell u+  \delta_\ell u \partial_\ell \delta_\ell u - \delta_\ell u \partial_x \delta_\ell u= \nu \Delta \delta_\ell u  + \delta_\ell f,
\ee
since 
$
u'\partial_x u' - u\partial_x u  = \delta_\ell u \partial_x u' + u \partial_x \delta_\ell u= \delta_\ell u \partial_\ell \delta_\ell u + u \partial_x \delta_\ell u= \delta_\ell u \partial_\ell \delta_\ell u + u' \partial_x \delta_\ell u- \delta_\ell u \partial_x \delta_\ell u.
$
Multiplying the above by $\varphi'(\delta_\ell u)$, we obtain the evolution, which is equivalent to \eqref{balance}
\begin{align*}
\partial_t \varphi(\delta_\ell u) +\partial_x( u'  \varphi(\delta_\ell u))- \varphi'(\delta_\ell u) \delta_\ell u \partial_x \delta_\ell u  &+\Big(  \varphi'(\delta_\ell u) \delta_\ell u -  \varphi(\delta_\ell u)\Big)  \partial_\ell \delta_\ell u\\
&\qquad  = \nu \varphi'(\delta_\ell u)  \Delta \delta_\ell u +\varphi'(\delta_\ell u)  \delta_\ell f.
\end{align*}
\vspace{-2mm}
\end{proof}

\begin{proof}[Proof of Theorem \ref{thm}]
For $p\geq3$ and $\alpha \in \mathbb{R}$, let $\varphi(x) = \alpha x |x|^{p-2}$. We compute $\varphi'(x) = \alpha (p-1) |x|^{p-2}$ and $\varphi''(x) =  \alpha (p-1) (p-2)x |x|^{p-4}\in L^\infty$. The primitive of $\varphi$ is $\Phi=\frac{\alpha}{p}|x|^{p}$ and for $x \varphi'(x) =  (p-1)   \varphi(x)$, it is $\tilde{\Phi}=(p-1)\Phi$. Thus $\Pi[x] = (p-2)\Phi(x) = \alpha \frac{p-2}{p}|x|^p$. Letting $\alpha= \frac{p}{p-2}$, we obtain the balance
\begin{align*}
\frac{1}{\ell} \  \dashint_0^T \!  \! \! \int_{\mathbb{T}} |\delta_\ell u|^p \rmd x \rmd t &= \frac{1}{T}\dashint_0^\ell \!  \! \! \int_{\mathbb{T}} \varphi(\delta_\ell u_0) \rmd x - \frac{1}{T}\dashint_0^\ell\!  \! \!  \int_{\mathbb{T}} \varphi(\delta_\ell u_T) \rmd x + \dashint_0^\ell \dashint_0^T\!  \! \!  \int_{\mathbb{T}}  \varphi'(\delta_{\ell'} u) \delta_{\ell'} f \rmd x \rmd t \rmd \ell' \\
&\qquad -\nu \ \dashint_0^\ell \dashint_0^T\!  \! \! \int_{\mathbb{T}}  \varphi''(\delta_{\ell'} u) |\partial_x \delta_{\ell'} u|^2\rmd x \rmd t \rmd \ell' .
\end{align*}
Using energy balance \eqref{energybal} and the maximum principle \eqref{maximumprinciple} to bound each term on the right-hand-side, we obtain a uniform-in-viscosity bound.    This yields the claimed estimates.   If $f$ is time-periodic, the uniform-in-time estimate follows from \cite{JKM} which established $\|u(t)\|_{L^\infty}\leq C_0$ for a $t$ independent constant $C_0$, improving the bound \eqref{maximumprinciple}.
 When $f=0$, by \eqref{energybal} we have the decaying estimate 
  \be\nonumber
  \left|\nu \ \dashint_0^\ell \dashint_0^T\!  \! \! \int_{\mathbb{T}}  \varphi''(\delta_{\ell'} u) |\partial_x \delta_{\ell'} u|^2\rmd x \rmd t \rmd \ell' \right| \leq \frac{2}{T} \|\varphi''\|_{L^\infty( 0,\|u_0\|_{L^\infty})} \|u_0\|_{L^2}^2,
  \ee
  which gives \eqref{unifbnd2}.
  This completes the proof.
\end{proof}

\subsection{Obukhov, Corrsin, and Kraichnan's scalar turbulence}

Next we discuss passive advection and diffusion of scalar fields. Let $M\subset \R^d$ be open. Given a divergence-free vector field $u:(0,T)\times M\rightarrow \R^d$, consider the advection-diffusion  equation governing the motion of a tracer dye or temperature 
\begin{equation}\label{T} 
\partial_t \theta^\kappa +\div ( u \theta^\kappa )  =\kappa \Delta \theta^\kappa.
\end{equation}  
Just as for the velocity field modeled by the incompressible Navier-Stokes equations, the tracer energy is decaying.  The local form of the balance is
$$
\partial_t \left(\tfrac{1}{2}|{\theta}^\kappa|^2\right) +\div \left(\tfrac{1}{2}|{\theta}^\kappa|^2u  -\kappa \nabla \tfrac{1}{2}|{\theta}^\kappa|^2 \right)=-\chi^\kappa[\theta^\kappa]
$$
where we have introduced the local measure of scalar dissipation $\chi^\kappa[\theta^\kappa]$ via
\begin{align}\label{chilocdissform}
\chi^\kappa[\theta^\kappa]&:= \kappa |\nabla \theta^\kappa|^2.
\end{align}
See Figure \ref{scalfig} for some snapshots of scalar motion in Navier-Stokes turbulence and the Kraichnan model.
Just as in hydrodynamic turbulence,  it is observed that, provided $u$ is rough enough, \emph{anomalous diffusion} occurs  \cite{DSY,S19}:
\begin{myshade}
\be\label{anomalousdiff}
\liminf_{\kappa \to 0} \fint_0^T  \fint_M\chi^\kappa[\theta^\kappa](x,t) \rmd t\rmd x >0.
\ee
\end{myshade}

This balance holds directly for non-diffusive weak solutions $\theta$ upon replacing $\chi^\kappa[\theta^\kappa]$ with $\chi[\theta]$,
as soon as $u\in L^p_{x,t}$, $\theta\in L^q_{x,t}$ with $\frac{1}{p}+\frac{2}{q}\leq 1$.  Note that, being a linear equation, weak solutions can be obtained from weak compactness of vanishing diffusivity approximations. Provided that the limit is achieved strongly, it follows that $\chi[\theta]$ is actually a non-negative Radon measure.

\begin{figure}[h!]
  \begin{center}
              \includegraphics[width=0.45\textwidth]{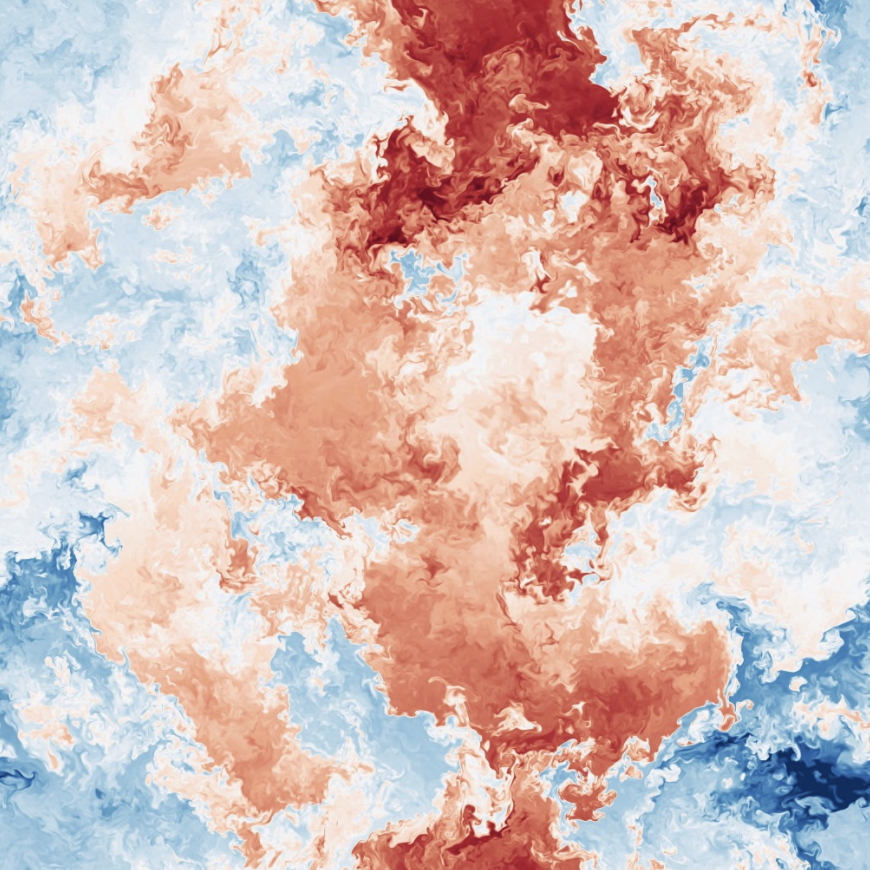}
                     \hspace{2mm} 
                           \includegraphics[width=0.45\textwidth]{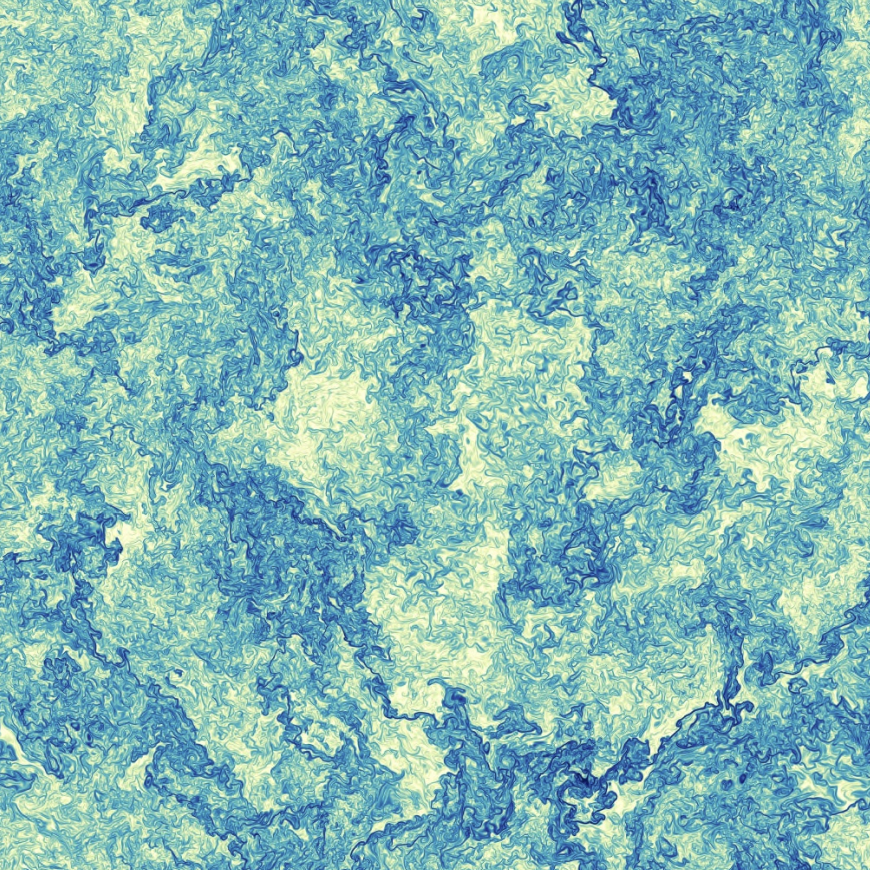}
  \end{center}
  \caption{Advected scalar by Navier-Stokes  and Kraichnan velocities.}\label{scalfig}
\end{figure}

\begin{figure}[h!]
  \begin{center}
    \includegraphics[width= 0.3\textwidth]{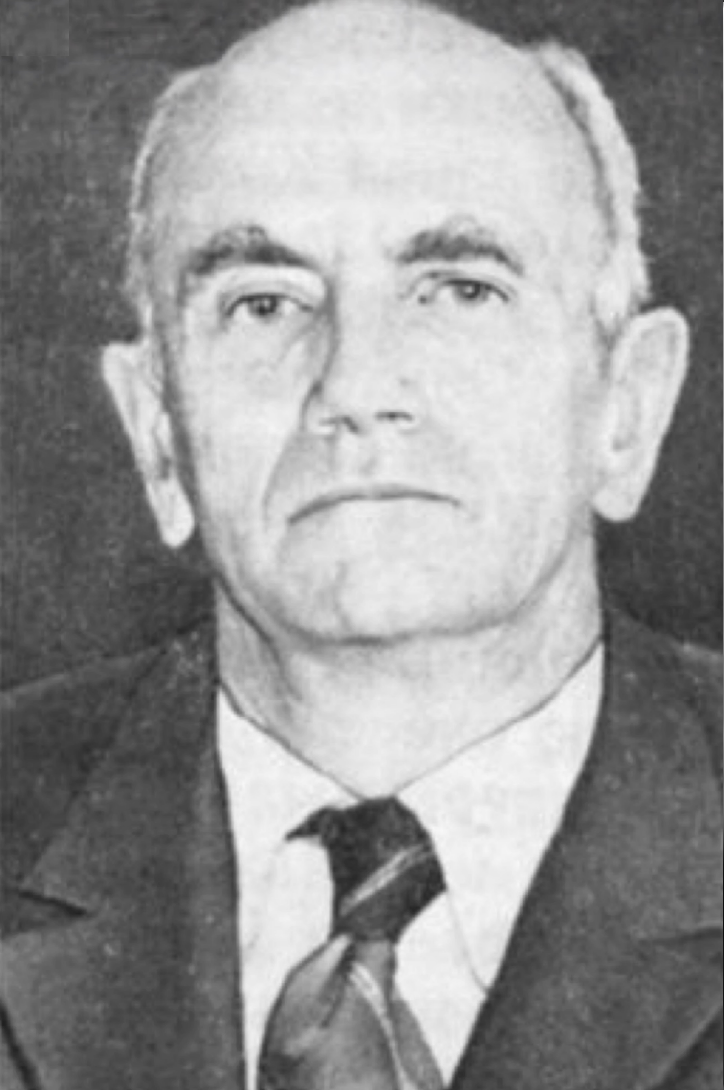}
       \hspace{2mm} 
       {
        \includegraphics[width= 0.3\textwidth]{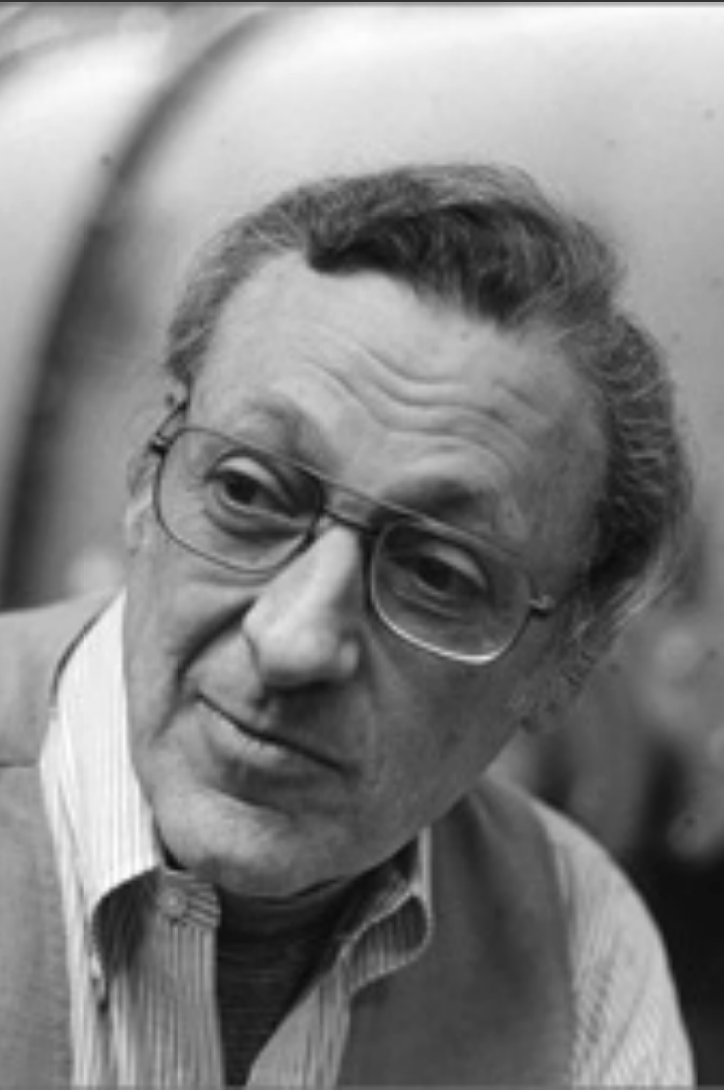}}
  \end{center}
  \caption{Alexander Obukhov and Stanley Corrsin }
\end{figure}

Obukhov \cite{Ob49}, followed by Corrsin \cite{C51}, developed a theory for passive scalar turbulence.\footnote{Obukhov and Corrsin studied the case where the advecting velocity itself was turbulent.  A complementary regime where the advecting velocity is smooth was studied first by Batchelor - the main phenomenon is \emph{mixing}.   There is a huge literature in this regime, both on the finite time dynamics and statistically steady state of the scalar.   I just mention some results \cite{CL, EJ22, CDG} and references therein, and highlight the great recent works of Bedrossian, Blumenthal and Punshon-Smith \cite{BBPS1, BBPS2, BBPS3}.  } Obukhov's motivation was to understand the entropy (and therefore temperature) cascade in the incompressible limit.   Both derived Onsager-type predictions on the requisite degree of singularity required to see anomalous diffusion in this context. The result is, roughly, that if $\sigma$ represents the fractional regularity of the velocity, the scalar cannot have regularity $\beta$ greater than $\frac{1-\sigma}{2}$. This was made rigorous by Eyink \cite{Eyink96} and works of  Constantin and Procaccia \cite{CP1,CP2}.  See the discussion in \cite{DEJI22}. Recently, there have been mathematical constructions of passive scalars exhibiting anomalous diffusion \cite{DEJI22,AV25,BSW23,HS}, even some that nearly exhibit the sharpness of the Obukhov--Corrsin theory in H\"{o}lder spaces \cite{colombo2023anomalous,elgindi2024norm}. As an example, one has
\begin{theorem}[Example of anomalous diffusion]
Fix $T>0$, $d\geq 2$, $\alpha\in (0,1)$ and $\theta_0\in L^\infty(\mathbb{T}^d)$.  There exists a divergence-free vector field $u\in L^1(0,T; C^\alpha)$ such that anomalous diffusion \eqref{anomalousdiff} occurs.  Moreover, $\theta=\lim_{\kappa\to 0} \theta^\kappa$ enjoys (nearly) the dual Obukhov--Corrsin regularity $\theta\in L^\infty(0,T; C^\beta)$ for every $\beta<\frac{1}{2} (1-\alpha)$.
\end{theorem}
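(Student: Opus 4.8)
The plan is to build $u$ by hand as a space--time \emph{self-similar cascade} of elementary incompressible mixing blocks, in the spirit of \cite{colombo2023anomalous} (see also \cite{DEJI22,elgindi2024norm}). Fix a base ratio $\lambda>1$, set $\lambda_n=\lambda^n$, and on a sequence of disjoint time intervals $I_n$ of length $\tau_n$ place a rescaled, reoriented copy of a single smooth divergence-free shear/cellular profile oscillating at spatial frequency $\lambda_n$ with velocity amplitude $a_n$. Two scalings are forced at the outset: to keep each block at unit $C^\alpha$ size I must take $a_n\sim\lambda_n^{-\alpha}$ (so that $[u(t)]_{C^\alpha}\sim a_n\lambda_n^\alpha\sim 1$), and to let the block perform one eddy turnover --- an order-one accumulated strain $a_n\lambda_n\tau_n\sim1$ --- I must take $\tau_n\sim(a_n\lambda_n)^{-1}\sim\lambda_n^{\alpha-1}$. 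Since $\sum_n\tau_n\sim\sum_n\lambda_n^{\alpha-1}<\infty$ for $\alpha<1$, these intervals can be arranged to accumulate at a single finite time $t_*\le T$, with the refinement $\lambda_n\to\infty$ occurring exactly as $t\uparrow t_*$.

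The velocity regularity is then the easy step. Each block is smooth and solenoidal, and
$$\int_0^T\|u(\cdot,t)\|_{C^\alpha}\,\rmd t\;\sim\;\sum_n\tau_n\,a_n\lambda_n^\alpha\;\sim\;\sum_n\lambda_n^{\alpha-1}\;<\;\infty,$$
so $u\in L^1(0,T;C^\alpha(\mathbb{T}^d))$ precisely because $\alpha<1$; the only loss of time regularity is the single accumulation instant $t_*$.

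For the scalar I would track its dyadic amplitude through the cascade. The design constraint is that, at the end of block $n$, the coherent scale-$\lambda_n^{-1}$ structure has amplitude comparable to $\lambda_n^{-\beta}$; granting this, the increment characterization of $C^\beta$ (the Besov bookkeeping of Lemma~\ref{Slem}, with increments as in \eqref{SP}) gives $\sup_t\|\theta(t)\|_{C^\beta}\sim\sup_n\lambda_n^{\beta}\lambda_n^{-\beta}\sim1$, i.e. $\theta\in L^\infty(0,T;C^\beta)$. The admissible range is dictated by the scalar-variance flux: variance $\lambda_n^{-2\beta}$ at scale $\lambda_n^{-1}$ is handed down in time $\tau_n\sim\lambda_n^{\alpha-1}$, giving a transfer rate $\sim\lambda_n^{1-\alpha-2\beta}$, which is non-vanishing as $n\to\infty$ --- the hallmark of a sustained cascade --- exactly when $\beta\le\tfrac12(1-\alpha)$. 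This is the rigorous Obukhov--Corrsin ceiling (above it the flux, hence the dissipation, would vanish, cf. \cite{Eyink96,CP1,CP2}); the construction realizes any $\beta$ strictly below it, whence the ``nearly'' in the statement. For $\kappa>0$ the same velocity $u$ is used and diffusion truncates the scalar cascade at the stage $n_\kappa$ with $\lambda_{n_\kappa}\sim\kappa^{-1/2}$, the Batchelor scale at which \eqref{chilocdissform} becomes effective; a routine energy/commutator estimate then gives $\theta^\kappa\to\theta$ strongly in $L^2_{x,t}$, so that by the remarks following \eqref{anomalousdiff} the limit $\chi[\theta]$ is a nonnegative Radon measure equal to the distributional limit of $\chi^\kappa[\theta^\kappa]$.

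Anomalous diffusion then follows from a variance budget. Inviscid transport conserves $\|\theta\|_{L^2}$ at every finite stage, but as $t\uparrow t_*$ an ever-increasing share of the variance is carried by oscillations at scales $\lambda_n^{-1}\to0$; these tend weakly but not strongly to zero, so the strong $L^2$ limit loses a fixed positive amount, $\tfrac12\|\theta_0\|_{L^2}^2-\tfrac12\|\theta(t_*)\|_{L^2}^2=:c>0$. For $\kappa>0$ this manifests as genuine dissipation once the cascade crosses the Batchelor scale: one shows $\tfrac12\|\theta^\kappa(T)\|_{L^2}^2\le\tfrac12\|\theta_0\|_{L^2}^2-c$ uniformly in $\kappa$, and the balance $\kappa\int_0^T\|\nabla\theta^\kappa\|_{L^2}^2\,\rmd t=\tfrac12(\|\theta_0\|_{L^2}^2-\|\theta^\kappa(T)\|_{L^2}^2)$ delivers \eqref{anomalousdiff}. (Consistently with the remark after Theorem~\ref{lddisst}, the resulting $\chi[\theta]$ concentrates at the single instant $t_*$ and is thus of spacetime codimension one.) The main obstacle lives in the scalar step: pure transport conserves every $L^p$ norm, so a cascade that simply carries the bulk of the variance down to the active scale $\lambda_n^{-1}$ would do so at \emph{order-one} amplitude, forcing $\|\theta(t)\|_{C^\beta}\sim\lambda_n^\beta\to\infty$ and destroying the regularity. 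The blocks must therefore be spatially localized and amplitude-metered so that only a controlled, shrinking portion of the field is refined at each stage --- enough that a fixed positive amount of variance is ultimately pumped to scale zero (forcing dissipation), yet little enough that the coherent remainder stays bounded in $C^\beta$. Engineering this balance quantitatively, right up to the Obukhov--Corrsin exponent $\tfrac12(1-\alpha)$, is where the real difficulty of the construction resides.
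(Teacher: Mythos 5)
The paper does not actually prove this statement; it is quoted from the literature (the constructions of \cite{DEJI22,colombo2023anomalous,elgindi2024norm,AV25,BSW23,HS}), so your proposal has to be judged on its own. Your setup is the right one and matches those works in spirit: a cascade of smooth solenoidal blocks at frequency $\lambda_n$ with amplitude $a_n\sim\lambda_n^{-\alpha}$ and turnover time $\tau_n\sim\lambda_n^{\alpha-1}$ accumulating at a finite time, and the computation showing $u\in L^1_tC^\alpha$ (indeed $L^\infty_tC^\alpha$) is correct. The heuristic identification of the Obukhov--Corrsin exponent from the flux balance is also the right bookkeeping.

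The gap is that your two main conclusions are derived from mutually incompatible assumptions on the scalar, and the reconciliation --- which is the entire content of the theorem --- is deferred. For the regularity you posit that the scale-$\lambda_n^{-1}$ structure of $\theta$ has amplitude $\lambda_n^{-\beta}$; but then the variance handed down through scale $\lambda_n^{-1}$ during block $n$ is $\Pi_n\tau_n\sim\lambda_n^{-2\beta}\to 0$, so by your own accounting the net variance reaching scale zero vanishes and no energy is lost by the inviscid cascade. (This is not merely a bookkeeping artifact: a uniform $L^\infty_tC^\beta$ bound with $\beta>0$ forces the high-frequency tail of $\theta(t)$ to be uniformly small in $L^2$, so by compact embedding the weak time-limit at $t_*$ is a strong one and the $L^2$ norm cannot jump there.) For the dissipation you instead assert that ``an ever-increasing share of the variance is carried by oscillations at scales $\lambda_n^{-1}$,'' which requires order-one amplitude at scale $\lambda_n^{-1}$ and hence $\|\theta(t)\|_{C^\beta}\gtrsim\lambda_n^{\beta}\to\infty$. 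You flag this tension yourself in the final paragraph and propose ``spatial localization and amplitude metering'' as the fix, but no mechanism is given, and the naive fix fails: whatever shrinking portion of variance you refine at stage $n$, the amount surviving to scale zero is the limit of those portions, which must vanish if the $C^\beta$ bound is to hold. In the actual constructions the positive dissipation of $\theta^\kappa$ is \emph{not} obtained by truncating the cascade of the regular inviscid solution at the Batchelor scale (note $\theta\notin L^2_tH^1_x$, so $\kappa\Delta\theta$ is a large forcing and $\theta^\kappa$ genuinely departs from $\theta$); making that interaction quantitative is the hard part, and your sketch does not contain it. The claim that ``a routine energy/commutator estimate gives $\theta^\kappa\to\theta$ strongly in $L^2_{x,t}$'' (for the full sequence, with a well-defined limit) is likewise unsubstantiated at this level of roughness.
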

Observations indicate, however, that typically the scalar field will be highly intermittent (even more so than the fluid velocity).  This motivates obtaining constraints on intermittency of the scalar field, much the same way we did for the fluid velocity. 
In fact, all the analysis we preformed for the Euler equations can be carried over to the setting of passive scalars and gives precise constraints on the relation between intermittency and dissipation measure.  The main implications of our framework in this situation are

\begin{theorem}[Intermittency and anomalous diffusion]
 Let $u\in L^p_{x,t}$ be a given vector field for some $p\in [1,\infty]$. Let $\theta\in L^q_{x,t}$ be a weak solution to \eqref{T} with $\frac{1}{p}+\frac{2}{q}\leq 1$ and with a  local dissipation measure $\chi[\theta]$ with a non-trivial singular part concentrated on a space-time set $S$ with $\dim_{\mathcal H} S=\gamma$. For all such $p$ and $q$  for which there exist $\sigma_p, \beta_{q} \in (0,1)$ such that $u\in L^p_t B^{\sigma_p}_{p,\infty}$ and $\theta\in L^q_t B^{\beta_q}_{q,\infty}$, it must hold 
            $$
            \frac{2\beta_q}{1-\sigma_p}\leq 1- \frac{p(q-2)-q}{pq} (d+1-\gamma).
            $$
\end{theorem}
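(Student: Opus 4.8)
The plan is to transplant the proof of Theorem \ref{lddiss} to the advection--diffusion setting, replacing the quadratic Euler energy $\tfrac12|u|^2$ by the tracer energy $\tfrac12|\theta|^2$. As in Theorem \ref{lddiss}, I would prove the statement with the Hausdorff dimension replaced by the (larger) upper Minkowski dimension, so that a covering bound on ${\rm supp}\,\chi[\theta]$ can be used directly without Frostman's lemma; the full Hausdorff version is in \cite{DDII25}. I also assume for simplicity that the dissipation measure is purely singular, the Lebesgue part being absorbed exactly as in the footnote to the proof of Theorem \ref{lddiss}.

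First I would record the scalar analogue of the decomposition in Proposition \ref{P:decomposition_NS}. Writing $\ol{\theta}_\ell,\ol{u}_\ell$ for the space mollifications and setting
\[
E^\ell:=\tfrac12|\theta-\ol{\theta}_\ell|^2,\qquad Q^\ell:=\tfrac12|\theta-\ol{\theta}_\ell|^2\,(u-\ol{u}_\ell),\qquad R^\ell:=\ol{u}_\ell\,\ol{\theta}_\ell-\ol{(u\theta)}_\ell,
\]
\[
C^\ell:=(\theta-\ol{\theta}_\ell)\,\div R^\ell+(\theta-\ol{\theta}_\ell)(u-\ol{u}_\ell)\cdot\nabla\ol{\theta}_\ell,
\]
the coarse-grained tracer balance (a direct Constantin--E--Titi/Duchon--Robert computation using $\div u=0$, which I would not reproduce) yields the distributional identity
\[
-\chi[\theta]=(\partial_t+\ol{u}_\ell\cdot\nabla)E^\ell+\div Q^\ell+C^\ell
\]
for $\kappa=0$, with the evident diffusive corrections when $\kappa>0$. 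Next, the commutator estimates of Lemma \ref{comlem}, now applied to the mixed product $u\theta$ with the split $\tfrac1m=\tfrac1p+\tfrac1q$, give, with $\tfrac1r:=\tfrac2q+\tfrac1p$,
\[
\|E^\ell\|_{L^{q/2}_{x,t}}\lesssim\ell^{2\beta_q},\qquad \|Q^\ell\|_{L^{r}_{x,t}}\lesssim\ell^{2\beta_q+\sigma_p},
\]
\[
\|R^\ell\|_{L^{m}_{x,t}}+\ell\,\|\div R^\ell\|_{L^{m}_{x,t}}\lesssim\ell^{\sigma_p+\beta_q},\qquad \|C^\ell\|_{L^{r}_{x,t}}\lesssim\ell^{2\beta_q+\sigma_p-1},
\]
in complete analogy with \eqref{est_Eell}--\eqref{est_Cell}. (The hypothesis $\tfrac1p+\tfrac2q\le1$ is exactly $r\ge1$, which is what makes $\chi[\theta]$ a well-defined distribution and the pairings below meaningful.)

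I would then fix any $\varphi\in C^\infty_{x,t}$ and use $\langle\chi[\theta],\varphi\rangle=\langle\chi[\theta],\varphi\chi^\delta\rangle$, where $\chi^\delta$ is the cutoff \eqref{chiinfo} supported in the $2\delta$-neighborhood of ${\rm supp}\,\chi[\theta]$. Since $Q^\ell$, $C^\ell$ and $\ol{u}_\ell E^\ell$ (via $\ol{u}_\ell\cdot\nabla E^\ell=\div(\ol{u}_\ell E^\ell)$ and $\|\ol{u}_\ell\|_{L^p}\lesssim1$) all sit in $L^{r}_{x,t}$, the natural Hölder partner for the cutoff is $L^{s}$ with $\tfrac1s=1-\tfrac1r=\tfrac{p(q-2)-q}{pq}$, and the upper Minkowski assumption $\overline{\dim}_{\mathcal{M}}({\rm supp}\,\chi[\theta])\le\gamma$ yields $\|\chi^\delta\|_{L^{s}}+\delta\|\nabla_{x,t}\chi^\delta\|_{L^{s}}\lesssim\delta^{\frac1s(d+1-\gamma-\alpha)}$ for any small $\alpha>0$. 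Pairing the decomposition against $\varphi\chi^\delta$ and integrating the $\partial_t,\div$ terms by parts, the dominant contributions are
\[
|\langle\chi[\theta],\varphi\rangle|\lesssim \ell^{2\beta_q}\,\delta^{\frac1s(d+1-\gamma-\alpha)-1}+\ell^{2\beta_q+\sigma_p-1}\,\delta^{\frac1s(d+1-\gamma-\alpha)}.
\]
The choice $\delta=\ell^{1-\sigma_p}$ balances the two terms and gives $|\langle\chi[\theta],\varphi\rangle|\lesssim\ell^{(1-\sigma_p)\left[\frac{2\beta_q}{1-\sigma_p}-1+\frac1s(d+1-\gamma-\alpha)\right]}$. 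If the bracket is positive, letting $\ell\to0$ forces $\chi[\theta]\equiv0$; as this is an open condition I may send $\alpha\to0$, obtaining the triviality criterion $\tfrac{2\beta_q}{1-\sigma_p}>1-\tfrac{p(q-2)-q}{pq}(d+1-\gamma)$. The claimed inequality is precisely its contrapositive.

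The main obstacle, relative to Theorem \ref{lddiss}, is the simultaneous bookkeeping of the two independent regularity--integrability pairs $(\sigma_p,p)$ for $u$ and $(\beta_q,q)$ for $\theta$: one must verify that the mixed quantities $R^\ell,C^\ell$ obey the stated bounds through the split Hölder exponent $\tfrac1m=\tfrac1p+\tfrac1q$, and — crucially — identify the correct dual exponent $s$ for the cutoff so that the final exponent collapses to $\tfrac{p(q-2)-q}{pq}$. The only genuinely scalar-specific point is the precise form of the coarse-grained identity above (the analogue of \eqref{D_decomp_NS} with no pressure); once it is in hand, the cutoff argument, the dimension estimate, and the optimization run in parallel to the incompressible case.
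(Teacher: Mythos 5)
Your proposal is correct and follows exactly the route the paper indicates: it builds the scalar analogue of the decomposition in Proposition \ref{P:decomposition_NS} (with $E^\ell,Q^\ell,R^\ell,C^\ell$ adapted to the tracer energy and no pressure term), establishes the mixed commutator bounds via Lemma \ref{comlem} with the split $\tfrac1m=\tfrac1p+\tfrac1q$, and then runs the cutoff/Minkowski-dimension and $\delta=\ell^{1-\sigma_p}$ optimization argument verbatim from the proof of Theorem \ref{lddiss}, with the dual exponent $\tfrac1s=1-\tfrac1p-\tfrac2q=\tfrac{p(q-2)-q}{pq}$ correctly identified. The paper itself only cites \cite{DDII25} and states that the result "follows from a similar formula for the scalar anomaly as in Proposition \ref{P:decomposition_NS}," which is precisely what you have carried out, including the same simplifications (Minkowski in place of Hausdorff, absorption of the Lebesgue part) the paper adopts for Theorem \ref{lddiss}.
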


The proof of this result is given in \cite{DDII25}.  It follows from a similar formula for the scalar anomaly as in Proposition \ref{P:decomposition_NS}.

This result immediately recovers the Obukhov--Corrsin and Eyink bounds, which say that non-trivial dissipation requires  $\beta_s\leq \frac{1-\sigma_p}{2}$ for all $p\in [1,\infty]$.
If more is known about the dissipation measure, the above result gives a substantial refinement.  Numerical work on scalar advection by three dimensional turbulent velocity fields \cite{ISS18} suggests that exponents saturate $\zeta_q^\theta:=q\beta_q \to 1.2$ as $q\rightarrow \infty$, approximately, thus quite far from being monofractal. See Figure \ref{scalanom}. We believe that constructions in the spirit of  \cite{DEJI22,AV25,colombo2023anomalous,elgindi2024norm,BSW23,HS} could be made to show the sharpness of this intermittent Obukhov--Corrsin theory. See, in particular, the recent work \cite{HS2}.

\begin{figure}[h!]
  \begin{center}
    \includegraphics[width=0.4\textwidth]{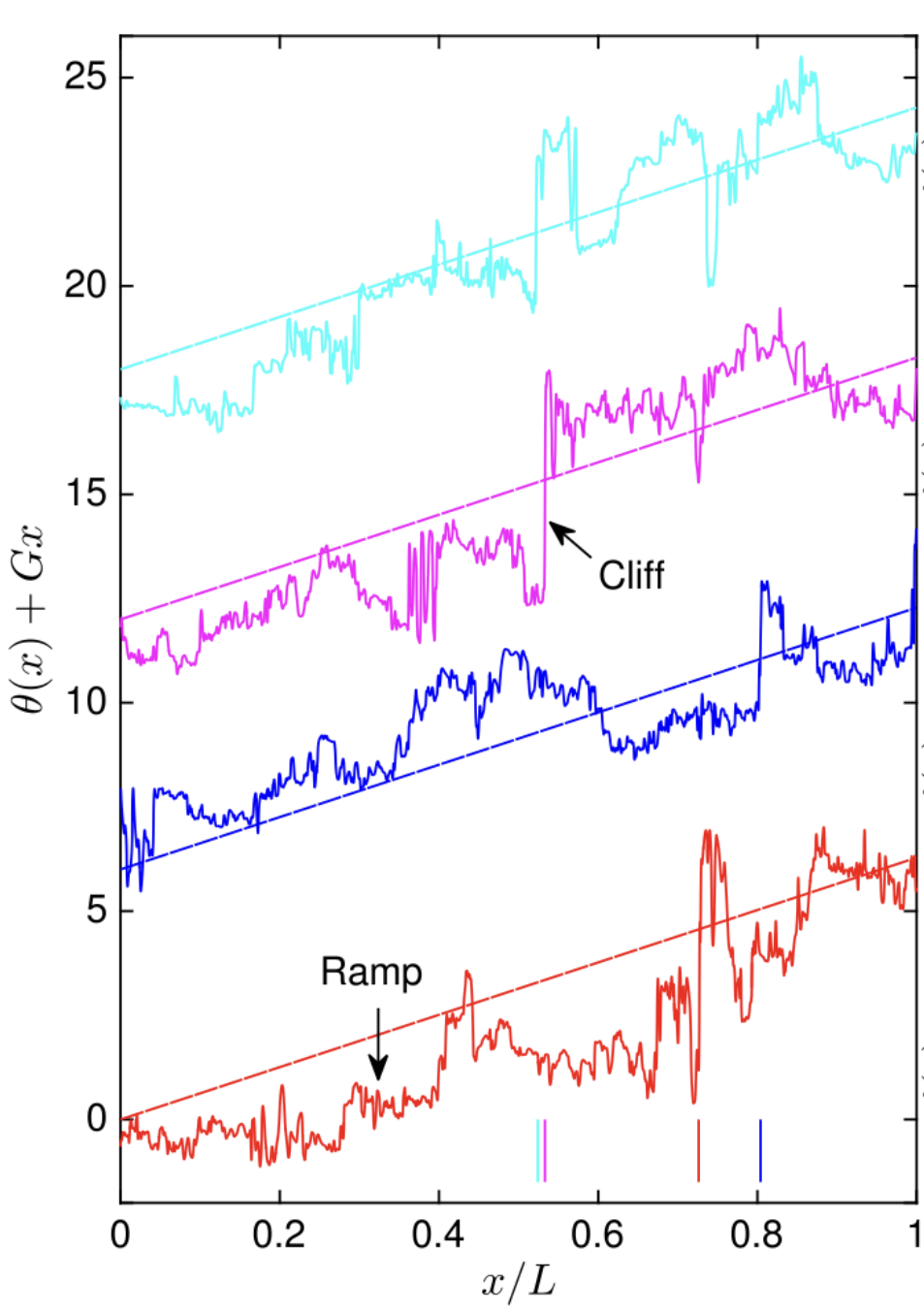}
       \hspace{2mm} 
       {
        \includegraphics[width=0.4\textwidth]{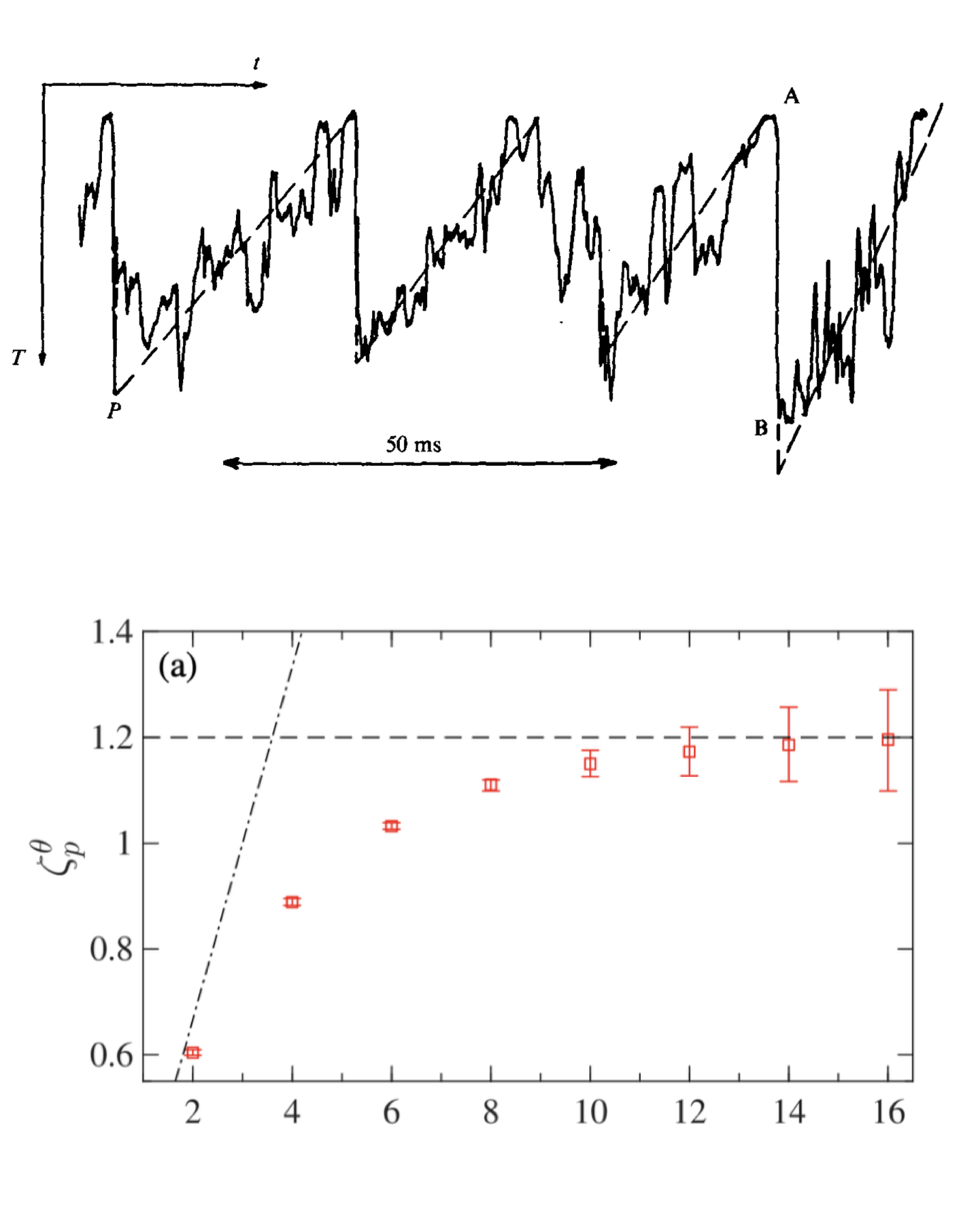}}
  \end{center}
  \caption{The structure of a scalar field in turbulence \cite{SM86,ISS18}. }\label{scalanom}
\end{figure}

An approach which is alternate (and perhaps opposite) to making specific constructions is to study the behavior of the scalar in a random field. Robert Kraichnan  forwarded this idea in his celebrated work \cite{Kr68}, see Figure \ref{kraichfig}.
Specifically, consider the stochastic advection by a  Gaussian random field $u:=u(t,x): \R_+ \times \mathbb{R}^d \to \mathbb{R}^d$, $d \geq 2$, which is spatially colored and white-in-time correlated:%
\begin{equation} \label{eq:Kraichnan} 
\partial_t   \theta_t^\kappa +    u \circ \nabla \theta_t^\kappa = \kappa \Delta \theta_t^\kappa,
\end{equation}
where the symbol $\circ$ denotes that we are interpreting the stochastic integral in the Stratonovich sense.  See \cite{FGV}. The velocity field, being Gaussian, is completely prescribed by its mean (taken zero for simplicity) and covariance
\begin{align*}
    \mathbb{E}[ u(t,x) \otimes u(t',x') ] = (t \wedge t') C(x,x') ,
 \quad
 C : \R^d \times \R^d \to \R^{d \times d}.
\end{align*}
To simplify the situation, one can consider the velocity field to be spatially  \textit{homogeneous}, i.e. $C(x,x')= C(x-x')$, and  \textit{non-degenerate}, i.e. $C(0)$ is a positive definite matrix.  We also assume $u$ is incompressible, although this is not really necessary for many of the results we will mention. When $u$ is spatially smooth the analysis of \eqref{eq:Kraichnan} is classical \cite{Kunita,L85}. However, we would like a model for turbulent transport which requires $u$ be spatially rough: just shy of  $\alpha$--H\"{o}lder in space for $\alpha\in (0,1)$ and not better. This amounts to considering a covariance matrix with the following small-distance asymptotic behavior
\be\label{covassum}
Q(z) := \langle |u(x)-u(x+z)|^2\rangle = 2(C(0) - C(z)) =O( | z |^{2 \alpha}).
\ee
		\begin{figure}[h!]
		\centering
			\includegraphics[width=0.9\textwidth]{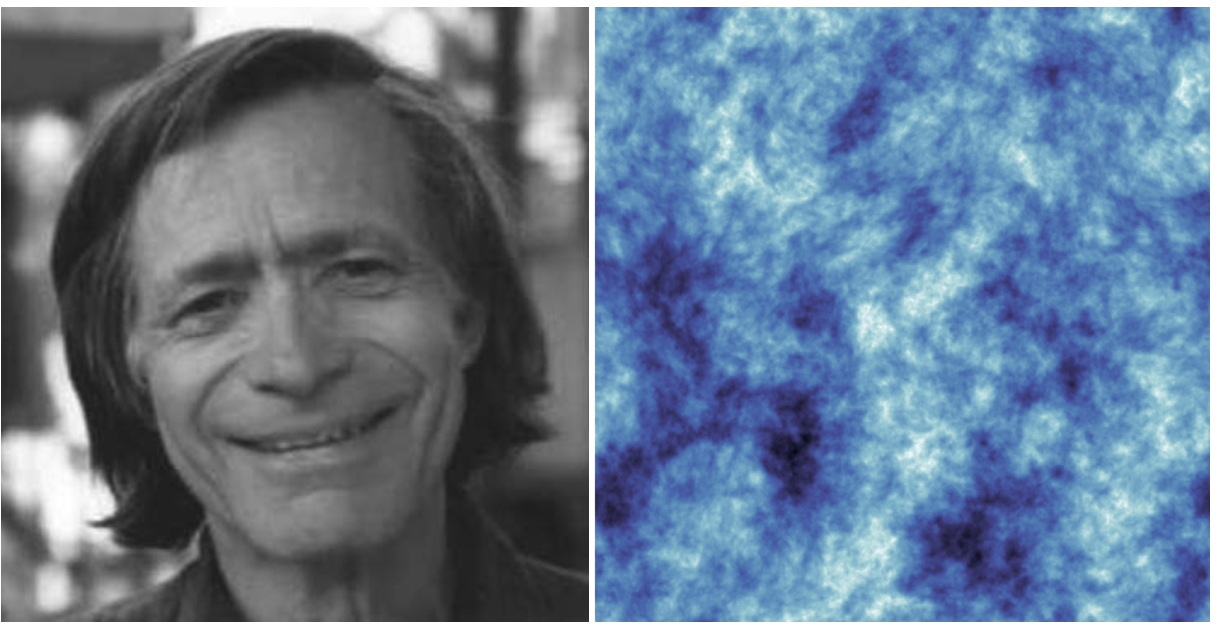} 
					\caption{R. Kraichnan and a realization of his model velocity.} \label{kraichfig}
	\end{figure}

This model and the phenomena associated to it have been studied in great detail by physicists.  I reference to the excellent review \cite{FGV}, and report here only some recent results that make more precise some of these insights. First is the quality of anomalous diffusion, studied by Rowan \cite{R1} and after in \cite{DGP25}
\begin{theorem}[Anomalous Diffusion in the Kraichnan model]\label{adthm}
Let $u$ be a divergence-free Kraichnan velocity with regularity exponent $\alpha\in (0,1)$ in $d$ dimensions.
There exists a constant $C=C(\alpha, d)$ such that 
\be
\mathbb{E} \|\theta^\kappa_t\|_{L^2}^2\lesssim e^{-t/C}\|\theta_0\|_{L^2}^2,
\ee
uniformly in $\kappa\in (0,1]$.
Moreover, the dissipation happens continuously in time.
\end{theorem}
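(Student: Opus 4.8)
The plan is to exploit the one feature that makes the Kraichnan model tractable---that $u$ is delta-correlated in time---to obtain a \emph{closed}, deterministic evolution for the second moments of $\theta^\kappa$, and then to recast the decay of $\mathbb{E}\|\theta^\kappa_t\|_{L^2}^2$ as a spectral-gap statement for a degenerate diffusion in the particle-separation variable. First I would pass to the It\^o formulation: writing the noise as $u\,\rmd t=\sum_k \sigma_k\,\rmd W^k$ with $C_{ij}(z)=\sum_k \sigma_k^i(x)\sigma_k^j(x+z)$, the Stratonovich product becomes the transport martingale $u\cdot\nabla\theta^\kappa$ plus the It\^o correction $\tfrac12 C_{ij}(0)\partial_i\partial_j\theta^\kappa$. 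Applying It\^o's formula to $\|\theta^\kappa_t\|_{L^2}^2$, the enhanced-diffusion drift $-C_{ij}(0)\langle\partial_i\theta^\kappa,\partial_j\theta^\kappa\rangle$ is \emph{exactly cancelled} by the quadratic variation $+C_{ij}(0)\langle\partial_i\theta^\kappa,\partial_j\theta^\kappa\rangle$ of the martingale, leaving the clean identity
\be
\frac{\rmd}{\rmd t}\,\mathbb{E}\|\theta^\kappa_t\|_{L^2}^2 = -2\kappa\,\mathbb{E}\|\nabla\theta^\kappa_t\|_{L^2}^2\le 0 .
\ee
This already exhibits monotone decay and identifies the (anomalous) dissipation as $2\kappa\,\mathbb{E}\|\nabla\theta^\kappa\|^2$, but it gives no $\kappa$-uniform rate; the rate must come from the correlation structure.

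Next I would track the spatially averaged two-point correlation $g_t(r):=\mathbb{E}\int_{\mathbb{T}^d}\theta^\kappa_t(x)\,\theta^\kappa_t(x+r)\,\rmd x$, and observe $\mathbb{E}\|\theta^\kappa_t\|_{L^2}^2=g_t(0)$. Because whiteness in time again makes the generator of the two-point motion $(X_t,Y_t)$ close, and because spatial homogeneity and incompressibility reduce the dynamics to the separation $r=X_t-Y_t$ with no drift, $g_t$ solves the self-adjoint degenerate parabolic equation
\be
\partial_t g_t = \tfrac12\,\nabla_r\cdot\big((2\kappa I+Q(r))\,\nabla_r g_t\big),\qquad Q(r)=2\big(C(0)-C(r)\big),
\ee
on $\mathbb{T}^d$ (the precise multiple of $\kappa$ is immaterial), with $Q(r)\asymp|r|^{2\alpha}$ as $r\to0$ by the covariance assumption \eqref{covassum}. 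After subtracting the deterministically conserved mean (without loss of generality $\int_{\mathbb{T}^d}\theta_0=0$, so $\bar g:=\fint_{\mathbb{T}^d}g_t=0$), the problem reduces to quantifying the return of $g_t$ to $0$, and in particular the decay of the single value $g_t(0)$.

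The heart of the argument is a \emph{$\kappa$-uniform} spectral gap for the Dirichlet form $\mathcal{E}_\kappa(h)=\tfrac12\int_{\mathbb{T}^d}(2\kappa I+Q(r))\,\nabla h\cdot\nabla h\,\rmd r$. Since $Q(r)\gtrsim\min(|r|^{2\alpha},1)$ and the exponent obeys $2\alpha<2$, the weight is mild enough that a weighted Poincar\'e inequality $\|h-\bar h\|_{L^2}^2\lesssim\mathcal{E}_0(h)$ holds with a constant depending only on $(\alpha,d)$; adding $2\kappa I\ge0$ only improves matters, so $\inf_{\kappa\in(0,1]}\lambda_1(\kappa)\ge\lambda_*>0$ and hence $\|g_t\|_{L^2}\le e^{-\lambda_* t}\|g_0\|_{L^2}$. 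To convert this $L^2$ bound into control of the pointwise value $g_t(0)$ at the degenerate point I would establish the companion weighted Sobolev/Nash inequality, yielding an ultracontractive $L^2\to L^\infty$ smoothing estimate for the semigroup, uniform in $\kappa$; splitting $g_t=e^{(t/2)\mathcal{A}_\kappa}g_{t/2}$ and using $\|g_0\|_{L^2}\lesssim\|\theta_0\|_{L^2}^2$ then gives
\be
\mathbb{E}\|\theta^\kappa_t\|_{L^2}^2 = g_t(0)\lesssim e^{-t/C}\,\|\theta_0\|_{L^2}^2
\ee
uniformly in $\kappa\in(0,1]$ (for small $t$ one simply invokes the monotonicity $g_t(0)\le\|\theta_0\|_{L^2}^2$ and absorbs the loss into $C$).

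Finally, the continuity-in-time of the dissipation follows from the energy identity above: each profile $e_\kappa(t):=\mathbb{E}\|\theta^\kappa_t\|_{L^2}^2$ is $C^1$ and nonincreasing, and the uniform exponential bound together with monotonicity forces the limit $e(t):=\lim_{\kappa\to0}e_\kappa(t)$ to be continuous, so the limiting time-dissipation $-\rmd e$ charges no instant. I expect the main obstacle to be precisely the degenerate weighted functional inequalities near the coalescence point $r=0$: this is exactly where the $L^2$ norm $g_t(0)$ is read off, and it is where the hypothesis $\alpha<1$ is essential. For $\alpha<1$ the separation diffusion leaves $r=0$ instantaneously---the probabilistic signature of spontaneous stochasticity---which keeps the weight integrable and the Poincar\'e and ultracontractivity constants $\kappa$-uniform; for $\alpha\ge1$ the point $r=0$ becomes absorbing and the whole mechanism, anomalous diffusion included, degenerates.
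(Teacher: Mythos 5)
The paper does not actually prove Theorem \ref{adthm} in the text --- it is stated and attributed to Rowan \cite{R1} and \cite{DGP25} --- so there is no in-paper argument to match line by line. Your route is nonetheless the standard and correct one for this model, and it is consistent with the machinery the paper \emph{does} display: the closed equation you derive for $g_t(r)$ is exactly the equation for $S^\kappa_t(z)=\mathbb{E}\|\delta_z\theta^\kappa_t\|_{L^2_x}^2=2\big(g_t(0)-g_t(z)\big)$ used in the ``Idea behind proof of Yaglom's law,'' and your cancellation of the It\^o correction against the quadratic variation, giving $\frac{\rmd}{\rmd t}\mathbb{E}\|\theta^\kappa_t\|_{L^2}^2=-2\kappa\,\mathbb{E}\|\nabla\theta^\kappa_t\|_{L^2}^2$, is the correct starting point. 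Two remarks on the main body. First, the weighted Poincar\'e/Nash inequalities require the \emph{lower} bound $Q(r)\gtrsim|r|^{2\alpha}$ near $r=0$ (and $Q(r)>0$ for $r\neq 0$ on the torus), whereas \eqref{covassum} as written only asserts $Q(z)=O(|z|^{2\alpha})$; this matching lower bound is intended (the field is ``not better'' than $\alpha$-H\"older) but it is the hypothesis doing all the work, so it should be stated. Second, you could bypass ultracontractivity entirely: for mean-zero data $\fint g_t=0$, so $g_t(0)=\fint\big(g_t(0)-g_t(r)\big)\rmd r=\tfrac{1}{2}\fint S^\kappa_t(r)\,\rmd r$, and the $L^1$ (or $L^2$) decay of $S^\kappa_t$ from the spectral gap already closes the estimate; this is closer to what \cite{R1,DGP25} do. Your heavier $L^2\to L^\infty$ route is also viable for $\alpha\in(0,1)$, $d\ge 2$, via Caffarelli--Kohn--Nirenberg, and you correctly identify the degeneracy at $r=0$ as the crux.

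There is one genuine gap, in the last step. The claim that ``the uniform exponential bound together with monotonicity forces the limit $e(t)=\lim_{\kappa\to0}e_\kappa(t)$ to be continuous'' is false as stated: a pointwise limit of continuous, nonincreasing, uniformly bounded functions can perfectly well develop a jump (take smooth approximations of a step function; neither the upper bound $e_\kappa(t)\le Ce^{-t/C}e_\kappa(0)$ nor monotonicity excludes this). To conclude that the limiting dissipation charges no instant you need \emph{equicontinuity} of $t\mapsto e_\kappa(t)$, uniformly in $\kappa$. This is available from the tools you already set up --- e.g.\ the uniform ultracontractive smoothing of the semigroup $e^{t\mathcal{A}_\kappa}$ gives a $\kappa$-independent modulus of continuity for $t\mapsto g_t(0)=e_\kappa(t)$ on $[\tau,T]$ for any $\tau>0$, or alternatively one bounds the flux through scale $\ell$ by $\fint \mathbb{E}\|\delta_{\ell\hat z}\theta_t\|^2_{L^2}/\ell^{2-2\alpha}\lesssim e_\kappa(t)$ as in \eqref{idealdefect}--\eqref{eq:regularization_besov_intro} to get a uniform bound on $|e_\kappa'|$ in terms of $e_\kappa$ --- but one of these must actually be invoked; monotonicity alone does not do it.
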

Thus, anomalous dissipation is a general phenomenon in this model. In \cite{DGP25} we study also the local energy balance for \eqref{eq:Kraichnan}, showing that the solution $\theta=\lim_{\kappa\to 0} \theta^\kappa$ (which is unique, for reasons special to the Kraichnan model) admits a local energy balance that is governed by the stochastic PDE
\begin{equation*}
\partial_t  |\theta_t|^2   +u \circ \nabla  | \theta_t |^2  = - \chi[\theta_t],
\end{equation*}
where $\chi[\theta]$ is a non-negative distribution.  This is analogous to the Duchon-Robert distribution for the Euler equations \cite{DR00} and describes exactly how fluctuations of $\theta$ dissipate in space and time.
It is shown that
$
\chi[\theta](\rmd t, \rmd x) = \lim_{\kappa\to 0} \kappa |\nabla \theta_t^\kappa(x)|^2 \rmd t \rmd  x,
$
where the equality is interpreted in the sense of distributions, and the limit in probability.  In view of this identification $\chi[\theta]$ is, in fact, almost surely a non-negative Radon measure.  Assuming that the velocity is divergence-free, we can derive an explicit formula for  $\mathbb{E}\chi[\theta]$.

\begin{theorem} \label{thm:regularity_Kraichnan}
Let $u$ be a divergence-free Kraichnan velocity with regularity exponent $\alpha\in (0,1)$ in $d$ dimensions.
Then  for every non-zero $\theta_0 \in L^2_x$ and $t \geq 0$,  the unique solution of \eqref{eq:Kraichnan} satisfies, for some positive constant ${c}={c}(d,\alpha)$, that
\begin{align} \label{idealdefect}
\mathbb{E}\chi[\theta](\rmd t, \rmd x) ={c} \lim_{\ell \rightarrow 0}
\oint_{\mathbb{S}^{d-1}} 
 \frac{\mathbb{E}| \delta_{\ell \hat{z}} \theta_t(x)|^2}{\ell^{2- 2\alpha}} \sigma ( \mathrm{d} \hat{z}) \rmd  t \rmd x.
\end{align}
\end{theorem}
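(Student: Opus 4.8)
\emph{Proof proposal.} The plan is to exploit the defining feature of the Kraichnan velocity — that it is Gaussian and white in time — which forces the two–point correlation $H^\kappa(t,x,r):=\mathbb{E}[\theta^\kappa_t(x)\,\theta^\kappa_t(x+r)]$ to obey a \emph{closed, deterministic} parabolic equation. First I would apply It\^o's formula to the product $\theta^\kappa_t(x)\theta^\kappa_t(x+r)$ and use Gaussian integration by parts (equivalently, the Stratonovich–to–It\^o correction for the covariance $\mathbb{E}[u_i(t,\cdot)\otimes u_j(s,\cdot)]=\delta(t-s)C_{ij}(\cdot)$) to obtain
\begin{equation*}
\partial_t H^\kappa = \kappa\big(\Delta_x - 2\,\partial_x\!\cdot\!\partial_r + 2\Delta_r\big)H^\kappa + D_{ij}(0)\,\partial_{x_i}\partial_{x_j}H^\kappa + 2\,d_{ij}(r)\,\partial_{r_i}\partial_{r_j}H^\kappa - 2\,d_{ij}(r)\,\partial_{x_i}\partial_{r_j}H^\kappa,
\end{equation*}
where $D=\tfrac12 C$ and $d_{ij}(r):=D_{ij}(0)-D_{ij}(r)$ is, by the regularity hypothesis \eqref{covassum}, homogeneous of degree $2\alpha$ to leading order as $r\to0$. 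Setting $r=0$ gives the single–point variance balance $\partial_t g^\kappa = \kappa\Delta_x g^\kappa + D_{ij}(0)\partial_{x_i}\partial_{x_j}g^\kappa - 2\,\mathbb{E}\chi^\kappa$, with $g^\kappa(t,x)=H^\kappa(t,x,0)$ and $\mathbb{E}\chi^\kappa=\kappa\,\mathbb{E}|\nabla\theta^\kappa|^2$.

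Next I would assemble the exact scale–by–scale budget for $\Sigma^\kappa(t,x,r):=\mathbb{E}|\delta_r\theta^\kappa_t(x)|^2=g^\kappa(t,x+r)+g^\kappa(t,x)-2H^\kappa(t,x,r)$. Subtracting the two equations above, the key term is the eddy–diffusive flux $d_{ij}(r)\partial_{r_i}\partial_{r_j}\Sigma^\kappa$, which plays exactly the role of the Duchon--Robert flux for Euler: it is the transfer of scalar variance through scale $|r|$. Because $d_{ij}(r)=O(|r|^{2\alpha})$ vanishes at $r=0$, this flux survives the limit $\kappa\to0$ even though the bare term $\kappa\Delta_r\Sigma^\kappa$ does not. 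Passing to the limit — using the uniqueness and strong $\kappa\to0$ convergence special to the Kraichnan model together with the anomalous–dissipation bound of Theorem \ref{adthm}, which guarantees $\mathbb{E}\chi^\kappa\to\mathbb{E}\chi[\theta]\not\equiv0$ and $\Sigma^\kappa\to\Sigma$ — and then letting $r\to0$ (so that $\partial_t\Sigma(\cdot,r)\to\partial_t\Sigma(\cdot,0)=0$ and the $\kappa$–diffusion, cross, and large–scale $D_{ij}(0)$ terms drop out) yields the pointwise identity
\begin{equation*}
\mathbb{E}\chi[\theta](t,x) = \tfrac12\lim_{r\to0} d_{ij}(r)\,\partial_{r_i}\partial_{r_j}\Sigma(t,x,r)
\end{equation*}
in the sense of distributions in $(t,x)$. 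The final step converts this second–derivative flux into the stated ratio: since the leading part of $d_{ij}(r)$ is homogeneous of degree $2\alpha$ with tensorial form fixed by \eqref{covassum} and $\nabla\!\cdot u=0$, I would integrate by parts in $r$ on small spheres to move both $r$–derivatives off $\Sigma$ and onto the homogeneous weight, turning the operator into a pure multiplier after angle averaging; the boundary terms vanish as the inner radius tends to zero because $\Sigma(t,x,r)=O(|r|^{2-2\alpha})$. Matching the degree ($2\alpha+(2-2\alpha)-2=0$) shows the limit is finite and equals $c(d,\alpha)\lim_{\ell\to0}\oint_{\mathbb{S}^{d-1}}\Sigma(t,x,\ell\hat z)\,\ell^{-(2-2\alpha)}\,\sigma(\rmd\hat z)$, with $c$ the explicit constant produced by contracting $d_{ij}$ against the Hessian of $|r|^{2-2\alpha}$ and averaging over the sphere. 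This is the claimed formula \eqref{idealdefect}.

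The hard part will be the two interchanges of limits. One must show that the eddy–diffusive flux of $\Sigma^\kappa$ has a genuine $\kappa\to0$ limit uniformly for small $r$ (i.e. that all the variance flux funnels through the inertial scales and none is lost to a dissipative anomaly in the $r$–variable), and that the subsequent $r\to0$ and $\ell\to0$ limits exist rather than merely the $\limsup$ and $\liminf$; this is where the special integrability and uniqueness theory of the Kraichnan model, and the a priori $\Sigma=O(|r|^{2-2\alpha})$ bound dual to the $\alpha$–H\"older velocity, are essential. A secondary difficulty is the genuinely \emph{local}, non–spatially–homogeneous setting: one must verify that the cross term $d_{ij}(r)\partial_{x_i}\partial_{r_j}$ and the large–scale pieces $g(t,x\pm r)$ contribute nothing to the $\ell\to0$ limit, which follows from their being higher order in $|r|$ but requires care to make rigorous as a distributional statement in $x$.
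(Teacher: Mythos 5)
Your proposal follows essentially the same route as the paper's own treatment: the sketch given as the ``Idea behind proof of Yaglom's law'' likewise closes the equation for the second-order structure function using the white-in-time Gaussian structure, identifies the scale-to-scale transfer with the divergence-form eddy diffusion $\nabla_z\cdot(Q\nabla_z S^\kappa)$, and converts it into the angle-averaged ratio $S/\ell^{2-2\alpha}$ via the divergence theorem together with the homogeneity $Q(z)\propto |z|^{2\alpha}$. The only substantive differences are that you carry the local-in-$x$ bookkeeping needed for the measure-valued statement \eqref{idealdefect} (the paper only sketches the $x$-integrated version and defers the full local proof to \cite{DGP25}), and that your intermediate pointwise identity involving $d_{ij}(r)\,\partial_{r_i}\partial_{r_j}\Sigma$ should be kept in divergence form --- exactly as you then do when integrating by parts on spheres --- since $\Sigma$ need not be twice differentiable in $r$ when it scales like $|r|^{2-2\alpha}$.
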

Formula \eqref{idealdefect} is a manifestation of Yaglom's law for scalar turbulence \cite{MY}, and can be regarded as an analogue of Duchon-Robert's result for the Kraichnan model. Upon integration,  \eqref{idealdefect} implies (from the energy balance) that
\begin{align} \label{eq:regularization_besov_intro}
{c} \lim_{\ell \rightarrow 0}
\int_0^t
\oint_{\mathbb{S}^{d-1}} 
 \frac{\mathbb{E}\| \delta_{\ell \hat{z}} \theta_s \|^2_{L^2_x}}{\ell^{2- 2\alpha}} \sigma ( \mathrm{d} \hat{z}) \rmd  s
= \|\theta_0\|_{L^2_x}^2 - \mathbb{E} \|\theta_t\|_{L^2_x}^2
\in (0,\infty).
\end{align}
From \eqref{idealdefect}, one immediately deduces that
\begin{align}\label{OCthkr}
\int_0^T\mathbb{E} \| \theta_t \|_{B^{(1 - \alpha)+}_{2,\infty}}^2\rmd  t < \infty \quad \Rightarrow\quad
    \mathbb{E} [\| \theta_t \|_{L^2_x}^2] 
=
\mathbb{E} [\| \theta_s \|_{L^2_x}^2] \quad \text{ for all } s, t \in [0,T].
\end{align}
The threshold regularity exponent $1-\alpha$ corresponds to the Obukhov-Corrsin regularity theory \cite{Ob49, C51} for the Kraichnan model, representing the minimal requirement for anomalous dissipation to take place.  In fact solutions possess \emph{exactly the borderline regularity} allowing dissipation of mean energy. Namely, the "Yaglom law"  \eqref{eq:regularization_besov_intro} shows (see also \cite{R2})
\be\label{regu}
\int_0^T\mathbb{E} \| \theta_t \|_{B^{1 - \alpha}_{2,\infty}}^2\rmd  t  <\infty, 
\ee
with the Besov-like spaces introduced in Lemma \ref{Slem}. In view of Theorem \ref{adthm} and \eqref{OCthkr}, the solution has sharply $1-\alpha$ derivatives in this $L^2$ based Besov space, at least in the mean sense, thereby establishing the analogue of Conjecture \ref{conj45} for the Kraichnan model.   This kind of regularization was established first in the work \cite{GGM24} by different means, and it shows the \emph{sharpness} of the Obukhov-Corrsin regularity theory at the level of $L^2$, while allowing for intermittency \cite{Kr94}.   In fact, it is remarkable that in the Kraichnan model, an asymptotic approximation of the structure function scaling exponents $S_p^\theta(\ell):= \mathbb{E} |\delta_\ell \theta|^p \sim \ell^{\zeta_p^\theta}$ have been computed \cite{bernard1996anomalous}
\be
\zeta_p^\theta = p(1-\alpha) - \frac{p(p-2)}{d+2} \alpha +\mathcal{O}(\alpha^2).
\ee
Intermittency vanishes as $\alpha\to 0$, where the scalar field is expected to be Gaussian.  It also vanishes at $p=2$, where the exact Yaglom law holds. 
It would be very interesting to establish a rigorous  understanding of these results.

\begin{proof}[Idea behind proof of Yaglom's law \eqref{eq:regularization_besov_intro}]
Let us present a heuristic computation to give an intuition about the anomalous regularization mechanism. Let $\theta^\kappa$ be a solution to \eqref{eq:Kraichnan} and consider its second-order structure function $S^\kappa_t(z)=\EE[\| \delta_z {\theta}^\kappa_t\|_{L^2_x}^2]$.
For simplicity, let the velocity be statistically self-similar, so that  $Q(z) \propto |z|^{2\alpha}$.
Applying It\^o's formula one can see that $S^\kappa$ formally solves the linear SPDE
\begin{align*}
    \partial_t S^\kappa_t(z)
    & =\nabla\cdot (Q \nabla S^\kappa_t)(z)-2\kappa \EE[\|\nabla\delta_z\theta^\kappa_t\|_{L^2_x}^2],
\end{align*}
since $Q$ is divergence free. Integrating over the ball $B_\ell=\{|z|\leq \ell\}$, one finds
\begin{align*}
    \frac{\rmd }{\rmd  t} \fint_{B_\ell} S^\kappa_t(z) \rmd  z
    = \fint_{B_\ell}\nabla\cdot (Q \nabla S^\kappa_t)(z) \rmd  z -2\kappa \fint_{B_\ell} \EE[\|\nabla\delta_z\theta^\kappa_t\|_{L^2_x}^2] \rmd  z.
\end{align*}
Applying the Divergence Theorem and exact self-similarity $Q(z) \propto |z|^{2\alpha}$, one finds
\begin{align*}
\fint_{B_\ell}\nabla\cdot (Q \nabla S^\kappa_t)(z) \rmd  z
    &\sim 
    \ell^{-d} \int_{\partial B_\ell} \hat z\cdot Q(z)\nabla S^\kappa_t(z) \sigma(\rmd z)
    \sim 
    \ell^{2\alpha-d} \int_{\partial B_\ell} \hat z\cdot \nabla S^\kappa_t(z) \sigma(\rmd z)\\
    & \sim \ell^{2\alpha-d} \frac{\rmd}{\rmd \ell} \int_{\partial B_\ell} S^\kappa_t(\hat y) \sigma(\rmd \hat y)\\
    & \sim
    \ell^{2\alpha-1} \frac{\rmd}{\rmd \ell} \fint_{\partial B_\ell} S^\kappa_t(\hat y) \sigma(\rmd \hat y)
    +(d-1)\ell^{2\alpha-2} \fint_{\partial B_\ell} S^\kappa_t(\hat y) \sigma(\rmd \hat y).
\end{align*}
As the last term on the right-hand side is positive, we can obtain a lower bound by dropping it.
Rearranging the above relations and 
integrating in time we obtain
\begin{align*}
    \frac{\rmd}{\rmd \ell}\int_0^T \oint_{\mathbb{S}^{d-1}} S^\kappa_t(\ell \hat y) \sigma(\rmd \hat y) \rmd  t
    &\lesssim 
    \ell^{1-2\alpha} \Big[ \fint_{B_\ell} S^\kappa_T(z) \rmd  z - \fint_{B_\ell} S^\kappa_0(z) \rmd  z\Big] + 2\kappa \int_0^T I^2_t \rmd  t \\
    &\lesssim \ell^{1-2\alpha} \sup_{z\in\R^d} \Big( S^\kappa_T(z) + 2\kappa \fint_{B_\ell} \int_0^T \EE[\|\nabla\delta_z\theta^\kappa_t\|_{L^2_x}^2] \rmd  t \rmd  z\Big).
\end{align*}
Integrating in $\ell$ and using triangular inequality, we arrive at
\begin{align*}
    \int_0^T \oint_{\mathbb{S}^{d-1}} \frac{\EE[\| \delta_{\ell \hat{y}}  \theta^\kappa_t\|_{L^2_x}^2]}{\ell^{2-2\alpha}} \sigma(\rmd \hat y)
    \lesssim \EE[\|\theta^\kappa_T\|_{L^2_x}^2] + 2\kappa \int_0^T \| \nabla {\theta}^\kappa_t\|_{L^2_x}^2] \rmd  t
    = \|\theta_0\|_{L^2_x}^2,
\end{align*}
where in the last step we used the diffusive energy balance.
The resulting estimate is uniform in $\kappa$, thus allowing to take the limit as $\kappa\to 0^+$ and obtain a corresponding statement for $\theta$ solving \eqref{eq:Kraichnan}.
\end{proof}

In summary, in the case of the Kraichnan model we have found that \emph{anomalous diffusion is a mechanism for anomalous regularization}.  This is indeed the intuition behind Conjecture \ref{conj45} for hydrodynamic turbulence.
We now ask the question: what is the mechanism for anomalous dissipation? For  scalar transport, we can answer this. It relates to non-uniqueness of particle trajectories of the velocity field

\begin{theorem}[Anomalous Diffusion and Spontaneous Stochasticity]\label{adthm}
Let $u$ be a divergence-free velocity. Suppose a passive scalar driven by $u$ exhibits anomalous dissipation. Then, the integral curves of $u$ (backwards in time) are non-unique for a positive measure set of initial conditions.
\end{theorem}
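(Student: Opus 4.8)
The plan is to route the argument through the stochastic (Feynman--Kac) representation of $\theta^\kappa$ and to read non-uniqueness of limiting trajectories off the failure of the transition kernels to concentrate. First I would fix a time $t\in(0,T]$ and, using that $u$ is divergence-free, write
\[
\theta^\kappa(x,t)=\mathbb{E}\big[\theta_0(Y^{\kappa,x}_t)\big]=\int_M \theta_0(y)\,p^\kappa_t(x,\rmd y),
\]
where $Y^{\kappa,x}$ is the backward stochastic characteristic solving $\rmd Y_s=u(Y_s,t-s)\,\rmd s+\sqrt{2\kappa}\,\rmd B_s$ with $Y_0=x$, and $p^\kappa_t(x,\cdot)=\mathrm{Law}(Y^{\kappa,x}_t)$ is its endpoint law at physical time $0$. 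More useful is the lift to path space: let $Q^\kappa_x$ denote the law of the whole path $s\mapsto Y^{\kappa,x}_s$ on $C([0,t];M)$, so that $p^\kappa_t(x,\cdot)$ is its time-$t$ marginal. Mass conservation for divergence-free $u$ makes the kernel doubly stochastic, $\int_M p^\kappa_t(x,\rmd y)\,\rmd x=\rmd y$.

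The second step converts anomalous diffusion into quantitative spreading of these kernels. Applying Jensen's inequality pointwise in $x$ and integrating, the doubly-stochastic property yields the exact identity
\[
\|\theta_0\|_{L^2}^2-\|\theta^\kappa(\cdot,t)\|_{L^2}^2=\int_M \mathrm{Var}_{p^\kappa_t(x,\cdot)}(\theta_0)\,\rmd x,
\]
whose left-hand side equals $2\int_0^t\!\int_M \chi^\kappa[\theta^\kappa]\,\rmd x\,\rmd s$ by the scalar energy balance associated to \eqref{chilocdissform}. Thus the anomalous diffusion hypothesis \eqref{anomalousdiff} forces
\[
\liminf_{\kappa\to0}\int_M \mathrm{Var}_{p^\kappa_t(x,\cdot)}(\theta_0)\,\rmd x>0
\]
for a positive-measure set of times $t$. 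Since $\mathrm{Var}_{p}(\theta_0)=0$ exactly when $p$ is a Dirac mass (for $\theta_0$ non-constant), the backward transition measures cannot all collapse to points as $\kappa\to0$: on a set of $x$ of positive measure the spread persists.

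The final step passes to the inviscid limit on path space and identifies the limit with non-unique integral curves. On the compact domain the family $\{Q^\kappa_x\}_{\kappa}$ is tight (bounded drift, vanishing noise), so along a subsequence $Q^\kappa_x\rightharpoonup Q^0_x$ for a.e. $x$. Passing to the limit in the associated martingale problem, where the diffusion coefficient $\sqrt{2\kappa}\to0$, shows $Q^0_x$ is concentrated on solutions of $\dot Y_s=u(Y_s,t-s)$, i.e. on backward integral curves of $u$ issuing from $x$. If for a.e. $x$ this limit were a Dirac mass on a single such curve, its endpoint marginal would be a Dirac and $\mathrm{Var}_{p^0_t(x,\cdot)}(\theta_0)=0$; by lower semicontinuity of the variance under weak convergence (for $\theta_0$ continuous, the general $L^\infty$ case being handled by approximation) this contradicts the persistent lower bound from the second step. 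Hence $Q^0_x$ charges at least two distinct integral curves of $u$ for $x$ in a positive-measure set, which is precisely backward non-uniqueness, i.e. spontaneous stochasticity. The main obstacle is exactly this last identification: showing that the limiting path measures are genuinely supported on integral curves of $u$ at the low regularity where anomalous diffusion lives, where $u$ is typically only H\"older or $L^p$ and the ODE need not make classical sense. This requires a superposition/stability result for the martingale problem as $\kappa\to0$ together with a suitable weak notion of solution to $\dot Y=u(Y,\cdot)$; controlling tightness and variance semicontinuity uniformly in $x$ is the delicate technical core.
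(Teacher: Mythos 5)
Your proposal is correct and follows essentially the same route as the paper: the Feynman--Kac representation by backward stochastic characteristics, double stochasticity of the transition kernels from $\nabla\cdot u=0$, the Lagrangian fluctuation--dissipation identity $\|\theta_0\|_{L^2}^2-\|\theta^\kappa_t\|_{L^2}^2=\int_M \mathrm{Var}_{p^\kappa_t(x,\cdot)}(\theta_0)\,\rmd x$, and the persistence of this variance as $\kappa\to 0$ forcing the limiting path measures to charge more than one backward integral curve on a positive-measure set of $x$. The paper itself only sketches this argument and defers the technically delicate step --- tightness on path space and identification of the vanishing-noise limits with genuine integral curves of the rough field $u$ --- to the cited references, which is precisely the step you correctly single out as the core difficulty (modulo a harmless sign convention in the drift of your backward characteristics, which should be $-u(Y_s,t-s)$).
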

For a more precise statement and its proof, see \cite{DE17a,DE17b} and Rowan \cite{R1}.  The discovery of the general phenomenon, however, is due to Bernard, Gaw\c{e}dzki and Kupiainen \cite{bernard1996anomalous1}. The basic idea is this.  If one introduces stochastic trajectories 
\be\label{stochtraj}
\rmd X_t(a) = u(X_t(a),t) \rmd t+ \sqrt{2\kappa} \rmd W_t,
\ee
then, by the Feynman-Kac representation, the scalar field can be constructed via the inverse flow as an average
\be
\theta_t^\kappa(x) = \mathbb{E} [\theta_0 (X_t^{-1}(x))].
\ee
The scalar dissipation is given by a \emph{Lagrangian fluctuation--dissipation relation} \cite{CCMV04, DE17a, DE17b}
\be
\kappa \int_0^T \|\nabla \theta_t^\kappa\|_{L^2}^2 \rmd t = \int_M {\rm Var}  [\theta_0 (X_t^{-1}(x))]\ \rmd x.
\ee
This relation makes clear the connection between non-uniqueness and anomalous diffusion.  If anomalous diffusion occurs, the variance of the (fixed) initial conditions sampled by the backwards flow must remain non-trivial as $\kappa\to 0$ for a positive measure set of $x\in M$.  This can only happen if the ideal particle trajectories passing through a positive measure of such points are non-unique. 

In the Kraichnan model where the phenomenon was discovered, this process can be more explicitly studied.  Indeed, for the homogeneous isotropic model, one can show that $\rho_t(a,b) := |X_t(a) - X_t(b)|$ shares a law of the one-dimensional drift diffusion process
\be
\rmd \rho_t = b(\rho_t) \rmd t + \sqrt{2 \sigma(\rho_t)} \rmd B_t + \sqrt{2\kappa} \rmd W_t. \qquad \rho_0= |a-b|.
\ee
Here $B_t$ is the noise coming from the Kraichnan velocity field and $W_t$ is an independent noise arising from modeling dissipation as in \eqref{stochtraj}.  The following occurs
\begin{itemize}
\item  if the velocity is Lipschitz in space, then $\rho=0$ is an absorbing point for the diffusion. This implies particle trajectories are unique.
\item if the velocity is $\alpha$--H\"older, then $b(\rho),\sigma(\rho)\sim \rho^{\alpha}$ and the process $\rho_t$ is instantaneously reflected at $\rho=0$.  This implies non-uniqueness of trajectories. 
\end{itemize}
The latter case of non-uniqueness has been named \emph{spontaneous stochasticity} \cite{CGV03}.
These results were rigorously proved by Le Jan and Raimond \cite{LR1,LR2}, and can be visualized in Figure \ref{kraichvis} which shows an ensemble of noisy trajectories with small $\kappa$ in a Lipschitz and H\"older velocity, respectively.

\begin{figure}[h!]
  \begin{center}
    \includegraphics[width=0.47\textwidth]{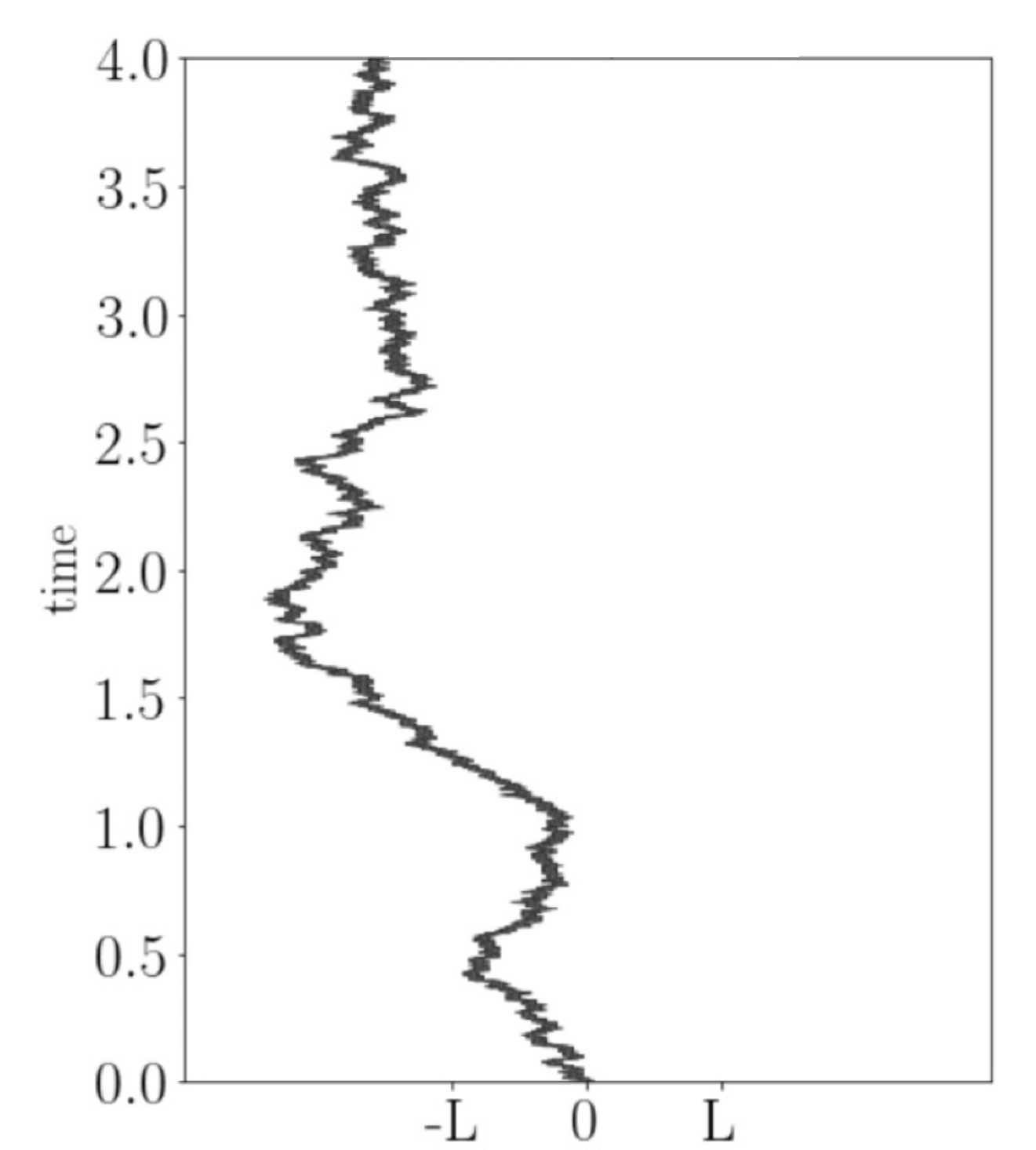}
       \hspace{0mm} 
       {
        \includegraphics[width=0.5\textwidth]{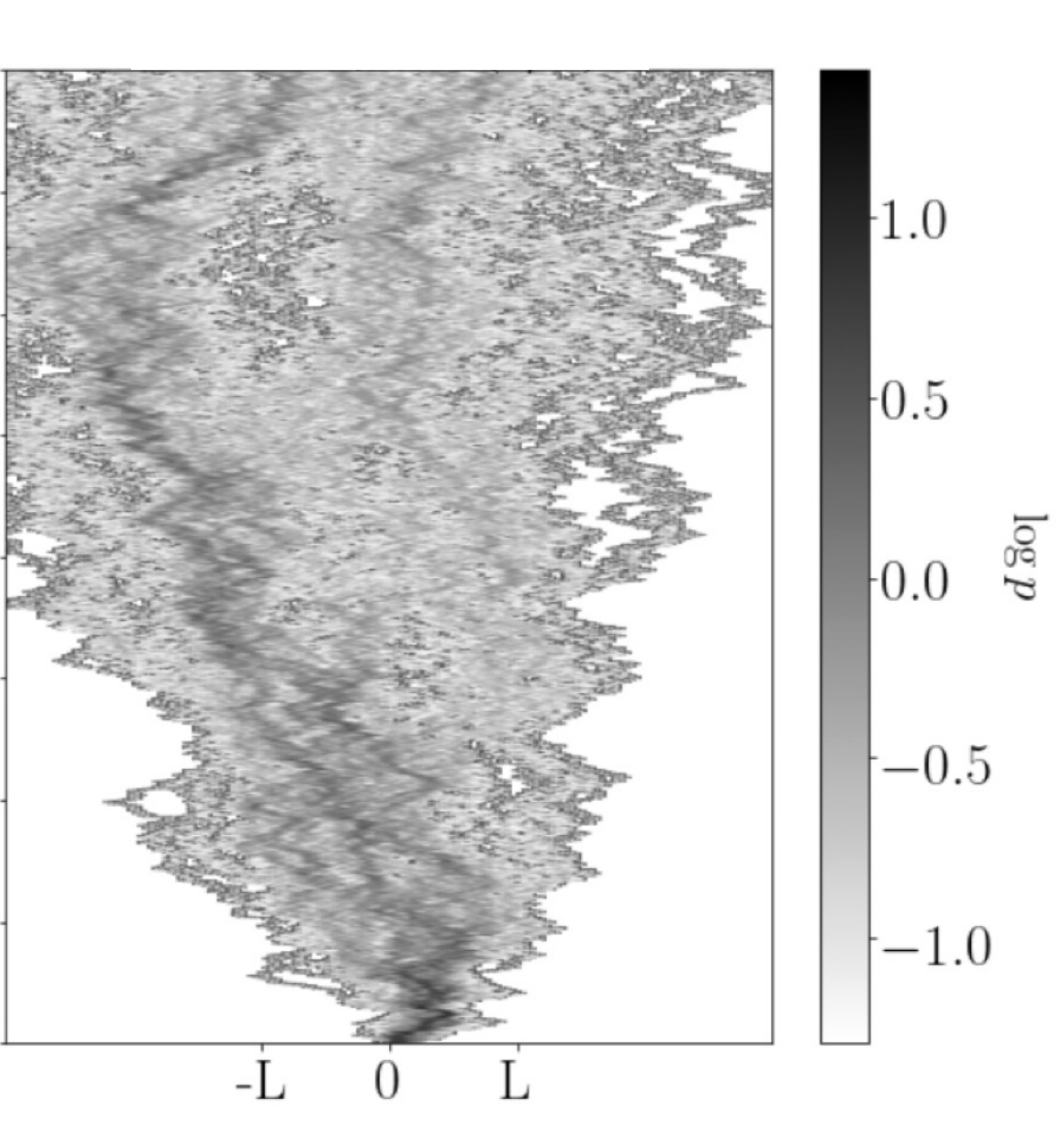}}
  \end{center}
  \caption{Determinism/spontaneous stochasticity of particle trajectories in a smooth/rough Kraichnan velocity field.} \label{kraichvis}
\end{figure}

\section{Taylor, Richardson and the motion of particles}\label{particles}

We saw in the last section that a property of Lagrangian (or particle) trajectories:
\be
\dot{X}_t(a) =u(X_t(a),t) \qquad X_0(a)=a,
\ee
 namely spontaneous stochasticity, is responsible for the phenomenon of anomalous dissipation in some model problems (both for Burger \cite{ED15} and passive scalars \cite{bernard1996anomalous1,DE17a,DE17b}). 
 
 Should we expect this phenomenon in real turbulence?   One of the first investigations on particles motion in turbulence is due to G.I. Taylor \cite{T21} (1921), who studied how a single particle diffuses at long time.  Taylor predicted that a single particle acts diffusively at long times with effective diffusivity $D$, e.g. with the Lagrangian velocity $v_t(a):= \dot{X}_t(a)$, 
 \be
\frac{\langle | X_t(a) - a|^2\rangle}{2t}  \ \stackrel{t \to \infty}{\longrightarrow} \  D \quad \text{where}\quad D := \int_0^\infty\langle v_\tau \cdot v_0\rangle\   \rmd \tau,
 \ee
 provided the latter limit exists.
 Here $\langle\cdot \rangle$ denotes average over the label $a\in \mathbb{T}^d$, and possibly over an ensemble of velocity fields.  Taylor's result holds under the assumption that the Lagrangian velocity  is stationary in time after averaging, namely  $\langle v_t(a)\cdot v_{t'}(a)\rangle = \langle v_{t-t'}(a) \cdot v_{0}(a)\rangle$, which should be an integrable function of $\tau=t-t'$.  Stationarity of the Lagrangian velocity holds, for instance, provided the Eulerian velocity field is statistical stationary and homogeneous. See \cite[\S 7.1]{TL72} and \cite{KP97} for an example of an early rigorous statement.  Thus, we take it that a single particle looks like Brownian motion at long times, albeit with a diffusivity which is only implicitly predicted.
\begin{figure}[h!]
  \begin{center}
    \includegraphics[width=0.38\textwidth]{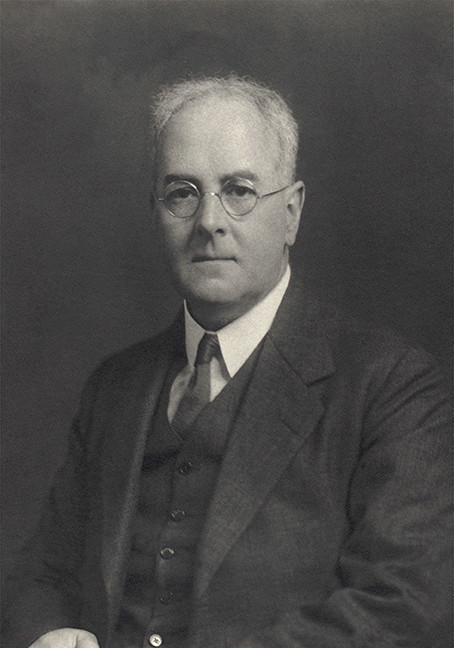}
     \hspace{5mm} 
        \includegraphics[width=0.37\textwidth]{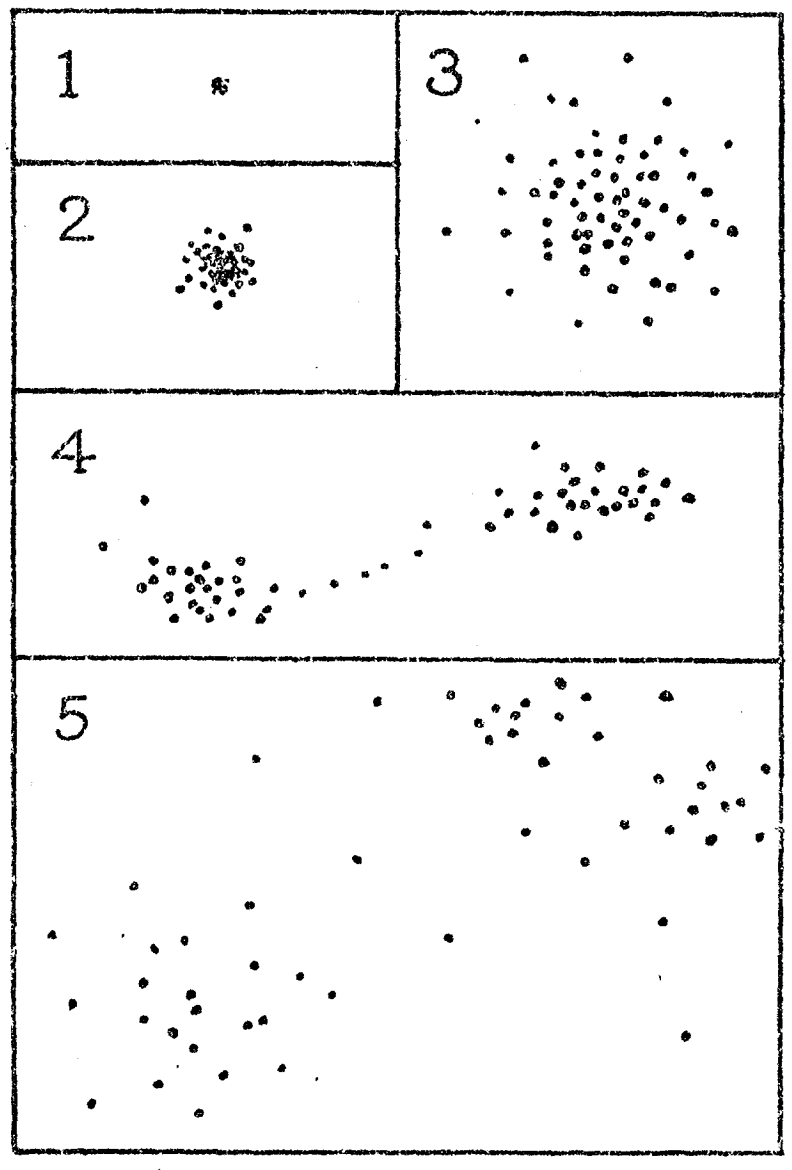}
  \end{center}
  \caption{Lewis Fry Richardson and his particle dispersion \cite{R26}}
\end{figure}
What about multiple particles?  In 1926, shortly after the work of Taylor, Lewis Fry Richardson investigated the dispersion of two-particles by a turbulent flow.  Based on theoretical considerations and weather balloon data, Richardson predicted a $t^3$--law:
\begin{myshade}
\be\label{tcubed}
\langle |X_t(a+r)- X_t(a)|^2\rangle  \sim G \langle \varepsilon\rangle t^3,
\ee
\end{myshade}
for some constant $G$ and 
where $|r|\gg \ell_\nu$, namely the particles are separated in the inertial range.  The striking thing about Richardson's prediction is that the right hand side is independent of the initial separation of the particles!  Indeed, the inertial range extends to all scales as $\nu \to 0$, so that $r=0$ can be taken.  In that case, Richardson predicts an ensemble of particles burst from the same starting point and have mean-squared displacement growing cubically in time, a rate which is faster than simple ballistic separation of particles experiencing a constant relative velocity.  In short, already in 1926, Richardson foresaw the shadow of spontaneous stochasticity.

\begin{figure}[h!]
  \begin{center}
    \includegraphics[width=0.4\textwidth]{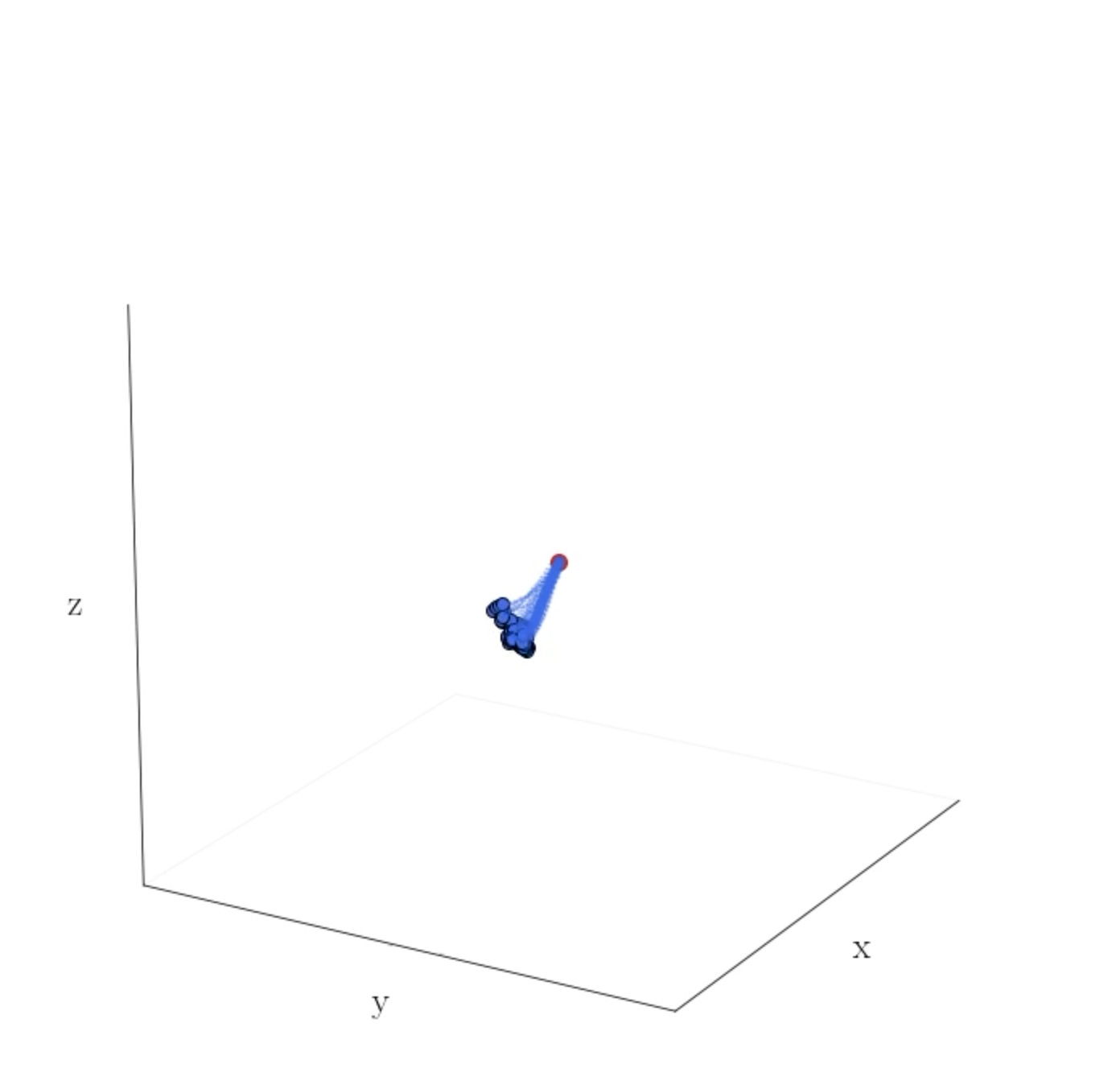}
       {
        \includegraphics[width=0.4\textwidth]{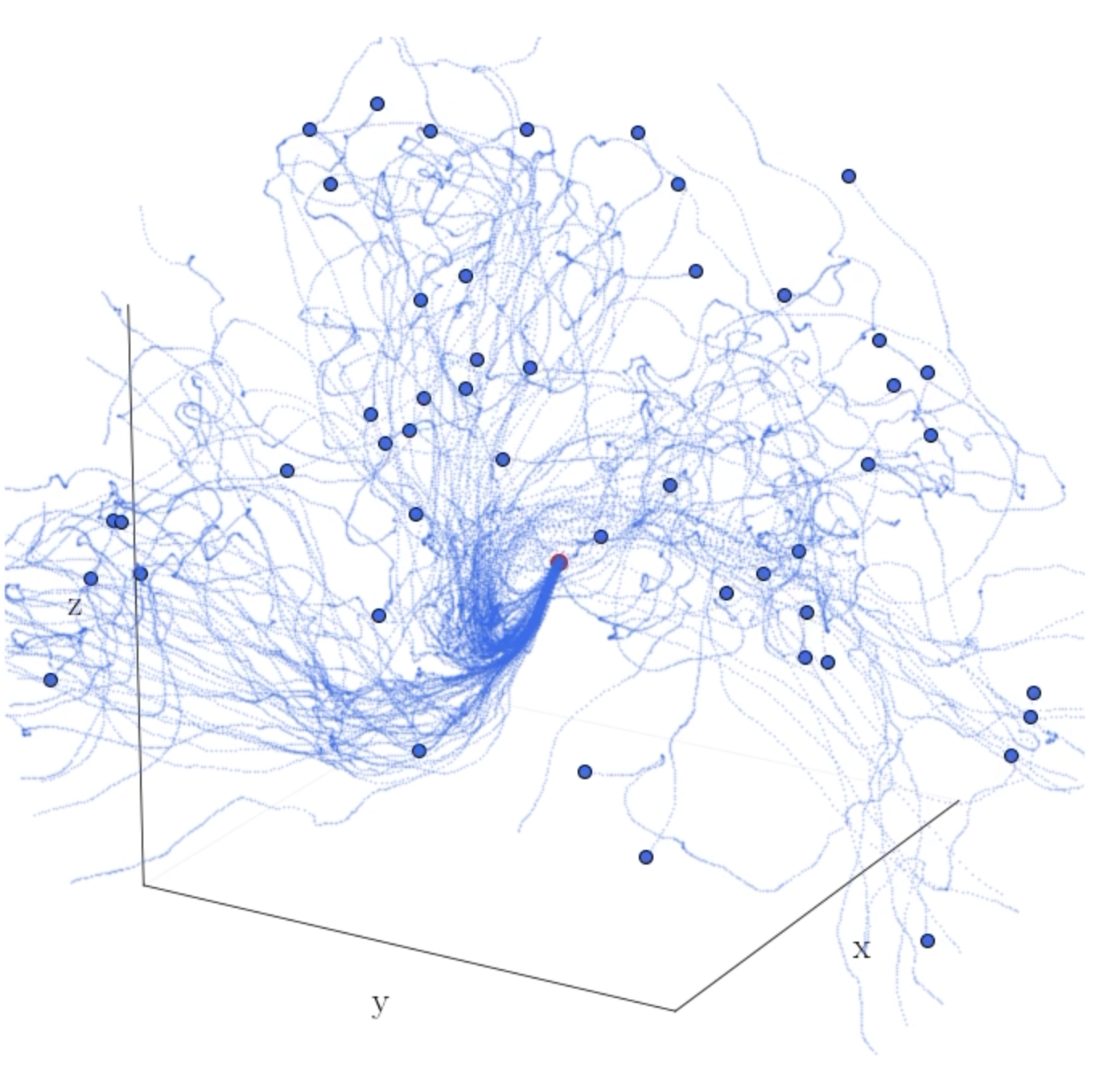}}
  \end{center}
  \caption{Explosive separation of particles in numerical simulation of homogeneous isotropic turbulence \cite{Lcomp}.}
\end{figure}

Although Richardson arrived at his result by a different theoretical consideration, let us quickly give a rigorous bound on such dispersion assuming some regularity on the velocity field that is coherent with \eqref{tcubed} assuming Kolmogorov's 1941 picture of turbulence. 
\begin{proposition}[Bound on splitting of trajectories]\label{boundtraj}
Suppose $u\in L^\infty_t C^\alpha_x$. Then, \emph{any} trajectory $\dot{X}_t=u(X_t,t)$ satisfies
\be
|X_t(a+r) - X_t(a)| \leq \left(|r|^{1-\alpha} + (1-\alpha) [u]_{L^\infty_t C^\alpha_x} t \right)^{\frac{1}{1-\alpha}}   \qquad \forall |r|\ \geq 0.
\ee
\end{proposition}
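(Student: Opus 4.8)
The plan is to track the separation $\rho(t) := |X_t(a+r)-X_t(a)|$ directly and derive a closed scalar differential inequality for it. Writing $Y_t := X_t(a+r)-X_t(a)$, the two trajectories obey $\dot X_t(a+r) = u(X_t(a+r),t)$ and $\dot X_t(a) = u(X_t(a),t)$, so that $\dot Y_t = u(X_t(a+r),t) - u(X_t(a),t)$. Since $u\in L^\infty_t C^\alpha_x$, the H\"older bound gives $|\dot Y_t| \leq [u]_{L^\infty_t C^\alpha_x}\,|Y_t|^\alpha$ for a.e.\ $t$. Because $\rho(t)=|Y_t|$ and $\tfrac{\rmd}{\rmd t}|Y_t|\leq |\dot Y_t|$ wherever $Y_t\neq 0$, this yields
\[
\dot\rho(t) \leq [u]_{L^\infty_t C^\alpha_x}\, \rho(t)^\alpha, \qquad \rho(0)=|r|.
\]
Note that $\rho$ is Lipschitz on any finite time interval, since $|\dot Y_t|\leq 2\|u\|_{L^\infty_{t,x}}$; this justifies working with its a.e.\ derivative.

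Next I would integrate this inequality by the substitution $h:=\rho^{1-\alpha}$, for which, formally, $\dot h = (1-\alpha)\rho^{-\alpha}\dot\rho \leq (1-\alpha)[u]_{L^\infty_t C^\alpha_x}$, so that $h(t)\leq h(0)+(1-\alpha)[u]_{L^\infty_t C^\alpha_x}\,t$; raising to the power $1/(1-\alpha)$ returns exactly the claimed bound. To make this rigorous near instants where $\rho$ may vanish --- where the factor $\rho^{-\alpha}$ is singular --- I would regularize by setting $\rho_\epsilon(t):=\sqrt{|Y_t|^2+\epsilon^2}\geq \epsilon>0$. Then $\rho_\epsilon$ is Lipschitz, $\dot\rho_\epsilon = (Y_t\cdot \dot Y_t)/\rho_\epsilon \leq |\dot Y_t|\leq [u]_{L^\infty_t C^\alpha_x}|Y_t|^\alpha \leq [u]_{L^\infty_t C^\alpha_x}\rho_\epsilon^\alpha$, and $h_\epsilon:=\rho_\epsilon^{1-\alpha}$ is now continuously differentiable (being bounded below) with $\dot h_\epsilon \leq (1-\alpha)[u]_{L^\infty_t C^\alpha_x}$ a.e. Integrating gives $\rho_\epsilon(t)^{1-\alpha}\leq (|r|^2+\epsilon^2)^{(1-\alpha)/2}+(1-\alpha)[u]_{L^\infty_t C^\alpha_x}\,t$, and letting $\epsilon\to 0$ recovers the stated estimate for all $|r|\geq 0$, including the borderline case $r=0$.

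The point to emphasize is that $\rho\mapsto \rho^\alpha$ with $\alpha<1$ fails the Osgood (hence Lipschitz) condition at $\rho=0$: this is precisely why one cannot invoke a Gr\"onwall/uniqueness argument, and it is the analytic signature of the non-uniqueness of integral curves --- spontaneous stochasticity --- discussed above. The subtlety is therefore genuine rather than cosmetic, as the inequality $\dot\rho\leq [u]_{L^\infty_t C^\alpha_x}\rho^\alpha$ does \emph{not} force $\rho\equiv 0$ when $r=0$. The only delicate step is thus to justify the one-sided comparison with no uniqueness input; the regularization $\rho_\epsilon=\sqrt{|Y_t|^2+\epsilon^2}$ accomplishes this cleanly, since it keeps $h_\epsilon$ continuously differentiable and replaces the singular product $\rho^{-\alpha}\dot\rho$ by the uniformly bounded quantity $\rho_\epsilon^{-\alpha}\dot\rho_\epsilon\leq [u]_{L^\infty_t C^\alpha_x}$, after which the bound passes to the limit $\epsilon\to 0$ with no loss.
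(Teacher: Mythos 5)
Your proof is correct and follows the same route as the paper: derive the scalar differential inequality $\dot\rho \leq [u]_{L^\infty_t C^\alpha_x}\rho^\alpha$ from the H\"older bound and integrate it via the substitution $\rho^{1-\alpha}$. The regularization $\rho_\epsilon = \sqrt{|Y_t|^2+\epsilon^2}$ is a welcome extra care at the points where $\rho$ vanishes, a step the paper compresses into ``integrating this differential inequality gives the desired result,'' and your remark that the failure of the Osgood condition is exactly what permits non-uniqueness (rather than an obstruction to the one-sided bound) is on point.
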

\begin{proof}[Proof of Proposition \ref{boundtraj}]
Let $\rho_t(r;a) :=|X_t(a+r) - X_t(a)|$.  By reverse triangle inequality, as well as the H\"{o}lder assumption on the velocity field $u$, we have
\be
\dot{\rho}_t\leq  [u]_{L^\infty_t C^\alpha_x} \rho_t^\alpha, \qquad \rho_0 = |r|.
\ee
Integrating this differential inequality gives the desired result.
\end{proof}
Note that, for long times the bound becomes independent of $|r|$.  Its asymptotic form takes the shape of a bound on the deviation of non-unique trajectories of a H\"{o}lder field $u$ passing through the same point $a\in M$:
\be
|X_t^{(1)} (a) - X_t^{(2)}(a)| \leq \left((1-\alpha) [u]_{L^\infty_t C^\alpha_x} t \right)^{\frac{1}{1-\alpha}}.
\ee
Of course, there are simple examples saturating this upper bound (take $u(x)={\rm sgn}(x)  |x|^\alpha$ in one dimension), showing Proposition \ref{boundtraj} to be sharp.  See \cite{ED15b,EB20} for a realization of this simple example in quantum systems, and a renormalization group perspective. The works \cite{DM21,DMR24} give higher dimensional examples and investigates when a unique measure valued solution might be obtained. The works \cite{M1,M2,M3, TBM} concern the same behavior for infinite dimensional models, where these features robustly appear.

Recalling K41 phenomenology, $\alpha=1/3$ and  $[u]_{L^\infty_t C^{1/3}_x}= (\langle \ve \rangle \ell_I )^{1/3}$ (by dimensional considerations, taking e.g. $\ell_I= {\rm diam}(\mathbb{T}^d)$).  Since this field is ideally rough, understood as arising in the $\nu\to 0$ limit, we may take $r=0$ so that particles start at the same point.  The bound is then exactly consistent with the behavior \eqref{tcubed}.

Although Richardson's prediction has proved difficult to conclusively observe, impressive partial confirmations have been made experimentally \cite{OM, Betal06} and numerically \cite{BBetal, BS02, BD14, BHB13}.  The following Figure \ref{richfig}, taken from \cite{BHB13}, represents essentially the state of the art in our observations of Richardson's law from numerical simulation data.  The $t^3$--law is observed, as is the "forgetting" of the initial separations as expected from Richardson's theory as well as the simple-minded bound in Proposition \ref{boundtraj}.

\begin{figure}[h!]
  \begin{center}
    \includegraphics[width=1\textwidth]{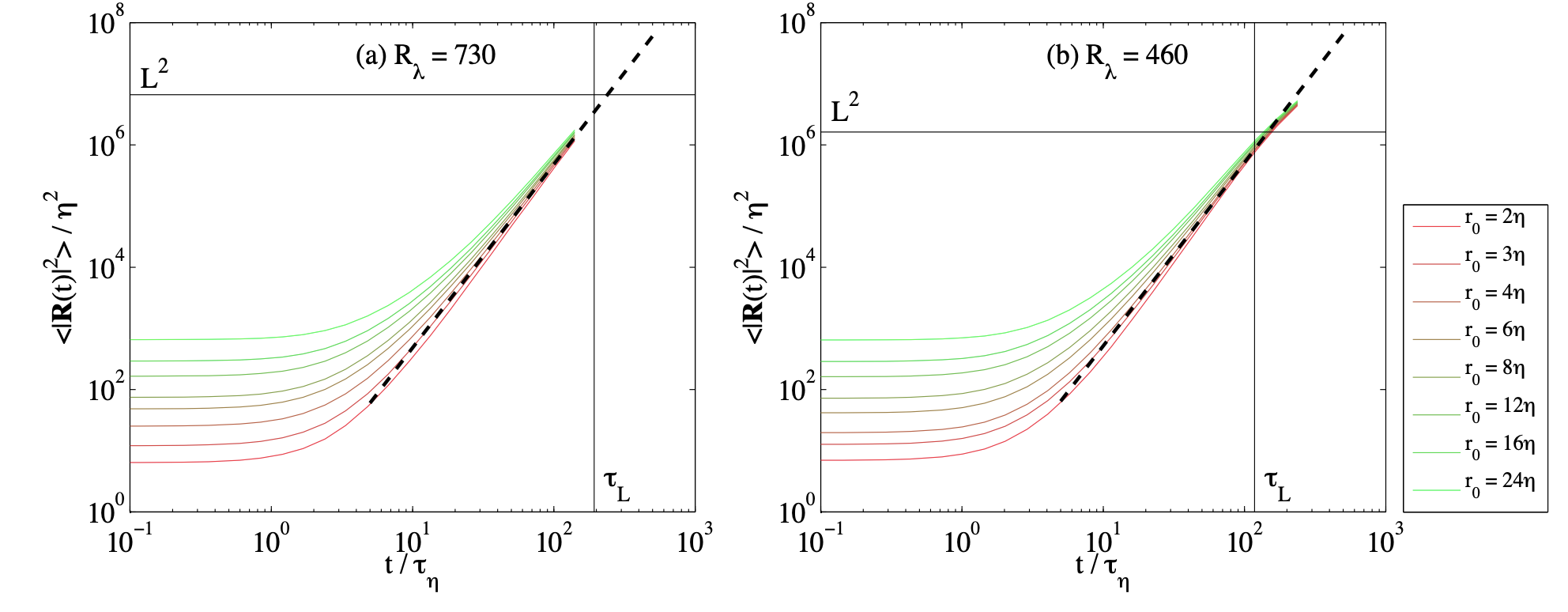}
  \end{center}
  \caption{Time-evolution of the mean-squared dispersion for various initial separations $r_0$ from numerical simulation of homogeneous isotropic turbulence at two different Reynolds numbers \cite{BHB13}. The dashed line corresponds to the Richardson law.}\label{richfig}
\end{figure}

Another striking aspect of Richardson's prediction \eqref{tcubed} is that the net anomalous energy dissipation rate  $\langle \ve \rangle$ appears. Recall that for the Burgers equation and passive scalars both, non-uniqueness of trajectories gave an origin or explanation of the observed anomalies.  Richardson's law seems to do this as well, but it's theoretical foundation is lacking.  Does $\langle \ve \rangle$ appear  in his theory for purely dimensional reasons, and true particle dispersion is not directly related to dissipation?  

As it turns out, this aspect of Richardson's theory can be rigorously proved, although only for a very short time regime and not for the dispersion itself, but rather a relative forward/backwards dispersion. To state the result, we must introduce a regularization of the trajectories, namely the flow in a mollified field:
\be\label{molflow}
\frac{\rmd}{\rmd t} X_t^\ell(a) = \ol{u}_\ell ( X_t^\ell(a) ,t) , \qquad X_0^\ell(a) = a.
\ee
Fix an standard mollifier $\psi\in C_0^\infty(\mathbb{T}^d)$  with  ${\rm supp} (\psi)\subseteq B_1(0)$ and $\psi_R(r)  = R^{-d} \psi(r/R)$ and denote 
$
 \langle f(r) \rangle_R:= \int_{\mathbb{T}^d} f(r) \psi_R(r)dr
$
 for any $f\in L^1( \mathbb{T}^d)$.    The result is:
 
\begin{theorem}[Lagrangian formula for dissipation]\label{LAGthm}
Let $u\in L^3(0,T; L^3(\mathbb{T}^d))$ be a locally dissipative weak solution of incompressible Euler as in Theorem \ref{invthm}. Then, the dissipation measure $\ve[u]$  in the distributional equality
\be
 \partial_t \left(\tfrac{1}{2} |u|^2 \right) + {\rm div} \left((\tfrac{1}{2} |u|^2+p) u \right) = -\ve[u] + u \cdot f
\ee
admits the following representation in the sense of distributions:
\begin{align}\label{Lagrel}
\varepsilon[u]&= \lim_{R \to 0}\lim_{\ell \to 0}  \lim_{\tau\to 0}\frac{  \langle |\delta X_{t,t-\tau}^{\ell}(r;x)-r|^2\rangle_R-\langle |\delta X_{t,t+\tau}^{\ell}(r;x)-r|^2\rangle_R}{4\tau^3}.
\end{align}
where $\delta\bX_{s,t}^\ell(\br;\bx):= \bX_{s,t}^\ell(\bx+\br)-  \bX_{s,t}^\ell(\bx)$ is the Lagrangian deviation.
\end{theorem}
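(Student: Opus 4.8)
The plan is to reduce the time-asymmetric relative dispersion to a coarse-grained energy flux through a short-time expansion of the mollified flow, and then to invoke the Duchon--Robert representation $\ve[u]=\lim_{\ell\to0}\ve_\ell[u]$ of Corollary \ref{invthmcor4}. Fix $\ell,R>0$. Since $\ol{u}_\ell$ is smooth in space, the trajectories $s\mapsto X^\ell_{t,s}(x)$ solving \eqref{molflow} may be Taylor expanded about $s=t$. Writing $\delta_r g(x):=g(x+r)-g(x)$ and $D^\ell_t:=\partial_t+\ol{u}_\ell\cdot\nabla$, differentiating the relative-separation ODE for the pair based at $x$ and $x+r$ twice yields
\begin{equation*}
\delta X^\ell_{t,t\pm\tau}(r;x)-r=\pm\tau\,\delta_r\ol{u}_\ell(x,t)+\tfrac{\tau^2}{2}\,\delta_r(D^\ell_t\ol{u}_\ell)(x,t)+O(\tau^3).
\end{equation*}
Because the $\tfrac{\tau^2}{2}$ term is even in $\tau$, it cancels in the forward/backward difference of the squared deviations, and the cubic vector remainder only affects order $\tau^4$; hence the sole surviving contribution is
\begin{equation*}
\lim_{\tau\to0}\frac{|\delta X^\ell_{t,t-\tau}(r;x)-r|^2-|\delta X^\ell_{t,t+\tau}(r;x)-r|^2}{4\tau^3}=-\tfrac12\,\delta_r\ol{u}_\ell(x,t)\cdot\delta_r(D^\ell_t\ol{u}_\ell)(x,t).
\end{equation*}
The theorem thus reduces to proving $-\tfrac12\langle\delta_r\ol{u}_\ell\cdot\delta_r(D^\ell_t\ol{u}_\ell)\rangle_R\to\ve[u]$ as $\ell\to0$ and then $R\to0$, in $\mathcal D'_{x,t}$.

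The next step is to trade the material derivative for the subgrid stress. With $\tau_\ell:=\ol{(u\otimes u)}_\ell-\ol{u}_\ell\otimes\ol{u}_\ell$ and using $\nabla\cdot\ol{u}_\ell=0$, mollifying \eqref{ee1} gives $D^\ell_t\ol{u}_\ell=-\nabla\cdot\tau_\ell-\nabla\ol{p}_\ell$, so that
\begin{equation*}
-\tfrac12\,\delta_r\ol{u}_\ell\cdot\delta_r(D^\ell_t\ol{u}_\ell)=\tfrac12\,\delta_r\ol{u}_\ell\cdot\delta_r(\nabla\cdot\tau_\ell)+\tfrac12\,\delta_r\ol{u}_\ell\cdot\delta_r(\nabla\ol{p}_\ell).
\end{equation*}
The crucial manipulation is to integrate by parts in the increment $r$ (using $\nabla_x[g(x+r)]=\nabla_r[g(x+r)]$) and, for the pieces evaluated at $x$, in $x$; this transfers the derivative off the stress, replacing the dangerous factor $\nabla\cdot\tau_\ell$ by the bounded pairing $\nabla\ol{u}_\ell:\tau_\ell$, i.e. by the coarse-grained (Duchon--Robert/Eyink) flux. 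Both the $(x+r)$ and $(x)$ contributions then reconstruct $\tfrac12$ of this flux (localized by $\psi_R$), while the pressure contributions cancel because the divergence-free condition annihilates $\tau_{\ell,ij}\partial_j\ol{u}_{\ell,i}\mapsto \ol{p}_\ell\,\nabla\cdot\ol{u}_\ell$. The remaining boundary terms each carry an extra factor of $\tau_\ell$ or of $(\ol{u}_\ell*\psi_R-\ol{u}_\ell)$ and are handled by the mollification/commutator estimates underlying Lemma \ref{comlem} and Lemma \ref{P: pressure CZ}.

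Taking $\ell\to0$ at fixed $R$, these remainders vanish, and the reassembled flux converges to $\ve[u]$ in $\mathcal D'$ — this is exactly the Onsager/Duchon--Robert content of Corollary \ref{invthmcor4}, equivalently $\ve_\ell[u]\to\ve[u]$ with $\ve_\ell$ as in \eqref{Dell}. Consequently $-\tfrac12\langle\delta_r\ol{u}_\ell\cdot\delta_r(D^\ell_t\ol{u}_\ell)\rangle_R$ tends to $\langle\ve[u](\cdot+r)\rangle_R=\ve[u]*\check\psi_R$, and letting $R\to0$ removes the localization and returns $\ve[u]$. This closes the chain $\lim_{R\to0}\lim_{\ell\to0}\lim_{\tau\to0}(\cdots)=\ve[u]$.

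The hard part will be justifying the $\tau\to0$ expansion and the precise ordering of the three limits at the regularity actually assumed, namely only $u\in L^3_{t,x}$. For fixed $\ell$ the field $\ol{u}_\ell$ is smooth in $x$ but merely $\partial_t\ol{u}_\ell\in L^{3/2}_t$ (via \eqref{pressureeqn} and Lemma \ref{P: pressure CZ}), so the $O(\tau^3)$ Taylor remainder must be controlled in a time-averaged, distributional sense rather than pointwise in $t$; I expect the main effort here, and it is cleanest to pair against a space–time test function $\varphi$ from the outset and bound the remainder uniformly in $\tau$ using the fixed-$\ell$ spatial smoothness together with the $L^{3/2}_t$ bound on $D^\ell_t\ol{u}_\ell$. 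The second delicate point is that, because no Besov regularity is assumed, the $\ell\to0$ passage cannot rely on power-law rates but must instead use strong $L^3$-continuity of translations (exactly as in the proof of Corollary \ref{invthmcor4}), which is why it is preferable to route the argument through the increment flux $\ve_\ell[u]$ of \eqref{Dell} rather than the gradient flux. Once these uniform-in-$\tau$ and uniform-in-$\ell$ estimates are established and the pressure cancellation is made rigorous as in Proposition \ref{invthm}, the identity follows.
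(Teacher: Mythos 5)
Your proposal is correct and follows essentially the same route as the paper, which itself only sketches the idea (the Ott--Mann--Gaw\c{e}dzki relation plus the short-time Taylor expansion \eqref{reldispFG} and its time-reversal asymmetry) and defers the full argument to \cite{D19}: the reduction of the $\tau^3$-asymmetry to $-\tfrac12\langle\delta_r\ol{u}_\ell\cdot\delta_r(D_t^\ell\ol{u}_\ell)\rangle_R$, the substitution $D_t^\ell\ol{u}_\ell=-\nabla\cdot\tau_\ell-\nabla\ol{p}_\ell$, the reassembly of the two diagonal terms into the coarse-grained flux converging to $\ve[u]$, and the cancellation of the pressure divergences as $R\to 0$ all check out. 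You also correctly isolate the one genuinely delicate point --- that $\partial_t\ol{u}_\ell\in L^{3/2}_t$ only, so the $\tau\to0$ limit must be taken after pairing with a space-time test function --- which is precisely where the technical work in \cite{D19} lies.
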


We won't give the proof of this result here, it can be found in \cite{D19}.  We just give the main idea behind it, which derives from the works of {Ott-Mann} and Gaw\c{e}dzki \cite{OM,FGV}, as well as  Jucha et. al. \cite{Jucha}.
Indeed, the \emph{Ott-Mann--Gaw\c{e}dzki} relation is sometimes  described as the "Lagrangian analog of the $4/5$--law". For  inertial range separations $r$, this relation states:
\begin{equation}\label{OttMann0}
\left.\frac{1}{2} \frac{\rmd }{\rmd \tau} \Big\langle |\delta_r \bv(\tau;\bx,t) |^2\Big\rangle\right|_{\tau=0} \simeq -2 \langle  \varepsilon \rangle
\end{equation}
with the Lagrangian velocity $v(\tau,x;t):=\bu(\bX_{t,t+\tau} (\bx),t+\tau)$ and $\delta_r  \bv(\tau;\bx,t):=  v(\tau,x+r;t)-v(\tau,x;t)$.  Standard derivations of the relationship assume spatial isotropy and the average must be either interpreted as over the spatial domain, or as a time/ensemble average provided the fields are homogenous (\cite{D19} gives a more robust interpreation/derivation holding for any weak solution).
With the Ott-Mann-Gaw\c{e}dzki relation in hand, the relative mean--squared dispersion of Lagrangian tracers for short-times can be calculated using only local (in time) Eulerian quantities by Taylor expansion:
\begin{align} \label{reldispFG}
\Big\langle | \delta \bX_{t,t+\tau}(\br; \bx)-&\br|^2\Big\rangle \approx   S_2^{\bu_t}(\br)  \tau^2 - 2   \langle  \varepsilon \rangle \tau^3 +{O} (\tau^4).
\end{align}
Inspecting the behavior of \eqref{reldispFG} under time reversal $\tau\to -\tau$, one sees that the ${O}(\tau^2)$ term is invariant whereas the ${O}(\tau^3)$ term changes sign.  Taking of the forwards and backward in time dispersion suggests (the global analogue of) the formula \eqref{Lagrel}.

Unfortunately, at high-Reynolds numbers, the realm of validity of the expansion \eqref{reldispFG} becomes vanishing small. In particular, the Taylor series expansion of the particle trajectories used to derive \eqref{reldispFG} is only guaranteed to converge in a neighborhood of times  that is vanishingly small as $\nu\to 0$ (although Frishman \& Falkovich \cite{Falk13,Frish14}  argue on theoretical grounds that a similar expansion for the difference may not). Thus, it is desirable to have an alternative Lagrangian measure of time--asymmetry that remains valid for arbitrarily large Reynolds numbers. This is what is accomplished in Theorem \ref{LAGthm} by introducing an (arbitrary) mollification of the velocity field.

\begin{remark}[Lagrangian arrow of time] Theorem \ref{LAGthm} predicts that in three-dimensional turbulence, where $\ve[u]$ is a non-negative and non-trivial measure, particles on average split faster backwards in time than they do forwards in time,  thereby establishing a Lagrangian "arrow of time".   Irreversibility of this nature has been the subject of numerous numerical/experimental investigations \cite{Saw05, Faber09, Xu16} and  this particular prediction along with the detailed relation \eqref{Lagrel} has been recently partially confirmed in physical experiment \cite{CDB22}. On the other hand, in two-dimensional turbulence (in some extreme regime)  the inverse-energy cascade can brings energy in from infinite frequency (pumped in by a infinitely small scale forcing), leading to negative $\ve[u]$, namely an anomalous input 
\cite{E96, Kr67, Leith68}.  In this setup, Theorem \ref{LAGthm} predicts that particles disperse faster forward in time than backwards.  See \cite{D19} for a more detailed discussion.

There is a simple minded mechanical explanation for all of this, in a system without explicit friction.  Namely, consider a system of two particles moving freely in space, which stick upon collision and form one new particle (a purely inelastic collision). Suppose $m_1$ and $m_2$ are the masses, and $v_1$ and $v_2$ the velocities before the collision.  Then, conservation of mass and momentum dictate that the mass  after collision is $m=m_1+m_2$ and the velocity $v$ is such that $mv= m_1 v_1 + m_2 v_2$. It is then easy to see that the kinetic energy of this combined particle $\frac{1}{2}mv^2$ is strictly less than the initial energy of the system $\frac{1}{2}m_1v_1^2+ \frac{1}{2}m_2v_2^2$, the deficit being $\triangle {\rm KE} = -\frac{m_1m_2}{2m} (v_1-v_2)^2  <0$.   Thus, in this simple model the direction of time in which particles coalesce corresponds to the direction in which energy is lost.  This bares some similarity to the fluid case discussed above, and the discrepancy between two and three dimensions.
\begin{figure}[h!]
  \begin{center}
    \includegraphics[width=0.5\textwidth]{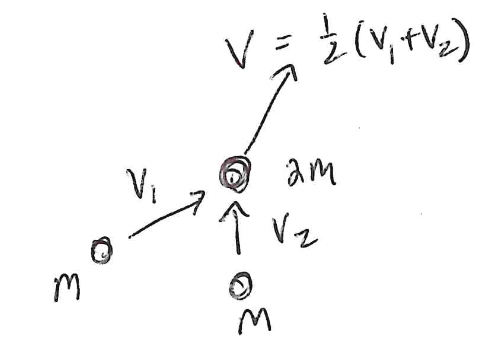}
  \end{center}
  \caption{Cartoon of the inelastic collision of particles of the same mass.  Forwards in time the coalesce, backwards they split apart.  The interaction loses energy in the direction of time of coalescence. }\label{stickfig}
\end{figure}
\end{remark}

We give one final result on particles in (H\"{o}lder-continuous) turbulent velocity fields, concerning their smoothness in time.
\begin{theorem}[Time-regularity of trajectories]\label{timeregthm}
Let $u\in L^\infty_tC^\alpha_{x}$ be a weak solution of the Euler equations. For each $a\in M$, consider \underline{any} solution of
\be\label{ode}
\frac{\rmd}{\rmd t} X_t(a) ={u} ( X_t(a) ,t) , \qquad X_0(a) = a.
\ee
Define $v(t,a) :=\frac{\rmd}{\rmd t} X_t(a)$. Then $v \in L^\infty_aC^{\frac{\alpha}{1-\alpha}}_t$. 
\end{theorem}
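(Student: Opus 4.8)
The plan is to control the Lagrangian velocity increment $v(t,a)-v(s,a)=u(X_t(a),t)-u(X_s(a),s)$ by passing through a coarse-grained field and exploiting the fact that, along a trajectory, the genuinely \emph{temporal} variation of $u$ is governed by the much more regular pressure gradient rather than by $u$ itself. Fix $a$, write $X_\tau=X_\tau(a)$, $v(\tau)=u(X_\tau,\tau)$, and introduce $\ol{u}_\ell=u*G_\ell$ together with $\ol{v}_\ell(\tau):=\ol{u}_\ell(X_\tau,\tau)$. By the triangle inequality and the first estimate of Lemma \ref{comlem} (with $p=\infty$, $\sigma=\alpha$), one has $|v(t)-v(s)|\le 2\sup_\tau\|u(\cdot,\tau)-\ol{u}_\ell(\cdot,\tau)\|_{L^\infty}+|\ol{v}_\ell(t)-\ol{v}_\ell(s)|\lesssim \ell^\alpha[u]_{L^\infty_tC^\alpha_x}+|\ol{v}_\ell(t)-\ol{v}_\ell(s)|$, so the whole problem reduces to bounding the increment of the smooth field $\ol{v}_\ell$ along the trajectory.

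Second, I would compute the Lagrangian acceleration of $\ol{v}_\ell$. Since $\dot{X}_\tau=u(X_\tau,\tau)$, we get $\tfrac{\rmd}{\rmd\tau}\ol{v}_\ell=\big(\partial_\tau\ol{u}_\ell+u\cdot\nabla\ol{u}_\ell\big)(X_\tau,\tau)$, and the mollified Euler equation $\partial_\tau\ol{u}_\ell=-\div\ol{(u\otimes u)}_\ell-\nabla\ol{p}_\ell$ turns this into $\tfrac{\rmd}{\rmd\tau}\ol{v}_\ell=\big(u\cdot\nabla\ol{u}_\ell-\div\ol{(u\otimes u)}_\ell-\nabla\ol{p}_\ell\big)(X_\tau,\tau)$. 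Using $\div u=0$, the first two terms combine into $\div\big(u\otimes\ol{u}_\ell-\ol{(u\otimes u)}_\ell\big)$; splitting off $(u-\ol{u}_\ell)\cdot\nabla\ol{u}_\ell$ and applying the quadratic commutator estimate of Lemma \ref{comlem} bounds this by $\ell^{2\alpha-1}\|u\|_{L^\infty_tC^\alpha_x}^2$ in $L^\infty$. Everything then hinges on the pressure gradient $\nabla\ol{p}_\ell$.

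Third — and this is the heart of the matter — I would use that the Euler pressure is \emph{twice} as Hölder-regular as the velocity. The crucial structural fact is the null form $-\Delta p=\partial_i\partial_j(u_iu_j)=\partial_iu_j\,\partial_ju_i$: in the low--high paraproduct the constraints $k'\cdot\hat{u}(k')=0$ produce an extra factor $|k'|/|k|$, which upgrades the naive $p\in C^\alpha$ coming from $p=R_iR_j(u_iu_j)$ to $p\in L^\infty_tC^{2\alpha}_x$ with $\|p\|_{C^{2\alpha}}\lesssim\|u\|_{C^\alpha}^2$ (the Hölder analogue of the Calderón--Zygmund Lemma \ref{P: pressure CZ}). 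Consequently $\|\nabla\ol{p}_\ell\|_{L^\infty}\lesssim\ell^{2\alpha-1}\|u\|_{C^\alpha}^2$, so $\big|\tfrac{\rmd}{\rmd\tau}\ol{v}_\ell\big|\lesssim\ell^{2\alpha-1}$ and hence $|\ol{v}_\ell(t)-\ol{v}_\ell(s)|\lesssim\ell^{2\alpha-1}|t-s|$. This is exactly the gain that distinguishes genuine incompressible Euler from a generic $C^\alpha$ transport field, for which only $C^\alpha$-in-time regularity holds. I expect this step to be the main obstacle: both the div-free cancellation raising the pressure to $C^{2\alpha}$, and the rigorous justification of the material-derivative identity for a merely weak solution (which is precisely what the mollification circumvents).

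Finally, combining the two estimates yields $|v(t)-v(s)|\lesssim\ell^\alpha+\ell^{2\alpha-1}|t-s|$ for every $\ell\in(0,1]$, with constants depending only on $\|u\|_{L^\infty_tC^\alpha_x}$ and hence uniform in $a$. Optimizing in the free parameter $\ell$ by choosing $\ell=|t-s|^{1/(1-\alpha)}$ balances the two terms and gives $|v(t)-v(s)|\lesssim|t-s|^{\alpha/(1-\alpha)}$, that is $v\in L^\infty_aC^{\alpha/(1-\alpha)}_t$ (for $|t-s|\gtrsim 1$ one just uses $\|v\|_{L^\infty}\le\|u\|_{L^\infty}$). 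The exponent is sharp, as the steady shear-type example $u(x)=\sgn(x)|x|^\alpha$ already saturates it. The argument as written requires $2\alpha<1$; in the complementary range $\alpha\ge\tfrac12$ the pressure is Lipschitz, so $v$ is first shown Lipschitz in time, after which one bootstraps via $\dot{v}=-\nabla p(X_\cdot,\cdot)\in C^{2\alpha-1}_t$ to reach the claimed regularity.
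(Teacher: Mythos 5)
Your argument is correct in the range $\alpha\le\tfrac12$ (the same range the paper treats in detail) and rests on the same analytic ingredients as the paper's proof: mollification at a scale $\ell$, the commutator estimates of Lemma \ref{comlem}, the double H\"{o}lder regularity of the pressure ($p\in L^\infty_tC^{2\alpha}_x$ with $\|p\|_{C^{2\alpha}}\lesssim\|u\|^2_{C^\alpha}$, cf.\ \cite{CDRF}), and the optimization $\ell\sim|t-s|^{1/(1-\alpha)}$ balancing $\ell^{\alpha}$ against $\ell^{2\alpha-1}|t-s|$. The organization, however, is genuinely different. The paper introduces the auxiliary flow $X^\ell_t$ of the mollified field and compares $v$ with $v^\ell=\frac{\rmd}{\rmd t}X^\ell_t$; this forces it to first establish the Gronwall-type trajectory comparison of Lemma \ref{lemmapart1}, which is only useful on time intervals of length at most the eddy-turnover time $\tau_\ell=\ell^{1-\alpha}/\|u\|_{L^\infty_tC^\alpha_x}$, after which one chooses $\ell$ so that $\tau_\ell=|t-s|$. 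You instead evaluate the mollified field along the \emph{true} trajectory, $\ol{v}_\ell(\tau)=\ol{u}_\ell(X_\tau,\tau)$, so the approximation error is controlled by $\|u-\ol{u}_\ell\|_{L^\infty}\lesssim\ell^\alpha$ alone, with no trajectory comparison and no a priori restriction to $|t-s|\le\tau_\ell$; your acceleration bound $|\frac{\rmd}{\rmd\tau}\ol{v}_\ell|\lesssim\ell^{2\alpha-1}\|u\|^2_{C^\alpha}$ is then the same estimate as the paper's bound on $a_\ell=\nabla\ol p_\ell+\nabla\cdot\tau_\ell(u,u)$. This removes one lemma and is a modest but real simplification; both routes pay the same price at the only truly structural step, namely the $C^{2\alpha}$ pressure regularity that distinguishes Euler solutions from generic $C^\alpha$ transport fields.

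Two small caveats. First, your sharpness remark is off: a steady shear $u=(f(y),0)$ with $f\in C^\alpha$ is an exact Euler solution whose trajectories have \emph{constant} velocity, and the one-dimensional field $\sgn(x)|x|^\alpha$ is not an incompressible Euler flow, so neither example saturates the exponent $\tfrac{\alpha}{1-\alpha}$; sharpness would require a genuine Euler construction. Second, for $\alpha>\tfrac12$ a single bootstrap $\dot v=-\nabla p(X_\cdot,\cdot)\in C^{2\alpha-1}_t$ only yields $v\in C^{2\alpha}_t$, and $2\alpha<\tfrac{\alpha}{1-\alpha}$ in that range, so the iteration must be carried out repeatedly along the coarse-scale flow as in \cite{Isettreg2,Isettreg}; since the paper likewise restricts its self-contained proof to $\alpha\le\tfrac12$ and defers the rest to Isett, this does not put your argument behind the paper's.
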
 
\begin{remark}
If $\alpha=\frac{1}{3}$, then $\frac{\alpha}{1-\alpha}= \frac{1}{2}$, in qualitative agreement with the prediction of Landau and Lifshitz \cite{LL}, based on K41 phenomenology, that 
\begin{equation}
\langle |v(\cdot+\tau)-v(\cdot)|^2\rangle \sim \langle \ve \rangle \tau.
\end{equation}
As $\alpha\to 1$, the trajectories become $C^\infty_t$.  This is coherent with the fact that trajectories in classical solutions are real analytic (see e.g. \cite{CVW} and references therein).
\end{remark}
\begin{proof}[Proof of Theorem \ref{timeregthm}]
Isett proved that if $u\in L^\infty_tC^\alpha_{x}$ is a weak solution of the Euler equations, then every particle trajectory of $u$ is of class $C_t^{{1}/{(1-\alpha)}}$, see \cite{Isettreg2,Isettreg}. Here we will content ourselves by proving the result assuming that $\alpha\in (0, 1/2]$, following the discussion of Eyink \cite{Eyink}. It is not too difficult to obtain the general case from here.

We shall consider trajectories $X_t^\ell(a)$ of the mollified velocity field $\ol{u}_\ell$ \eqref{molflow}. 
Let $v^\ell(t,a) :=\frac{\rmd}{\rmd t} X_t^\ell(a)$ and $\delta_\tau f(t,a) :=f(\tau+ t, a)- f(t,a)$. 
The natural time-scale of the Lagrangian velocity mollified at length-scale $\ell$ is   the local eddy turnover time $\ell/\delta_\ell u$, defined by
 $
 \tau_\ell :={\ell^{1-\alpha}}/{\|u\|_{L^\infty_t C^\alpha_x}}.
$
Indeed, we have
  \begin{lemma}\label{lemmapart1} Let  $X_t$ be any solution of \eqref{ode}.  Then
 \be
 \|X_t(a)- X_t^\ell(a) \| \lesssim \ell \exp (t/\tau_\ell),
 \ee
 where $ \tau_\ell :={\ell^{1-\alpha}}/{\|u\|_{L^\infty_t C^\alpha_x}}$.
 \end{lemma}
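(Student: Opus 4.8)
The plan is to run a Grönwall argument not against the merely $C^\alpha$ (hence non-Lipschitz) field $u$, but against its mollification $\ol{u}_\ell$, which \emph{is} Lipschitz with a quantitatively controlled constant. Set $D_t := \|X_t(a)-X_t^\ell(a)\|$. Since $u\in L^\infty_{t,x}$, both curves are Lipschitz in time, so $D_t$ is Lipschitz, $D_0=0$, and for a.e.\ $t$ one has
\[
\dot D_t \leq \|u(X_t,t)-\ol{u}_\ell(X_t^\ell,t)\|.
\]

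First I would insert $\pm\,\ol{u}_\ell(X_t,t)$ and split the right-hand side into a \emph{mollification error} $\|u(X_t,t)-\ol{u}_\ell(X_t,t)\|$ and a \emph{Lipschitz term} $\|\ol{u}_\ell(X_t,t)-\ol{u}_\ell(X_t^\ell,t)\|$. The first is controlled uniformly in $x,t$ by the zeroth-order commutator estimate of Lemma~\ref{comlem} with $p=\infty$, $\sigma=\alpha$ (using $C^\alpha=B^\alpha_{\infty,\infty}$), giving $\|u-\ol{u}_\ell\|_{L^\infty_{t,x}}\lesssim \ell^{\alpha}\|u\|_{L^\infty_tC^\alpha_x}$. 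The second is bounded by $\|\nabla \ol{u}_\ell\|_{L^\infty_{t,x}}\,D_t$, and the first-derivative estimate of Lemma~\ref{comlem} yields $\|\nabla\ol{u}_\ell\|_{L^\infty_{t,x}}\lesssim \ell^{\alpha-1}\|u\|_{L^\infty_tC^\alpha_x}= C/\tau_\ell$.

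Combining these gives the linear differential inequality
\[
\dot D_t \;\leq\; C\,\ell^{\alpha}\|u\|_{L^\infty_tC^\alpha_x} \;+\; \frac{C}{\tau_\ell}\,D_t, \qquad D_0=0,
\]
to which Grönwall applies, producing $D_t \leq \ell\big(e^{Ct/\tau_\ell}-1\big)\lesssim \ell\,e^{Ct/\tau_\ell}$. The crucial cancellation is that the ratio of the forcing constant $C\ell^{\alpha}\|u\|_{L^\infty_tC^\alpha_x}$ to the growth rate $C/\tau_\ell=C\ell^{\alpha-1}\|u\|_{L^\infty_tC^\alpha_x}$ is exactly $\ell$; this produces the stated prefactor and confirms that $\tau_\ell=\ell^{1-\alpha}/\|u\|_{L^\infty_tC^\alpha_x}$ is the natural growth time (the implicit constant from Lemma~\ref{comlem} appearing in the exponent may be absorbed into $\tau_\ell$).

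There is no serious obstacle here, precisely because the argument never differentiates $u$ itself; the only point requiring care is that $X_t$ is \emph{any} — possibly non-unique — integral curve of the rough field $u$, so I must avoid any appeal to a well-defined flow of $u$ and instead work with a fixed absolutely continuous representative, deriving the differential inequality pointwise a.e.\ in $t$. The mollification error $\ell^{\alpha}$ and the Lipschitz constant $\ell^{\alpha-1}$ are the two genuinely competing scales, and their balance is automatic rather than requiring any optimization in $\ell$.
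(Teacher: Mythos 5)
Your proposal is correct and is essentially identical to the paper's proof: both insert $\pm\,\ol{u}_\ell(X_t,t)$, bound the mollification error by $\sup_{\ell'\leq\ell}\|\delta_{\ell'}u\|_{L^\infty_{t,x}}\lesssim \ell/\tau_\ell$ and the Lipschitz term by $\|\nabla\ol{u}_\ell\|_{L^\infty}\lesssim 1/\tau_\ell$, then conclude by Gr\"onwall. Your added care about working with an arbitrary absolutely continuous integral curve and deriving the differential inequality a.e.\ in $t$ is a reasonable refinement the paper leaves implicit.
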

 \begin{proof}[Proof of Lemma \ref{lemmapart1}]
Let us introduce $\delta X^\ell_t(a):=X_t(a) - X_t^\ell(a)$.  Then $ \delta X^\ell_0(a) = 0$ while
 \begin{align*}
\frac{\rmd}{\rmd t}  \delta X^\ell_t(a)&={u} ( X_t(a) ,t)- \ol{u}_\ell ( X_t^\ell(a) ,t) \\
&={u} ( X_t(a) ,t)- \ol{u}_\ell ( X_t(a) ,t)  +\ol{u}_\ell ( X_t(a) ,t) - \ol{u}_\ell ( X_t^\ell(a) ,t).
\end{align*}
Using $ \|\nabla \ol{u}_\ell(\cdot, t)\|\lesssim  \sup_{\ell'\leq \ell}\|\delta_{\ell'} u\|_{L^\infty_{t,x}}/\ell \leq {\tau_\ell}^{-1}$ and $|u-\ol{u}_\ell|\leq \sup_{\ell'\leq \ell}\|\delta_{\ell'} u\|_{L^\infty_{t,x}}$,  we have
 \begin{align*} 
\frac{\rmd}{\rmd t} \| \delta X^\ell_t(a)\| &\leq  \sup_{\ell'\leq \ell}\|\delta_{\ell'} u\|_{L^\infty_{t,x}} + \|\nabla \ol{u}_\ell(\cdot, t)\|_{L^\infty} \| \delta X^\ell_t(a)\|  \lesssim   \frac{1}{\tau_\ell}\left(\ell +    \| \delta X^\ell_t(a)\| \right).
\end{align*}
The claimed result follows by Gronwall's inequality.
 \end{proof}
   \begin{lemma} \label{lemmapart2} Let  $X_t$ be any solution of \eqref{ode} and $v= \dot{X}_t$.  Then
   \be\label{holderv}
\sup_{0\leq \tau\leq \tau_\ell} \|v(t+\tau, \cdot) - v(t,\cdot)  \|_{L^\infty_{a}}  \lesssim  \big( \|u\|_{L^\infty_t C^\alpha_x}\big)^{\frac{1}{1-\alpha}}   \tau_\ell^{\frac{\alpha}{1-\alpha}}.
 \ee
 \end{lemma}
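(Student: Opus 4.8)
The plan is to compare the true Lagrangian velocity $v(s,a)=u(X_s(a),s)$ with the \emph{coarse-grained} Lagrangian velocity $\dot Y_s$ of the mollified field, and to exploit that the latter is differentiable in time with a controlled acceleration. Fix $t$ and $a$, and on the window $s\in[t,t+\tau_\ell]$ let $Y_s$ solve $\dot Y_s=\ol{u}_\ell(Y_s,s)$ with $Y_t=X_t(a)$. The same Gronwall argument as in Lemma \ref{lemmapart1}, now run from the restart time $t$ (so the initial deviation is $0$), gives $|X_s(a)-Y_s|\lesssim \ell\, e^{(s-t)/\tau_\ell}\lesssim \ell$ throughout this window; the restart is precisely what keeps the exponential of order one. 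I would then bound \eqref{holderv} by the triangle inequality $|v(t+\tau,a)-v(t,a)|\le |v(t+\tau,a)-\dot Y_{t+\tau}|+|\dot Y_{t+\tau}-\dot Y_t|+|\dot Y_t-v(t,a)|$, showing each piece is $\lesssim \ell^\alpha\|u\|_{L^\infty_tC^\alpha_x}$. Since $\tau_\ell=\ell^{1-\alpha}/\|u\|_{L^\infty_tC^\alpha_x}$, one has the algebraic identity $\ell^\alpha\|u\|_{L^\infty_tC^\alpha_x}=\|u\|_{L^\infty_tC^\alpha_x}^{1/(1-\alpha)}\tau_\ell^{\alpha/(1-\alpha)}$, which is exactly the claimed right-hand side.

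For the two outer ("single-time") terms, for any $s\in[t,t+\tau_\ell]$ I would write
\[
|v(s,a)-\dot Y_s|\le |u(X_s(a),s)-\ol{u}_\ell(X_s(a),s)|+|\ol{u}_\ell(X_s(a),s)-\ol{u}_\ell(Y_s,s)|.
\]
The first term is the mollification error $\lesssim \sup_{\ell'\le\ell}\|\delta_{\ell'}u\|_{L^\infty_{t,x}}\lesssim \ell^\alpha\|u\|_{L^\infty_tC^\alpha_x}$, while the second is $\le \|\nabla\ol{u}_\ell\|_{L^\infty}|X_s(a)-Y_s|\lesssim \ell^{\alpha-1}\|u\|_{L^\infty_tC^\alpha_x}\cdot\ell$, again of order $\ell^\alpha\|u\|_{L^\infty_tC^\alpha_x}$. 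Both are the estimates on $u-\ol{u}_\ell$ and $\nabla\ol{u}_\ell$ already recorded in the proof of Lemma \ref{lemmapart1}.

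The heart of the matter is the middle term, which demands the coarse-grained acceleration. Differentiating $\dot Y_s=\ol{u}_\ell(Y_s,s)$ gives $\tfrac{\rmd}{\rmd s}\dot Y_s=(\partial_t\ol{u}_\ell+\ol{u}_\ell\cdot\nabla\ol{u}_\ell)(Y_s,s)$. Mollifying Euler, $\partial_t\ol{u}_\ell=-\div\ol{(u\otimes u)}_\ell-\nabla\ol{p}_\ell$, and using $\div\ol{u}_\ell=0$ to write $\ol{u}_\ell\cdot\nabla\ol{u}_\ell=\div(\ol{u}_\ell\otimes\ol{u}_\ell)$, the sweeping term cancels the reducible part of the flux, leaving
\[
\tfrac{\rmd}{\rmd s}\dot Y_s=\big(\div(\ol{u}_\ell\otimes\ol{u}_\ell-\ol{(u\otimes u)}_\ell)-\nabla\ol{p}_\ell\big)(Y_s,s).
\]
The quadratic commutator is controlled by Lemma \ref{comlem} (with $n=1$, $\sigma=\alpha$, $p=\infty$): $\|\div(\ol{u}_\ell\otimes\ol{u}_\ell-\ol{(u\otimes u)}_\ell)\|_{L^\infty}\lesssim \ell^{2\alpha-1}\|u\|_{L^\infty_tC^\alpha_x}^2$. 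If the pressure term obeys the same bound, then integrating over $[t,t+\tau]$ with $\tau\le\tau_\ell$ yields $|\dot Y_{t+\tau}-\dot Y_t|\lesssim \tau_\ell\,\ell^{2\alpha-1}\|u\|_{L^\infty_tC^\alpha_x}^2=\ell^\alpha\|u\|_{L^\infty_tC^\alpha_x}$, which is exactly of the right order, and combining with the outer terms closes the estimate.

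The one step that is not a routine commutator bound, and which I expect to be the main obstacle, is $\nabla\ol{p}_\ell$. A naive Calderón–Zygmund estimate only gives $p\in C^\alpha$ (the source $u\otimes u$ is merely $C^\alpha$), whence $\|\nabla\ol{p}_\ell\|_{L^\infty}\lesssim\ell^{\alpha-1}$ and the acceleration bound degrades to $O(1)$ with no gain. The fix is the \emph{Onsager pressure estimate} $p\in L^\infty_tC^{2\alpha}_x$: evaluating at a base point $x_0$ and subtracting $u(x_0)$, incompressibility $\partial_i u_i=0$ kills the terms linear in the velocity increment, so $-\Delta p$ is effectively sourced by the \emph{quadratic} increment $(u-u(x_0))\otimes(u-u(x_0))$, giving $\|\nabla\ol{p}_\ell\|_{L^\infty}\lesssim\ell^{2\alpha-1}\|u\|_{L^\infty_tC^\alpha_x}^2$. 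This matches the commutator term and is valid precisely when $2\alpha<1$, i.e. under the standing restriction $\alpha\in(0,\tfrac12]$; the case $\alpha>\tfrac12$, where $p$ is Lipschitz and the pressure term is more benign, follows by the same scheme. Taking the supremum over $a$ and over $\tau\in[0,\tau_\ell]$ then gives \eqref{holderv}.
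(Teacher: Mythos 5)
Your proof is correct and is essentially the paper's own argument: both decompose the increment of $v$ via the mollified Lagrangian velocity, control the two single-time comparison terms by the mollification error plus $\|\nabla\ol{u}_\ell\|_{L^\infty}$ times the trajectory deviation from Lemma \ref{lemmapart1}, and control the middle term by integrating the coarse-grained acceleration $\nabla\cdot(\ol{u}_\ell\otimes\ol{u}_\ell-\ol{(u\otimes u)}_\ell)-\nabla\ol{p}_\ell$, using the quadratic commutator estimate of Lemma \ref{comlem} together with the double H\"{o}lder regularity of the pressure. Your restarting of the mollified flow at time $t$ (rather than at $0$) is a clean way to make the ``arbitrary $t$'' step explicit, and your restriction to $2\alpha\le 1$ for the pressure bound matches the paper's standing assumption $\alpha\in(0,1/2]$, though for $\alpha>1/2$ the exponent $\frac{\alpha}{1-\alpha}$ exceeds $1$ and the result does not follow from the same increment bound alone.
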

 \begin{proof}[Proof of Lemma \ref{lemmapart2}]
 Recalling that $ \ell = \big( \|u\|_{L^\infty_t C^\alpha_x} \tau_\ell\big)^{\frac{1}{1-\alpha}}$, the result follows by triangle inequality once we show the following two bounds:
  \begin{align}\label{veqn1}
\|v^\ell(t, \cdot) - v(t,\cdot)  \|_{L^\infty}  &\lesssim \frac{\ell}{\tau_\ell}   \qquad \text{for times} \qquad 0\leq t\leq \tau_\ell,\\ \label{veqn2}
\|v^\ell(t+\tau, \cdot) - v^\ell(t,\cdot)  \|_{L^\infty}  &\lesssim  \frac{\ell}{\tau_\ell}   \qquad \text{for times} \qquad 0\leq \tau \leq \tau_\ell.
 \end{align}
Fix $0\leq t\leq \tau_\ell$ so that $\|\delta X^\ell_t\|_{L^\infty} \leq \ell$ by Lemma \ref{lemmapart1}.  
 Thus, on these timescales, we have
 \begin{align*} 
\|v^\ell(t, \cdot) - v(t,\cdot)  \|_{L^\infty} &= \| \ol{u}_\ell ( X_t^\ell(\cdot) ,t)  -  {u} ( X_t(\cdot) ,t)  \|_{L^\infty} \\
&= \| \ol{u}_\ell (\cdot ,t)  -  {u} ( \cdot+  \delta X^\ell_t((X_t^\ell)^{-1}(\cdot)) ,t)  \|_{L^\infty} \lesssim \sup_{\ell'\leq \ell}\|\delta_{\ell'} u\|_{L^\infty_{t,x}}.
\end{align*}
This establishes \eqref{veqn1}.  As for \eqref{veqn2}, from the Euler equation $\delta_\tau v^\ell(t,a):=v^\ell(t+\tau, a) - v^\ell(t,a) $ satisfies
$
 \delta_\tau v^\ell(t,a) =\int_0^\tau a_\ell( s, X_s(a)) \rmd s,
$
 where $a_\ell = \nabla \ol{p}_\ell + \nabla \cdot \tau_\ell(u,u)$.   It follows from standard estimates (using that the pressure enjoys double regularity \cite{CDRF}) that 
$
 \| a_\ell( t,\cdot)\|_{L^\infty} \lesssim  \tfrac{1}{\ell}\sup_{\ell'\leq \ell} \|\delta_{\ell'} u \|_{L^\infty_{t,x}}^2 = \tfrac{\ell}{\tau_\ell^2}.
$
The conclusion follows  since $\tau \leq \tau_\ell$.  
  \end{proof}
  
Since $\tau_\ell \sim \ell^{1-\alpha} \to 0$ as $\ell \to 0$, the \eqref{holderv} shows H\"older continuity at an arbitrary $t$.
\end{proof}

In view of the above result on trajectory regularity, we ask
\begin{question}\label{consques1}
Suppose that \emph{all} trajectories $X_t$ of a weak solution $u$ are $C^{1, 1/2+}_t$.  Does $u$ conserve energy?
\end{question}
Note that trajectory regularity does not imply any kind of general regularity of the vector field $u$.  For instance, if $u$ is a bounded shear flow, the profile can be very irregular but the trajectories are straight lines.

It is unfortunate that many of the rigorous results discussed in this section require H\"older continuity as opposed to Besov.  For instance, given the observations do not support $\sfrac{1}{3}$ H\"older continuity,  we wonder if there is an analogous Besov statement for Lagrangian regularity.  Namely, if $u\in L_t^p (0,T;B_{p,\infty}^\sigma(\mathbb{T}^d))$, is it true that $v =L^{p/2}(\mathbb{T}^d; B^{\frac{\sigma}{1-\sigma}}_{p/2,\infty}(0,T))$?   The difficulty in establishing this type of result is that, unlike the H\"{o}lder case, it seems necessary to bundle trajectories and the regularity in the label $a$ can be very bad. It may be that is impossible to bundle trajectories in such as way as to define a measure-preserving flowmap $X_t(a)$ that depends measurably on $a$ (see \cite{P25} for an example).  Could such regularity at least be preserved uniformly along solutions of Navier-Stokes?

\section{Conclusion}

We have not come much closer to developing the "first principled" theory of turbulence that Kolmogorov sought.  However, as a result of decades of profound work and deep understanding from many great minds, we  have presented a framework that allows us to paint  a precise conjectural picture of certain features of turbulence.  Simply by making precise statements, we can perceive and prove some fundamental constraints on the phenomenon. It is useful to have in mind \emph{some} mathematically well defined features and their consequences, even if we don't know what characterizes turbulence in totality, or even whether those features are universal to all phenomena one might call turbulent. 

Though it seems far off, a more complete understanding of turbulence might come from a dynamical systems approach as suggested by Kolmogorov himself \cite{A91}, and/or by embracing elements of probability theory in a more fundamental way.  We end with Richard Feynman's plan in this direction, which sounds quite modern.

\begin{figure}[h!]
  \begin{center}
    \includegraphics[width=1\textwidth]{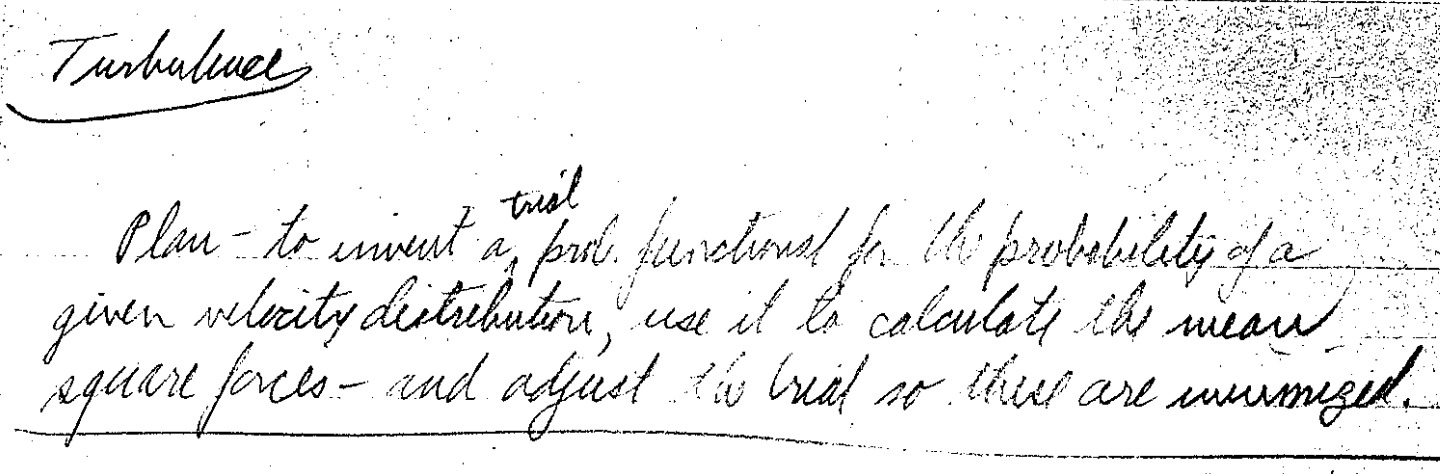}
  \end{center}
  \caption{ R. P. Feynman, unpublished (Folder 76.14 Caltech archives).}\label{feynman}
\end{figure}

\vspace{4mm}

\noindent \textbf{Acknowledgements.}  These notes are based on a series of lectures delivered during the ``Turbulence on the Banks of the Arno" school in Scuola Normale Superiore in Pisa, Italy, in  April 2025,
as well as the Summer School on Advances in Hyperbolic Balance Laws, Hirschegg, Austria, in September 2025. I am grateful to the organizers for the invitation and hospitality, and to the audience and participants for engaging discussions and questions.  I would like to thank Tarek Elgindi and Vlad Vicol for their remarks and comments. I am particularly grateful to Luigi De Rosa and Gregory Eyink for their detailed comments but mainly for the many discussions on turbulence over the years. 
 My work was supported by the NSF CAREER award \#2235395, a Stony Brook University Trustee’s award as well as an Alfred P. Sloan Fellowship.

\end{document}